\numberwithin{equation}{section}
\setlist{nosep}
\setlist{noitemsep}
\newcommand{\Z}{\mathbb{Z}}
\newcommand{\R}{\mathbb{R}}
\newcommand{\C}{\mathbb{C}}
\newtheorem{theo}{Theorem}
\newtheorem{prop}{Proposition}[section]
\newtheorem{lem}[prop]{Lemma}
\newtheorem{coro}[prop]{Corollary}
\newtheorem{remark}[prop]{Remark}
\theoremstyle{definition}
\newtheorem{defi}[prop]{Definition}
\newtheorem{step}{Step}
\def \be{\begin{equation}}
\def \ee{\end{equation}}
\def \t0{\rightarrow 0} 
\def \ti{\rightarrow \infty} 
\def \hal{\frac{1}{2}}
\def \div{\mathrm{div} \,} 
\def \1{\mathbf{1}} 
\def \p{\partial}
\def \ep{\varepsilon}
\def \dist{\mathrm{dist}}
\def\a{\alpha}
\def\cd{\mathsf{c_{\d}}}
\def\({\left(}
\def\){\right)}
\def\nab{\nabla}
\def \KNbeta{K_{N,\beta}}
\def \probas{\mathcal{P}}
\def \PNbeta{\mathbb{P}_{N, \beta}} 
\newcommand{\ed}{}
\def\XXint#1#2#3{{\setbox0=\hbox{$#1{#2#3}{\int}$}
     \vcenter{\hbox{$#2#3$}}\kern-.5\wd0}}
\def \XN{\vec{X}_N}
\def \YN{\vec{Y}_N}
\def \fbeta{\mathcal{F}_{\beta}}
\def \tH{\widetilde{H}}
\def \FN{F_N}
\def \KNbeta{K_{N,\beta}}
\def \Fluct{\mathrm{Fluct}}
\def \fluct{\mathrm{fluct}}
\def \mueq{\mu_0} 
\def\Esp{\mathbf{E}} 
\def \E{\Esp}
\def \KNbeta{K_{N,\beta}}
\def \ZNbetaV{Z_{N,\beta}}
\def \Ani{\mathsf{A}_s}
\def\I{\mathcal{I}}
\def\i{\int_{\R^2}}
\def \id{\mathrm{Id}}
\def \L{\ell}
\def \mueq{\mu_{0}}
\def \mueqt{\mu_{t}}
\def \hmueqt{h^{\mueqt}}
\def \mut{\overline{\mu}_{t}}
\def \hmut{h^{\mut}}
\def\g{\mathsf{g}}
\def\d{\mathsf{d}}
\def \zetaz{\zeta_0}
\def \zetat{\zeta_t}
\def \Ent{\mathrm{Ent}}
\def \Mean{\mathrm{Mean}}
\def \Var{\mathrm{Var}}
\def \Main{\mathrm{Main}}
\def \hD{\widehat{D}}
\def \bD{\overline{D}}
\def \C{\mathcal{C}}
\def\dis{\displaystyle}
\def\indic{\mathbf{1}}
\def \E{\mathcal{E}}
\def\namedlabel#1#2{\begingroup
    #2%
    \def\@currentlabel{#2}%
    \phantomsection\label{#1}\endgroup
}
\def \epsilon{\varepsilon}
\def \bxN{\bar{x}_N}
\def \D{B} 
\def \C{\mathcal{C}} 
\def \f{\mathsf{f}}
\def \diam{\mathrm{diam}}
\def \Error{\mathsf{Error}}
\def \veta{\vec{\eta}}
\def \tmax{t_{\mathrm{max}}}
\def\Supp{{\mathrm{Supp } \,}}
\def \HN{\mathcal{H}_N}
\def \rr{\mathsf{r}}
\def \PNbetaV{\mathbb{P}_{N,\beta}}
\def \tmut{\tilde{\mu}_t}
\def \tzetat{\tilde{\zeta}_t}
\def \vecr{\vec{\rr}}
\def \coT{}
\def \comTT{}
\begin{document}
\title{Fluctuations of Two Dimensional  Coulomb gases}
\author{Thomas Lebl\'e and Sylvia Serfaty}
\date{Version 3: \today}
\address[Thomas Lebl\'e]{Sorbonne Universit\'es, UPMC Univ. Paris 06, CNRS, UMR 7598, Laboratoire Jacques-Louis Lions, 4, place Jussieu 75005, Paris, France.
\newline current address: Courant Institute, New York University, 251 Mercer st, New York, NY 10012, USA.}
\email{thomasl@cims.nyu.edu}

\address[Sylvia Serfaty]{Sorbonne Universit\'es, UPMC Univ. Paris 06, CNRS, UMR 7598, Laboratoire Jacques-Louis Lions, 4, place Jussieu 75005, Paris, France.
 \newline \& Institut Universitaire de France \newline \& Courant Institute, New York University, 251 Mercer st, New York, NY 10012, USA.}
\email{serfaty@cims.nyu.edu}
\begin{abstract}
We prove a Central Limit Theorem for the linear statistics of two-dimensional Coulomb gases, with arbitrary inverse temperature  and general confining potential, at the macroscopic and mesoscopic scales and possibly near the boundary of the support of the equilibrium measure. This can be stated in terms of convergence of the random electrostatic potential to a Gaussian Free Field. 

Our result is the first to be valid at arbitrary temperature and at the mesoscopic scales, and we recover previous results of Ameur-Hendenmalm-Makarov and Rider-Vir\'{a}g concerning the determinantal case, with weaker assumptions near the boundary. We also prove moderate deviations upper bounds, or rigidity estimates, for the linear statistics and a convergence result for those corresponding to energy-minimizers.

The method relies on a change of variables, a perturbative expansion of the energy, and the comparison of partition functions deduced from our previous work. Near the boundary, we use recent quantitative stability estimates on the solutions to the obstacle problem obtained by Serra and the second author.
\end{abstract}

\maketitle
{\bf keywords:} Coulomb Gas, $\beta$-ensembles, Log Gas, Central Limit Theorem, Gaussian Free Field, Linear statistics, Ginibre ensemble.
\\
{\bf MSC classification:} 60F05, 60K35, 60B10, 60B20, 82B05, 60G15.

\section{Introduction} \label{sec:intro}
\subsection{Presentation of the problem}
Let $\beta > 0$ be fixed. For $N \geq 1$, we are interested in the $N$-point canonical Gibbs measure\footnote{We use $\frac{\beta}{2}$ instead of $\beta$ in order to match  the normalizations in the existing literature. The first sum in \eqref{def:HN} is twice the physical one, but it is more convenient for our analysis.} for a two-dimensional Coulomb (or log-) gas at the \textit{inverse temperature} $\beta$, defined by
\begin{equation}\label{def:PNbeta}
d\PNbetaV(\XN) = \frac{1}{\ZNbetaV} \exp \left( - \frac{\beta}{2} \HN(\XN)\right) d\XN,
\end{equation}
where $\XN = (x_1, \dots, x_N)$ is an $N$-tuple of points in $\R^2$ and $\HN(\XN)$, defined by
\begin{equation} \label{def:HN}
\HN(\XN) := \sum_{1 \leq i \neq  j \leq N} - \log | x_i-x_j| + \sum_{i=1}^N N V(x_i),
\end{equation}
is the energy of the system in the state $\XN$, given by the sum of the pairwise repulsive logarithmic interaction between all particles plus the effect on each particle of an external field or \textit{confining potential} $NV$ whose intensity is proportional to $N$. The constant $\ZNbetaV$ in the definition \eqref{def:PNbeta} is the normalizing constant, called the \textit{partition function}, and is equal to 
\begin{equation*}
\ZNbetaV := \int_{(\R^2)^N} \exp \left( - \frac{\beta}{2} \HN(\XN)\right) d\XN.
\end{equation*}

Under mild assumptions on  $V$, the empirical measure of the particles converges almost surely to a deterministic equilibrium measure $\mueq$ as $N \to \infty$, 
\coT{see e.g. \cite[Chap.2]{serfatyZur} and the references therein.}

For any $N \geq 1$, let us define the \textit{fluctuation measure}
\begin{equation} \label{def:fluctN}
\fluct_N :=  \sum_{i=1}^N \delta_{x_i} - N \mueq.
\end{equation}
It is a random signed measure on $\R^2$. For any real-valued test function $\xi_N$, possibly depending on $N$, we define the \textit{fluctuations of the linear statistics associated to} $\xi_N$ as the real random variable
\begin{equation} \label{def:FluctN}
\Fluct_N(\xi_N) := \int_{\R^2} \xi_N\, d\fluct_N.
 \end{equation}
The main goal of this paper is to prove a Central Limit Theorem (CLT) for the random variable $\Fluct_N(\xi_N)$ under some regularity assumptions on $V$ and $\xi$.  This can be translated into the convergence to a Gaussian Free Field of the random potential
\begin{equation} \label{def:Deltam1}
\Delta^{-1} \fluct_N := \frac{1}{2\pi} \int_{\R^2} - \log | \cdot - x | d\fluct_N(x).
\end{equation}
Here and in the rest of the paper $\Delta$ denotes the usual Laplacian operator on $\R^2$ given by $\Delta f = \partial^2_x f + \partial^2_y f$.

\subsection{Notation}
Throughout the paper, we will work in H\"older spaces $C^{k,1}$, the spaces of real-valued functions with $k$ derivatives on $\R^2$ and a Lipschitz $k$-th derivative. We endow $C^{k,1}$ with the norm 
$$
\|f\|_{C^{k,1}} := |f|_0+ \sum_{j=1}^{k+1}  |f|_j,
$$
where the $|f|_j$ are the semi-norms 
\ed{\begin{equation}
\label{seminorm}
|f|_0:= \sup_x |f(x)| \qquad |f|_j := 
\sup_{x \neq y}  \frac{|f^{(j-1)}(x)-f^{(j-1)}(y)|}  {|x-y|},
\end{equation}}
with the obvious generalization to vector fields. We denote compactly supported function with a subscript $C^{k,1}_c$.

$\D(x, R)$ denotes the disc of center $x$ and radius $R$, and if $A \subset \R^2$ we let $\mathbf{1}_{A}$ be the indicator function of $A$.

For $f \in C^0(\R^2)$ we denote by $f^{\Sigma}$ the harmonic extension of $f$ outside $\Sigma$, namely the unique continuous map which coincides with $f$ on $\Sigma$ (up to $\partial \Sigma$) and is harmonic and bounded in $\R^2 \setminus \Sigma$. It is not hard to see that when $f$ is $C^1$ then $f^{\Sigma}$ is in $C^{0,1}$, but the normal derivative of $f^{\Sigma}$ on $\partial \Sigma$ may present a discontinuity. In \cite[Section 2.5]{ahm}, the \textit{Neumann jump} is defined on $\partial \Sigma$ as the difference between the inner and outer normal derivative of $f^{\Sigma}$.

\coT{If $\mu$ is a measure on $\R^\d$ and $\phi : \R^\d \to \R^d$ we let $\phi \# \mu$ be the push-forward of $\mu$ by $\phi$.}

Finally, we write $a \preceq b$ when $a$ is bounded by $b$ times some universal constant, and $a \approx b$ if $a\preceq b $ and $b \preceq a$.

\subsection{Assumptions}
\label{sec:assumpot}
We will always assume that $\xi_N$ is either independent of $N$, which we call the {\it macroscopic case}, or that $\xi_N$ has the form 
\begin{equation}
\label{def:xiN} \xi_N(x) := \xi \left( \frac{x-\bxN}{\L_N} \right).
\end{equation}
for some sequence $\{\bxN\}_N$ of points in $\R^2$ and a sequence $\{\L_N\}_N$ of positive real numbers, tending to $0$ slower than $N^{-\hal}$, which we call the {\it mesoscopic case}.
 In particular, we have, with the notation of \eqref{seminorm}
\begin{equation}
\label{controlexi_N} |\xi_N|_{k} \leq |\xi|_{k} \L_N^{-k}.
\end{equation}

Let us now describe our assumptions.
\begin{description}
\item[\namedlabel{H1}{(H1)} - Regularity  and growth of $V$] The potential $V$ is in $C^{3,1}(\R^2)$ and satisfies the growth condition
\begin{equation*}
\liminf_{|x| \ti} \frac{V(x)}{2\log |x|} > 1.
\end{equation*}
\end{description}
It is well-known, see e.g. \cite{safftotik} that if $V$ satisfies \ref{H1} then the \textit{logarithmic potential energy} functional defined on the space of probability measures by 
\begin{equation}
\label{MFener}
\mathcal{I}_V (\mu) := \iint_{\R^2 \times \R^2} - \log |x-y| \, d\mu(x)\, d\mu(y)+ \int_{\R^2}V(x)\, d\mu(x)
\end{equation}
has a unique global minimizer $\mueq$,  the \textit{equilibrium measure associated to $V$}. This measure has a compact support, sometimes called the \textit{droplet}, that we will denote by $\Sigma$, and $\mueq$ is characterized by the fact that there exists a constant $c_0$ such that the function $\zeta_0$ defined by
\begin{equation} \label{zeta}
\zeta_0(x) := \int_{\R^2} - \log |x-y| d\mueq(y) + \frac{V(x)}{2} - c_0
\end{equation}
satisfies the Euler-Lagrange conditions
\begin{equation} \label{EulerLagrange}
\zeta_0 \geq 0 \text{ in } \R^2, \quad \zeta_0 = 0 \text{ in } \Sigma.
\end{equation}
The set $\omega$, defined by
\begin{equation}
\label{def:omegavanish} \omega := \{x, \zeta_0(x) = 0\},
\end{equation}
 where $\zeta_0$ vanishes is called the \textit{coincidence set} or \textit{contact set} in the obstacle problem literature (the obstacle being here the function $ c_0-\frac{V}{2}$), for the correspondence see for instance \cite[Section 2.5]{serfatyZur}. 

We make some additional assumptions on $V$, $\mueq$ and $\Sigma$.
\begin{description}
\item[\namedlabel{H3}{(H2)} - Non-degeneracy] We have  $\Delta V > 0$ in the coincidence set $\omega$.
\end{description}
This assumption ensures in particular that the support of the equilibrium measure $\Sigma$  is exactly the coincidence set  of the associated obstacle problem, whereas in general we only have the inclusion $\Sigma \subset \omega$.
\begin{description}
\item[\namedlabel{H4}{(H3)} - Additional regularity for the boundary case] The boundary of $\Sigma$ is a \ed{finite union of } $C^{2,1}$ curves and all its points are \textit{regular} i.e. there are no cusps in the sense of Caffarelli (see e.g.  \cite{MR1658612}), \cite[Definition 3.24]{petrosyan2012regularity}).
\end{description}

\subsection{The central limit theorem}
\begin{theo}[Central limit theorem for fluctuations of linear statistics] \label{theo:CLT} 
Let us distinguish three cases.
\begin{description}
\item[Macroscopic interior case]  $\xi$ is in $C^{2,1}_c(\Sigma)$, $\bar{x}_N = 0, \L_N = 1$ in \eqref{def:xiN},  and \ref{H1}, \ref{H3} hold.
\item[Macroscopic boundary case]  $\xi$ is in $C^{3,1}_c(\R^2)$, $\bar{x}_N = 0, \L_N = 1$, and  \ref{H1}, \ref{H3}, \ref {H4}  hold. Moreover, if $\Sigma$ has several (\ed{finitely many}) connected components, we assume that the following conditions are satisfied:
\ed{\begin{equation}\label{condit}
\int_{ \partial \Sigma_i}\( \frac{\partial \xi^{\Sigma}}{\partial n}(y)\)\, dx=0
\quad  \text{ for  each $i$,}
\end{equation}}
where $\{\Sigma_i\}_i$ are the connected components of the support, \coT{and $\frac{\partial \cdot}{\partial n}$ is the one-sided partial derivative in the normal direction computed outside $\Sigma_i$.}
\end{description}

In either of these two cases,  $\Fluct_N(\xi)$ converges in law to a Gaussian random variable with mean 
\begin{equation}
\label{mean}
\Mean(\xi) := \frac{1}{2\pi} \left( \frac{1}{\beta}-\frac{1}{4}\right) \int_{\R^2} \Delta \xi \left( \mathbf{1}_{\Sigma} + \left( \log \Delta V \right)^{\Sigma}  \right)
\end{equation}
and variance
\begin{equation}
\label{variance}
\Var(\xi) := \frac{1}{2 \pi \beta} \int_{\R^2} |\nabla \xi^{\Sigma} |^2.
\end{equation}

\begin{description}
\item[Mesoscopic case]   $\xi$ is in  $C^{2,1}(\D(0,1))$, $\L_N = N^{-\delta}$ for some $\delta \in (0, \hal)$, and $\bxN$ is in the interior of $\Sigma$ (at distance greater than $4 \L_N$ from $\partial \Sigma$). We also assume that  \ref{H1}, \ref{H3} hold. 
\end{description}
Then $\Fluct_N(\xi_N)$ converges in law to a Gaussian random variable with mean 0 and variance
\begin{equation*}
\displaystyle{\frac{1}{2 \pi \beta} \int_{\R^2} |\nabla \xi |^2}.
\end{equation*}
\end{theo}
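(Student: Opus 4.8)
The plan is to establish convergence of the Laplace transform: for every fixed $t\in\R$,
\[
\log \Esp_{\PNbeta}\!\left[e^{t\,\Fluct_N(\xi_N)}\right]\ \xrightarrow[N\to\infty]{}\ t\,\Mean(\xi)+\tfrac{t^2}{2}\,\Var(\xi),
\]
with $\Mean(\xi)=0$ and $\Var(\xi)=\frac{1}{2\pi\beta}\int|\nab\xi|^2$ in the mesoscopic case; this gives the CLT and convergence of all moments. The starting point is that the tilted measure is again a Coulomb gas: since $t\sum_i\xi_N(x_i)=-\frac{\beta}{2}\big(\HN^{\Vt}-\HN^{V}\big)$ with the $N$-dependent potential $\Vt:=V-\frac{2t}{\beta N}\xi_N$, one has the exact identity
\[
\Esp_{\PNbeta}\!\left[e^{t\,\Fluct_N(\xi_N)}\right]=e^{-tN\int\xi_N\,d\mueq}\;\frac{\ZNbetaV^{\Vt}}{\ZNbetaV^{V}},
\]
which reduces everything to comparing the partition functions of $V$ and $\Vt$, whose potentials differ by the $O(1/N)$ perturbation $-\frac{2t}{\beta N}\xi_N$, supported (in the mesoscopic case) in a disc of radius $\L_N$ at distance $>4\L_N$ from $\partial\Sigma$.

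To compare $\ZNbetaV^{\Vt}$ and $\ZNbetaV^{V}$ I would use the splitting $\HN^{W}(\XN)=N^2\I_W(\mu_W)+2N\sum_i\zeta^{W}(x_i)+\FN^{W}(\XN)$ into equilibrium energy, confinement term ($\zeta^{W}\ge0$, vanishing on the support of $\mu_W$) and renormalized energy of the fluctuation measure of $\mu_W$, so that $\ZNbetaV^{W}=e^{-\frac{\beta}{2}N^2\I_W(\mu_W)}\,K_{N,\beta}^{W}$. The first factor is pure potential theory: by \ref{H1} and \ref{H3} the droplet is essentially unchanged and $\mueqt$ has density $\frac{\Delta V}{4\pi}-\frac{t}{2\pi\beta N}\Delta\xi_N$ on $\Sigma$, so a Taylor expansion of $t\mapsto\I_{\Vt}(\mueqt)$ shows its first-order term exactly cancels the prefactor $e^{-tN\int\xi_N d\mueq}$ while its second-order term produces $\frac{t^2}{4\pi\beta}\int|\nab\xi_N^{\Sigma}|^2=\frac{t^2}{2}\Var(\xi_N)$, the remainder being $O(1/N)$. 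For $K_{N,\beta}^{\Vt}/K_{N,\beta}^{V}$, introduce the transport $\phi_t=\id+\psi_t$ pushing $\mueq$ forward onto $\mueqt$, with $\psi_t=\nab u_t$ and $u_t$ the solution of $\div\!\big(\frac{\Delta V}{4\pi}\nab u_t\big)=\frac{t}{2\pi\beta N}\Delta\xi_N$ with Neumann data on $\partial\Sigma$ — so $\psi_t$ is supported near $\supp\xi_N$ with $\|\psi_t\|_{C^{1,1}}=O(t\,\L_N^{-2}/N)$ (solvability and these bounds use \ref{H1}, \ref{H3} and $\xi\in C^{2,1}$). Changing variables $y_i=\phi_t(x_i)$ in $K_{N,\beta}^{\Vt}$, and using $\sum_i\delta_{\phi_t(x_i)}-N\mueqt=\phi_t\#\fluct_N$, one rewrites
\[
\frac{K_{N,\beta}^{\Vt}}{K_{N,\beta}^{V}}=\Esp_{\PNbeta}\!\left[\exp\Big(-\beta N\textstyle\sum_i\big(\zetat(\phi_t(x_i))-\zeta_0(x_i)\big)-\tfrac{\beta}{2}\big(\FN^{\Vt}(\phi_t(\XN))-\FN^{V}(\XN)\big)+\sum_i\log|\det D\phi_t(x_i)|\Big)\right],
\]
and the perturbative expansion of this exponent in powers of $\psi_t$ is the core of the argument.

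The point of the choice $\phi_t\#\mueq=\mueqt$ is that the remaining exponent is, term by term, \emph{small}. The confinement difference $\zetat\circ\phi_t-\zeta_0$ vanishes on the droplet, so only particles outside $\Sigma$ — exponentially rare — contribute. The term linear in $\psi_t$ coming from $\FN^{\Vt}(\phi_t(\XN))-\FN^{V}(\XN)$ is a quadratic functional of $\fluct_N$ with a deterministic kernel of size $\|\psi_t\|_{C^{0,1}}$; by the local laws and the partition-function/energy estimates of our previous work its expectation reduces to an $O(N)$ density–number correction, which, combined with the Jacobian $\sum_i\log|\det D\phi_t(x_i)|$, is exactly $t\,\Mean(\xi_N)$ (this is where the characteristic constant appears: $\tfrac1\beta$ from the weight $\tfrac\beta2$ against the entropy, and $-\tfrac14$ from the second-order Jacobian). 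In the macroscopic cases this is the mechanism producing the mean. But in the mesoscopic interior case these corrections vanish in the limit, since $\int\Delta\xi_N\mathbf{1}_{\Sigma}=\int\Delta\xi_N=0$ and $\int\Delta\xi_N(\log\Delta V)^{\Sigma}=\int\xi_N\Delta\log\Delta V=O(\L_N^2)\to0$; hence $\Mean(\xi_N)\to0$ and $K_{N,\beta}^{\Vt}/K_{N,\beta}^{V}=1+o(1)$, so the only surviving $O(1)$ contribution is the variance from the $\I$-term, and since $\xi_N^{\Sigma}=\xi_N$ and the planar Dirichlet energy is scale invariant, $\int|\nab\xi_N^{\Sigma}|^2=\int|\nab\xi|^2$, giving $\Var(\xi)=\frac{1}{2\pi\beta}\int|\nab\xi|^2$.

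It then remains to prove that all leftover terms are $o(1)$: the contributions quadratic and higher in $\psi_t$ are of size $O\big(\|\psi_t\|_{C^{1,1}}^{2}\times(\text{renormalized energy in }\supp\psi_t)\big)=O(t^2N^{2\delta-1})$; the residual purely fluctuating terms are $\Fluct_N(f)$ with $\|f\|_{\dot H^1}=O(t\,N^{2\delta-1})$, hence $O(t\,N^{2\delta-1}\sqrt{\log N})$ with overwhelming probability; and the confinement term is exponentially small — all of which vanish precisely because $\delta<\hal$. Making this rigorous requires a priori \emph{rigidity / moderate-deviations} estimates for mesoscopic linear statistics, controlling $\Esp_{\PNbeta}[\exp(|\Fluct_N(f)|)]$ in terms of a suitable norm of $f$; these are proved (also in this paper) from the renormalized-energy machinery and our earlier partition-function estimates, and are then fed into a bootstrap that improves the crude bound and reinjects it. \textbf{Main obstacle.} The crux is exactly this last step. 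One cannot expand $\log K_{N,\beta}^{W}$ separately to the $o(1)$ precision required, so the whole comparison must be carried out on the \emph{ratio} via the transport, with every remainder absorbed by rigidity — and the rigidity must itself be established down to scales $\L_N\gg N^{-1/2}$, where the renormalized energy controls fluctuations only marginally; this is also where the borderline regularity $\xi\in C^{2,1}$ and the range $\delta\in(0,\hal)$ are genuinely used. (In the macroscopic boundary case there is the separate difficulty that $\Sigma$ itself moves with $t$, so $\phi_t$ deforms $\partial\Sigma$, $\mueqt$, $\zetat$ and the harmonic extension $\xi^{\Sigma}$, and one must invoke the quantitative obstacle-problem stability of Serra and the second author under \ref{H4}; the interior-mesoscopic hypothesis $\dist(\bxN,\partial\Sigma)>4\L_N$ is precisely what removes this, and with it the mean.)
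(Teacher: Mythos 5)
Your overall strategy --- Laplace transform, splitting of the energy, a transport $\phi_t=\id+\psi_t$ pushing $\mueq$ onto $\mueqt$, change of variables in the next-order partition function, and perturbative expansion of the resulting exponent --- is the same as the paper's. The gap is in how you dispose of the term \emph{linear} in $\psi_t$ arising from $\FN(\phi_t(\XN),\mueqt)-\FN(\XN,\mueq)$. After linearization this term splits into two pieces: a local piece $\frac12\sum_i\div \psi_t(x_i)$, which indeed combines with the Jacobian $\sum_i\log|\det D\phi_t(x_i)|$ to produce $(1-\frac\beta4)N(\Ent(\mueq)-\Ent(\tmut))$ and hence the mean; and an \emph{anisotropy} term $\frac{1}{2\pi}\int\langle\nabla H_{N}^{\mueq},(2D\psi_t-(\div\psi_t)\id)\nabla H_{N}^{\mueq}\rangle$, which is quadratic in the electric field. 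You assert that ``by the local laws and the partition-function/energy estimates of our previous work its expectation reduces to an $O(N)$ density--number correction,'' but the available local energy bounds only give that this term is $O(\|\psi_t\|_{C^{0,1}}\cdot N\L_N^2)=O(t)$ in exponential moments --- i.e.\ of order $1$, exactly the order of the quantities you are trying to compute, with no sign and no identifiable value. It is not a linear statistic, so the moderate-deviation/rigidity bounds for $\Fluct_N(f)$ that you invoke in your bootstrap do not apply to it, and you do not have a loop equation to re-express it (which is how \cite{BBNY2} handles the analogous term).

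The paper's resolution, absent from your proposal, is to compute the ratio $\KNbeta(\tmut,\tzetat)/\KNbeta(\mueq,\zeta_0)$ in two independent ways --- once via the expansion $\log\KNbeta(\mu,\zeta)=\frac\beta4N\log N+C_\beta N-N(1-\frac\beta4)\Ent(\mu)+o(N)$ inherited from the earlier work, and once via the transport --- for $t$ \emph{fixed} of order $\L_N^2$ rather than $\tau/N$. Matching the two yields $\log \Esp\bigl[\exp\bigl(-\tfrac t2\Ani(\psi,\XN,\mueq)\bigr)\bigr]=o(N\L_N^2)$, and then, precisely because the anisotropy is linear in the transport, H\"older's inequality with exponent $p=tN/(2\tau)$ transfers this to $\log \Esp\bigl[\exp\bigl(-\tfrac{\tau}{N}\Ani(\psi,\XN,\mueq)\bigr)\bigr]=o(1)$. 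Without this comparison-plus-H\"older step (or a genuine substitute such as loop equations combined with optimal rigidity), the exponent in your change-of-variables identity retains an uncontrolled $O(1)$ contribution and the limit of the Laplace transform cannot be identified. A secondary point: the $o(N)$ error in the partition-function expansion is not known to be Lipschitz in the background measure, which is exactly why one cannot expand the two logarithms separately at $t=\tau/N$; one must also check that this error is uniform over the family $\{\tmut\}$, as the paper does via the uniformity discussion in its appendix.
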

One may observe some universality feature in the fact that the variance \eqref{variance} does not depend on $V$. \comTT{The fact that the mean is zero in the mesoscopic cases can be heuristically deduced from \eqref{mean} : if $\xi$ is supported in $\Sigma$ and varies on a much smaller lengthscale than $V$ we see that 
$$
\int_{\R^2} \Delta \xi \left( \mathbf{1}_{\Sigma} + \left( \log \Delta V \right)^{\Sigma}  \right) \approx   \left( \int_{\R^2} \Delta \xi \right) \int_{\R^2} \left( \log \Delta V \right)^{\Sigma}  \approx 0.
$$
}

As in \cite{ahm2}, the convergence of the random potential $\Delta^{-1} \fluct_N$  to a Gaussian Free Field (GFF) is just a direct consequence of the very definition of the GFF (cf. for instance \cite{MR2322706}) \comTT{: more precisely, it means that, for any smooth test function $\xi$ compactly supported in $\Sigma$, the random variable
$$
\langle \xi, \Delta^{-1} \fluct_N \rangle,
$$
(where the pairing is defined as $\langle f, g \rangle := \int \nabla f  \cdot \nabla g$) converges to a Gaussian random variable whose variance is proportional to $\langle \xi, \xi \rangle = \int |\nabla \xi|^2$. The mean is nonzero in general (we get a non-centered Gaussian free field). If the support of $\xi$ intersects the boundary of $\Sigma$, the variance carries an extra term due to the presence of the harmonic extension of $\xi$ in \eqref{variance}, this was described in \cite{ridervirag} as convergence “to the planar Gaussian free field conditioned to be harmonic outside” the support of the equilibrium measure.}

By \textit{interior cases} we will mean the \textit{macroscopic interior case} and the \textit{mesoscopic case}. The same result as Theorem \ref{theo:CLT} has been proven  independently at the same time in \cite{BBNY2}, only for the interior cases, with $V$ and $\xi$ assumed to be $C^4$.
We compare the two approaches in Section \ref{outline}.

\subsection{Comments on the assumptions}
\begin{itemize}
\item It is well-known that if $V$ is \ed{$C^{2}$ for instance} then the density of the equilibrium measure is given by
\begin{equation} \label{mueqfromV}
d\mueq(x) = \frac{1}{4\pi} \Delta V(x) \mathbf{1}_{\Sigma}(x) dx.
\end{equation}
In particular, Assumption \ref{H1} implies that $\mueq$ has a $C^{2}$ density on its support, Assumption  \ref{H4} implies that its boundary is a \ed{$C^{2,1}$} curve, and Assumption \ref{H3} ensures that the density of $\mueq$ is bounded below by a positive constant on $\Sigma$.

\item The points of the boundary $\partial \omega$ of the coincidence set can be either \textit{regular}, i.e. $\partial \omega$ is locally the graph of a $C^{1, \alpha}$ function, or \textit{singular}, i.e. $\partial \omega$ is locally cusp-like (this classification was introduced in \cite{MR1658612}). 
 Singularities are nongeneric and we believe that the assumption that there are none might be bypassed. Let us emphasize that in the macroscopic interior case as well as in the mesoscopic case, we do not require Assumption \ref{H4}.

\item If $V$ is $C^{3,1}$, then $\partial \omega $ is locally $C^{2,1}$ around each regular point, see \cite[Thm. I]{caffarelli1976smoothness}.

\item If $V$ is strictly convex (which implies that $\Sigma$ and $\omega $ coincide) and of class $C^{k+1,\alpha}$ on $\R^2$, it was shown (see \cite[Section 4]{kinderlehrer1978variational}), in the slightly different setting of a bounded domain, that $\Sigma$ is connected and that $\partial \Sigma$ is $C^{k, \alpha}$ with no singular points.

\item 
In the case of  $p\ge 2$ connected components  of the equilibrium measure, then the  $p-1$ additional assumptions \eqref{condit} are needed, similarly as in the {\it multi-cut} one-dimensional case \cite{BorGui2,scherbi1}.  The CLT is not true for a general test function without such assumptions -- for instance one cannot take $\xi$ to count the number of points in or near a given component -- and instead the limit in such cases is the convolution of a Gaussian variable with a discrete Gaussian, \coT{this is known in the one-dimensional case, and expected to hold also in the two-dimensional setting although we do not pursue this goal here}.
\end{itemize}

\subsection{Additional results}
\begin{theo}[Moderate deviations \coT{upper bounds}] \label{theo:ModDev2} Under the same assumptions as Theorem \ref{theo:CLT} (including the same assumptions on $\L_N$), there  exists $c > 0$ such that for any
$1 \ll \tau_N \ll N\L_N^2$ we have
\begin{equation*}
\PNbeta\left(|\Fluct_N(\xi_N)| \geq c \tau_N \right)  \leq \exp\left(- \frac{c^2}{2} \tau_N^2 \right),
\end{equation*}
\end{theo}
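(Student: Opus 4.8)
The plan is to deduce Theorem~\ref{theo:ModDev2} from the machinery already in place for the proof of the Central Limit Theorem (Theorem~\ref{theo:CLT}), rather than re-running the whole analysis. Recall that the CLT is obtained through a change of variables combined with a perturbative expansion of the energy and a comparison of partition functions: concretely, one writes the Laplace transform $\mathbf{E}_{\PNbeta}\big[\exp(s \Fluct_N(\xi_N))\big]$ as a ratio $Z_{N,\beta}^{\xi,s}/Z_{N,\beta}$ of partition functions, where $Z_{N,\beta}^{\xi,s}$ corresponds to a perturbed potential $V_s = V + \frac{2s}{\beta N}\xi_N$ (or some such tilting), and then shows that $\log\big(Z_{N,\beta}^{\xi,s}/Z_{N,\beta}\big) = s\,\Mean(\xi_N) + \frac{s^2}{2}\Var(\xi_N) + o(1)$ uniformly for $s$ in a fixed compact set. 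The key quantitative input is that this expansion comes with explicit error terms controlled by the energy/partition function estimates from the authors' previous work, and crucially these errors remain small not just for bounded $s$ but for $s$ allowed to grow slowly with $N$.

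The first step is therefore to revisit the partition function comparison and track the range of validity in $s$. One expects an estimate of the shape
\begin{equation*}
\left| \log \frac{Z_{N,\beta}^{\xi,s}}{Z_{N,\beta}} - s\,\Mean(\xi_N) - \frac{s^2}{2}\Var(\xi_N)\right| \leq \mathsf{Error}(N,s),
\end{equation*}
where $\mathsf{Error}(N,s)$ is a polynomial in $s$ with coefficients that are negative powers of $N\L_N^2$ (the effective number of particles seen by the test function). Since $\Var(\xi_N) = \frac{1}{2\pi\beta}\int|\nabla\xi^\Sigma|^2$ is order $1$ (independent of $\L_N$ in the mesoscopic normalization) while the error terms carry extra factors $(N\L_N^2)^{-a}$ for some $a>0$, one can afford to take $s = s_N \to \infty$ as long as $s_N \ll (N\L_N^2)^{a'}$ for a suitable $a'$. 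The second step is then the standard exponential Chebyshev/Markov argument: for $\tau_N$ with $1 \ll \tau_N \ll N\L_N^2$, choose $s_N = c\,\tau_N$ with $c$ small, so that
\begin{equation*}
\PNbeta\big(\Fluct_N(\xi_N) \geq c\tau_N\big) \leq e^{-s_N c\tau_N}\, \mathbf{E}_{\PNbeta}\big[e^{s_N \Fluct_N(\xi_N)}\big] \leq \exp\Big(-c^2\tau_N^2 + \tfrac{s_N^2}{2}\Var(\xi_N) + s_N\Mean(\xi_N) + \mathsf{Error}\Big),
\end{equation*}
and one checks that the dominant term is $-c^2\tau_N^2/2$ once $c$ is taken small enough relative to $\Var(\xi)$ (the $\Mean$ term is lower order since $\Mean(\xi_N) = O(1)$ or even $o(1)$ in the mesoscopic case, and $s_N \Mean = O(\tau_N) = o(\tau_N^2)$). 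Applying the same bound to $-\xi_N$ and adding gives the two-sided statement.

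The main obstacle I expect is the uniformity of the energy expansion in the growing parameter $s_N$: in the CLT one only needs $s$ in a neighbourhood of $0$, so the perturbative arguments (regularity of the perturbed equilibrium measure $\mu_{V_s}$, the transport/change-of-variables estimates, the bounds on the anisotropy and fluctuation terms) are invoked with $s$ fixed, whereas here one must verify that none of those estimates degrade faster than a controlled power of $s$ when $s$ is allowed to be as large as a small power of $N\L_N^2$. In particular, one must ensure the perturbed potential $V + \frac{2s}{\beta N}\xi_N$ still satisfies the non-degeneracy and regularity hypotheses uniformly (the perturbation has size $\frac{s}{N}$ times an order-one $C^{2,1}$ function on the relevant scale, so this is not an issue as long as $s \ll N$), and that the higher-order terms in the energy expansion, which in the CLT contribute $o(1)$, contribute $o(\tau_N^2)$ here. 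The constraint $\tau_N \ll N\L_N^2$ is exactly what makes this work: it is the threshold below which $\Fluct_N$ still behaves Gaussian-like and above which one would instead see genuine large-deviation (non-Gaussian) behaviour. Modulo bookkeeping of these error powers, the argument is a direct exponential-moment corollary of the CLT proof.
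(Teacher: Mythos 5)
Your overall strategy --- bound the Laplace transform $\Esp_{\PNbeta}[\exp(\lambda\,\Fluct_N(\xi_N))]$ for a tilting parameter $\lambda$ proportional to $\tau_N$ and conclude by exponential Chebyshev --- is exactly the paper's, and you correctly locate the origin of the constraint $\tau_N\ll N\L_N^2$ (the perturbation parameter $t=\lambda/N$ must stay below $\tmax\approx \L_N^2$ for the perturbed equilibrium measure and the transport to make sense). The gap is in your central step. You posit an expansion
$\log\Esp_{\PNbeta}[e^{\lambda\Fluct_N(\xi_N)}]=\lambda\,\Mean(\xi_N)+\tfrac{\lambda^2}{2}\Var(\xi_N)+\Error(N,\lambda)$
with $\Error$ ``a polynomial in $\lambda$ with coefficients that are negative powers of $N\L_N^2$''. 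The CLT machinery does not deliver such a quantitative remainder: the partition-function expansions (Propositions \ref{lem:comparaisonmacro} and \ref{prop:comparaisonmeso}) carry a non-explicit error $N\L_N^2 r_N$ with $r_N=o(1)$ but no rate, and the smallness of the anisotropy (Corollary \ref{coro:Aniso}) is itself extracted from that $r_N$ by a H\"older interpolation whose admissible exponent depends on how fast $r_N\to 0$. This is exactly why the paper states that it cannot produce a convergence rate in the CLT. Dismissing the extension to $\lambda=c\tau_N\to\infty$ as ``bookkeeping of error powers'' therefore hides the genuine obstruction.

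What your proposal is missing is the observation that for a tail bound one does not need the sharp Gaussian expansion at all, only a bound $|\log\Esp_{\PNbeta}[e^{\tau\Fluct_N(\xi_N)}]|\le C\tau^2$. The paper obtains this by abandoning the delicate proof that the anisotropy is small and bounding it crudely instead: $\Ani(\psi,\XN,\mueq)$ is a quadratic form in $\nab H_{N,s\vecr}^{\mueq}$, hence dominated pointwise by the local electric energy, whose exponential moments are controlled by Lemma \ref{lem:expmoment2}; this gives $|\log\Esp_{\PNbeta}[\exp(\frac{\tau}{N}\Ani(\psi,\XN,\mueq))]|\le C\tau\L_N^2(1+|\log s|)$. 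Inserting this (with $s=\hal$) into the Laplace-transform identity yields $|\log\Esp_{\PNbeta}[e^{\tau\Fluct_N(\xi_N)}]|\le C(\tau^2+\tau\L_N^2)$ for all $1\ll\tau\ll N\L_N^2$, and Markov's inequality concludes. Replacing your Step 1 by this crude bound makes the argument correct; as written, it relies on an estimate that the available tools do not provide.
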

This way, we retrieve a rigidity result similar to that of \cite[Theorem 1.2]{BBNY}.

The following result is an elementary consequence of Theorem \ref{theo:CLT}.
\begin{coro} \label{theo:independance} In all the cases where the CLT holds,  let $m \geq 1$ and let $\left\lbrace\xi^{(k)}\right\rbrace_{k =1, \dots, m}$ be $C^{2, 1}$ (resp. $C^{3,1}$ in boundary cases)  compactly supported test functions. Then the joint law of the fluctuations $\left\lbrace\Fluct_N(\xi^{(k)})\right\rbrace_{k=1, \dots, m}$ converges to the law of an $m$-dimensional Gaussian vector, the marginals being as in Theorem \ref{theo:CLT} and the covariance matrix being given by
$$
\left\lbrace \int_{\R^2} \nab (\xi^{(i)})^{\Sigma} \cdot \nab (\xi^{(j)})^{\Sigma} \right\rbrace_{1 \leq i, j \leq m}.
$$
\end{coro}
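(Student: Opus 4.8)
The plan is to derive Corollary~\ref{theo:independance} from Theorem~\ref{theo:CLT} by the standard Cram\'er--Wold reduction, exploiting the fact that the CLT is proven for \emph{every} admissible test function. Fix $m \geq 1$, test functions $\xi^{(1)},\dots,\xi^{(m)}$ in the relevant H\"older class with compact support (and, in the boundary case with several components, each satisfying \eqref{condit}; note that \eqref{condit} is linear in $\xi$, so the class of admissible $\xi$ is a vector space). To show that the random vector $\big(\Fluct_N(\xi^{(1)}),\dots,\Fluct_N(\xi^{(m)})\big)$ converges in law to a Gaussian vector, it suffices, by the Cram\'er--Wold device, to prove that for every $(\lambda_1,\dots,\lambda_m)\in\R^m$ the scalar random variable $\sum_{k=1}^m \lambda_k \Fluct_N(\xi^{(k)})$ converges in law to a Gaussian. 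But $\Fluct_N$ is linear in its argument, so $\sum_k \lambda_k \Fluct_N(\xi^{(k)}) = \Fluct_N\big(\sum_k \lambda_k \xi^{(k)}\big)$, and $\xi := \sum_k \lambda_k \xi^{(k)}$ is again an admissible test function (same regularity, compact support, and \eqref{condit} holds by linearity when relevant). Hence Theorem~\ref{theo:CLT} applies directly to $\xi$ and gives convergence to a Gaussian with mean $\Mean(\xi)$ and variance $\Var(\xi)$.

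Next I would identify the limiting mean vector and covariance matrix. By linearity of $\xi \mapsto \Mean(\xi)$ (clear from \eqref{mean}, since $\Delta$ and integration are linear and the harmonic extension $f\mapsto f^\Sigma$ is linear) we get $\Mean\big(\sum_k\lambda_k\xi^{(k)}\big) = \sum_k \lambda_k \Mean(\xi^{(k)})$, so the limiting mean of the vector is $\big(\Mean(\xi^{(1)}),\dots,\Mean(\xi^{(m)})\big)$ — in the mesoscopic case all these are $0$. For the covariance, expand the variance of the linear combination: using that $\xi\mapsto\xi^\Sigma$ is linear,
\[
\Var\Big(\sum_{k=1}^m \lambda_k \xi^{(k)}\Big) = \frac{1}{2\pi\beta}\int_{\R^2}\Big|\nabla\Big(\sum_k \lambda_k \xi^{(k)}\Big)^{\Sigma}\Big|^2 = \frac{1}{2\pi\beta}\sum_{i,j=1}^m \lambda_i\lambda_j \int_{\R^2}\nabla(\xi^{(i)})^\Sigma\cdot\nabla(\xi^{(j)})^\Sigma,
\]
and similarly in the interior cases with $\xi^\Sigma$ replaced by $\xi$ (or equivalently noting $\xi^{(k)}$ is supported in $\Sigma$ so $(\xi^{(k)})^\Sigma = \xi^{(k)}$ up to the harmonic extension outside, whose gradient integrates against that of a function supported in $\Sigma$). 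This exhibits the limiting variance of $\sum_k\lambda_k\Fluct_N(\xi^{(k)})$ as the quadratic form $\lambda^T C \lambda$ with $C_{ij} = \frac{1}{2\pi\beta}\int_{\R^2}\nabla(\xi^{(i)})^\Sigma\cdot\nabla(\xi^{(j)})^\Sigma$, which is exactly the claimed covariance matrix (note $C$ is automatically symmetric positive semidefinite as a Gram matrix).

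Finally, I would conclude: since for every $\lambda$ the linear combination converges in law to $\mathcal{N}\big(\sum_k\lambda_k\Mean(\xi^{(k)}),\ \lambda^T C \lambda\big)$, which is precisely the law of $\langle \lambda, Z\rangle$ for $Z$ the Gaussian vector with mean $(\Mean(\xi^{(k)}))_k$ and covariance $C$, the Cram\'er--Wold theorem yields that $\big(\Fluct_N(\xi^{(k)})\big)_{k}$ converges in law to $Z$. There is essentially no obstacle here: the only things to check are the linearity of all the relevant maps ($\Fluct_N(\cdot)$, $\Mean(\cdot)$, $\Var$'s bilinear polarization, the harmonic extension, and the constraint \eqref{condit}) and the stability of the admissible function class under linear combinations, both of which are immediate. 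The one point worth a sentence of care is that in the boundary case one must verify that $\sum_k\lambda_k\xi^{(k)}$ still meets the hypotheses of the boundary case of Theorem~\ref{theo:CLT}, i.e. $C^{3,1}_c(\R^2)$ regularity and \eqref{condit} for each component — both follow since these conditions are preserved under finite linear combinations.
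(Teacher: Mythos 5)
Your proof is correct and follows essentially the same route as the paper's: the Cram\'er--Wold device applied to linear combinations $\sum_k\lambda_k\xi^{(k)}$, using the linearity of $\Fluct_N$ and the polarization of the variance functional \eqref{variance} to identify the covariance matrix. Your version is in fact more careful than the paper's two-line argument, notably in checking that the admissible class (regularity and the conditions \eqref{condit}) is stable under linear combinations.
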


Finally, we consider minimizers of the energy. Although this formally corresponds to $\beta\to \infty$, the limits $\beta\to \infty$ and $N \to \infty$ cannot be directly commuted but our analysis in fact applies as well, and yields a new rigidity-type result down to the microscopic scale:
\begin{theo}[Fluctuations for energy minimizers]\label{thmini}
Under the same assumptions as~Theorem~\ref{theo:CLT},
assume $\XN$ minimizes the energy $\HN$ as in \eqref{def:HN}.  Let $\Fluct_N$ be defined\footnote{In this context it is of course not  random.} as in \eqref{def:FluctN}. In the two macroscopic cases we have
\begin{equation*}
\lim_{N\to \infty} \Fluct_N(\xi) = \frac{-1}{8\pi} \int_{\R^2} \Delta \xi \left( \indic_{\Sigma} + (\log \Delta V)^\Sigma)\right).
\end{equation*}
In the mesoscopic case, with $\xi$ in $C^{2,1}_c(\D(0,1))$, $\bar{x}_N$  in the interior of $\Sigma$ at distance greater than $4\L_N$ from $\partial \Sigma$ and $\L_N$ such that $\L_N = o_N(1)$ and\footnote{Let us observe that the second condition is weaker than the assumption $\L_N = N^{-\delta}$ with $\delta \in (0, \hal)$ as in the mesoscopic case of Theorem \ref{theo:CLT}, it allows to consider test functions living at a large, microscopic scale.} $N^{-1/2} = o(\L_N)$, then
\begin{equation*}
\lim_{N\to \infty}\Fluct_N(\xi_N) = 0.
\end{equation*}
\end{theo}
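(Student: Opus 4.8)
The plan is to run the same change of variables and perturbative energy expansion as in the proof of Theorem~\ref{theo:CLT}, with the comparison of partition functions replaced by the elementary variational inequality satisfied at a minimizer. Every temperature-dependent ingredient of the CLT analysis --- in particular the Jacobian (entropy) term, which produces the $\frac1\beta$ in \eqref{mean} --- disappears here, which is precisely why the limit below is the formal $\beta\to\infty$ value of $\Mean(\xi)$.

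First I would reuse the transport map constructed in the proof of Theorem~\ref{theo:CLT}: to the test function ($\xi$, resp.\ $\xi_N$) one associates a compactly supported $C^{1,1}$ vector field $\psi=\psi_\xi$ solving an elliptic transport equation of the type $\div(\psi\,\mueq)=\frac1{2\pi}\Delta\xi$ in the interior of $\Sigma$, extended outside $\Sigma$ in the boundary case by means of the harmonic extension $\xi^\Sigma$ so as to be globally defined and to displace $\partial\Sigma$ tangentially (the solvability being guaranteed, when $\Sigma$ is disconnected, exactly by the balance conditions \eqref{condit}). Put $\phi_t:=\mathrm{Id}+t\psi$; for $|t|$ small (of order $\L_N$ in the mesoscopic case) this is a $C^{1,1}$-diffeomorphism of $\R^2$.

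Since $\XN$ minimizes $\HN$, for $|t|$ small the $N$-tuple $\phi_t(\XN)$ is admissible, so $\HN(\phi_t(\XN))\ge\HN(\XN)$. A minimizing configuration has pairwise distinct points, hence $t\mapsto\HN(\phi_t(\XN))$ is smooth near $0$; dividing by $t$ and letting $t\to0^\pm$ forces its derivative at $0$ to vanish, i.e.
\begin{equation*}
\mathcal{L}_N(\XN):=-\sum_{1\le i\ne j\le N}\frac{(x_i-x_j)\cdot\bigl(\psi(x_i)-\psi(x_j)\bigr)}{|x_i-x_j|^2}+N\sum_{i=1}^N\nabla V(x_i)\cdot\psi(x_i)=0 .
\end{equation*}
Now $\mathcal{L}_N(\XN)$ is exactly the coefficient of $t$ in the energy expansion used for Theorem~\ref{theo:CLT}. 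Inserting the gradient of the Euler--Lagrange relation \eqref{EulerLagrange} (which gives $\nabla V(x)=2\int\nabla_x(-\log|x-y|)\,d\mueq(y)$ in the interior of $\Sigma$), the density formula \eqref{mueqfromV}, the equation defining $\psi$, and integrating by parts, one rewrites $\mathcal{L}_N$ in terms of $\Delta^{-1}\fluct_N$ from \eqref{def:Deltam1}; the short-range and self-interaction pieces are handled through the smeared (``renormalized'') electric energy exactly as in the CLT proof. The conclusion, valid for every configuration satisfying the a priori estimates that minimizers obey (see below), is
\begin{equation*}
\mathcal{L}_N(\XN)=c\,N\bigl(\Fluct_N(\xi)-M_N\bigr)+o(N),\qquad c\ne 0\ \text{universal},
\end{equation*}
with $M_N$ deterministic and $M_N\to-\frac1{8\pi}\int_{\R^2}\Delta\xi\,\bigl(\1_\Sigma+(\log\Delta V)^\Sigma\bigr)$ in the two macroscopic cases --- the $(\log\Delta V)^\Sigma$ entering through the density $\frac1{4\pi}\Delta V$ of $\mueq$ in the smearing correction and the $\1_\Sigma$ through the terms carrying $\zeta_0$ near $\partial\Sigma$ --- while $M_N\to0$ in the mesoscopic case, since $\xi_N$ oscillates on the scale $\L_N$, much smaller than the scale of variation of $V$, so that $\int_{\R^2}\Delta\xi_N\,(\log\Delta V)^\Sigma\approx(\log\Delta V)^\Sigma(\bxN)\int_{\R^2}\Delta\xi_N=0$, and $\int_{\R^2}\Delta\xi_N\,\1_\Sigma=0$ because $\supp\xi_N$ stays at distance $>4\L_N$ from $\partial\Sigma$. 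Together with $\mathcal{L}_N(\XN)=0$ this gives $\Fluct_N(\xi)=M_N+o(1)$, which is the stated limit in each case.

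\textbf{Main obstacle.} Everything rests on the error control in the identity for $\mathcal{L}_N$, i.e.\ on a priori estimates for a minimizing configuration: separation of the points at the microscopic scale $\sim N^{-1/2}$, quantitative local discrepancy (the number of points in a disc is comparable to its $N\mueq$-mass, with an effective error term), and confinement (no point lies far from $\Sigma$). These are the $\beta=\infty$ analogues of the rigidity estimates used in the Gibbs case; they follow from renormalized-energy lower bounds, and in the boundary case one also needs, via assumption \ref{H4}, the quantitative stability of the obstacle problem near $\partial\Sigma$ obtained by Serra and the second author. The required precision is what fixes the admissible range of $\L_N$: contrary to the CLT, only the first-order term of the expansion enters (there is no variance term to estimate), so one may take $\L_N$ down to just above the microscopic scale, which is why the mesoscopic hypothesis can be relaxed to $N^{-1/2}=o(\L_N)$. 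Should a direct $o(N)$ bound on the error be delicate, the same conclusion follows from a two-sided sandwich: bound $\min\HN^{V_t}-\min\HN^{V}$ above by $\HN^{V_t}(\phi_t(\XN))-\min\HN^{V}$ and below by $\min\HN^{V_t}-\HN^{V}(\phi_t^{-1}(\YN))$ with $\YN$ a minimizer of $\HN^{V_t}$ ($V_t$ the perturbed potential with equilibrium measure $\phi_t\#\mueq$), and compare with the known next-order asymptotic expansion of the minimal energy; differentiating in $t$ at $0$ produces the result.
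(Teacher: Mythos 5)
Your primary route --- writing down the criticality condition $\mathcal{L}_N(\XN)=\frac{d}{dt}\big|_{t=0}\HN(\phi_t(\XN))=0$ and claiming $\mathcal{L}_N(\XN)=c\,N(\Fluct_N(\xi)-M_N)+o(N)$ for every configuration satisfying the a priori bounds of minimizers --- has a genuine gap at that claimed identity. Expanding $\sum_i\delta_{x_i}=\fluct_N+N\mueq$ in $\mathcal{L}_N$ produces, besides the linear statistic, the quadratic term $\iint\nabla\g(x-y)\cdot(\psi(x)-\psi(y))\,d\fluct_N(x)\,d\fluct_N(y)$, which in the electric formulation is precisely the anisotropy $\Ani(\psi,\XN,\mueq)$ of Definition \ref{defani}. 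In dimension one the analogous kernel $\frac{\psi(x)-\psi(y)}{x-y}$ is smooth and the quadratic term can be bootstrapped to $O(1)$; in dimension two the kernel is only bounded near the diagonal, and under separation, discrepancy and local energy bounds the anisotropy is only $O(N)$ --- the same order as $N\Fluct_N(\xi)$ --- with no reason to be $o(N)$ for a general critical point. So criticality plus a priori estimates cannot close the argument; one must prove that the anisotropy of a \emph{minimizer} is $o(N)$, and that is exactly where the paper uses global minimality twice: $\FN(\Phi(\XN),\nu)\ge\min\FN(\cdot,\nu)$ together with the next-order expansion \eqref{compminim} of the minimal energy yields \eqref{anismall}, and the linearity of $\Ani$ in $\psi$ upgrades the one-sided bound to a two-sided one. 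Your phrase ``handled through the smeared electric energy exactly as in the CLT proof'' hides this: in the CLT proof the smallness of the anisotropy comes from comparing two computations of the relative partition function, which has no analogue in a purely first-variation argument.

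The ``fallback'' you sketch at the end --- sandwiching $\min\HN^{V_t}-\min\HN^{V}$ between the values at transported (near-)minimizers and comparing with the known expansion of the minimal energy --- is not a fallback but the essential mechanism, and it is in substance the paper's proof: Step 1 rewrites $Nt\,\Fluct_N(\xi_N)$ via the splitting formula \eqref{relfinale} (with $\beta=2$ used purely formally), Step 2 derives the anisotropy bound from Proposition \ref{prop:comparaison2} and \eqref{compminim}, and Step 3 applies Proposition \ref{prop:comparaison2} to $\Phi^{-1}(\XN)$ and concludes with Lemmas \ref{lem:termesfactorises} and \ref{lem:comparerlesentropies}. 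Two smaller points: the constant $-\frac{1}{8\pi}\int\Delta\xi(\indic_\Sigma+(\log\Delta V)^\Sigma)$ arises entirely from the entropy-type difference $-\frac12 N(\Ent(\mueq)-\Ent(\tilde\mu_{\tau/N}))$ (i.e.\ from the diagonal term $\frac12\sum_i\div\psi(x_i)$ and the scaling of the minimal renormalized energy), not from a $\zeta_0$ boundary contribution as you suggest; and the relaxed mesoscopic range $N^{-1/2}=o(\L_N)$ is permitted because the screening/expansion of the minimal energy from \cite{nodari2014renormalized} holds down to the microscopic scale, not merely because no variance term appears.
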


\comTT{
\begin{remark}[Corrections to the mean-field approximation]
A  consequence of the main result is that 
$$\lim_{N\to \infty} \Esp_{\PNbeta} (\Fluct_N(\xi))= \Mean(\xi)$$ 
which, after spelling out the definition of $\Fluct_N$, can be rephrased in terms of the first marginal of the Gibbs measure as
\begin{equation}\label{firstmarg}
\lim_{N\to \infty} N \int_{\R^2} \xi d(\PNbeta^{(1)} -\mueq)  = \Mean(\xi),\end{equation} giving the order $1/N$ correction to the mean-field  approximation $\PNbeta^{(1)} \sim \mueq$.
Expanding in the same way higher order moments of the fluctuations could in principle give access to corrections to the mean-field approximation for   higher order marginals $\PNbeta^{(k)}$ of the Gibbs measure.
\end{remark}
}

\subsection{Motivation and existing literature}
The model described by \eqref{def:PNbeta} and \eqref{def:HN} is known in statistical physics as a \textit{two-dimensional Coulomb gas}, \textit{two-dimensional log-gas} or \textit{two-dimensional one-component plasma}, we refer e.g. to \cite{alastueyjancovici}, \cite{jlm}, \cite{sm} for a physical treatment of its main properties. Such ensembles have been the object of interest for statistical mechanics, the fractional quantum Hall effect (\comTT{as pioneered by Laughlin in \cite{laughlin1983anomalous}}, see also e.g. \cite{rougerie2015incompressibility}, \cite{stormer1999fractional} and the references therein),  and also due to their connection with random matrices: when $\beta=2$ and $V(x)=|x|^2$, the Gibbs measure \eqref{def:PNbeta} coincides with the law of the eigenvalues of the Ginibre ensemble (see \cite{ginibre,mehta}). The case $\beta=2$ (for general $V$) is one that happens to be determinantal, allowing the use of exact formulas. We refer to \cite{forrester} for a survey of the connection between log-gases and random matrix theory, and in particular to \cite[Chap.15]{forrester} for the two-dimensional, non-Hermitian case.

Systems of particles with a logarithmic interaction as in \eqref{def:HN}, called \textit{log-gases}, have  also (and mostly) been studied on the real line, motivated by their link with Hermitian random matrix theory. There has been a lot of attention to the phenomenon of “universality” in such ensembles, which consists in showing that a large part of the behavior of the system, in particular microscopic statistics, is independent of the exact form of the  potential $V$. Universality of the point processes at the microscopic scale and rigidity\footnote{Let us emphasize that the “rigidity” mentioned here is  different from the notion of “rigid” point processes as studied e.g. in \cite{Ghosh:2012qd}.} estimates for the positions of the particules  were established in \cite{bourgade1d1,bourgade1d2,bfg}. A Central Limit Theorem for the fluctuations (at macroscopic scale) was proved in the pioneering paper of \cite{MR1487983} for polynomial potentials, followed by generalizations in \cite{scherbi1} for real analytic $V$ and in the multi-cut case (in which case the CLT does not hold for all smooth test functions), see also \cite{Lambert:2017kq} for a recent new method of proof with quantitative estimates on the convergence rate in the one-cut case.  A CLT for the fluctuations of linear statistics \textit{at mesoscopic scales}  was recently obtained in \cite{Bekerman:2016qf}. Expansions of the partition function (which in particular imply CLT's) are established in \cite{BorGui1,BorGui2,scherbi1}. These CLT results are also extended in \cite{Bekerman:2017fj} to more general so-called  critical cases  with weaker regularity assumptions by adapting the method of the present paper.

The literature is less abundant in the two-dimensional case, which is the object of the present study. The CLT for fluctuations at the macroscopic scale and convergence to a Gaussian Free Field was obtained in \cite{ridervirag} in the case of the Ginibre ensemble, in \cite{ahm} for $\beta=2 $ and general $V$ in the bulk case, in \cite{ahm2} for $\beta=2$ and analytic $V$ in the boundary case.
In our previous work \cite{ls1} (extended to the mesoscopic scales in \cite{loiloc}), we proved a Large Deviations Principle for the empirical fields associated to these ensembles, i.e. for the point processes seen at the microscopic scale and averaged. Our result also contained a next order expansion of the partition function, and allows one to derive large deviations bounds for the fluctuations of linear statistics. Local laws and moderate deviations bounds at any mesoscopic scale have appeared in \cite{BBNY}, derived from the so-called \textit{loop equations} associated to the problem. Non-asymptotic concentration bounds for two- and higher-dimensional Coulomb gases have been derived in \cite{chafai2016concentration}. The fluctuations of a particular observable and the associated fourth order phase transition are also studied in \cite{cunden2015fluctuations}. 

The study of minimizers of $\HN$ (without temperature) has  been  of interest recently. A next to leading order expansion of $\min \HN$, together with the convergence of minimizers to those of a ``Coulomb renormalized energy" was obtained in \cite{SS2d}. Additional rigidity
  of the minimizing configurations   down to the microscopic scale were shown   in \cite{aoc,nodari2014renormalized} (see also \cite{petrache2016equidistribution} for the higher-dimensional Coulomb cases). An explicit upper bound on the particle density was also derived with a different approach in \cite{lieb2016rigidity} with applications to the Fractional Quantum Hall Effect, see also previous ``incompressibility estimates" in \cite{rougerie2014quantum,rougerie2015incompressibility}.
\smallskip

A very remarkable feature of these CLT's is that no $\frac{1}{\sqrt{N}}$ normalization is needed to obtain a Gaussian limit (in contrast with the usual CLT for i.i.d random variables).  As expressed e.g. in \cite{MR1487983, ahm}, this must result from effective cancelations caused by the repulsive behavior of the particles.
Theorem \ref{theo:CLT} recovers results of \cite{ahm,ahm2,ridervirag} in the $\beta =2$, macroscopic scale case but with a different method.  
We need stronger regularity assumptions  than  in \cite{ridervirag} where test functions are only assumed to be $C^1$ (for quadratic $V$); and weaker assumptions than in \cite{ahm2} where $V$ and  $\p \Sigma$  are analytic while $\xi \in C^\infty_c(\R^2)$. The optimal regularity needed for $\xi$ in order for the CLT to hold seems to be an interesting open question and we believe our result can be improved in that direction. 

In \cite[Section 2.6]{ahm2} (which is $\beta = 2$) the value of the mean of the limiting Gaussian random variable is expressed as
\begin{equation*}
\frac{1}{8 \pi} \left( \int_{\Sigma} \left(\Delta \xi + \xi \Delta (\log \Delta V)\right) + \int_{\partial \Sigma} \xi \mathcal{N} \left[ (\log \Delta V)^{\Sigma} \right] \right),
\end{equation*}
where $\mathcal{N}$ is the “Neumann jump” on $\partial \Sigma$ of the harmonic extension.
An integration by parts allows to easily check that it coincides with the expression of $\Mean(\xi)$ given in Theorem \ref{theo:CLT}. The expression for the variance in \cite{ahm2} is the same as ours.

In \cite{ridervirag} (which corresponds to $\beta =2$ and $\Sigma = \D(0,1)$), the variance is expressed as
\begin{equation} \label{varRV}
\frac{1}{4 \pi} \int_{\D(0,1)} | \nabla \xi |^2 + \frac{1}{2} \sum_{k \in \Z} |k| |\hat{\xi}(k)|,
\end{equation}
where $\hat{\xi}(k)$ is the $k$-th Fourier coefficient \comTT{of the map restricted to the unit circle}. The second term can be viewed as $\|\xi\|_{\dot{H}^{1/2}(\partial \D(0,1))}^2$, where $\dot{H}^{1/2}$ is the homogeneous fractional Sobolev space on $\partial \D(0,1)$ with exponent $1/2$. It is not hard to check that
\begin{equation*}
\frac{1}{4 \pi} \int_{\R^2 \setminus \Sigma} | \nabla \xi^{\Sigma}|^2 = \frac{1}{2} \| \xi \|^2_{\dot{H}^{1/2}(\partial \Sigma)},
\end{equation*}
and we recover \eqref{varRV} from $\Var(\xi)$. Let us emphasize that, in contrast with \cite{ridervirag}, in \eqref{def:FluctN} we do not substract the expectation but $N$ times the limit and we find that the expectation of the fluctuations is given, as in \cite{ahm2}, by
\begin{equation*}
\frac{1}{8\pi} \int_{\D(0,1)} \Delta \xi.
\end{equation*}

\subsection{Open questions}
A first natural open question  is to know the minimal regularity that needs to be assumed on $\xi$ for the CLT (or the order of magnitude $O(1)$ of the fluctuations) to hold. It is expected (see \cite{jlm}) that when $\xi$ is not continuous -- for instance when $\xi$ is the indicator function  of a set -- then its fluctuations (i.e. the fluctuations of the number of points in the set)
are of order $N^{1/4}$, hence much larger  than $O(1)$, but still much smaller than that of i.i.d. points, thus  still exhibiting a rigidity phenomenon.  This was proven up to logarithmic corrections, and also in dimensions 1 and  3, for a ``hierarchical" Coulomb gas model (in which the interaction is modified in such a way that the system naturally gets ``coarse-grained" when changing scales) in the  recent work \cite{Chatterjee:2017ly}.
On this aspect, and many others, much more is known for other two-dimensional models of particles with strong repulsion, given by the zeroes of random polynomials or random series -- we refer e.g. to \cite{nazarov2010fluctuations, Sodin:kk} and the references therein.

It is also natural to search  for a generalization of Theorem~\ref{theo:CLT} in the mesoscopic case under the condition $N^{-1/2} \ll \L_N$ instead of the actual constraint that $\L_N = N^{-\delta}$ with $\delta > \hal$, as can be done for minimizers (see Theorem \ref{thmini}). The existence of a limit point process for $\PNbeta$, i.e. of a two-dimensional analogue of the sine-$\beta$ process appearing for one-dimensional $\beta$-ensembles, is not known for $\beta \neq 2$ (in the Ginibre case $\beta =2$, explicit expressions are known e.g. for all the correlation functions), but it would also be interesting to study the asymptotic normality of fluctuations in such hypothetical infinite ensembles.

\comTT{Recently, there has been a surge of  interest in studying the extreme values (in a certain sense) of Gaussian Free Fields (GFF's) and fields that ressemble a GFF, for example the characteristic polynomial of a matrix in the Circular Unitary Ensemble, see \cite{paquette2017maximum} and references therein. In our setting, the field $\Delta^{-1} \fluct_N$  is the characteristic polynomial in the $\beta=2$, $V$ quadratic, “Ginibre ensemble” case, and can be thought of as the “characteristic polynomial” of a hypothetic non-Hermitian matrix whose eigenvalues are the $\XN$ for $\beta, V$ general. Since it converges to a GFF, it would be natural to compare the maximal values of this field to the corresponding quantities for a GFF.}

\subsection{Outline of the proof and the paper and further remarks}\label{outline}
Our approach is based on the energy approach initiated in \cite{SS2d,ss1d,RougSer,PetSer}, which consists in expressing the interaction energy in terms of the electrostatic potential generated by the point configuration.
We are able to leverage on the result of our previous papers on the LDP \cite{ls1,loiloc} which provided a next-order expansion of the partition function which is  explicit in terms of the equilibrium measure, \ed{presented in Section \ref{sec2}}. We show here that such an expansion allows to quickly obtain a CLT: our method is conceptually simple and flexible. \comTT{As mentioned above, our method applies in the one-dimensional logarithmic case \cite{Bekerman:2017fj} and provides a rather simple proof of the previously known result, although  the treatment of  the mesoscopic case by this method is still an open question and seems more difficult than in the two-dimensional case due to the nonlocal nature of the half-Laplacian operator (of which $-\log$ is the fundamental solution in dimension one). In fact, since the one-dimensional macroscopic setting is significantly easier, we encourage the reader interested in the details of the proof to consult \cite{Bekerman:2017fj,serfaty2017microscopic} for a first reading. 
Our method can also be  extended to  higher dimensions, this is the object of future work. }
\smallskip

 \ed{Let us now outline the proof and the paper.
 The first step is to split the energy into \be\label{splitfirst}\HN(\XN)\sim N^2 \I(\mueq)+ F_N(\XN, \mueq)\ee
where $F_N$ is the Coulomb   interaction of the system formed by the point charges at the $x_i$'s and the negative background charge  $-N\mueq$.  This is presented in  Section \ref{sec2}, where we also show that thanks to the known expansion of the partition function,  $F_N+ \hal N\log N $ is of order $N$ \coT{and we control its exponential moments}. Since $F_N$ controls the fluctuation measure, we can deduce first concentration bounds on it. }
\comTT{After splitting the energy  as in \eqref{splitfirst}, we may simplify out the contribution of $N^2 \I(\mueq)$ from the partition function, and define a next-order partition function involving only $F_N$ and denoted for now $K_{N,\beta}(\mueq)$.}

\comTT{A random variable is a Gaussian with mean $m$ and variance $v$  if and only if its Laplace transform is $e^{mt+\hal vt^2}$, hence, as is well-known, to prove the convergence  in law of a random variable to a Gaussian, it suffices to show that the logarithm of its   Laplace transform converges to a quadratic function. This is the starting point of the proof, as in   \cite{MR1487983} and all other previous works: we wish to  compute the large $N$ limit of the (logarithm of) the Laplace transform of the fluctuation in the form  
 $ \Esp_{\PNbeta} \( \exp(Nt \Fluct_N(\xi_N)) \right).
$
  Some straightforward explicit computations detailed in Proposition \ref{prop:LaplaceTransform} show that (in the interior case, for simplicity)}
 \begin{equation}\label{espf}  \Esp_{\PNbeta} \( \exp(Nt \Fluct_N(\xi_N)) \right)= e^{\hal N^2 t^2  \Var(\xi)}
 \frac{K_{N,\beta}(\mueqt)}{K_{N,\beta} (\mueq)}\qquad \Var(\xi)= \frac{1}{2\pi \beta} \int_{\R^\d} |\nab \xi|^2,
 \end{equation}\comTT{where we already see the variance $\Var(\xi)$ formally appear.  In order to obtain the Laplace transform of the fluctuations, one needs to take $t= \frac{\tau}{N}$ in \eqref{espf}, where $\tau $ is fixed and let $N \to \infty$.} \footnote{\comTT{Implementing this strategy, we compute asymptotics of various quantities in $t$ or $\tau$, where $\tau = Nt$ should be thought of as being order~$1$. In particular, we may  discard all lower-order terms of order $Nt^2$ or $\sqrt{N}t$ since they have a vanishing contribution in the limit $N \to \infty$, but of course not the terms in $N^2 t^2$ appearing in the variance.}}

\comTT{ Evaluating \eqref{espf}  thus reduces  to understanding 
  the ratio of the next order  partition functions associated to a Coulomb gas with “perturbed” potential  $V-\frac{2t}{\beta} \xi$ and equilibrium measure $\mueqt$, and the original one.  
Our previous work \cite{ls1,loiloc} provides us  with an expansion of  $\log K_{N,\beta}(\mu)$, which leads to an expansion of the ratio  of the form 
\begin{equation}\label{logkm}
\log  \frac{K_{N,\beta}(\mueqt)}{K_{N,\beta} (\mueq)}= \left( 1-\frac{\beta}{4}\right)  N\left( \int \mueq \log \mueq-  \int \mueqt \log \mueqt\right) +o(N).
\end{equation}
 It has an explicit Lipschitz dependence in $\mu$, plus a $o(N)$ error term  (see Section \ref{sec:partitionfunctionsexpansions}). Using this expansion with $t = \frac{\tau}{N}$ we  obtain 
$$
\log \frac{K_{N,\beta}(\mu_{\tau/N})  }{K_{N,\beta} (\mueq)} = \Mean(\xi) \tau  + \Error_N(\tau) + o(N), 
$$ 
where  the explicit mean $\Mean(\xi)$ now appears as the linearization  with respect to $t \to 0$ of the entropy terms in the right-hand side  \eqref{logkm}, and where  $\Error_N(\tau)$ goes to zero as $N \to \infty$ for fixed $\tau$. Since we do not know that the error terms in \eqref{logkm} have a Lipschitz dependence in the equilibrium measure, we can only bound their difference by their sum, leading to the  $o(N)$ error instead of the  $o(1)$ error that we need, and preventing us from directly concluding.
Our way to circumvent this is  to combine this estimate with a second way of computing $ \frac{K_{N,\beta}(\mueqt)}{K_{N,\beta} (\mueq)}$, discussed below.}
 As it turns out, the second approach will still not directly yield an $o_N(1)$ error in the comparison of partition functions, but we will be able to combine both estimates in order to get the result, see Corollary \ref{coro:Aniso}.

 \comTT{The second way of computing    is to look for a change of variables  (as is fairly common in this topic, see for example  \cite{MR1487983,BorGui1,scherbi2}) that will exactly map the old Coulomb gas to the new one (with perturbed potential).} This leads to the question of inverting  an operator and the loop (or Schwinger-Dyson) equations.
Instead, we   use  a change of variables \ed{$\id + t\psi$} which is a transport map between the equilibrium measure $\mueq$ and  an approximation  of the equilibrium measure $\mueqt$ for the perturbed potential. 

The conditions \eqref{condit} ensure that the mass  of the perturbed equilibrium measure carried by each connected component remains unchanged to order $t$, so that \ed{ a regular enough such $\psi$ exists. We then let $\tmut= (\id + t\psi)\# \mueq$ be the approximate equilibrium measure. 
Section \ref{sec:transport} is devoted to the construction of $\psi$ and to proving that 
$$\log \frac{K_{N,\beta}(\tmut)}{K_{N,\beta}(\mueqt)}=o(1)$$ 
which allows to replace $\mueqt$ by the approximate measure $\tmut$ in \eqref{espf}.}

The construction of the transport map $\psi$ is easy in the interior cases. In the boundary cases, it requires a precise understanding on  how the support of $\mueqt$ varies with $t$, which is  a question of quantitative stability for the coincidence set of the obstacle problem under perturbation of the obstacle which we believe to be of independent interest. Such a result was missing in the literature and  is proven in all dimensions in a separate paper \cite{SerSer}. Our approach is in 
contrast to what was previously found in the literature where  the  analyticity of $V, \xi, \p \Sigma$ is often assumed --- in particular it was assumed in the only paper  \cite{ahm2} that previously treated the boundary case, and was required  in order to  be able to apply the sophisticated Sakai's theory, which is anyway restricted to two \smallskip 
dimensions. 

\ed{
The next step, presented in  Section \ref{secani} is to use the  change of variables $\phi_t=\id + t\psi$ in the integral that defines $K_{N,\beta}(\tmut)$. This leads us to evaluating (roughly)
\begin{equation}
\label{eav}
\Esp_{\PNbeta}\( \exp\( -\frac{\beta}{2}\( F_N(\phi_t (\XN), \tmut)- F_N(\XN, \mueq)\)+ \sum_{i=1}^N \log |\det D\phi_t|\) \).\end{equation}
Then, we linearize the exponent in $t$.  A large part of the analysis is to linearize the difference of energies $F_N$ before and after transport, this is done in Proposition \ref{prop:comparaison2} and relies on our energetic approach, which allows to use regularity  estimates from potential theory and elliptic PDEs. This somehow replaces the loop equation terms. 
What we find is that roughly
\begin{multline}\label{expaF} 
-\frac{\beta}{2}\(F_N(\phi_t (\XN), \tmut)- F_N(\XN, \mueq)\)
+ \sum_{i=1}^N \log |\det D\phi_t| \\ \simeq 
 \Ani(t\psi, \XN, \mueq)+  \( 1- \frac{\beta}{4}\) N\( \int \mueq \log \mueq-  \int \tmut \log \tmut\) +O(t^2 N)
 \end{multline}
i.e. the linearization gives rise to explicit terms which are the same as in \eqref{logkm} plus an additional explicit but rather unknown term  (which we call the {\it anisotropy}) $\Ani(t\psi, \XN, \mueq)$ whose important features are that it   is {\it linear} in $t$ and controlled by the energy $F_N+\hal N\log N$.
 \comTT{In other words, using the transport and linearizing the relevant quantities along this transport opens the possibility of obtaining an expansion for the  {\it relative partition function} in \eqref{logkm} which is now Lipschitz in the equilibrium measure, as opposed to the previous expansion obtained by substracting those obtained for each equilibrium measure, but involve the new unknown term $\Ani(t\psi, \XN, \mueq)$.
 }

To conclude, the key is to to compare the results \eqref{logkm} and \eqref{eav}--\eqref{expaF}   obtained by the two approaches, for $t$ fixed but possibly small. This   yields 
\begin{equation}
\label{espan}
\log \Esp_{\PNbeta}\left[ \exp\left( \Ani(t\psi, \XN,\mueq)\right)\right] = o(N) 
\end{equation}
i.e.  the anistropy is small, with respect to $N$, in exponential moments. 
Using then \comTT{crucially} its linear character \comTT{(and not so much its precise expression)}, we can transfer this information from $t$ fixed to $t=\frac{\tau}{N}$ by simply using H\"older's inequality in \eqref{espan}, obtaining
$$\log \Esp_{\PNbeta}\left[ \exp\left( \Ani(\frac{\tau}{N}\psi, \XN,\mueq)\right)\right] = o(1) .$$
Inserting this into \eqref{eav}--\eqref{expaF} we thus conclude the evaluation of \eqref{espf} by finding 
\begin{equation}\label{esperance}\lim_{N\to \infty} \Esp_{\PNbeta}\( \exp(\tau \Fluct_N(\xi))\) = \exp\(\Mean(\xi) \tau  +\hal \Var(\xi) \tau^2 \),\end{equation} where  the mean and variance are as in \eqref{mean} and \eqref{variance}.}
This  concludes at the  end of Section~\ref{secani} the proof of the main theorem. 
In  Section \ref{concl} we give the proofs of the other theorems. 

In Appendix \ref{sec:proofcomparaison}, we prove  Proposition \ref{prop:comparaison2}, and in Appendix \ref{app2} we gather the  proofs  of the preliminary results of Section \ref{sec2}.
Finally, in Appendix \ref{sec:comparaisonmieux} we provide additional detail for the reader interested in the precise import of the results of \cite{ls1,loiloc}.   
\medskip

In \cite{BBNY2}, instead of transport, the method relies on loop equations and strong rigidity estimates (taken from \cite{BBNY}).  As we do, authors of \cite{BBNY2} use an expansion of the partition function  which is explicit in terms of $\mu_0$  and they  transfer information from large $t$'s to smaller $t$'s to show that the contribution of some anisotropy-type  term is small.
 Because they cannot take $t$ as large as order $1$, they instead rely on an expansion of the log of the partition function with  a quantitative bound on the error term, which in turn is obtained by comparing   the  Coulomb gas with logarithmic interaction  to one with  a
short-range (screened) Yukawa interaction,  for which rigidity and the existence of a thermodynamic limit can be proven, effectively replacing the screening procedure used in \cite{ls1}.

\medskip

\section{Preliminaries}\label{sec2}
For future reference  we will sometimes work in general dimension $\d$ (here $\d=2$) and denote the logarithmic potential $-\log |x|$ by $\g(x)$, with 
$$-\Delta \g= \cd \delta_0,$$ 
here  $\cd=2\pi$ in dimension $2$.

\subsection{The next-order energy} \label{sec:nextorder}
We start by presenting the electric formulation to the energy.
\begin{defi}[Next-order energy] 
Let $\mu$ be a bounded, compactly supported probability density on $\R^{\d}$. We define an energy functional on $(\R^{\d})^N$ by
\begin{equation}
\label{def:FN} \FN(\XN, \mu) :=  \iint_{(\R^\d \times \R^\d) \setminus \triangle} \g (x-y)\, \left(\sum_{i=1}^N \delta_{x_i} - Nd \mu\right) (x) \left(\sum_{i=1}^N \delta_{x_i} - Nd \mu\right) (y), 
\end{equation}
where $\triangle$ is the diagonal in $\R^\d \times \R^\d$.
\end{defi}

\begin{lem}[Splitting formula] \label{lem:splitting}
Assume $\mueq $, the minimizer of $\mathcal {I}_V$ (as in \eqref{MFener}), is absolutely continuous with respect to the Lebesgue measure. 
For any $N$ and any $\XN \in (\R^{\d})^N$ we have
\begin{equation}\label{split0}
\HN(\XN) =  N^2 \mathcal{I}_{V}(\mueq) + 2N \sum_{i=1}^N \zeta_0(x_i) +F_N(\XN, \mueq).
\end{equation}
\end{lem}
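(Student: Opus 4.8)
The plan is to prove the splitting formula \eqref{split0} by a direct computation, expanding $F_N(\XN,\mueq)$ according to its definition \eqref{def:FN} and using the Euler--Lagrange characterization \eqref{zeta}--\eqref{EulerLagrange} of $\mueq$. First I would write
\[
F_N(\XN,\mueq) = \sum_{i\neq j}\g(x_i-x_j) - 2N\sum_{i=1}^N \int \g(x_i-y)\,d\mueq(y) + N^2 \iint \g(x-y)\,d\mueq(x)\,d\mueq(y),
\]
where the diagonal is excluded only in the first (point-point) sum, since $\mueq$ is absolutely continuous so the terms involving $\mueq$ have no diagonal issue. The first sum is exactly $\sum_{i\neq j} -\log|x_i-x_j|$, which is the first part of $\HN(\XN)$ in \eqref{def:HN}.

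Next I would recognize the mean-field quantities. By definition \eqref{MFener}, $\mathcal{I}_V(\mueq) = \iint \g(x-y)\,d\mueq(x)\,d\mueq(y) + \int V\,d\mueq$, so $N^2\iint\g\,d\mueq d\mueq = N^2\mathcal{I}_V(\mueq) - N^2\int V\,d\mueq$. For the cross term, \eqref{zeta} gives $\int \g(x_i-y)\,d\mueq(y) = \zeta_0(x_i) - \tfrac12 V(x_i) + c_0$, hence
\[
-2N\sum_{i=1}^N \int \g(x_i-y)\,d\mueq(y) = -2N\sum_{i=1}^N \zeta_0(x_i) + N\sum_{i=1}^N V(x_i) - 2Nc_0 N.
\]
Assembling, $F_N(\XN,\mueq) = \sum_{i\neq j}-\log|x_i-x_j| + N\sum_i V(x_i) - 2N\sum_i\zeta_0(x_i) + N^2\mathcal{I}_V(\mueq) - N^2\int V\,d\mueq - 2N^2 c_0$. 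The first two terms are precisely $\HN(\XN)$. To finish I need the constant identity $\int V\,d\mueq + 2c_0 = 0$, i.e.\ $N^2\mathcal{I}_V(\mueq) - N^2\int V\,d\mueq - 2N^2c_0 = N^2\mathcal{I}_V(\mueq)$; this follows by integrating the Euler--Lagrange relation \eqref{zeta} against $d\mueq$ and using $\zeta_0 = 0$ on $\Sigma \supset \supp\mueq$, which yields $0 = \int(\int\g(x-y)d\mueq(y))d\mueq(x) + \tfrac12\int V\,d\mueq - c_0 = \mathcal{I}_V(\mueq) - \tfrac12\int V\,d\mueq - c_0$, equivalently $\int V\,d\mueq + 2c_0 = 2\mathcal{I}_V(\mueq) - \int V\,d\mueq - 2c_0$... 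I would be careful here: rearranging the integrated identity directly gives the needed cancellation, so that $F_N(\XN,\mueq) = \HN(\XN) - 2N\sum_i\zeta_0(x_i) - N^2\mathcal{I}_V(\mueq)$, which rearranges to \eqref{split0}.

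There is no serious obstacle: the only subtlety is bookkeeping the diagonal, namely that the excluded diagonal in \eqref{def:FN} matters only for the singular point-point interaction and not for the smooth background terms (which is exactly why absolute continuity of $\mueq$ is assumed in the hypothesis), and keeping track of the additive constants via the Euler--Lagrange relations. The main point to state carefully is that $\zeta_0 = 0$ holds $\mueq$-almost everywhere, which is what allows dropping the constant terms; this is where \eqref{EulerLagrange} together with $\supp\mueq \subset \Sigma$ is used.
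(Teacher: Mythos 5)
Your proof is correct and follows essentially the same route as the paper's: a direct algebraic expansion using the definition of $\zeta_0$, the vanishing of $\zeta_0$ on $\supp\mueq$, and absolute continuity of $\mueq$ to handle the diagonal (the paper expands $\HN$ by writing $\sum_i\delta_{x_i}=N\mueq+\fluct_N$ while you expand $F_N$ instead, and the paper disposes of the constant $c_0$ via $\int d\fluct_N=0$ where you use the integrated Euler--Lagrange identity $\mathcal{I}_V(\mueq)=\tfrac12\int V\,d\mueq+c_0$ --- the same computation read in opposite directions). The only blemish is your intermediate assertion that the needed constant identity is $\int V\,d\mueq+2c_0=0$; the correct requirement is $\int V\,d\mueq+2c_0=2\mathcal{I}_V(\mueq)$, which is precisely what your integrated Euler--Lagrange relation yields, and since you flag the confusion and invoke the right cancellation in the end, the argument stands.
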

The proof of Lemma \ref{lem:splitting} is given in Section \ref{sec:preuvelemsplitting}.

\def \PNbetamuzeta{\mathbb{P}_{N, \beta}^{(\mu, \zeta)}}
Using  \eqref{split0}, we may re-write $\PNbetaV$ as
\begin{equation} \label{PNbetaVdeux}
d\PNbetaV(\XN) = \frac{1}{\KNbeta(\mueq, \zeta_0)} \exp\left( - \frac{\beta}{2} \left( \FN(\XN, \mueq) + 2N \sum_{i=1}^N \zeta_0(x_i) \right) \right) d\XN,
\end{equation}
with a next-order partition function $\KNbeta(\mueq, \zeta_0)$ defined by
\begin{equation} \label{def:KNbeta}
\KNbeta(\mueq, \zeta_0) := \int_{(\R^\d)^N} \exp\left( - \frac{\beta}{2} \left( \FN (\XN, \mueq) + 2N \sum_{i=1}^N \zeta_0(x_i) \right) \right) d\XN.
\end{equation}
We extend this notation to $\KNbeta(\mu, \zeta)$ where $\mu$ is a (bounded, compactly supported) probability density and $\zeta$ a “confining term”. We also define 
\begin{equation} \label{def:PNbetamuzeta}
d\PNbetamuzeta(\XN) := \frac{1}{\KNbeta(\mu, \zeta)} \exp\left( - \frac{\beta}{2} \left( \FN (\XN, \mu) + 2N \sum_{i=1}^N \zeta(x_i) \right) \right) d\XN.
\end{equation}

\subsection{Electric fields and truncation}
Contrarily to our previous works, we can only afford here total errors that are $o(1)$ as $N \to \infty$ since the Laplace transform of the fluctuation is of order $1$. Thus we need improved versions of the previous results, which allow to have exact formulas.  To do so,  we use the rewriting of the energy via truncation as in \cite{RougSer,PetSer} but using the  nearest-neighbor distance truncation as in \cite{2D2CP}.

For any $N$-tuple $\XN=(x_1, \dots, x_N)$ of points in the space $\R^\d$, and any bounded, compactly supported probability density $\mu$, we define the electrostatic potential generated by $\XN$ and $\mu$ as
\begin{equation}\label{def:HNmuABC}
H_N^{\mu}(x) :=\int_{\R^\d}\g(x-y)\, \left(\sum_{i=1}^N \delta_{x_i} - N d\mu\right)(y).
\end{equation}
For $\eta > 0$, let us define the truncation at distance $\eta$ by
\begin{equation*}\label{def:truncation}
\f_\eta(x) := \min \left(\g(x)-\g(\eta), 0\right).
\end{equation*}
If $\XN = (x_1, \dots, x_N)$ is a $N$-tuple of points in $\R^\d$ we denote for all $i=1, \dots, N$,
\begin{equation}\label{def:trxi}
\rr(x_i)= \frac{1}{4} \min\left(\min_{j \neq i} |x_i-x_j|, N^{-1/\d}\right)
\end{equation}
which we will think of as the \textit{nearest-neighbor distance} for $x_i$.  Let $\vec{\eta}$ be a $N$-tuple of small distances $\vec{\eta} = (\eta_1, \dots, \eta_N)$.
We define the truncated potential $H_{N,\vec{\eta}}^{\mu}$ as \comTT{
\begin{equation}\label{def:HNmutrun}
H_{N,\vec{\eta}}^{\mu} (x)= H_N^{\mu}(x)-\sum_{i=1}^N \f_{\eta_i}(x-x_i).
\end{equation}}
This amounts to truncating the singularity of $H_{N}^{\mu}$ near each particle $x_i$ at distance $\eta_i$.
We note that since $\g$ is a multiple of the Coulomb kernel in $\R^\d$, $H_N^{\mu}$ satisfies 
\begin{equation} \label{rem:aproposdeHNeta}
-\Delta H_{N}^{\mu}= \cd\, \left(\sum_{i=1}^N \delta_{x_i} - N d\mu\right) \text{ in $\R^\d$,}
\end{equation}
with $\cd = 2\pi$ for $\d=2$. 
\comTT{Also, denoting $\delta_x^{(\eta)}$ the uniform measure of mass $1$ on $\p B(x, \eta)$, we note that we have 
$$
\f_{\eta}(x)= \int_{\R^\d} \g(x-y) \( \delta_0- \delta_0^{(\eta)}\)(y),$$
hence  in view of \eqref{def:HNmuABC} and \eqref{def:HNmutrun}, we may write}
\begin{align}
\label{eqhne} H_{N, \vec{\eta}}^{\mu} (x) & = \int_{\R^\d} \g(x-y) \left(\sum_{i=1}^N \delta_{x_i}^{(\eta_i)}  -N d\mu\right)(y), \\
\label{dhne} -\Delta H_{N, \vec{\eta}}^{\mu}  & =\cd\left(\sum_{i=1}^N \delta_{x_i}^{(\eta_i)}  -N d\mu\right).
\end{align}

The (truncated) electric fields are defined as the gradient of the (truncated) electric potentials. The main point of introducing these objects is that we may express the next-order energy $\FN(\XN, \mu)$ in terms of the (truncated) electric fields.
\begin{prop} \label{prop:monoto}
Let $\mu$ be a bounded probability density on $\R^{\d}$ and $\XN$ be in $(\R^\d)^N$. We may re-write $\FN(\XN, \mu)$ as  
\begin{equation*}
\label{def:FNbis}
\FN(\XN, \mu) := \frac{1}{\cd} \lim_{\eta \to 0} \left(\int_{\R^\d}|\nab H_{N, \vec{\eta}} ^{\mu}|^2  - \cd \sum_{i=1}^N \g(\eta_i)  \right).
\end{equation*}
If $\vec{\eta} = (\eta_1, \dots, \eta_N)$ is such that $0 < \eta_i \le \rr(x_i)$ for each $i = 1, \dots, N$ we have
 \begin{equation}
\label{fnmeta}
 \FN(\XN,\mu) = \frac{1}{\cd} \left(\int_{\R^\d}|\nab H_{N, \vec{\eta}} ^{\mu}|^2  -\cd \sum_{i=1}^N  \g(\eta_i)  \right)  
 + 2N \sum_{i=1}^N \int_{\R^{\d}} \f_{\eta_i}(x - x_i) d\mu(x)
\end{equation}
and for general $\vec{\eta}$ we have the bounds
 \begin{multline}
\label{fnmeta2}
 \sum_{i\neq j} \Big(\g(x_i-x_j) -\min ( \g(\eta_i), \g(\eta_j))\Big) \indic_{|x_i-x_j|\le \eta_i + \eta_j}  \\ \le  \FN(\XN,\mu) - \frac{1}{\cd} \left(\int_{\R^\d}|\nab H_{N, \vec{\eta}} ^{\mu}|^2  -\cd \sum_{i=1}^N  \g(\eta_i)  \right)  
 - 2N \sum_{i=1}^N \int_{\R^{\d}} \f_{\eta_i}(x - x_i) d\mu(x) \\ \le   \sum_{i \neq j} \g(x_i-x_j) \indic_{|x_i-x_j|\le \eta_i + \eta_j},
\end{multline}
where the error terms in \eqref{fnmeta} and \eqref{fnmeta2} may be  bounded as follows
\begin{equation} \label{ecartaeta}
\left|  2N \sum_{i=1}^N \int_{\R^{\d}} \f_{\eta_i}(x - x_i) d\mu(x)  \right|\le C N \|\mu\|_{L^\infty} \sum_{i=1}^N \eta_i^{\d},
\end{equation}
for some constant $C$ depending only on $\d$. 
\end{prop}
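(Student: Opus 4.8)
The plan is to evaluate the electric energy $\int_{\R^\d}|\nab H_{N,\veta}^{\mu}|^2$ \emph{exactly} for every admissible truncation vector $\veta$, and then to read off $\FN(\XN,\mu)$ by comparing with the elementary term‑by‑term expansion of the double integral in \eqref{def:FN}. The bound \eqref{ecartaeta} I would dispose of first, as it is immediate: $\f_{\eta_i}\ge 0$ is supported in $\D(0,\eta_i)$ with $\int_{\R^\d}\f_{\eta_i}\preceq \eta_i^{\d}$ by scaling, hence $\bigl|\int_{\R^\d}\f_{\eta_i}(x-x_i)\,d\mu(x)\bigr|\le \|\mu\|_{L^\infty}\int_{\R^\d}\f_{\eta_i}\preceq \|\mu\|_{L^\infty}\eta_i^{\d}$, and one sums over $i$ and multiplies by $2N$.

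For the core identity I would start from \eqref{eqhne}, which reads $H_{N,\veta}^{\mu}=\g*\tilde\nu$ with $\tilde\nu:=\sum_{i=1}^N\delta_{x_i}^{(\eta_i)}-N\mu$, a compactly supported signed measure of total mass $0$. Because $\tilde\nu$ is neutral, $H_{N,\veta}^{\mu}(x)$ and $\nab H_{N,\veta}^{\mu}(x)$ decay at infinity like $|x|^{-(\d-1)}$ and $|x|^{-\d}$; near each $x_i$ the truncated field is bounded (each smeared kernel $\g*\delta_{x_j}^{(\eta_j)}$ has gradient of size $\preceq \eta_j^{-1}$), so $\nab H_{N,\veta}^{\mu}\in L^2(\R^\d)$. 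Integrating by parts on $\D(0,R)$ with the help of \eqref{dhne} and letting $R\to\infty$ (the boundary term is $O(R^{-\d})$, which vanishes by the decay just noted), I get
\[
\frac1\cd\int_{\R^\d}|\nab H_{N,\veta}^{\mu}|^2=\iint_{\R^\d\times\R^\d}\g(x-y)\,d\tilde\nu(x)\,d\tilde\nu(y).
\]

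Next I expand this double integral and, in parallel, $\FN(\XN,\mu)=\iint_{(\R^\d\times\R^\d)\setminus\triangle}\g(x-y)\,d\nu(x)\,d\nu(y)$ with $\nu:=\sum_i\delta_{x_i}-N\mu$, into the three usual blocks: point--point, point--background, background--background; since $\mu$ is absolutely continuous, the diagonal $\triangle$ affects only the point--point block. The background--background blocks $N^2\iint\g\,d\mu\,d\mu$ are identical. For the point--background block, Newton's theorem gives $\g*\delta_{x_i}^{(\eta_i)}=\g(\cdot-x_i)-\f_{\eta_i}(\cdot-x_i)$, so replacing each $\delta_{x_i}$ by $\delta_{x_i}^{(\eta_i)}$ produces exactly the correction $2N\sum_i\int\f_{\eta_i}(x-x_i)\,d\mu(x)$ of \eqref{fnmeta}. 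For the point--point block, the self‑interactions of the smeared charges are $\iint\g\,\delta_{x_i}^{(\eta_i)}\otimes\delta_{x_i}^{(\eta_i)}=\g(\eta_i)$ (Newton again), accounting for the $-\cd\sum_i\g(\eta_i)$ term; and, for $i\neq j$, $\iint\g\,\delta_{x_i}^{(\eta_i)}\otimes\delta_{x_j}^{(\eta_j)}=\int\min(\g(\cdot-x_j),\g(\eta_j))\,d\delta_{x_i}^{(\eta_i)}$. When $\eta_i\le\rr(x_i)$ for all $i$, one has $\eta_i+\eta_j\le\hal|x_i-x_j|$, so the balls $\D(x_i,\eta_i)$ are pairwise disjoint, $x_j\notin\overline{\D(x_i,\eta_i)}$, and the mean value property for the harmonic function $\g(\cdot-x_j)$ gives $\iint\g\,\delta_{x_i}^{(\eta_i)}\otimes\delta_{x_j}^{(\eta_j)}=\g(x_i-x_j)$; hence the point--point blocks agree and \eqref{fnmeta} follows. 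The limit representation then drops out of \eqref{fnmeta}: for $\max_i\eta_i$ small, $\eta_i\le\rr(x_i)$ holds, so $\frac1\cd\bigl(\int|\nab H_{N,\veta}^{\mu}|^2-\cd\sum_i\g(\eta_i)\bigr)$ equals $\FN(\XN,\mu)$ up to the $\f_{\eta_i}$ term, which tends to $0$ as $\veta\to0$ by \eqref{ecartaeta}.

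For a general small $\veta$ the only block in which the two expansions disagree is the point--point one, by the amount $\sum_{i\neq j}\bigl(\g(x_i-x_j)-\iint\g\,\delta_{x_i}^{(\eta_i)}\otimes\delta_{x_j}^{(\eta_j)}\bigr)$, each summand being $0$ unless $|x_i-x_j|\le\eta_i+\eta_j$. For such overlapping pairs I would bound $\iint\g\,\delta_{x_i}^{(\eta_i)}\otimes\delta_{x_j}^{(\eta_j)}$ above by $\g(x_i-x_j)$ (from $\min(\g(\cdot-x_j),\g(\eta_j))\le\g(\cdot-x_j)$ together with superharmonicity of $\g(\cdot-x_j)$, whose spherical average around $x_i$ is $\le\g(x_i-x_j)$), above also by $\min(\g(\eta_i),\g(\eta_j))$ (pointwise, using the representation above and its symmetric counterpart), and below by $0$ (valid once $\veta$ is small enough that $|x-y|<1$, hence $\g(x-y)>0$, on the two spheres of an overlapping pair, where $|x-y|\le2(\eta_i+\eta_j)$). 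These three inequalities yield precisely the two‑sided bound \eqref{fnmeta2}. I expect the only genuinely delicate points to be the rigorous integration by parts — that $\nab H_{N,\veta}^{\mu}\in L^2(\R^\d)$ and that the boundary term at infinity vanishes, both resting on the neutrality of $\tilde\nu$ — and the sharp control of $\iint\g\,\delta_{x_i}^{(\eta_i)}\otimes\delta_{x_j}^{(\eta_j)}$ for overlapping pairs that pins down the constants in \eqref{fnmeta2}; the rest is the routine bookkeeping of matching two expansions block by block.
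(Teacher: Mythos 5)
Your proposal is correct and follows essentially the same route as the paper: Green's formula plus the decay coming from neutrality to identify $\frac1\cd\int|\nab H_{N,\veta}^{\mu}|^2$ with the smeared double integral, then a block-by-block comparison with \eqref{def:FN} using Newton's theorem for the self-interactions and the background correction, the mean value property for non-overlapping pairs, and the pointwise bounds $0\le \g*\delta_{x_j}^{(\eta_j)}\le \g(\eta_j)$ on the spheres of overlapping pairs to get \eqref{fnmeta2}. The only slip is cosmetic: $\f_\eta=\min(\g-\g(\eta),0)\le 0$ rather than $\ge 0$, which changes nothing in the absolute-value bound \eqref{ecartaeta}.
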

The proof of Proposition \ref{prop:monoto} is given in Section \ref{sec:preuvepropmonoto}.

\ed{Choosing in particular $\eta_i =N^{-1/\d}$, we deduce from \eqref{fnmeta2} and \eqref{ecartaeta}
that 
\begin{coro}\label{corominoe} For any $\XN$, we have
\begin{equation}
F_N(\XN,\mu) \ge - N \g(N^{-\frac{1}{\d}})-CN \|\mu\|_{L^\infty}\end{equation}
for some constant depending only on $\d$.\end{coro}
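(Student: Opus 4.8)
The plan is to specialize the electric reformulation of $F_N$ in Proposition~\ref{prop:monoto} to the \emph{uniform} truncation $\vec{\eta}=(N^{-1/\d},\dots,N^{-1/\d})$. The point of this choice is the arithmetic $\sum_{i=1}^{N}\eta_i^{\d}=N\cdot N^{-1}=1$, which is exactly what will make the bound have the sharp order. Taking the left-hand inequality in \eqref{fnmeta2} with this $\vec{\eta}$ and discarding the non-negative term $\tfrac1\cd\int_{\R^{\d}}|\nab H_{N,\vec{\eta}}^{\mu}|^2\ge 0$ (it is the square of an $L^2$-norm), one gets
\begin{align*}
F_N(\XN,\mu)\ \ge\ -N\,\g(N^{-1/\d})\ &+\ 2N\sum_{i=1}^{N}\int_{\R^{\d}}\f_{\eta_i}(x-x_i)\,d\mu(x)\\
&+\ \sum_{i\ne j}\Big(\g(x_i-x_j)-\g(N^{-1/\d})\Big)\,\indic_{|x_i-x_j|\le 2N^{-1/\d}}.
\end{align*}
Here $-N\,\g(N^{-1/\d})=-\tfrac{N}{\d}\log N$ is precisely the leading term we are after, and by \eqref{ecartaeta} together with $\sum_i\eta_i^{\d}=1$ the $\mu$-dependent term is bounded below by $-CN\|\mu\|_{L^\infty}$ with $C=C(\d)$. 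So everything comes down to bounding the ``close-pair'' sum from below.

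\textbf{The delicate term.} For this last sum I would not use the crude quantity $\min(\g(\eta_i),\g(\eta_j))$ but go back to the identity behind Proposition~\ref{prop:monoto}, keeping the actual electrostatic interaction $I_{ij}$ of the two smeared charges $\delta_{x_i}^{(\eta_i)},\delta_{x_j}^{(\eta_j)}$. This replaces the close-pair sum by $\sum_{i\ne j}\big(\g(x_i-x_j)-I_{ij}\big)$, which is a sum of \emph{non-negative} terms: Newton's theorem gives $I_{ij}=\g(x_i-x_j)$ whenever $B(x_i,\eta_i)$ and $B(x_j,\eta_j)$ are disjoint, while superharmonicity of the Coulomb kernel $\g$ gives $I_{ij}\le\g(x_i-x_j)$ in general (smearing a point charge can only decrease its interaction with another charge, since the spherical average of the superharmonic potential of $\delta_{x_j}^{(\eta_j)}$ over $\partial B(x_i,\eta_i)$ does not exceed its value at $x_i$, which is already $\le\g(x_i-x_j)$). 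Equivalently, one may just plug $\eta_i=N^{-1/\d}$ into the exact expansion
\[
F_N(\XN,\mu)=\frac1\cd\int_{\R^{\d}}|\nab H_{N,\vec{\eta}}^{\mu}|^2-\sum_{i=1}^{N}\g(\eta_i)+\sum_{i\ne j}\big(\g(x_i-x_j)-I_{ij}\big)+2N\sum_{i=1}^{N}\int_{\R^{\d}}\f_{\eta_i}(x-x_i)\,d\mu(x),
\]
keep only the two non-negative first terms together with $-\sum_i\g(\eta_i)=-N\g(N^{-1/\d})$, and bound the last term by \eqref{ecartaeta}. In either formulation one lands on $F_N(\XN,\mu)\ge -N\,\g(N^{-1/\d})-CN\|\mu\|_{L^\infty}$ with $C=C(\d)$, which is the claim.

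\textbf{Main obstacle.} The only genuinely non-trivial point is precisely this last one: that near-collisions cannot bring the energy down. This is exactly the monotonicity of the truncated energy built into Proposition~\ref{prop:monoto} (the interaction of smeared charges never exceeds that of the underlying point charges); everything else — positivity of the electric term, the elementary smearing estimate \eqref{ecartaeta}, and the identity $\sum_i(N^{-1/\d})^{\d}=1$ — is bookkeeping.
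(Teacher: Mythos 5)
Your proof is correct and follows the paper's route: the paper's entire argument for this corollary is to set $\eta_i = N^{-1/\d}$ in Proposition~\ref{prop:monoto}, drop the nonnegative electric term, and invoke \eqref{ecartaeta} with $\sum_i \eta_i^{\d}=1$. Your extra care with the close-pair term is well placed — the nonnegativity of $\sum_{i\neq j}\bigl(\g(x_i-x_j)-I_{ij}\bigr)$ that you fall back on is precisely what the appendix proof of Proposition~\ref{prop:monoto} establishes via superharmonicity of the smeared potentials, and it is the clean way to see that near-collisions cannot drive the energy below the stated bound.
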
}
\coT{In particular, for $\d=2$ we obtain
$$
F_N(\XN, \mu) \geq \hal N \log N - CN \| \mu \|_{L^{\infty}}.
$$
}

\subsection{The electric energy controls the fluctuations}
\coT{In this section, we explain how to derive a priori, deterministic bounds on the fluctuations of a smooth test function in terms of the electric energy of the points. The basic idea is to use the fact that $\sum_{i=1}^N \delta_{x_i} - N \mu$ is, up to constant, minus the Laplacian of $H_N^{\mu}$ and to write
$$
\Fluct_N[\varphi] = \int \varphi \left(\sum_{i=1}^N \delta_{x_i} - N \mu \right) \simeq \int \varphi \Delta  H_N^{\mu} \simeq \int \nabla \varphi \nabla H_N^{\mu}
$$
and to apply Cauchy-Schwarz's inequality.} \comTT{This is again similar to  previous works \cite{SS2d,RougSer,PetSer}, but with more explicit dependence in the test-functions.}
\ed{
\begin{prop}\label{prop:fluctenergy}
Let  $\varphi$ be a compactly supported Lipschitz function on $\R^\d$ and $\mu$ be a bounded probability density on $\R^\d$.  Let $U_N$ be an open set   containing a $\delta$-neighborhood of the support of $\varphi$, with $\delta \ge 2  N^{-1/\d}$. Let $\vec{\eta}$ be a $N$-tuple of distances such that $\eta_i \le N^{-1/\d}$, for each $i = 1, \dots, N$. For any configuration $\XN$, we have
\begin{multline} \label{controlfluctuations}
\left|\int_{\R^\d} \varphi \, \left(\sum_{i=1}^N \delta_{x_i} - N d\mu \right) \right|
\le C \|\nab \varphi\|_{L^2(U_N)} \|\nab H_{N,\vec{\eta}}^{\mu} \|_{L^2(U_N )}\\
  + C \|\nab\varphi\|_{L^\infty} \(   \delta^{-\hal}|\p U_N|^{\hal} N^{-\frac{1}{\d}} \|\nab H_{N,\vec{\eta}}^{\mu} \|_{L^2(U_N )}+  N^{1-\frac{1}{\d}}|U_N|  \|\mu\|_{L^\infty(U_N)}\right)
\end{multline}
where $C$ depends only on $\d$.

Let $S_N$ be a compact subset of $\R^\d$ and $U_N$ containing its $\delta$-neighborhood, with $\delta\ge 2N^{-1/\d}$, and let $\#I_{S_N} $ denote the number of balls $B(x_i, N^{-1/\d})$ intersecting  $S_N$. We have
\begin{equation}
\label{contrnbpoints}
\# I_{S_N} \le  N \int_{U_N}  d\mu + C \delta^{-\hal} |\p U_N|^{\hal}      \|\nab H_{N,\vec{\eta}}^{\mu} \|_{L^2(U_N)},
\end{equation}where $C$ depends only on $\d$.
\end{prop}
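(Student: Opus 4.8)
The plan is to express $\sum_i\delta_{x_i}-N\,d\mu$ through the truncated potential and then integrate by parts. Recall from \eqref{dhne} that $-\Delta H_{N,\vec{\eta}}^{\mu}=\cd\big(\sum_i\delta_{x_i}^{(\eta_i)}-N\,d\mu\big)$, where each $\delta_{x_i}^{(\eta_i)}$ is a probability measure carried by $\partial B(x_i,\eta_i)\subset B(x_i,\eta_i)$. First I would split, for any compactly supported Lipschitz $\varphi$,
$$
\int_{\R^\d}\varphi\,\Big(\sum_{i=1}^N\delta_{x_i}-N\,d\mu\Big)=\sum_{i=1}^N\Big(\varphi(x_i)-\int\varphi\,d\delta_{x_i}^{(\eta_i)}\Big)+\int_{\R^\d}\varphi\,\Big(\sum_{i=1}^N\delta_{x_i}^{(\eta_i)}-N\,d\mu\Big).
$$
For the second (``smeared'') term, since $\varphi\in C^{0,1}_c$, integrating by parts with \eqref{dhne} gives $\int\varphi\,(\sum_i\delta_{x_i}^{(\eta_i)}-N\,d\mu)=\cd^{-1}\int\nabla\varphi\cdot\nabla H_{N,\vec{\eta}}^{\mu}$; as $\nabla\varphi$ is supported in $\supp\varphi\subset U_N$, Cauchy--Schwarz bounds this by $\cd^{-1}\|\nabla\varphi\|_{L^2(U_N)}\|\nabla H_{N,\vec{\eta}}^{\mu}\|_{L^2(U_N)}$, which is the leading term of \eqref{controlfluctuations}. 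For the first term, the Lipschitz bound yields $|\varphi(x_i)-\int\varphi\,d\delta_{x_i}^{(\eta_i)}|\le\|\nabla\varphi\|_{L^\infty}\eta_i\le\|\nabla\varphi\|_{L^\infty}N^{-1/\d}$, and this vanishes unless $B(x_i,\eta_i)$ meets $\supp\varphi$, which (since $\eta_i\le N^{-1/\d}$) forces $i$ to be one of the indices counted by $\#I_{\supp\varphi}$; hence the first term is at most $\|\nabla\varphi\|_{L^\infty}N^{-1/\d}\,\#I_{\supp\varphi}$. It therefore remains to prove the point count \eqref{contrnbpoints}, which I would establish first and then return to the splitting.

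To prove \eqref{contrnbpoints}, I would choose a cutoff $\chi\in C^{0,1}_c(U_N)$ with $0\le\chi\le1$, equal to $1$ on a neighborhood of $S_N$ wide enough to contain every ball $B(x_i,\eta_i)$ with $i\in I_{S_N}$ (possible because $\eta_i\le N^{-1/\d}$ while $U_N$ contains the $\delta$-neighborhood of $S_N$ with $\delta\ge2N^{-1/\d}$), arranging moreover that $\|\nabla\chi\|_{L^\infty}\preceq\delta^{-1}$ and that $\nabla\chi$ is supported in a shell of volume $\preceq\delta\,|\partial U_N|$; such a $\chi$ can be obtained by composing $\dist(\cdot,S_N)$ with a fixed one-dimensional cutoff profile, the shell volume being estimated by the coarea formula. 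Since $\chi\equiv1$ on each such ball and each $\delta_{x_i}^{(\eta_i)}$ is a probability measure,
$$
\#I_{S_N}=\sum_{i\in I_{S_N}}\int\chi\,d\delta_{x_i}^{(\eta_i)}\le\sum_{i=1}^N\int\chi\,d\delta_{x_i}^{(\eta_i)}=\int_{\R^\d}\chi\,\Big(\sum_{i=1}^N\delta_{x_i}^{(\eta_i)}-N\,d\mu\Big)+N\int_{U_N}d\mu.
$$
Integrating the first term by parts exactly as before and applying Cauchy--Schwarz, then bounding $\|\nabla\chi\|_{L^2(U_N)}\le\|\nabla\chi\|_{L^\infty}\,|\{\nabla\chi\neq0\}|^{1/2}\preceq\delta^{-1}(\delta\,|\partial U_N|)^{1/2}=\delta^{-1/2}|\partial U_N|^{1/2}$, yields \eqref{contrnbpoints}.

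Finally, inserting \eqref{contrnbpoints} with $S_N=\supp\varphi$ into the estimate of the first term above, together with the crude bound $N\int_{U_N}d\mu\le N|U_N|\,\|\mu\|_{L^\infty(U_N)}$, produces exactly the two remaining contributions $\|\nabla\varphi\|_{L^\infty}\big(\delta^{-1/2}|\partial U_N|^{1/2}N^{-1/\d}\|\nabla H_{N,\vec{\eta}}^{\mu}\|_{L^2(U_N)}+N^{1-1/\d}|U_N|\,\|\mu\|_{L^\infty(U_N)}\big)$ of \eqref{controlfluctuations}, and the proof is complete once the universal constants are collected. The only step that is not purely formal is the geometric control of the cutoff $\chi$: arranging that $\nabla\chi$ lives on a shell of volume comparable to $\delta\,|\partial U_N|$ (so that the error scales like $\delta^{-1/2}|\partial U_N|^{1/2}$ rather than $\delta^{-1}|U_N|^{1/2}$), and keeping careful track of the various length scales ($N^{-1/\d}$ versus $\delta$) so that $\chi$ can indeed be taken to equal $1$ on all the relevant truncation balls. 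Everything else reduces to integration by parts against $H_{N,\vec{\eta}}^{\mu}$ and the Cauchy--Schwarz inequality.
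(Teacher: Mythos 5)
Your proof is correct and follows essentially the same route as the paper's: smear each Dirac onto $\partial B(x_i,\eta_i)$, integrate the smeared fluctuation by parts against $\nabla H_{N,\vec\eta}^{\mu}$ via \eqref{dhne} and apply Cauchy--Schwarz, bound the smearing error by $\|\nabla\varphi\|_{L^\infty}N^{-1/\d}\,\#I_{\supp\varphi}$, and control the point count with a cutoff $\chi$ whose gradient has $L^2$ norm $\preceq\delta^{-1/2}|\partial U_N|^{1/2}$. The only (cosmetic) difference is that you build $\chi$ around $S_N$ while the paper uses an inner cutoff of $U_N$; your coarea-formula remark just makes explicit a volume estimate the paper asserts without comment.
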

In particular, for $\d=2$ and $\mu = \mueq$, we obtain
\begin{multline}\label{controlfluctuations2}
 \left|\Fluct_N(\varphi) \right|
  \\ 
  \le C \|\nab\varphi\|_{L^\infty} \left(( |U_N|^{\hal}  +\delta^{-\hal}|\p U_N|^{\hal} N^{-\frac{1}{2}}) \|\nab H_{N,\vec{\eta}}^{\mueq} \|_{L^2(U_N)}+  N^{\hal}|U_N|  \|\mueq\|_{L^\infty(U_N)}\right),  
\end{multline}
and an estimate on the number of points as in \eqref{contrnbpoints}
\begin{equation}
\label{contrnbpoints2} 
\# I_{S_N} \le  N \|\mueq\|_{\infty} |U_N| +   C\delta^{-\hal}|\p U_N|^{\hal} \|\nab H_{N,\vec{\eta}}^{\mueq} \|_{L^2(U_N)}.
\end{equation}

\begin{coro} \label{coro:nombredepointspresdubord}
Let $\mu$ be a bounded probability density with compact support $\Sigma$ such that $\partial \Sigma$ is a piecewise $C^{1}$ curve. For any configuration $\XN$, letting $I_\partial^r $ denote the set of points such that $\dist(x_i, \partial \Sigma) \le r$ or $ x_i \notin \Sigma$, if $r>0$ is smaller than a constant depending only on $\partial \Sigma$ and for any $\vec{\eta}$, we have
\begin{equation}\label{nbpbord}
 \# I_{\partial}^r \le  C  \min\( N^{\frac{1}{6}}, r^{-\hal}\) \|\nab H_{N,\vec{\eta}}^\mu\|_{L^2(\Sigma)} + C \max (N^{\frac{2}{3}}, Nr),
 \end{equation} where $C$ depends only on $\mu$ and $\d$.
\end{coro}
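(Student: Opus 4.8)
The plan is to obtain \eqref{nbpbord} by a direct application of the divergence theorem to the \emph{truncated} potential $H_{N,\vec\eta}^{\mu}$ on a sub-level set of $\dist(\cdot,\partial\Sigma)$ lying inside $\Sigma$, using the relation $-\Delta H_{N,\vec\eta}^{\mu}=\cd\bigl(\sum_i\delta_{x_i}^{(\eta_i)}-N\mu\bigr)$ from \eqref{dhne} together with the Cauchy--Schwarz inequality and a co-area averaging. Set $w:=\max(r,N^{-1/3})$; by the hypothesis on $r$ and for $N$ large, $w$ is below the threshold, depending only on $\partial\Sigma$, under which the inner collars $\Sigma_t:=\{x\in\Sigma:\dist(x,\partial\Sigma)\le t\}$ ($t\lesssim w$) are genuine tubular neighbourhoods, so that $|\Sigma_t|\le C_{\Sigma}t$ and the level curves $\Gamma_t:=\{x\in\Sigma:\dist(x,\partial\Sigma)=t\}$ satisfy $\ell(\Gamma_t)\le C_{\Sigma}$. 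The first observation is that every point counted in $I_{\partial}^{r}$ lies in $\{x_i\notin\Sigma\}\cup\Sigma_w$ (since $r\le w$), so it suffices to bound the cardinality $m$ of the latter set; writing $N-m=\#\{i:x_i\in\Sigma\setminus\Sigma_w\}$ and using $\mu(\Sigma)=1$, the quantity $N\mu(\Sigma_w)-m$ is exactly the charge of $\sum_i\delta_{x_i}-N\mu$ carried by the core $\Sigma\setminus\Sigma_w$.

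Integrating \eqref{dhne} over the core and applying the divergence theorem then gives
\[
m \;\le\; N\mu(\Sigma_w)\;+\;\frac{1}{\cd}\Bigl|\int_{\Gamma_w}\partial_n H_{N,\vec\eta}^{\mu}\Bigr|\;+\;|\Err|,
\]
where $\Err$ collects the discrepancy due to the smeared masses $\delta_{x_i}^{(\eta_i)}$ straddling $\Gamma_w$: since $\eta_i\le N^{-1/\d}\ll w$, only points within $N^{-1/\d}$ of $\Gamma_w$ contribute, all of which lie within $O(w)$ of $\partial\Sigma$ and hence — after harmlessly replacing the sharp curve by a transition layer of width comparable to $w$ and being slightly generous about which set one calls $m$ — are themselves among the points being counted; optimizing the cut level over an interval of length $\asymp w$ forces the number of such points to be $O\bigl(N^{-1/\d}w^{-1}\bigr)\,m=o(m)$, which is absorbed into the left-hand side. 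The volume term is bounded by $N\|\mu\|_{L^\infty}|\Sigma_w|\le C\|\mu\|_{L^\infty}Nw=C\|\mu\|_{L^\infty}\max(N^{2/3},Nr)$. For the flux term, Cauchy--Schwarz on $\Gamma_{w'}$ gives $\bigl|\int_{\Gamma_{w'}}\partial_n H_{N,\vec\eta}^{\mu}\bigr|\le C_{\Sigma}^{1/2}\|\nabla H_{N,\vec\eta}^{\mu}\|_{L^2(\Gamma_{w'})}$, and the same bound on $m$ holds with $\Gamma_w$ replaced by any $\Gamma_{w'}$, $w'\in(w,2w)$; by the co-area formula $\frac1w\int_{w}^{2w}\|\nabla H_{N,\vec\eta}^{\mu}\|_{L^2(\Gamma_{w'})}^2\,dw'=\frac1w\|\nabla H_{N,\vec\eta}^{\mu}\|_{L^2(\Sigma_{2w}\setminus\Sigma_w)}^2\le \frac1w\|\nabla H_{N,\vec\eta}^{\mu}\|_{L^2(\Sigma)}^2$, so choosing $w'$ near the minimum yields a flux bound $\le Cw^{-1/2}\|\nabla H_{N,\vec\eta}^{\mu}\|_{L^2(\Sigma)}$. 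Collecting the three contributions and inserting $w^{-1/2}=\min(r^{-1/2},N^{1/6})$, $w=\max(r,N^{-1/3})$, gives \eqref{nbpbord}.

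The step I expect to be the main obstacle is making the truncation discrepancy $\Err$ rigorous: one must check that it sees only points within $N^{-1/\d}$ of the chosen cut, that those are genuinely counted in $m$ (which is what forces the smoothing of the cut over a layer of width comparable to $w$ and a small slack in the definition of the counted set), and that the resulting self-improving inequality $m\le A+o(m)$ closes; the remaining arithmetic of the powers of $w$ is routine. Note that one could instead deduce the collar count from the point-counting estimate \eqref{contrnbpoints}, but taking there a set $U_N$ wide enough to produce the sharp prefactor $\min(r^{-1/2},N^{1/6})$ forces $U_N$ to protrude outside $\Sigma$, and one would then have to control $\|\nabla H_{N,\vec\eta}^{\mu}\|_{L^2}$ over that exterior strip by $\|\nabla H_{N,\vec\eta}^{\mu}\|_{L^2(\Sigma)}$ — exactly the difficulty the divergence-theorem argument avoids. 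Finally, the collar geometry of $\Sigma_t$ and the bound $\ell(\Gamma_t)\le C_{\Sigma}$ for small $t$, used throughout, are where the assumptions ``$\partial\Sigma$ piecewise $C^1$'' and ``$r$ smaller than a constant depending only on $\partial\Sigma$'' are needed.
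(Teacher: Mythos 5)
Your argument is correct and is essentially the paper's: both proofs pair the charge $\sum_i\delta_{x_i}^{(\eta_i)}-N\mu$ in the boundary collar of width $\max(r,N^{-1/3})$ against $\nabla H_{N,\vec\eta}^{\mu}$ through a cutoff, use that the total charge vanishes to pass to the complementary (core) region, and obtain the prefactor $\min(N^{1/6},r^{-1/2})$ from Cauchy--Schwarz, the paper simply taking a smooth cutoff $\chi$ with $|\nabla\chi|\le C\min(N^{1/3},r^{-1})$ whose gradient is supported in a layer of width $\asymp\max(r,N^{-1/3})$ inside $\Sigma$ — which performs automatically the co-area averaging you carry out by selecting a good level $w'$. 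The one step you flag as delicate, the $\delta_{x_i}$ versus $\delta_{x_i}^{(\eta_i)}$ discrepancy, is handled more cleanly in the paper: by regularity of $\partial\Sigma$, a fixed fraction $1/C$ of the mass of each $\delta_{x_i}^{(\eta_i)}$ with $x_i\in\{\chi=0\}$ remains in $\{\chi=0\}$, which gives directly $\#I_{\partial}^r\le C\int_{\{\chi=0\}}\sum_i\delta_{x_i}^{(\eta_i)}$ and avoids both your self-improving inequality $m\le A+o(m)$ and the attendant enlargement of the counted set.
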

The proofs of Proposition \ref{prop:fluctenergy}  and Corollary \ref{coro:nombredepointspresdubord} are  given in Section \ref{sec:preuvefluctenergy}.}

\subsection{Local control of the distances}
\def \EnergieLoc{\mathsf{F}^{\bar x_N, \L_N}}
We  will  need  the following result, that shows that the electric energy 
locally controls  the nearest neighbor distances.
Let $\vec{\eta}$ be such that $\eta_i \le \rr(x_i) $ for all $i=1, \dots , N$. \ed{For a given background measure $\mu$,} let us introduce 
\begin{equation}\label{deflocen}
\EnergieLoc_{\vec{\eta}}(\XN) := N^{\frac{2}{\d}-1}  \int_{B(\bar{x}_N,2\L_N)} |\nab H_{N,\veta}^{\mu}|^2 - \cd \sum_{i, x_i \in B(\bar x_N, 2 \L_N) } \g(\eta_i N^{1/\d}  ).
\end{equation}

\begin{lem}
 \label{lem:contrdist}
For any configuration, there exists $\vec{\alpha}$  with  $\alpha_i \le N^{-1/\d}$, such that 
\begin{equation*}
\sum_{i \in B(\bar{x}_N, \L_N)} \g (\rr(x_i) N^{1/\d}) \le C \( \EnergieLoc_{\vec{\alpha}} (\XN) + 
 \EnergieLoc_{\vec{\eta}} (\XN) +  N\L_N^\d + \sum_{i , x_i\in B(\bar{x}_N, \L_N)}\g(1) \),
 \end{equation*}
 where  $\eta_i= N^{-1/\d}$ for each $i$, and 
  $C$ depends only on $\|\mueq\|_{L^\infty}$ and $\d$.
\end{lem}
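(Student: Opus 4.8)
The plan is to show that the electric energy in a ball of radius $2\L_N$ controls the sum of the logarithms of the (rescaled) nearest-neighbor distances in the smaller ball of radius $\L_N$. The natural strategy is a dyadic covering/summing argument combined with the monotonicity formula for the truncated energy from Proposition \ref{prop:monoto}, adapted to work locally.

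First I would rescale. Setting $x = \bar x_N + \L_N x'$ or, more naturally for the purposes of the statement, rescaling by $N^{1/\d}$ so that the typical inter-particle spacing becomes order $1$: define $x_i' = N^{1/\d} x_i$, $\mu' = N^{-1}\mu(N^{-1/\d}\cdot)$ and note that $\rr(x_i) N^{1/\d}$ becomes the rescaled nearest-neighbor distance, while the energy $\int_{B}|\nab H^\mu_{N,\veta}|^2$ transforms, after the $N^{2/\d - 1}$ prefactor, into the rescaled electric energy on a ball of rescaled radius $2 N^{1/\d}\L_N$. In these variables the statement becomes: the rescaled electric energy in a ball controls $\sum \g(\rr')$ in the concentric half-radius ball, with an additive error $N\L_N^\d$ (which is the rescaled volume) plus the number of points times $\g(1)$.

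The core mechanism: for each particle $x_i$ in $B(\bar x_N, \L_N)$, compare two choices of truncation parameter. With $\eta_i = N^{-1/\d}$ (the uniform choice) the term $\cd \g(\eta_i N^{1/\d}) = \cd \g(1) = 0$, so $\EnergieLoc_{\vec\eta}$ is just $N^{2/\d-1}\int_B |\nab H^\mu_{N,\veta}|^2$ plus $-\cd \#\{x_i\in B\}\g(1)=0$; with $\alpha_i = \rr(x_i)$ one picks up exactly the terms $\cd \g(\rr(x_i)N^{1/\d})$ we want to bound. The difference $\int |\nab H^\mu_{N,\vec\alpha}|^2 - \int |\nab H^\mu_{N,\vec\eta}|^2$ can be controlled using the identity \eqref{fnmeta} (valid since $\alpha_i \le \rr(x_i)$) and its analogue for $\vec\eta$ via the general bound \eqref{fnmeta2}, together with the error bound \eqref{ecartaeta} which contributes the $N\L_N^\d$-type term after rescaling and localization. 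One has to be careful that \eqref{fnmeta} concerns the \emph{global} energy $F_N$, so to localize I would restrict attention to the balls $B(x_i,\rr(x_i))$, which are disjoint and contained (for the relevant $i$) in $B(\bar x_N, 2\L_N)$, and observe that $\int_{B(x_i,\rr(x_i))}|\nab \f_{\rr(x_i)}(\cdot - x_i)|^2 = \cd \g(\rr(x_i)) + O(1)$ is a local quantity; summing over $i$ and using that the truncated potentials $H^\mu_{N,\vec\alpha}$ and $H^\mu_{N,\vec\eta}$ differ only inside these small balls gives the local comparison. The $\EnergieLoc_{\vec\alpha}$ term on the right-hand side of the lemma is precisely what absorbs the "honest" local energy $N^{2/\d-1}\int_{B(\bar x_N,2\L_N)}|\nab H^\mu_{N,\vec\alpha}|^2$ after this manipulation, and the $\sum \g(1)$ term (= number of points in the ball, up to the rescaling constant) comes from boundary/cross terms and from replacing $\g(\rr(x_i)N^{1/\d})$ by $\g(\rr(x_i)N^{1/\d}) - \g(1) + \g(1)$ where needed.

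The main obstacle I expect is the localization itself: Proposition \ref{prop:monoto} and the identity \eqref{fnmeta} are genuinely global (they involve $\int_{\R^\d}$ and the full $F_N$), and passing to a ball $B(\bar x_N,2\L_N)$ requires handling the flux of the electric field through $\p B(\bar x_N, 2\L_N)$ and the interaction between particles inside and outside the ball. The reason this works is that $\f_{\eta_i}$ is compactly supported in $B(x_i,\eta_i)$, so $H^\mu_{N,\vec\alpha} - H^\mu_{N,\vec\eta}$ is supported in $\bigcup_i B(x_i, \max(\alpha_i,\eta_i)) \subset \bigcup_i B(x_i, N^{-1/\d})$, hence vanishes near $\p B(\bar x_N, 2\L_N)$ provided the relevant particles are at distance $\ge N^{-1/\d}$ from that sphere — and this is why the lemma uses the radius $2\L_N$ for the energy but only $\L_N$ for the sum over $i$. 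So the difference of the two localized energies really is a sum of purely local contributions $\sum_i \cd\g(\rr(x_i)N^{1/\d})$ plus controlled errors, and the inequality follows by rearranging and bounding the honest energy term by $\EnergieLoc_{\vec\alpha}$. I would also need the elementary fact that, by a pigeonhole/dyadic argument, one can choose the auxiliary $\vec\alpha$ with $\alpha_i \le \rr(x_i)\le N^{-1/\d}$ so that these balls are disjoint — but since $\rr(x_i)$ already has the disjointness built in (it is a quarter of the nearest-neighbor distance), the choice $\alpha_i = \rr(x_i)$ suffices and no further combinatorics is needed.
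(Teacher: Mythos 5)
Your overall strategy is the right one and matches the paper's: compare the $\vec{\eta}$-truncated energy (with $\eta_i=N^{-1/\d}$, for which the renormalizing sum is $\sum_i\g(1)$) to the $\vec{\alpha}$-truncated energy with $\alpha_i=\rr(x_i)$ for the points of interest; the diagonal contributions produce exactly $\cd\sum_i\g(\rr(x_i)N^{1/\d})$, and the localization works because the difference of the two potentials is supported in $\cup_i B(x_i,N^{-1/\d})$. However, the step where you actually estimate the difference of the two truncated energies has a genuine gap. You propose to do it by applying \eqref{fnmeta} for $\vec{\alpha}$ and \eqref{fnmeta2} for $\vec{\eta}$ and subtracting, with the error coming only from \eqref{ecartaeta}. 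This forgets the pair-interaction terms in \eqref{fnmeta2}: since $\eta_i=N^{-1/\d}$ generally exceeds $\rr(x_i)$, the comparison of $\int|\nab H_{N,\vec{\eta}}^{\mu}|^2$ with $F_N$ carries the terms $\sum_{i\neq j}\g(x_i-x_j)\indic_{|x_i-x_j|\le\eta_i+\eta_j}$ (and the matching lower bound), and after subtraction you are left with $\sum_{i\neq j}\log(|x_i-x_j|N^{1/\d})\indic_{|x_i-x_j|\le 2N^{-1/\d}}$ on the wrong side of the inequality. Its positive part is of order $\log 2$ times the number of pairs at mutual distance in $(N^{-1/\d},2N^{-1/\d}]$, which is not bounded by $\#I_N$ and is not covered by \eqref{ecartaeta}; controlling it (say, by a covering argument combined with \eqref{contrnbpoints} to bound local counts squared by local energies) is an extra argument you would have to supply. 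The paper avoids this entirely by never passing through $F_N$: it writes $H_{N,\vec{\eta}}^{\mu}-H_{N,\vec{\alpha}}^{\mu}=\sum_{i\in I_N}(\f_{\alpha_i}-\f_{\eta_i})(\cdot-x_i)$, integrates by parts against $\sum_j\delta_{x_j}^{(\alpha_j)}+\delta_{x_j}^{(\eta_j)}-2N\mu$, observes that the whole particle--particle contribution is nonpositive (since $\f_{\alpha}-\f_{\eta}\le 0$), extracts the diagonal exactly, and bounds the off-diagonal terms using the separation $|x_i-x_j|\ge 2(\rr(x_i)+\rr(x_j))$ built into \eqref{def:trxi}; only the smooth background term survives as an error and yields the $N\L_N^{\d}$ contribution.

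A second, smaller point: you cannot simply take $\alpha_i=\rr(x_i)$ for all $i$. The paper sets $\alpha_i=N^{-1/\d}=\eta_i$ for every $i$ whose ball $B(x_i,N^{-1/\d})$ is not contained in $B(\bar{x}_N,\L_N)$, so that $H_{N,\vec{\alpha}}^{\mu}-H_{N,\vec{\eta}}^{\mu}$ is supported inside the region where the energies are compared and the global integration by parts genuinely localizes; with your choice the identities you invoke involve all particles in the plane, not just those near $\bar{x}_N$. This is precisely the freedom that the ``there exists $\vec{\alpha}$'' in the statement is giving you; the issue is fixable, but as written your localization does not close.
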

Lemma \ref{lem:contrdist} is proven in Section \ref{sec:preuvecontrdist}.

\subsection{Perturbed quantities}
\label{sec141}
\begin{defi} \label{def:mueqt}
For any $t \in \R$ and $N \geq 1$, we define
\begin{itemize}
\item The perturbed potential $V_t$ as $V - \frac{2t\xi_N}{\beta}$.
\item The perturbed equilibrium measure $\mueqt$ as the equilibrium measure associated to $V_t$. 
\item The perturbed droplet $\Sigma_t$ as the compact support of $\mueqt$.
\item The next-order confinement term $\zetat$ as in \eqref{zeta}.
\item The next-order energy $\FN(\XN, \mueqt)$ as in \eqref{def:FN}.
\item The next-order partition function $\KNbeta(\mueqt, \zetat)$ as in \eqref{def:KNbeta}.
\end{itemize}
\end{defi}

 We also define $\mut$ as the signed measure
\begin{equation} \label{def:mut}
\mut := \mueq- \frac{t}{\cd \beta} \Delta \xi_N.
\end{equation}
\coT{For $\d =2$, it corresponds to
$$
\mut := \mueq - \frac{t}{2\pi \beta} \Delta \xi_N.
$$
}
\begin{remark} \label{rem:mut}
Let $N \geq 1$ be fixed. If $\xi_N$ is supported in $\Sigma_0$ and $t$ is such that 
\begin{equation} \label{tmax}
|t| \leq \tmax := \frac{\cd \beta \min_{\Sigma_0} \mueq}{2  \|\Delta \xi_N\|_{L^{\infty}} }
\end{equation}
then $\mueqt= \mut$ and $\Sigma_t=\Sigma_0$.
\end{remark}
\begin{proof}
From \ref{H3} we know that the density $\mueq$ is bounded below by a positive constant. Thus for $t$ as in \eqref{tmax}, $\mut$ is a probability density whose support is $\Sigma_0$. On the other hand, we can check that $\mut$ satifies the Euler-Lagrange equations \eqref{EulerLagrange} associated to $V_t$. \comTT{It is known since \cite{frostman}} that these equations characterize the equilibrium measure, hence $\mueqt = \mut$ and in particular $\Sigma_t = \Sigma_0$.
\end{proof}
Let us observe that if $\xi$ is $C^2$ and $\xi_N$ is as in \eqref{def:xiN}, then $\tmax$ is of order
$$
\tmax \approx \L_N^2.
$$

\subsection{The Laplace transform of fluctuations as a ratio of partition functions}
\coT{The following proposition expresses the Laplace transform of the fluctuations of a linear statistic as a ratio of partition functions. A deterministic term appears, which will later be identified as the variance of the fluctuations. For the sake of simplicity of the presentation, we take the log of the Laplace transform.}
\begin{prop} \label{prop:LaplaceTransform}
In the interior cases, for $|t| \leq \tmax$ as in \eqref{tmax}, we have the identity
\begin{multline} \label{LTttot0facile}
\log \Esp_{\PNbetaV} \left[\exp( Nt\,  \Fluct_N(\xi_N) ) \right] \\
= \log \KNbeta(\mueqt,\zeta_t) - \log \KNbeta(\mueq,\zeta_0) + \frac{N^2t^2 \L_N^{\d-2}}{2\cd \beta} \int_{\R^\d} |\nabla \xi |^2.
\end{multline} 

In the macroscopic boundary case, as $t \to 0$ we have
\begin{multline} \label{LTttot0}
\log \Esp_{\PNbetaV} \left[\exp( Nt\,  \Fluct_N(\xi_N) ) \right] \\ = \log \KNbeta(\mueqt,\zeta_t) - \log \KNbeta(\mueq,\zeta_0) + \frac{N^2t^2}{2\cd \beta} \int_{\R^\d} |\nabla \xi^{\Sigma} |^2 \\ +N^2 \left( \|\xi\|_{C^{0,1}}(1+\|\xi\|_{C^{1,1}} ) +\|\xi\|_{C^{0,1}}^2 \right) O(t^3),
\end{multline}
\end{prop}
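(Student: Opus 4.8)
The plan is to compute the Laplace transform $\Esp_{\PNbetaV}[\exp(Nt\,\Fluct_N(\xi_N))]$ directly from the definition \eqref{def:PNbeta} of the Gibbs measure, and recognize the result as a ratio of perturbed partition functions up to an explicit Gaussian-type correction. First I would observe that multiplying $d\PNbetaV$ by $\exp(Nt\,\Fluct_N(\xi_N)) = \exp(Nt \int \xi_N d\fluct_N)$ amounts, after spelling out $\fluct_N = \sum_i \delta_{x_i} - N\mueq$, to replacing $\exp(-\frac{\beta}{2}\HN(\XN))$ by $\exp(-\frac{\beta}{2}\HN(\XN) + Nt\sum_i \xi_N(x_i) - N^2 t \int \xi_N d\mueq)$. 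The key algebraic move is to absorb the term $Nt\sum_i \xi_N(x_i) = -\frac{\beta}{2}\sum_i N V_t(x_i) + \frac{\beta}{2}\sum_i N V(x_i)$, using the definition $V_t = V - \frac{2t}{\beta}\xi_N$ from Definition \ref{def:mueqt}: thus $-\frac{\beta}{2}\HN + Nt\sum_i \xi_N(x_i)$ is exactly $-\frac{\beta}{2}$ times the Hamiltonian $\HN^{V_t}$ built with potential $V_t$ in place of $V$.

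Next I would apply the splitting formula, Lemma \ref{lem:splitting}, to \emph{both} $\HN = \HN^V$ (using $\mueq, \zeta_0$) and $\HN^{V_t}$ (using $\mueqt, \zeta_t$). This converts the numerator integrand into $\exp(-\frac{\beta}{2}(\FN(\XN,\mueqt) + 2N\sum_i \zeta_t(x_i))) \cdot \exp(-\frac{\beta}{2} N^2 \I_{V_t}(\mueqt))$ and similarly for the denominator $\ZNbetaV$, so that after integrating the $\XN$-variables we get $\KNbeta(\mueqt,\zeta_t)/\KNbeta(\mueq,\zeta_0)$ times $\exp(-\frac{\beta}{2}N^2(\I_{V_t}(\mueqt) - \I_V(\mueq)))$ times the leftover deterministic factor $\exp(-N^2 t\int \xi_N d\mueq)$ coming from the $-N^2 t\int\xi_N d\mueq$ piece of the exponent. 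The remaining task is purely a mean-field computation: expand $\I_{V_t}(\mueqt) - \I_V(\mueq)$ in $t$. In the interior cases, by Remark \ref{rem:mut} we have $\mueqt = \mut = \mueq - \frac{t}{\cd\beta}\Delta\xi_N$ exactly for $|t|\le\tmax$, so the expansion is a genuine (finite) quadratic polynomial in $t$: the linear terms in $t$ cancel against $-N^2 t\int\xi_N d\mueq$ by the Euler--Lagrange relation \eqref{EulerLagrange}/\eqref{zeta} (the first variation of $\I_V$ at its minimizer is constant on $\Sigma$, and $\xi_N$ is supported there), and the quadratic term in $t$ works out, after an integration by parts using $-\Delta \Delta^{-1} = \mathrm{Id}$ and $\mueq$ supported on $\Sigma_0$, to exactly $\frac{t^2 \L_N^{\d-2}}{2\cd\beta}\int |\nabla\xi|^2$ once we account for the rescaling \eqref{def:xiN} (note $\int|\nabla\xi_N|^2 = \L_N^{\d-2}\int|\nabla\xi|^2$); multiplying by $N^2$ gives \eqref{LTttot0facile}.

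For the macroscopic boundary case the same scheme applies but $\mueqt \ne \mut$ in general because the droplet moves, so the hard part is controlling $\I_{V_t}(\mueqt) - \I_V(\mueq)$ to order $t^2$ with an explicit $\int|\nabla\xi^\Sigma|^2$ main term and a cubic remainder. Here I would use the variational characterization: since $\mueqt$ minimizes $\I_{V_t}$ one has $\I_{V_t}(\mueqt) \le \I_{V_t}(\mueq)$ and, testing $\I_{V_{-t}}$... more cleanly, one writes $\I_{V_t}(\mueqt) = \I_V(\mueqt) - \frac{2t}{\beta}\int\xi_N d\mueqt$ and expands $\mueqt = \mueq + t\nu_1 + O(t^2)$ where $t\nu_1$ is the first-order variation of the equilibrium measure, which involves the harmonic extension because the support perturbation contributes a boundary layer; the second-order Taylor coefficient of $\I$ evaluated on this perturbation is a Dirichlet energy that, after the support motion is taken into account, becomes $\frac{1}{\cd}\int|\nabla\xi^\Sigma|^2$ rather than $\frac{1}{\cd}\int|\nabla\xi|^2$. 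The cubic error with the stated $C^{0,1}, C^{1,1}$ norm dependence comes from bounding the third-order term in this expansion, for which one needs quantitative control on how $\Sigma_t$ and $\mueqt$ depend on $t$ — precisely the obstacle-problem stability input from \cite{SerSer} invoked via Assumption \ref{H4}. I expect this mean-field expansion in the boundary case — in particular producing the harmonic-extension main term with the correct constant and the explicit cubic remainder — to be the main obstacle; the interior case and the initial change-of-variables bookkeeping are routine.
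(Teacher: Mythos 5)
Your route is correct but organized differently from the paper's. The paper completes the square directly at the level of the next-order energy: it rewrites $F_N(\XN,\mueq)-\frac{2Nt}{\beta}\int \xi_N\,d\fluct_N$ as $F_N(\XN,\mueqt)$ plus explicit deterministic corrections involving $\nabla h^{\mueqt}-\nabla h^{\mut}$ and $\int\zeta_0\,d\mueqt$ (its Step 1, identity \eqref{relfinale}), and then estimates those corrections. You instead absorb the tilt into the potential, recognizing the tilted integrand as $\exp(-\frac{\beta}{2}\HN^{V_t})$, apply the splitting formula of Lemma \ref{lem:splitting} to both $\HN^{V}$ and $\HN^{V_t}$, and are left with expanding $\I_{V_t}(\mueqt)-\I_V(\mueq)$. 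The two identities are algebraically equivalent (both produce exactly $\KNbeta(\mueqt,\zeta_t)/\KNbeta(\mueq,\zeta_0)$ times a deterministic factor), and I have checked your interior-case computation: the linear terms cancel via the Euler--Lagrange relation exactly as you say, and the quadratic coefficient comes out to $\frac{t^2}{2\cd\beta}\int|\nabla\xi_N|^2$, so \eqref{LTttot0facile} holds as an identity. What the paper's organization buys is that in the boundary case the correction terms are already in a form where smallness is visible — $\int|\nabla h^{\mueqt}-\nabla h^{\mut}|^2$ is the Dirichlet energy of the potential of a small signed measure, and $\int\zeta_0\,d\mueqt$ is controlled by the quadratic vanishing of $\zeta_0$ near $\partial\Sigma$ — whereas your organization defers everything to a second-order Taylor expansion of the composite map $t\mapsto\I_{V_t}(\mueqt)$, which is slightly more delicate because the first variation of $\mueqt$ contains a singular boundary layer on $\partial\Sigma$.

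That boundary-case expansion is the real content of the statement (it is precisely the paper's Lemma \ref{lem:termesfactorises}, proved via the preliminary Lemma \ref{lemS} and the stability estimates of Proposition \ref{proserser}), and in your proposal it remains a sketch: you assert without proof that the second-order coefficient is $\frac{1}{\cd}\int|\nabla\xi^{\Sigma}|^2$ once the support motion is accounted for, and that the remainder is $O(t^3)$ with the stated norm dependence. You do identify the correct mechanism (the boundary layer of width $O(t)$ on $\partial\Sigma$ producing the harmonic extension, and the obstacle-problem stability of \cite{SerSer} as the quantitative input), so the plan is sound; but to make it a proof you would need to carry out the analogue of \eqref{251}--\eqref{252}, e.g.\ by reducing the quadratic form of $\I$ on $\mueqt-\mut$ to $\int(\zeta_0-\zeta_t)\,d(\mut-\mueqt)$ and invoking \eqref{zetateq} and \eqref{claim76}, as the paper does.
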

\begin{proof}[Proof of Proposition \ref{prop:LaplaceTransform}]
\setcounter{step}{0}
For any bounded, compactly supported probability density $\mu$ we define the electrostatic potential generated by $\mu$ as
\begin{equation} \label{def:hmu}
h^{\mu}(x) := \int_{\R^\d} \g(x-y) d\mu(y),
\end{equation}
let us observe that this quantity appears in the definition of $\zeta$ as in \eqref{zeta}.

\coT{The proof relies on simple algebra.}
\begin{step}[Reexpressing fluctuations]
We have, with the notation of Definition \ref{def:mueqt} and \eqref{def:mut},
\begin{multline}\label{relfinale}
\frac{2Nt}{\beta} \Fluct_N(\xi_N) =  F_N(\XN, \mueq) - F_N(\XN, \mueqt)
-  2N \sum_{i=1}^N (\zetat(x_i)-\zetaz(x_i))\\ - 2N^2 \int_{\R^\d} \zetaz \, d\mueqt
+ \frac{N^2t^2}{\cd \beta^2} \int_{\R^\d} |\nab \xi_N|^2 - \frac{N^2}{\cd}\int_{\R^\d} \left| \nabla \hmueqt - \nabla \hmut \right|^2 .
\end{multline}
\end{step}
\begin{proof}
Letting $\nu=-\frac{1}{\cd}\Delta \xi_N$, \coT{and using the fact that 
$$
\xi_N(x) = - \int \frac{\g(x-y)}{\cd} \Delta \xi_N,
$$
}
we may write 
\begin{multline*}
F_N(\XN, \mueq)-  \frac{2N t}{\beta} \int \xi_N d\fluct_N\\
=\iint_{\triangle^c} \g (x-y) d\fluct_N(x)d\fluct_N(y) - \frac{2Nt}{\beta}\iint \g(x-y) d\nu(x) d\fluct_N(y).
\end{multline*}
Completing the square, we obtain
\begin{multline*}
F_N(\XN, \mueq)-  \frac{2N t}{\beta} \int \xi_N d\fluct_N
\\= \iint_{\triangle^c} \g(x-y) d\left( \fluct_N- \frac{Nt}{\beta} \nu\right) (x) d\left( \fluct_N- \frac{Nt}{\beta} \nu\right) (y) - \frac{N^2t^2}{\beta^2} \iint \g(x-y) d\nu(x) d\nu(y).
\end{multline*}
From the definitions we see that
$$
\fluct_N- \frac{Nt}{\beta}\nu  \coT{   = \sum_{i=1}^N \delta_{x_i} - N \mueq + \frac{Nt}{\cd \beta} \Delta \xi_N} = \sum_{i=1}^N \delta_{x_i}  -N \mut 
$$
thus we may write 
\begin{multline*}
\iint_{\triangle^c} \g(x-y) \left( \fluct_N- \frac{Nt}{\beta} d\nu\right) (x) \left( \fluct_N- \frac{Nt}{\beta}d \nu\right) (y) \\
= \iint_{\triangle^c} \g(x-y) \left( \sum_{i=1}^N \delta_{x_i} - N d\mueqt\right)(x)  \left( \sum_{i=1}^N \delta_{x_i} -N  d\mueqt\right)(y) 
\\+N^2 \iint \g(x-y) \, d( \mueqt-\mut)(x) \, d(\mueqt-\mut)(y)
\\ +2N \iint \g(x-y) \left( \sum_{i=1}^N \delta_{x_i} - Nd \mueqt\right)(x) \, d(\mueqt-\mut)(y)\\
= F_N(\XN, \mueqt) + N^2 \iint \g(x-y) \, d( \mueqt-\mut)(x) \, d(\mueqt-\mut)(y) \\ 
+ 2N\int (h^{\mueqt} - h^{\mut}) \left( \sum_{i=1}^N \delta_{x_i} - N d\mueqt \right) .
\end{multline*}
\coT{The first term in the right-hand side corresponds to $\FN(\XN, \mueqt)$. The second term in the right-hand side can be integrated by parts, yielding
$$
N^2 \iint \g(x-y) \, d( \mueqt-\mut)(x) \, d(\mueqt-\mut)(y) = N^2 \int |\nabla h^{\mueqt} - \nabla h^{\mut}|^2.
$$}
Finally, from \eqref{zeta} for $V_0$ and $V_t$ and from \eqref{def:mut}, we see that
$$
h^{\mueqt} - h^{\mut}= \zetat-\zeta_0+c_t-c_0
$$
and we also use that $\zetat$ vanishes on the support of $\mueqt$. It yields 
$$
\int \left( h^{\mueqt} - h^{\mut} \right) \left( \sum_{i=1}^N \delta_{x_i} - N d\mueqt \right)= \sum_{i=1}^N  \zetat(x_i)- \sum_{i=1}^N \zeta_0(x_i) + N \int \zeta_0 \,d \mueqt.
$$ 
Combining these successive identities, we obtain \eqref{relfinale}.
\end{proof}
\begin{step}[Expressing the Laplace transform - I]
We have the identity
\begin{multline}  \label{LTrewri1}
\Esp_{\PNbeta} \left[\exp( N t\,  \Fluct_N(\xi_N) ) \right] \\ = \frac{\KNbeta(\mueqt,\zeta_t)}{\KNbeta(\mueq,\zeta_0)} \exp \left(\frac{N^{2} t^2}{2\cd \beta} \int_{\R^{\d}} |\nabla \xi_N|^2 - \frac{\beta N^{2} }{2\cd } \int_{\R^{\d}} | \nabla \hmueqt - \nabla \hmut |^2  - \beta N^{2} \int_{\R^{\d}} \zetaz d\mueqt \right).
\end{multline}
\end{step}
\begin{proof}
Using the expression \eqref{PNbetaVdeux} of the Gibbs measure, we may compute the Laplace transform of the fluctuations 
\begin{multline*}
\Esp_{\PNbetaV} \left[\exp( N t\,  \Fluct_N(\xi_N) ) \right] \\ = \frac{1}{\KNbeta(\mueq, \zeta_0)} \int_{(\R^\d)^N} \exp\left( N t\,  \Fluct_N(\xi_N) - \frac{\beta}{2} \left( \FN(\XN, \mueq) + 2N \sum_{i=1}^N \zeta_0(x_i) \right) \right) d\XN
\end{multline*} 
Inserting \eqref{relfinale} and using the definition of $\KNbeta(\mueqt, \zeta_t)$ as in \eqref{def:KNbeta} we obtain \eqref{LTrewri1}.
\end{proof}

\begin{step}[Expressing the Laplace transform - II] 
Let us note that in the \textit{interior cases}, by Remark \ref{rem:mut},  \eqref{LTrewri1} simplifies into
\begin{equation*} \label{LTrewri1b}
\Esp_{\PNbetaV} \left[\exp( N t\,  \Fluct_N(\xi_N) ) \right]  = \frac{\KNbeta(\mueqt,\zeta_t)}{\KNbeta(\mueq,\zeta_0)} \exp \left(\frac{N^{2}   t^2}{2\cd \beta} \int_{\R^{\d}} |\nabla \xi_N|^2\right),
\end{equation*}
as soon as $|t| \leq \tmax$ as in \eqref{tmax}, and thus yields the result \eqref{LTttot0facile} in this case.

In the macroscopic boundary case, we rely on the following result.
\begin{lem} \label{lem:termesfactorises}
As $t \to 0$, we have
\begin{multline}
\label{251}
 \frac{t^2}{2\cd \beta}\int_{\R^\d} |\nab \xi_N|^2 - \frac{\beta}{2\cd } \int_{\R^\d} \left| \nabla \hmut  - \nabla \hmueqt \right|^2  =  \frac{t^2}{2\cd \beta} \int_{\R^\d}  |\nabla \xi^{\Sigma }|^2  \\ + \left( \|\xi\|_{C^{0,1}}(1+\|\xi\|_{C^{1,1}} ) +\|\xi\|_{C^{0,1}}^2 \right) O(t^3)
 \end{multline}
 \ed{and 
\begin{equation} \label{252}
 \int_{\R^2} \zeta_0 d\mueqt  = O(t^3 \|\xi\|_{C^{0,1}}^2).
\end{equation}}
\end{lem}
Lemma \ref{lem:termesfactorises} is proven in Section \ref{sec:preuvetermesfactorises}. Combining \eqref{LTrewri1} and Lemma \ref{lem:termesfactorises}, we obtain \eqref{LTttot0}.
\end{step}
\end{proof}

\begin{coro} \label{coro:LaplaceTransform}
In particular, for any fixed $\tau$, taking $t = \frac{\tau}{N}$ in the previous result, we obtain:
\begin{equation} \label{LTtto0deuxieme}
\log \Esp_{\PNbetaV} \left[\exp( \tau  \Fluct_N(\xi_N) ) \right] = \log \frac{\KNbeta(\mu_{\tau/N},\zeta_{\tau/N})}{\KNbeta(\mueq,\zeta_0)} + \frac{\tau^2}{2 \cd \beta}  \int_{\R^\d} |\nabla \xi^{\Sigma} |^2 + o_N(1),
\end{equation}
and the convergence as $N \to \infty$ is uniform for bounded $\tau$.
\end{coro}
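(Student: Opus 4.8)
The statement is obtained by specializing Proposition \ref{prop:LaplaceTransform} to the scaling $t = \tau/N$, so the plan is simply to substitute and keep track of the order in $\tau$ and $N$ of every term.

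First I would check that the identities \eqref{LTttot0facile} (interior cases) and \eqref{LTttot0} (macroscopic boundary case) are legitimately applicable with $t = \tau/N$ when $\tau$ ranges over a bounded set $|\tau| \le T$ and $N$ is large. In the boundary case \eqref{LTttot0} is an asymptotic expansion valid as $t \to 0$, and $\tau/N \to 0$, so nothing further is required there. In the interior cases \eqref{LTttot0facile} requires $|t| \le \tmax$ as in \eqref{tmax}: in the macroscopic interior case $\tmax$ is a fixed positive constant (both $\xi$ and the droplet $\Sigma_0$ being fixed), while in the mesoscopic case $\tmax \approx \L_N^2 = N^{-2\delta}$ with $\delta \in (0, \hal)$, hence $N \tmax \approx N^{1-2\delta} \to \infty$; in either situation $|\tau|/N \le \tmax$ once $N \ge N_0(T)$. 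This verification, and in particular the use of $\delta < \hal$ in the mesoscopic case, is the only point that requires a bit of care.

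Then I would insert $t = \tau/N$ into the two displays. The left-hand side becomes $\log \Esp_{\PNbetaV}[\exp(\tau\, \Fluct_N(\xi_N))]$ since $Nt = \tau$. In \eqref{LTttot0facile} the deterministic term becomes $\frac{\tau^2 \L_N^{\d-2}}{2\cd \beta} \int_{\R^\d} |\nabla \xi|^2$; as $\d = 2$ we have $\L_N^{\d-2} = 1$, and since in the interior cases the test function is supported away from $\partial \Sigma$ its harmonic extension outside $\Sigma$ coincides with it, so this term is exactly $\frac{\tau^2}{2\cd \beta} \int_{\R^\d} |\nabla \xi^{\Sigma}|^2$; moreover \eqref{LTttot0facile} is an exact identity, so the $o_N(1)$ contribution vanishes there. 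In \eqref{LTttot0} the deterministic term is already $\frac{\tau^2}{2\cd \beta} \int_{\R^\d} |\nabla \xi^{\Sigma}|^2$ after the substitution, while the remainder $N^2 ( \|\xi\|_{C^{0,1}}(1+\|\xi\|_{C^{1,1}}) + \|\xi\|_{C^{0,1}}^2 ) O(t^3)$ becomes $C_\xi N^2 O(\tau^3/N^3) = C_\xi\, O(\tau^3/N)$, where $C_\xi$ depends only on the fixed test function $\xi$; for $|\tau| \le T$ this is bounded by $C_\xi T^3 / N \to 0$, which gives at once the $o_N(1)$ bound and its uniformity over bounded $\tau$. Assembling these observations yields \eqref{LTtto0deuxieme} in all three cases, and I do not anticipate any genuine obstacle beyond the elementary check described in the preceding paragraph.
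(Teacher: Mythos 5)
Your proposal is correct and is exactly the paper's (implicit) argument: the corollary is obtained by substituting $t=\tau/N$ into Proposition \ref{prop:LaplaceTransform}, and your additional checks — that $|\tau|/N\le\tmax$ for $N$ large (using $\delta<\hal$ so that $N\tmax\approx N^{1-2\delta}\to\infty$ in the mesoscopic case), that $\xi^{\Sigma}=\xi$ in the interior cases, and that the boundary remainder is $O(\tau^{3}/N)$ uniformly for bounded $\tau$ — are precisely the points one needs to verify.
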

Our goal will thus be to estimate the ratio of partition functions appearing in the right-hand side of \eqref{LTtto0deuxieme} in order to compute the limit of the left-hand side, i.e. of the (log-)Laplace transform of the fluctuations.

\subsection{Expansion of the partition function}
\label{sec:partitionfunctionsexpansions}
\coT{We first recall the expansions of the partition function $\ZNbetaV$ (as in \eqref{def:PNbeta}) obtained in \cite{ls1,loiloc}. They improve on previously known estimates up to order $o(N \log N)$, by giving some explicit expression for the next order in terms of the free energy functional $\fbeta$ introduced in \cite{ls1}. This functional depends on the equilibrium measure $\mueq$ in a simple way, thanks to the logarithmic nature of the interaction, and it allows us to obtain relative expansions up to an error term $r_N = o(N \L_N^2)$. Since the allowed errors in the computation of Laplace transforms are $o(1)$, these expansions are not precise enough to yield the CLT directly.}

If $\mu$ is a probability density, we denote by $\Ent(\mu)$ the entropy\footnote{\comTT{Our choice of $\Ent(\mu)$ is really the opposite of the physical entropy.}} of $\mu$ given by 
$$\Ent(\mu) : = \int_{\R^2} \mu \log \mu.$$
The following asymptotic expansion is proven in  \cite[Corollary 1.5]{ls1}, see also \cite[Remark 4.3]{ls1}. 
\begin{prop}[Partition function expansion: macroscopic case] \label{lem:comparaisonmacro} Let $\mu$ be a probability density on $\R^2$ supported in $\Sigma$. Assume that $\mu$ is $C^{0,1}$ and bounded below on $\Sigma$ and that $\Sigma$ is a finite union of compact connected sets with $C^{0,1}$ boundary. Let $\zeta$ be some Lipschitz function on $\R^2$ satisfying 
$$
\zeta = 0 \text{ in } \Sigma, \quad \zeta > 0 \text{ in } \R^2 \setminus \Sigma,\quad  \int_{\R^2} \exp\left(-\beta N \zeta(x)\right) dx<\infty \ \text{for $N$ large enough}.
$$ 
Then, with the notation of \eqref{def:KNbeta} and for some $C_\beta$ depending only on $\beta$, we have
\begin{equation}\label{comparaisonmacro}
\log \KNbeta(\mu, \zeta) = \frac{\beta}{4} N\log N + C_\beta N- N \left(1 - \frac{\beta}{4}\right) \Ent(\mu)  + N r_N,
\end{equation} 
with $r_N$ satisfying \begin{equation} \label{rNnegligible}
\lim_{N\to \infty } r_N=0.
\end{equation}
\end{prop}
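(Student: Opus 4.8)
The proof of Proposition \ref{lem:comparaisonmacro} is not carried out here; it is quoted directly from \cite[Corollary 1.5]{ls1}. Were one to reconstruct it, the natural route is the following.

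\textbf{Reduction to the free energy functional.} The first move is to separate the combinatorial $\frac{\beta}{4}N\log N$ prefactor and the leading $C_\beta N$ term from the $\mu$-dependent part. By the splitting structure already recorded in Lemma \ref{lem:splitting} and the rewriting \eqref{PNbetaVdeux}--\eqref{def:KNbeta}, $\log\KNbeta(\mu,\zeta)$ is the log of an integral of $\exp(-\frac\beta2(F_N(\XN,\mu)+2N\sum_i\zeta(x_i)))$. Using Corollary \ref{corominoe}, $F_N(\XN,\mu)+\frac12 N\log N$ is bounded below uniformly, so one expects $\log\KNbeta(\mu,\zeta)=\frac\beta4 N\log N + (\text{order }N)$; the content is identifying the order-$N$ term. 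This is where the large deviations principle for empirical fields from \cite{ls1,loiloc} enters: it expresses $\lim_{N\to\infty}\frac1N(\log\KNbeta(\mu,\zeta)-\frac\beta4 N\log N)$ as $-\min_{P}\fbeta(P)$ over stationary point processes of intensity matching $\mu$ locally, integrated against $\mu$, plus an entropy term coming from the spatial average over the droplet.

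\textbf{Extracting the entropy term.} The key computation is that the minimal free energy $\min\fbeta$ over processes with local intensity $\rho$ scales like $\rho^{1}\log\rho$ in the logarithmic case — more precisely, rescaling a point process of intensity $\rho$ to intensity $1$ costs exactly a term proportional to $(1-\frac\beta4)\log\rho$ per unit intensity, the $(1-\frac\beta4)$ factor being the interplay between the renormalized energy's scaling ($\mathrm{Ener}$ scales like $\rho\log\rho$ with coefficient $1$) and the relative entropy's scaling (with coefficient $\frac\beta4$, after the $\frac\beta2$ in the Gibbs factor). Integrating $\rho=\mu(x)$ over $\Sigma$ then produces exactly $-N(1-\frac\beta4)\int\mu\log\mu = -N(1-\frac\beta4)\Ent(\mu)$. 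The confinement term $\zeta$ contributes only the $C_\beta N$ constant (it forces the points to lie essentially in $\Sigma$, the Gaussian-type integral $\int e^{-\beta N\zeta}$ contributing negligibly), which is why $C_\beta$ does not depend on $\mu$ or $\zeta$.

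\textbf{Main obstacle.} The delicate point is upgrading the large deviations statement — which a priori only gives the limit of $\frac1N\log\KNbeta$, i.e. an $o(N)$ error — to the quantitative form \eqref{comparaisonmacro} with the \emph{explicit} next-order term and a genuine $o(N)$ (equivalently $Nr_N$ with $r_N\to0$) remainder that is \emph{uniform} enough in $\mu$. This requires the screening and regularization machinery of \cite{ls1}: upper bounds via sub-additivity and lower bounds via an explicit construction of well-separated configurations, both matching the free-energy value, and crucially keeping track that the $\mu$-dependence enters only through $\Ent(\mu)$ modulo $o(N)$. The hypotheses ($\mu\in C^{0,1}$, bounded below, $\partial\Sigma\in C^{0,1}$) are exactly what make the screening construction work near the boundary of the droplet. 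The rest — checking that the stated $\zeta$ hypotheses suffice for the confinement integral to be finite and negligible — is routine.
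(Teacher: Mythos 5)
Your proposal matches the paper exactly: the paper also proves this proposition purely by citation to \cite[Corollary 1.5]{ls1} (see also \cite[Remark 4.3]{ls1}), and the reconstruction you sketch — reduction to the tagged free energy functional $\fbarbeta^{\mu}$, the scaling of $\min\fbeta^m$ in the intensity $m$ producing the $(1-\frac{\beta}{4})\Ent(\mu)$ term, and the screening/regularization machinery for matching upper and lower bounds — is precisely the outline the paper gives in Appendix \ref{sec:comparaisonmieux}. No gap.
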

\comTT{The constant $C_{\beta}$ is obtained by minimizing the free energy functional introduced in \cite{ls1}, and is not purely of entropic origin. Its precise value,  however,  will not matter as we examine only {\it differences} of logarithms of partition functions.} 

As stated, the error term $N r_N$ depends on $\mu$. In fact, as explained in Section \ref{sec:comparaisonmieux}, the convergence $r_N \to 0$ is uniform on certain subsets of probability densities. In particular, if $\{\mu_s\}_{s}$ is a family of  probability densities whose supports are contained in a uniform compact, whose densities are uniformly bounded by a positive constant on their respective supports and whose $C^{0,1}$ norms are uniformly bounded, then the convergence \eqref{rNnegligible} is uniform in $s$. This is the case for the family $\{\tmut\}_{t\in [0,1]}$ of approximate equilibrium measures constructed below in Section \ref{sec:transport}. Proposition \ref{lem:comparaisonmacro} will be used in the macroscopic cases (interior and boundary).

In the mesoscopic case, we will instead use a \textit{relative} expansion of the partition function, as follows.
\begin{prop}[Partition function expansion: mesoscopic case] \label{prop:comparaisonmeso}
For $N \geq 1$ and $t$ as in \eqref{tmax}, with $\mu_t = \bar{\mu}_t$ as in \eqref{def:mut}, we have
\begin{equation} \label{comparaisonmeso}
\log \KNbeta(\mueqt, \zeta_t) - \log \KNbeta(\mueq, \zeta_0) =  N \left(1 - \frac{\beta}{4}\right) \left(\Ent(\mueq) - \Ent(\mueqt)\right) + N \L_N^2 r_N,
\end{equation}
with $\lim_{N\to \infty} r_N=0$ uniformly for $|t| \leq \tmax$. 
\end{prop}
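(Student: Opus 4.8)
The plan is to reduce the mesoscopic relative expansion to a rescaled instance of the machinery of \cite{ls1,loiloc}. The key observation is that in the mesoscopic case, by Remark \ref{rem:mut} we have $\mueqt = \mut$ and $\Sigma_t = \Sigma_0$ for $|t| \leq \tmax$, so the two measures $\mueq$ and $\mueqt$ share the same support $\Sigma$ and differ only by the term $-\frac{t}{\cd\beta}\Delta\xi_N$, which is supported in $B(\bxN, \L_N)$ and has $L^\infty$-norm of order $\L_N^{-2}$. Thus the whole difference between the two Coulomb gases is concentrated in a ball of radius $\L_N$ around $\bxN$. I would first perform the blow-up change of variables $y = \bxN + \L_N z$ in the integral defining each partition function, which rescales the relevant region to order $1$ and converts the $N$-point gas into one with roughly $N\L_N^2$ "effective" points in the rescaled ball, matching the scaling $N\L_N^2 \to \infty$ guaranteed by $\L_N \gg N^{-1/2}$.

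The core of the argument is then to run the screening/construction-of-test-configurations scheme of \cite{ls1} \emph{locally}, only in the region where $\mueq$ and $\mueqt$ differ. Outside $B(\bxN, 2\L_N)$ the two background measures coincide, so the contributions of that region to $\log\KNbeta(\mueqt,\zeta_t)$ and $\log\KNbeta(\mueq,\zeta_0)$ can be made to cancel exactly, up to boundary (screening) errors that are controlled by the local electric energy estimates — this is where Lemma \ref{lem:contrdist}, Proposition \ref{prop:fluctenergy}, and the truncation formulas of Proposition \ref{prop:monoto} enter: they let one glue a locally-modified configuration to the ambient one while controlling the energy cost. Inside the ball, the free-energy functional $\fbeta$ of \cite{ls1} evaluated at the (locally constant to leading order) densities $\mueq$ and $\mueqt$ produces exactly the entropy difference $N(1-\frac\beta4)(\Ent(\mueq)-\Ent(\mueqt))$, since the $C_\beta N\log N + C_\beta N$ "bulk" contributions are identical for the two gases and cancel in the difference. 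The error term is $o(1)$ on the \emph{rescaled} scale, i.e. $o(N\L_N^2)$ on the original scale, which is \eqref{comparaisonmeso}; uniformity in $|t|\le\tmax$ follows because all the constants depend only on $\|\mueq\|_{L^\infty}$, $\min_\Sigma\mueq$, the Lipschitz norm of $\xi$, and $\d$, which are uniform in $t$ (here one uses $\tmax \approx \L_N^2$ so that $t\Delta\xi_N$ stays uniformly bounded after rescaling).

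I expect the main obstacle to be the \textbf{localization of the partition-function expansion}: the result quoted as Proposition \ref{lem:comparaisonmacro} is a global statement, and extracting from \cite{ls1,loiloc} a genuinely \emph{relative, local} version — valid down to scale $\L_N \gg N^{-1/2}$ with error $o(N\L_N^2)$ and uniform in the perturbation parameter — requires carefully re-examining the screening construction near $\partial B(\bxN,2\L_N)$ and verifying that the energy lost in the screening layer is $o(N\L_N^2)$ rather than merely $o(N\log N)$. This is precisely the point where the improved truncation (nearest-neighbor distance, as in \cite{2D2CP}) and the local distance control of Lemma \ref{lem:contrdist} are indispensable, since the crude global bounds would only give $o(N\log N)$. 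The detailed verification that the two screened configurations can be built so that their energies differ by exactly the entropy term plus an admissible error — and that the associated Jacobian/volume factors also match — is the part that I would defer to a dedicated section, referring to Section \ref{sec:comparaisonmieux} for the precise import of \cite{ls1,loiloc}.
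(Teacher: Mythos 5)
Your proposal is correct and follows essentially the same route as the paper: both reduce the mesoscopic relative expansion to the localized analysis of \cite{loiloc}, exploiting that $\mueq$ and $\mueqt$ coincide outside the mesoscopic ball so that the exterior contributions cancel after screening, and then read off the entropy difference from the scaling of the free-energy functional $\fbeta$ on the blown-up square, with uniformity in $t$ coming from the uniform lower bound and the controlled (rescaled) H\"older norm of the perturbed density. The only cosmetic point is that the admissible error for the rescaled gas of $N\L_N^2$ effective points is $o(N\L_N^2)$ rather than $o(1)$, which is indeed what you conclude on the original scale.
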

\begin{proof}
The expansion \eqref{comparaisonmeso} follows from the analysis of \cite{loiloc}, however some inspection of the proof is needed. Following the argument of \cite{loiloc}, we may split the energy between an interior part (corresponding to $U_N$, the support of $\xi_N$) and an “exterior part” ($\R^2 \backslash U_N$). Since the background measures $\mueq$ and $\mueqt$ coincide on the exterior part, the ratio of the associated partition functions depends only on the interior part (the partition function associated to the interior part is denoted by $K_{N,z, \delta_1}^{\beta}$ in \cite{loiloc}). Using the limit \cite[Eq. (6.3)]{loiloc} and the scaling properties of the functional $\fbeta$ defined there we may then recover \eqref{comparaisonmeso}. We refer to Section \ref{sec:comparaisonmieux} for more detail.
\end{proof}

\subsection{Exponential moments of the energy and concentration bound}
\coT{Many error terms in our computations involve the energy of a given configuration, and since we work with Laplace transforms, we will need to control the exponential moments of the energy.}
\begin{lem} \label{lem:expmoment}
Let $(\mu, \zeta)$ satisfy the assumptions of Proposition \ref{lem:comparaisonmacro}.   For any  $\vec{\eta}$ such that $\eta_i \leq N^{-1/2}$ for $i = 1, \dots, N$, we have
\begin{equation} \label{controleexpmoment}
\left|\log \Esp_{\PNbetamuzeta}\left( \exp\left(\frac{\beta}{4} \left( 
\EnergieLoc_{\vec{\eta}}(\XN) \right)\right) \right)  \right| \le C N \L_N^2
\end{equation}
with $\EnergieLoc$ as defined in \eqref{deflocen}, 
\ed{
\begin{equation}\label{expmomzeta}\left|\log \Esp_{\PNbetamuzeta}\left( \exp\left(\beta N\sum_{i=1}^N \zeta(x_i)\right)\right)\right|\le C N,
\end{equation}}
\comTT{where $C$ depends only on $\beta$ and $V$.}
\end{lem}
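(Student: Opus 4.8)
\textbf{Proof proposal for Lemma \ref{lem:expmoment}.}

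The plan is to use the expansion of the partition function from Proposition \ref{lem:comparaisonmacro} together with a comparison of the two partition functions that appear when we multiply the Gibbs weight by the relevant exponential factor. For the first bound \eqref{controleexpmoment}, I would first observe that, by the splitting in Proposition \ref{prop:monoto} (with the choice $\eta_i = N^{-1/\d}$ in the global part, and using \eqref{fnmeta2}--\eqref{ecartaeta}), the localized energy $\EnergieLoc_{\vec{\eta}}(\XN)$ differs from a multiple of $\int_{B(\bxN,2\L_N)}|\nab H_{N,\vec{\eta}}^\mu|^2$ by controlled error terms; in particular $\EnergieLoc_{\vec{\eta}}(\XN) \ge -CN\L_N^2$ from Corollary \ref{corominoe} applied locally (the number of points in $B(\bxN,2\L_N)$ is $O(N\L_N^2)$ by \eqref{contrnbpoints}, up to energy terms that are themselves nonnegative after truncation). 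More to the point, the quantity $\EnergieLoc_{\vec{\eta}}(\XN)$ is, up to the above error terms and the factor $N^{2/\d-1}$, a piece of the full energy $F_N(\XN,\mu) + \hal N\log N$, which by \eqref{PNbetaVdeux} appears in the exponent of the Gibbs measure with coefficient $-\beta/2$. Thus
$$
\Esp_{\PNbetamuzeta}\left(\exp\left(\frac{\beta}{4}\EnergieLoc_{\vec{\eta}}(\XN)\right)\right)
= \frac{1}{\KNbeta(\mu,\zeta)} \int_{(\R^\d)^N} \exp\left(\frac{\beta}{4}\EnergieLoc_{\vec{\eta}}(\XN) - \frac{\beta}{2}\big(\FN(\XN,\mu) + 2N\textstyle\sum_i \zeta(x_i)\big)\right) d\XN.
$$
Because $\EnergieLoc_{\vec{\eta}}$ captures only a fraction (the local electric energy on $B(\bxN,2\L_N)$, rescaled by $N^{2/\d-1}$, which for $\d=2$ is just $1$) of the positive quantity $\frac{1}{\cd}\int|\nab H^\mu_{N,\vec{\eta}}|^2$, the combination $\frac{\beta}{4}\EnergieLoc_{\vec{\eta}} - \frac{\beta}{2}F_N$ is bounded above by $-\frac{\beta}{4}F_N + CN$ (keeping a positive fraction of the energy and absorbing the background-interaction corrections of order $N$ from \eqref{ecartaeta}); hence the numerator is bounded above by $e^{CN\L_N^2}\KNbeta(\mu,\zeta/2)$-type quantities, and using Proposition \ref{lem:comparaisonmacro} for both $\KNbeta(\mu,\zeta)$ and the dilated object gives $\log(\cdots) \le CN\L_N^2$ since the leading $\frac{\beta}{4}N\log N + C_\beta N - N(1-\beta/4)\Ent(\mu)$ terms cancel in the ratio. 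For the lower bound one writes, by Jensen's inequality, $\log\Esp_{\PNbetamuzeta}(\exp(\cdots)) \ge \frac{\beta}{4}\Esp_{\PNbetamuzeta}(\EnergieLoc_{\vec{\eta}}(\XN))$, and then bounds this expectation from below using $\EnergieLoc_{\vec{\eta}} \ge -CN\L_N^2$ pointwise (Corollary \ref{corominoe} localized, together with the point-count bound \eqref{contrnbpoints}).

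For the second bound \eqref{expmomzeta}, the argument is analogous but simpler: by \eqref{def:PNbetamuzeta},
$$
\Esp_{\PNbetamuzeta}\left(\exp\left(\beta N\textstyle\sum_{i=1}^N \zeta(x_i)\right)\right)
= \frac{1}{\KNbeta(\mu,\zeta)}\int_{(\R^\d)^N}\exp\left(-\frac{\beta}{2}\FN(\XN,\mu) - \beta N\textstyle\sum_i \zeta(x_i) + \beta N \textstyle\sum_i \zeta(x_i)\right)d\XN
$$
wait — more carefully, multiplying the weight $\exp(-\frac\beta2(F_N + 2N\sum\zeta))$ by $\exp(\beta N\sum\zeta)$ exactly cancels the confinement term, leaving $\exp(-\frac\beta2 F_N)$; so the numerator equals $\KNbeta(\mu,0)$ in the formal sense, i.e. $\int \exp(-\frac\beta2 F_N(\XN,\mu))\,d\XN$. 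This last integral need not converge at infinity (there is no confinement), so instead I would keep a small fraction $\varepsilon\zeta$ of the confinement, write $\beta N\sum\zeta = \beta N(1-\varepsilon)\sum\zeta + \varepsilon\beta N\sum\zeta$ appropriately, or more simply compare directly: the numerator is $\KNbeta(\mu,0)$ only up to the convergence issue, so one replaces the bound by comparing $\KNbeta(\mu,\zeta)$ with $\KNbeta(\mu,\zeta/2)$, using that $F_N$ is bounded below by $\hal N\log N - CN$ (Corollary \ref{corominoe}) so that $-\frac\beta2 F_N + \beta N\sum\zeta$ with half the confinement retained still gives a convergent, controlled integral; applying Proposition \ref{lem:comparaisonmacro} to $\KNbeta(\mu,\zeta)$ and $\KNbeta(\mu,\zeta/2)$ — both of which satisfy its hypotheses, with the same $\mu$ hence the same entropy term — shows the two leading expansions agree up to $o(N)$, and the difference of the $C_\beta N$-order terms is $O(N)$, yielding \eqref{expmomzeta}. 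The lower bound again follows from Jensen together with $\Esp_{\PNbetamuzeta}(N\sum\zeta(x_i)) \ge 0$ from nonnegativity of $\zeta$.

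The main obstacle I anticipate is the first bound: one must extract the positivity of the \emph{global} electric energy $\int_{\R^\d}|\nab H^\mu_{N,\vec{\eta}}|^2$ while only subtracting a localized, rescaled piece $\EnergieLoc_{\vec{\eta}}$ of it, and check that the leftover energy genuinely dominates after multiplication by $\frac\beta4$ versus $\frac\beta2$ — i.e. that $\frac14 \cdot (\text{local piece}) - \frac12 \cdot (\text{whole}) \le -\frac14(\text{whole}) + (\text{lower-order})$, which requires the local piece to be no larger than the whole up to the rescaling factor $N^{2/\d-1}$, which for $\d = 2$ is exactly $1$, and the truncation corrections $N\sum_i \eta_i^\d \le CN$ and $N\L_N^\d$ to be absorbable. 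The subtlety is that $\EnergieLoc_{\vec\eta}$ also subtracts $\cd\sum_{x_i\in B}\g(\eta_i N^{1/\d})$, a term of a different nature that must be matched against the $-\cd\sum_i\g(\eta_i)$ appearing in the electric reformulation \eqref{fnmeta} of $F_N$; keeping track of these self-interaction counterterms consistently between $\EnergieLoc$ and $F_N$, and invoking the uniformity statement following Proposition \ref{lem:comparaisonmacro} so that the errors $Nr_N$ are genuinely $o(N)$ and do not reintroduce an $N\L_N^2$-incompatible loss, is the delicate bookkeeping part. The details are carried out in Section \ref{app2}.
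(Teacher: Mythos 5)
Your argument for \eqref{expmomzeta} and for the \emph{macroscopic} case of \eqref{controleexpmoment} is essentially the paper's: one rewrites $\EnergieLoc_{\vec\eta}$ in terms of $\FN(\XN,\mu)+\hal N\log N$ via \eqref{fnmeta}--\eqref{ecartaeta}, absorbs the extra $\exp(\frac{\beta}{4}\FN)$ into the Gibbs weight, and expands both resulting partition functions with Proposition \ref{lem:comparaisonmacro}. Two points of precision: the auxiliary partition function produced by this manipulation is $K_{N,\beta/2}(\mu,2\zeta)$ (the inverse temperature is \emph{halved}, the confinement doubled), not ``$\KNbeta(\mu,\zeta/2)$''; and the leading terms of the two expansions do \emph{not} cancel in the ratio, since $\frac{\beta/2}{4}N\log N\neq\frac{\beta}{4}N\log N$ --- the leftover $-\frac{\beta}{8}N\log N$ is exactly what the counterterm $\frac{\beta}{4}\cdot\hal N\log N$ (equivalently the $-\cd\sum_i\g(\eta_iN^{1/2})$ inside $\EnergieLoc_{\vec\eta}$) compensates. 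You flag this bookkeeping as ``delicate'' but it is the whole point of the normalization, and should be carried out rather than deferred.

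The genuine gap is the mesoscopic case. The right-hand side of \eqref{controleexpmoment} is $CN\L_N^2$, which for $\L_N=N^{-\delta}$ is $o(N)$, and this scaling is actually used downstream (Lemma \ref{lem:expmoment2}, the choice of $\ep$ in the proof of Corollary \ref{coro:Aniso}). A comparison of \emph{global} partition functions can never produce a bound better than $O(N)$: the constants $C_\beta N$ and $C_{\beta/2}N$ do not cancel, the error from \eqref{ecartaeta} is $O(N)$, and the remainders $Nr_N$ in \eqref{comparaisonmacro} are only $o(N)$. So your route proves \eqref{controleexpmoment} only when $\L_N=1$, and silently loses a factor $\L_N^{-2}$ otherwise; the remark at the end of your proposal about ``not reintroducing an $N\L_N^2$-incompatible loss'' identifies the problem but does not resolve it, because the loss is unavoidable by this method. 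The paper handles the mesoscopic case by an entirely different argument, namely the local laws (``good control on the energy'') of \cite[Section 4.4]{loiloc}, adapted to the point-dependent truncation $\vec\eta$; some such local input is needed and your proposal is missing it.
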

\begin{proof}
In the macroscopic case, i.e. $\L_N=1$, it follows from \eqref{fnmeta}, \eqref{ecartaeta} that
$$
\frac{1}{2\pi} \int_{\R^2} |\nab H_{N,\veta}^{\mu}|^2 + \sum_{i = 1}^N \log (\eta_i N^{1/2}) = \left( \FN(\XN, \mu) + \frac{N \log N}{2} \right) + O(N)\|\mu\|_{L^{\infty}}, 
$$
and it remains to control the exponential moments of $\FN(\XN, \mu) + \frac{N \log N}{2}$ under $\PNbetamuzeta$.  This follows e.g. from the analysis of \cite{SS2d}, but we can also deduce it from Proposition \ref{lem:comparaisonmacro}.  We may indeed write   
\begin{multline} \label{deuxzeta}
\Esp_{\PNbetamuzeta} \left[ \exp\left(\frac{\beta}{4} F_N(\XN, \mu) \right)\right] \\ = \frac{1}{\KNbeta(\mu, \zeta)} \int \exp \left( -\frac{\beta}{4} \left( F_N(\XN, \mu) + 2N \sum_{i=1}^N 2 \zeta(x_i)\right) \right) d\XN 
\\
=  \frac{K_{N, \frac{\beta}{2}}(\mu, 2\zeta)}{\KNbeta(\mu, \zeta)}.
\end{multline} 
Taking the $\log$ and using \eqref{comparaisonmacro} to expand both \ed{right-hand side terms up to order $N$ yields that 
\begin{equation}\label{expmF}
\left| \log \Esp_{\PNbetamuzeta} \left[ \exp\left(\frac{\beta}{4}( F_N(\XN, \mu) + \hal N \log N )+\beta N \sum_{i=1}^N \zeta(x_i)\right)\right]\right| \le C N,
\end{equation}
and using  $\zeta\ge 0$ and Corollary \ref{corominoe},  the result follows. The constant $C$ is uniform for the cases mentioned after Proposition \ref{lem:comparaisonmacro}.}

In the mesoscopic case, \eqref{controleexpmoment} follows from the \coT{“good control on the energy” as stated in} \cite[Section 4.4]{loiloc}, up to replacing the truncation at fixed distance $\eta$ by the truncation used here, which depends on the points.
\end{proof}

Similarly, we can bound the exponential moments of the \textit{electric energy} \coT{(the $L^2$, positive part of the energy)}.
\begin{lem} \label{lem:expmoment2}
Let $(\mu, \zeta)$ satisfy the assumptions of Proposition \ref{lem:comparaisonmacro}, and let $s \in (0,1)$. We have
\begin{equation} \label{controleexpmoment2}
\left|\log \Esp_{\PNbetamuzeta}\left(\exp\( \frac{\beta}{4} \int_{B(\bar x_N,2\L_N)} |\nab H_{N,s\vecr}^{\mu}|^2  \)\right)\right| \le C N \L_N^2(1 +|\log s|),
\end{equation}
\comTT{with $C$ depending only on $\beta$ and $V$.}
\end{lem}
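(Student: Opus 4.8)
The plan is to bound the left-hand side of \eqref{controleexpmoment2} by quantities whose exponential moments are controlled by Lemma \ref{lem:expmoment}, paying the factor $(1+|\log s|)$. Since $s\,\rr(x_i)\le\rr(x_i)$ for every $i$, one may apply the defining identity \eqref{deflocen} with $\veta=s\vecr$; using $\g(ab)=\g(a)+\g(b)$ and $\g(s)=|\log s|$ for $s\in(0,1)$, together with $N^{\frac2\d-1}=1$ in dimension $\d=2$, this reads
\begin{equation*}
\int_{B(\bxN,2\L_N)}|\nab H_{N,s\vecr}^{\mu}|^2 \;=\; \EnergieLoc_{s\vecr}(\XN)\;+\;\cd\,|\log s|\,\#I\;+\;\cd\!\!\sum_{i:\,x_i\in B(\bxN,2\L_N)}\!\!\g\big(\rr(x_i)N^{1/2}\big),
\end{equation*}
where $\#I:=\#\{i:\,x_i\in B(\bxN,2\L_N)\}$. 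I would then treat the three terms on the right in turn.

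For the first term, since $s\,\rr(x_i)\le\rr(x_i)\le N^{-1/2}$, Lemma \ref{lem:expmoment} applies with $\veta=s\vecr$ and yields $\log\Esp_{\PNbetamuzeta}\!\big(\exp(\tfrac{\beta}{4}\EnergieLoc_{s\vecr}(\XN))\big)\le CN\L_N^2$, with $C$ depending only on $\beta$ and $V$ (in particular uniformly in $s$). For the third term, Lemma \ref{lem:contrdist}, applied on a ball of radius comparable to $2\L_N$ and using $\g(1)=0$, bounds it for every configuration by $C\big(\EnergieLoc_{\vec\alpha}(\XN)+\EnergieLoc_{\veta}(\XN)+N\L_N^2\big)$, where now $\veta=(N^{-1/2},\dots,N^{-1/2})$ and $\vec\alpha$ is the tuple furnished by Lemma \ref{lem:contrdist} (with $\alpha_i\le N^{-1/2}$); its exponential moment is thus again controlled through Lemma \ref{lem:expmoment}. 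For the second term, the number-of-points bound \eqref{contrnbpoints} applied to $S_N=B(\bxN,2\L_N)$ with a $2N^{-1/2}$-neighborhood $U_N$, together with $\int_{U_N}|\nab H_{N,\veta}^{\mu}|^2=\EnergieLoc_{\veta}(\XN)$, gives $\#I\le C\big(N\|\mu\|_{L^\infty}\L_N^2+N^{1/4}\L_N^{1/2}\,\EnergieLoc_{\veta}(\XN)^{1/2}\big)$, so that by Young's inequality, for any $\varepsilon>0$, $\cd|\log s|\,\#I\le \varepsilon\,\EnergieLoc_{\veta}(\XN)+C_\varepsilon|\log s|^2 N^{1/2}\L_N+C|\log s|\,N\L_N^2$.

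Combining the three estimates and choosing $\varepsilon$ small, $\tfrac{\beta}{4}\int_{B(\bxN,2\L_N)}|\nab H_{N,s\vecr}^{\mu}|^2$ is bounded by a fixed multiple of $\EnergieLoc_{s\vecr}(\XN)+\EnergieLoc_{\vec\alpha}(\XN)+\EnergieLoc_{\veta}(\XN)$ plus a deterministic remainder of size $O\!\big(N\L_N^2(1+|\log s|)+|\log s|^2N^{1/2}\L_N\big)$. Applying H\"older's inequality to factor the exponential of the sum of $\EnergieLoc$'s into a product of exponentials of (sub-critical) multiples of the individual terms, each bounded by $e^{CN\L_N^2}$ via Lemma \ref{lem:expmoment}, one gets
\begin{equation*}
\log\Esp_{\PNbetamuzeta}\Big(\exp\big(\tfrac{\beta}{4}\!\int_{B(\bxN,2\L_N)}\!|\nab H_{N,s\vecr}^{\mu}|^2\big)\Big)\;\le\;CN\L_N^2(1+|\log s|)+C|\log s|^2N^{1/2}\L_N .
\end{equation*}
When $|\log s|\le N^{1/2}\L_N$ the last term is absorbed, giving \eqref{controleexpmoment2}. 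For $|\log s|>N^{1/2}\L_N$ I would instead start from the exact global identity \eqref{fnmeta}: $\int_{B(\bxN,2\L_N)}|\nab H_{N,s\vecr}^{\mu}|^2\le\int_{\R^2}|\nab H_{N,s\vecr}^{\mu}|^2=\int_{\R^2}|\nab H_{N,\vecr}^{\mu}|^2+\cd N|\log s|+O(N\|\mu\|_{L^\infty})$, in which all $N$ particles contribute so that no number-of-points term intervenes, and conclude by the analogous control of $\int_{\R^2}|\nab H_{N,\vecr}^{\mu}|^2$ through Lemma \ref{lem:expmoment}. The lower bound in \eqref{controleexpmoment2} is immediate since $\int|\nab H_{N,s\vecr}^{\mu}|^2\ge0$.

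The main obstacle is the bookkeeping in the last two paragraphs: because the exponential-moment estimate of Lemma \ref{lem:expmoment} is available only at the coefficient $\beta/4$, one cannot afford to absorb a larger multiple of the electric energy when bounding the number-of-points and close-pair terms, so the amount of energy absorbed into controlled exponentials must be balanced precisely against the deterministic remainder; one must then check that the $|\log s|$-dependence comes out as the announced $(1+|\log s|)$ rather than a worse power, handling the regime of very small $s$ separately via the exact global identity.
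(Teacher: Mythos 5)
Your proof follows essentially the same route as the paper's: decompose $\int_{B(\bar x_N,2\L_N)}|\nabla H_{N,s\vecr}^{\mu}|^2$ into $\EnergieLoc_{s\vecr}$, a term $|\log s|$ times the number of points handled via \eqref{contrnbpoints}, and the nearest-neighbour sum $\sum_i\g(\rr(x_i)N^{1/2})$, then control each piece through Lemma \ref{lem:expmoment}. The only (harmless) deviations are that you invoke Lemma \ref{lem:contrdist} in both the macroscopic and mesoscopic cases (the paper uses \eqref{fnmeta2} in the macroscopic one) and that you treat the regime of very small $s$ separately, a bookkeeping point the paper leaves implicit.
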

\begin{proof}[Proof of Lemma \ref{lem:expmoment2}]
\coT{Using the result of Lemma \ref{lem:expmoment} with $\vec{\eta} = s \vec{\rr}$ we obtain
$$
\left|\log \Esp_{\PNbetamuzeta}\left( \exp\left(\frac{\beta}{4} \left( 
\EnergieLoc_{s\vec{\rr}}(\XN) \right)\right) \right)  \right| \le C N \L_N^2, 
$$
where $\EnergieLoc_{s\vec{\rr}}(\XN)$ has been defined in \eqref{deflocen} and is equal to
$$
\EnergieLoc_{s\vec{\rr}}(\XN) = \int_{B(\bar x_N,2\L_N)} |\nab H_{N,s{\vec{\rr}}}^{\mu}|^2 - 2\pi \sum_{i, x_i \in B(\bar{x}_N,2\L_N) } -\log |s \rr(x_i) N^{1/2} |.
$$}
Using \eqref{contrnbpoints} to control the number of points in $B(\bar x_N, 2 \L_N)$, and the Cauchy-Schwarz's inequality, we see that 
\begin{multline*}
\log \Esp_{\PNbetamuzeta}\left(\exp\left( \frac{\beta}{4}\int_{B(\bar x_N,2\L_N)} |\nab H_{N,s\vecr}^{\mu}|^2 \right)  \right)  \leq C N \L_N^2 ( 1+| \log s|) \\ 
+ C \log \Esp_{\PNbetamuzeta}\left( \exp\left( - \frac{\beta}{4} \sum_{i, x_i \in B(\bar x_N,2\L_N)}  \log (\rr(x_i) N^{1/2} ) \right)\right).
\end{multline*}
Since $\rr(x_i)$ is the nearest-neighbor distance, we expect it to be of order $N^{-1/2}$ and thus the second-term in the right-hand side should be of the same order as the number of points in $B(\bar x_N,2\L_N)$, namely $O(N \L_N^2)$ according to \eqref{contrnbpoints}.
In the macroscopic case, we may use \eqref{fnmeta2} with $\eta_i =N^{-1/2}$ for all $i$, this yields
$$- \sum_{i\neq j}  \log ( |x_i-x_j| N^{1/2})   \le  F_N(\XN, \mu)+ \hal N \log N + CN.$$
Since the left-hand side is bounded below by $-\sum_i \log (\rr(x_i) N^{1/2})- CN$, in view of \eqref{expmF}, we conclude in this case. In the mesoscopic case, we  may combine the result of Lemmas \ref{lem:contrdist} and \ref{lem:expmoment} to conclude as well.
\end{proof}

\comTT{Combined with Proposition \ref{prop:fluctenergy} or  \eqref{controlfluctuations2} applied with $\vec{\eta}= \vecr$, this result directly implies a first concentration bound on the fluctuations.}

\begin{coro}
\comTT{
Assume  $\xi$ is a  Lipschitz function supported in a ball $\D(\bar{x_N}, \L_N)$. Then
\begin{equation}
\left|\log \Esp_{\PNbeta}  \(\exp (\Fluct_N(\xi) )\)\right|\le C \|\nab \xi\|_{L^2(\D(\bar{x_N}, \L_N))} \sqrt{N} \L_N+ C \|\nab \xi\|_{L^\infty} \(1+ \sqrt{N} \L_N^2 \|\mueq\|_{L^\infty}\)
 \end{equation}
with $C$ depending only  on $\beta$ and $V$.}
\end{coro}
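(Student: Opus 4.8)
The plan is to combine the deterministic bound of Proposition~\ref{prop:fluctenergy} with the exponential moment estimate of Lemma~\ref{lem:expmoment2}, the bridge between the two being Young's inequality $ab\le \frac{1}{4s}a^2+sb^2$, which trades a quantity linear in the electric energy for one quadratic in it.

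First I would apply \eqref{controlfluctuations} with $\varphi=\xi$, $\mu=\mueq$, $\vec\eta=\vecr$ (admissible since $\rr(x_i)\le \frac14 N^{-1/2}\le N^{-1/2}$), and $U_N=\D(\bxN,2\L_N)$, $\delta=\L_N$ --- the very same disc that appears in Lemma~\ref{lem:expmoment2} --- assuming first that $\L_N\ge 2N^{-1/2}$, the complementary range being handled by taking $\delta=2N^{-1/2}$ at the cost of universal constants only. Since $\supp\xi\subset\D(\bxN,\L_N)$ one has $\|\nab\xi\|_{L^2(U_N)}=\|\nab\xi\|_{L^2(\D(\bxN,\L_N))}$, while $|U_N|\approx\L_N^2$ and $|\p U_N|\approx\L_N$, so that $\delta^{-1/2}|\p U_N|^{1/2}N^{-1/2}\approx N^{-1/2}$ and $N^{1/2}|U_N|\approx\sqrt N\L_N^2$. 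This produces a purely deterministic inequality, valid for every configuration, of the form $|\Fluct_N(\xi)|\le \theta\,\|\nab H_{N,\vecr}^{\mueq}\|_{L^2(U_N)}+B$, with $\theta:=C(\|\nab\xi\|_{L^2(\D(\bxN,\L_N))}+\|\nab\xi\|_{L^\infty}N^{-1/2})$ and $B:=C\|\nab\xi\|_{L^\infty}\sqrt N\L_N^2\|\mueq\|_{L^\infty}$ deterministic. Because this right-hand side is nonnegative and bounds $\pm\Fluct_N(\xi)$, Jensen's inequality applied twice (together with monotonicity) reduces the statement to bounding $\log\Esp_{\PNbeta}\exp(\theta\,\|\nab H_{N,\vecr}^{\mueq}\|_{L^2(U_N)})$; indeed $|\log\Esp_{\PNbeta}e^{\Fluct_N(\xi)}|\le B+\log\Esp_{\PNbeta}e^{\theta\|\nab H_{N,\vecr}^{\mueq}\|_{L^2(U_N)}}$.

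For this last expectation I would first record that $\PNbeta=\PNbetamuzeta$ for $(\mu,\zeta)=(\mueq,\zeta_0)$ thanks to \eqref{PNbetaVdeux}, so that Lemma~\ref{lem:expmoment2} applies verbatim (used with truncation parameter $1$; its bound extends to $s=1$ with no logarithmic loss, or one applies it at $s=\frac12$ and absorbs the discrepancy using that each truncation-scale correction has $O(1)$ Dirichlet energy and there are $O(N\L_N^2)$ points in $U_N$ by \eqref{contrnbpoints2}). Combining this with Young's inequality and Hölder's inequality --- the latter to lower the admissible exponential-moment constant from $\frac\beta4$ to any $s\in(0,\frac\beta4]$ --- gives $\log\Esp_{\PNbeta}e^{\theta\|\nab H\|_{L^2}}\le \frac{\theta^2}{4s}+\frac{4C}{\beta}sN\L_N^2$ for all such $s$. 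Optimizing, i.e. choosing $s\approx \theta/(\sqrt N\L_N)$ --- which satisfies $s\le\frac\beta4$ for $N$ large in the mesoscopic regime since $\theta=o(1)$ while $\sqrt N\L_N\to\infty$, the complementary case being treated by taking $s=\frac\beta4$ and absorbing $N\L_N^2$ into $\theta\sqrt N\L_N$ via $\theta\gtrsim\sqrt N\L_N$ --- yields a bound of order $\theta\sqrt N\L_N$. Unwinding the definition of $\theta$ and using $\|\nab\xi\|_{L^\infty}N^{-1/2}\cdot\sqrt N\L_N=\|\nab\xi\|_{L^\infty}\L_N\le\|\nab\xi\|_{L^\infty}(1+\sqrt N\L_N^2)$ then gives the announced inequality, with a constant depending only on $\beta$ and $V$ (through the constant of Lemma~\ref{lem:expmoment2} and through $\|\mueq\|_{L^\infty}$).

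The one genuinely delicate point is extracting the sharp rate $\|\nab\xi\|_{L^2(\D(\bxN,\L_N))}\sqrt N\L_N$ rather than a coarser estimate; this is exactly what the balancing of $s$ in the final step produces, and it mirrors the content of Lemma~\ref{lem:expmoment2}, namely that the truncated potential $H_{N,\vecr}^{\mueq}$ carries $L^2$-energy of order $\sqrt N\L_N$ over a disc of radius $\L_N$. The remaining ingredients --- the choice of $U_N$, the passage from the $s$-truncation of Lemma~\ref{lem:expmoment2} to the $\vecr$-truncation, and the two Jensen steps --- are routine.
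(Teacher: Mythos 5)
Your proof follows exactly the route the paper intends (the paper offers no written proof beyond ``combine Proposition \ref{prop:fluctenergy} with Lemma \ref{lem:expmoment2}''), and in the main regime your argument is correct and complete: the deterministic bound $|\Fluct_N(\xi)|\le\theta\|\nab H_{N,\vecr}^{\mueq}\|_{L^2(U_N)}+B$, the two Jensen steps, and the H\"older/Young optimization in the auxiliary parameter are precisely what is needed to turn the $N\L_N^2$ of Lemma \ref{lem:expmoment2} into the stated rate $\theta\sqrt N\L_N$ rather than the cruder $\theta^2+N\L_N^2$ that a single application of Young's inequality at exponent $\beta/4$ would give. One claim, however, does not hold as written: in the ``complementary case'' $\theta\gtrsim\sqrt N\L_N$ (i.e. $\|\nab\xi\|_{L^\infty}\gtrsim\sqrt N$), taking $s=\beta/4$ leaves the Young term $\theta^2/\beta$, which dominates $\theta\sqrt N\L_N$ and is quadratic in $\|\nab\xi\|$; absorbing only $N\L_N^2$ does not rescue the linear bound there. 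Since $\|\nab H\|_{L^2(U_N)}$ is unbounded, no bound on $\log\Esp\exp(\theta\|\nab H\|_{L^2})$ can be linear in $\theta$ uniformly, so this regime reflects an implicit size restriction in the corollary's statement rather than a repairable gap in your argument; you should simply state the result for $\|\nab\xi\|_{L^\infty}\lesssim\sqrt N$ (or note that otherwise one only gets the quadratic bound) instead of asserting that the complementary case also yields $\theta\sqrt N\L_N$.
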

\comTT{Of course, this bound is less precise than that provided by the main theorem, however it is valid for much less regular  functions $\xi$.}

\section{Approximate transport and approximation error}
\label{sec:transport}

\def \ttmax{\tilde{t}_{\mathrm{max}}}

In this section, we again work in arbitrary dimension $\d$. We are looking to define \coT{a transport map $\phi_t := \id + \frac{t}{\beta} \psi$, for some $\psi$, and }
\ed{an approximate equilibrium measure 
\begin{equation}\label{tildemueqt}
\tilde \mueqt:= \left(\id + \frac{t}{\beta} \psi\right) \# \mueq,
\end{equation}
\coT{as well as an associated approximate confining term $\tilde{\zeta}_t$.} The goal of this section is to show that if we choose $\psi$ properly, then in the evaluation of the right-hand side of \eqref{LTtto0deuxieme} we may replace $\mueqt$ by $\tilde{\mu}_t$, while making only a small error as shown by the following result and its corollary.

\coT{The quantity $\tmax$ has been introduced in \eqref{tmax}, and we define $\ttmax$ as
\begin{equation}
\label{ttmax} 
\ttmax := \beta \left( 2 \|\psi\|_{C^{0,1}(\R^2)} \right)^{-1}.
\end{equation}}

\begin{prop} \label{pro36}Denote by $U_N$ an open cube containing the support of $\xi_N$ (in the mesoscopic case, of sidelength $O(\L_N)$). Let $N \geq 1$ and $\tau$ be such that $\frac{\tau}{N} \leq \min(\tmax, \ttmax)$. We may choose $\psi$ in such a way that with the definition \eqref{tildemueqt} we have
\begin{multline}\label{424}
\log \frac{\KNbeta(\tilde{\mu}_{\tau/N}  , \tilde{\zeta}_{\tau/N} )}{\KNbeta(\mu_{\tau /N }, \zeta_{\tau/N} )}  =
O\left( \tau^4 N^{-2}| U_N|(\diam \, U_N)^2M_{\xi}^2\)
\\ + O\left(\tau^2 N^{-1/2} (\diam \, U_N) |U_N| M_{\xi} \right)   + O \(  \frac{ \tau^2}{N^{1/2}} +  \frac{\tau^4}{N^2}  \) \indic_{\partial}.
\end{multline}
\coT{where $\indic_{\partial}$ is $1$ in the boundary case and $0$ otherwise,} and
\begin{equation}
\label{def:Mxi}
M_{\xi}:=  \|\xi_N\|_{C^{1,1}} + \|\xi_N\|_{C^{1,1}}^2 + \|\xi_N\|_{C^{2,1}} \|\xi_N\|_{C^{0,1}}.
\end{equation}
In the interior cases, the implicit constants depend only on $V$, and in the boundary case they depend on $V$ and $\xi$.
\end{prop}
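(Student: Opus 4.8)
\emph{Strategy.} The plan is: (i) to build a vector field $\psi$ so that $\phi_t := \id + \tfrac{t}{\beta}\psi$ pushes $\mueq$ forward onto a density $\tmut$ agreeing with the true perturbed equilibrium measure $\mueqt$ up to an error of size $O(t^2)$; (ii) to define $\tzetat$ as the confinement term associated with $(\tmut, V_t)$ via \eqref{zeta}, and to check that $\tzetat - \zetat = O(t^2)$ near $U_N$; (iii) to rewrite the left-hand side of \eqref{424} as the logarithm of a Laplace transform under $\PNbeta^{(\mueqt,\zetat)}$; and (iv) to estimate that Laplace transform using the deterministic fluctuation bound of Proposition~\ref{prop:fluctenergy} together with the exponential-moment bounds of Lemmas~\ref{lem:expmoment}--\ref{lem:expmoment2}.

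\emph{Construction of the transport.} Since $\big(\id + \tfrac{t}{\beta}\psi\big)\#\mueq = \mueq - \tfrac{t}{\beta}\div(\mueq\psi) + O(t^2)$, we choose $\psi$ so that $\div(\mueq\psi) = \tfrac{1}{\cd}\Delta\xi_N$ at first order in $t$. In the interior cases $\mueqt = \mueq - \tfrac{t}{\cd\beta}\Delta\xi_N$ exactly for $|t| \le \tmax$ (Remark~\ref{rem:mut}), and we may take $\psi := \tfrac{\nabla\xi_N}{\cd\,\mueq}$ on a neighbourhood of $\supp\xi_N$ (which lies well inside $\Sigma$ by hypothesis) and extend it by $0$; by \ref{H1} and \ref{H3}, $\mueq$ is $C^{2}$ and bounded below on $\Sigma$, so $\psi$ is compactly supported and $C^{1,1}$ with $\|\psi\|_{C^{1,1}}$ controlled by $V$ and the combination of $\xi$-norms entering $M_\xi$, and in particular $\ttmax \gtrsim \|\psi\|_{C^{0,1}}^{-1}$. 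In the boundary cases $\mueqt$ differs from $\mueq$ also through the displacement of the droplet $\Sigma_t$, which is linear in $t$; here we invoke the quantitative stability of the coincidence set of the obstacle problem from \cite{SerSer}, which under \ref{H4} describes $\partial\Sigma_t$ to first order as a normal graph over $\partial\Sigma$ with velocity read off from the Neumann data of $\xi^{\Sigma}$, and the mass-balance conditions \eqref{condit} are precisely what guarantees that the mass carried by each connected component is preserved to first order (the two-dimensional analogue of the multi-cut condition), so that a Lipschitz $\psi$ realizing both the interior density change and the boundary motion exists. We then set $\tmut := \phi_t\#\mueq$ and let $\tzetat$ be built from $(\tmut, V_t)$ as in \eqref{zeta}; by construction $\tmut - \mueqt$ and $\tzetat - \zetat$ are $O(t^2)$ in $L^\infty$ on a neighbourhood of $U_N$, with $\xi$-dependence carried by $M_\xi$ in the interior case.

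\emph{Reduction and the algebraic identity.} Since no change of variables is involved, both partition functions are integrals over $(\R^\d)^N$ of $\exp(-\tfrac{\beta}{2}(\FN(\XN,\cdot) + 2N\sum_i(\cdot)(x_i)))$, so
\[
\log\frac{\KNbeta(\tmut,\tzetat)}{\KNbeta(\mueqt,\zetat)} = \log\Esp_{\PNbeta^{(\mueqt,\zetat)}}\Big[\exp\Big(-\tfrac{\beta}{2}\big(\FN(\XN,\tmut)-\FN(\XN,\mueqt)\big) - \beta N \sum_{i=1}^N(\tzetat-\zetat)(x_i)\Big)\Big].
\]
Expanding \eqref{def:FN} (the diagonal-excluded term cancels) gives the exact identity
\[
\FN(\XN,\tmut)-\FN(\XN,\mueqt) = -2N\int h^{\tmut-\mueqt}\,d\Big(\sum_{i=1}^N\delta_{x_i}-N\mueqt\Big) + \tfrac{N^2}{\cd}\int_{\R^\d}|\nab h^{\tmut-\mueqt}|^2,
\]
with $h^{\rho}(x) = \int\g(x-y)\,d\rho(y)$ as in \eqref{def:hmu}. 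The second term is deterministic, and since $\tmut-\mueqt = O(t^2)$ is supported near $U_N$ it is $O\big(N^2 t^4\,|U_N|(\diam U_N)^2 M_\xi^2\big)$, giving after $t = \tau/N$ exactly the first error term in \eqref{424}. Likewise $\sum_i(\tzetat-\zetat)(x_i) = \int(\tzetat-\zetat)\,d(\sum_i\delta_{x_i}-N\mueqt) + N\int(\tzetat-\zetat)\,d\mueqt$, the last term being deterministic and small since $\zetat$ vanishes on $\Sigma_t$.

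\emph{Estimating the fluctuations; the main obstacle.} The remaining terms are fluctuations $\int\varphi\,d(\sum_i\delta_{x_i}-N\mueqt)$ with $\varphi = h^{\tmut-\mueqt}$ or $\varphi = \tzetat-\zetat$; by Proposition~\ref{prop:fluctenergy} applied with $\vec\eta = \vecr$ on $U_N$ (using \eqref{contrnbpoints}, and Corollary~\ref{coro:nombredepointspresdubord} in the boundary case to bound the points near $\partial\Sigma$), each is bounded by $C\|\nab\varphi\|_{L^2(U_N)}\|\nab H_{N,\vecr}^{\mueqt}\|_{L^2(U_N)}$ plus lower-order terms, where $\|\nab\varphi\|_{L^2(U_N)} = O\big(t^2\,|U_N|^{1/2}(\diam U_N)M_\xi\big)$. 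Multiplying by $\beta N$, exponentiating and taking expectation, we note that $(\mueqt,\zetat)$ satisfies the hypotheses of Proposition~\ref{lem:comparaisonmacro} uniformly in $t$, and use Lemma~\ref{lem:expmoment2} to control the exponential moments of $\|\nab H_{N,s\vecr}^{\mueqt}\|_{L^2(U_N)}^2$; optimizing the Young inequality $aX \le \tfrac{a^2}{\delta\beta} + \tfrac{\delta\beta}{4}X^2$ over $\delta\in(0,1]$ and using $\log\Esp[\exp(\tfrac{\delta\beta}{4}X^2)] \le \delta\,CN\L_N^2$ yields $\log\Esp[\exp(aX)] \lesssim a\sqrt{N}\L_N$ for the (small) values of $a = \beta N\|\nab\varphi\|_{L^2(U_N)}$ at play, which with $t = \tau/N$ produces the second error term $O\big(\tau^2 N^{-1/2}(\diam U_N)|U_N|M_\xi\big)$ (using $|U_N|^{1/2}\L_N \approx |U_N|$ for a cube of sidelength $O(\L_N)$). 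The extra $O(\tau^2 N^{-1/2} + \tau^4 N^{-2})\indic_{\partial}$ in \eqref{424} comes from the coarser control in a layer around $\partial\Sigma$: both the transport error there (the obstacle-problem stability of \cite{SerSer} being only Hölder, not smooth) and the number of points in such a layer, estimated via Corollary~\ref{coro:nombredepointspresdubord}. The constraint $\tfrac{\tau}{N} \le \min(\tmax,\ttmax)$ is used throughout to keep $\phi_{\tau/N}$ a diffeomorphism and, in the interior case, the support $\Sigma_t$ unchanged. The one genuinely delicate point of the argument is the boundary construction of $\psi$ — producing a sufficiently regular transport field from $\mueq$ onto $\mueqt$ given only the quantitative obstacle-problem stability of \cite{SerSer}, and checking that \eqref{condit} is exactly the non-degeneracy condition for such a $\psi$ to exist; the interior construction and all the energy estimates are, by comparison, routine.
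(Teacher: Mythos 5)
Your overall strategy is the paper's: build $\psi$ (explicitly in the interior, via the obstacle-problem stability of \cite{SerSer} and the mass conditions \eqref{condit} at the boundary), set $\tmut=\phi_t\#\mueq$, write the log-ratio as a log-Laplace transform of the energy difference under one of the two Gibbs measures, expand that difference exactly into a deterministic quadratic term plus fluctuation terms, and control the fluctuations by Proposition \ref{prop:fluctenergy} combined with the exponential-moment bounds of Lemmas \ref{lem:expmoment}--\ref{lem:expmoment2} and a Young/H\"older optimization. Your algebraic identity for $\FN(\XN,\tmut)-\FN(\XN,\mueqt)$ is equivalent to the paper's \eqref{partF}, and your accounting of the first two error terms in \eqref{424} (including the observation $|U_N|^{1/2}\L_N\approx|U_N|$) matches Lemmas \ref{lemmutomut} and \ref{nabht}. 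In the interior cases your argument is essentially complete.

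There is, however, a genuine gap coming from your step (ii). The proposition concerns the specific $\tzetat$ of Definition \ref{def:approxquant}, namely $\tzetat:=\zeta_0\circ\phi_t^{-1}$, whereas you define $\tzetat$ as the Euler--Lagrange-type function $h^{\tmut}+V_t/2-\tilde c_t$. Since $\tmut$ is \emph{not} the equilibrium measure for $V_t$, no choice of constant makes this function vanish on $\supp\tmut$, and one only has $\tzetat=\zeta_t+h^{\tmut-\mueqt}+\mathrm{const}=\zeta_t+O(t^2)$. The deterministic term you dismiss, $\beta N\cdot N\int(\tzetat-\zetat)\,d\mueqt=\beta N^2\int\tzetat\,d\mueqt$, is then generically of size $N^2t^2=\tau^2$ — an order-one contribution to the log-ratio, which destroys the claimed bound (the statement and its use in Corollary \ref{coro:tmuetmut} require $o_N(1)$ for fixed $\tau$). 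The paper's composition $\zeta_0\circ\phi_t^{-1}$ is exactly what avoids this: it vanishes identically on $\phi_t(\Sigma)=\supp\tmut$, and since $\partial\phi_t(\Sigma)$ and $\partial\Sigma_t$ are $O(t^2)$ apart while both $\zeta_t$ and $\tzetat$ vanish quadratically at their free boundaries, one gets $\int\tzetat\,d\mueqt=O(t^4)$ (see \eqref{nabzeta2bis}), i.e.\ the $\tau^4/N^2$ term in \eqref{424}; it is also the definition that makes the later change of variables $\tzetat(\phi_t(x_i))=\zeta_0(x_i)$ exact. A secondary issue in the boundary case: $\tmut-\mueqt$ is \emph{not} $O(t^2)$ in $L^\infty$ there — it contains an order-one density concentrated on the $O(t^2)$-thin layer between $\phi_t(\Sigma)$ and $\Sigma_t$ — so the bound $\int|\nabla h^{\tmut-\mueqt}|^2=O(t^4)$ must be obtained from single-layer potential estimates as in \eqref{nabhtmv}, not from the naive $L^\infty$ reasoning you give.
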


\begin{coro} \label{coro:tmuetmut} For any fixed $\tau$, we have
\begin{equation}
\log \frac{\KNbeta(\tilde{\mu}_{\tau/N}  , \tilde{\zeta}_{\tau/N} )}{\KNbeta(\mu_{\tau /N }, \zeta_{\tau/N} )}  = o_N(1),
\end{equation}
and the convergence is uniform for $\tau$ bounded.
\end{coro}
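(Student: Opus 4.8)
The plan is to obtain the corollary as an immediate specialization of Proposition~\ref{pro36} to $t = \tau/N$, checking that each of the three error terms on the right-hand side of \eqref{424} is $o_N(1)$ uniformly for $\tau$ in a bounded set. First I would note that, for fixed $\tau$ and $N$ large enough, the hypothesis $\tau/N \le \min(\tmax, \ttmax)$ of Proposition~\ref{pro36} is satisfied: in the macroscopic cases $\tmax \approx 1$, while in the mesoscopic case $\tmax \approx \L_N^2 = N^{-2\delta}$ with $\delta \in (0,\hal)$, so $\tau/N \ll N^{-2\delta}$ for $N$ large (depending on $\tau$); and $\ttmax$, defined in \eqref{ttmax}, likewise exceeds $\tau/N$ eventually, since the $C^{0,1}$ norm of the transport perturbation $\psi$ constructed below grows at most like a negative power of $\L_N$.

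Next I would control the geometric and analytic factors in \eqref{424}. In the two macroscopic cases $U_N$ is a fixed cube and $\xi$ is fixed, so $|U_N|$, $\diam U_N$ and $M_{\xi}$ are all $O(1)$; the three error terms in \eqref{424} are then $O(\tau^4 N^{-2})$, $O(\tau^2 N^{-1/2})$ and $O(\tau^2 N^{-1/2} + \tau^4 N^{-2})$, each $o_N(1)$ uniformly for bounded $\tau$ since the implicit constants depend only on $V$ and $\xi$. In the mesoscopic case I would use \eqref{controlexi_N} together with $|\xi_N|_0 = O(1)$ to get $\|\xi_N\|_{C^{0,1}} \preceq \L_N^{-1}$, $\|\xi_N\|_{C^{1,1}} \preceq \L_N^{-2}$ and $\|\xi_N\|_{C^{2,1}} \preceq \L_N^{-3}$, so that $M_{\xi} \preceq \L_N^{-4}$ by \eqref{def:Mxi}, while $|U_N| \approx \L_N^2$, $\diam U_N \approx \L_N$ and $\indic_{\partial} = 0$. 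Substituting, the first error term becomes $O(\tau^4 N^{-2}\L_N^{-4}) = O(\tau^4 N^{-2+4\delta})$ and the second becomes $O(\tau^2 N^{-1/2}\L_N^{-1}) = O(\tau^2 N^{-1/2+\delta})$; since $\delta < \hal$ both exponents of $N$ are negative, so both are $o_N(1)$, again uniformly for bounded $\tau$.

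Combining these estimates in each of the three cases yields $\log \frac{\KNbeta(\tilde{\mu}_{\tau/N}, \tilde{\zeta}_{\tau/N})}{\KNbeta(\mu_{\tau/N}, \zeta_{\tau/N})} = o_N(1)$ with uniformity in $\tau$ on bounded sets, which is the assertion. I do not expect a genuine obstacle: all the substance is in Proposition~\ref{pro36}, and the corollary only requires checking that the substitution $t=\tau/N$ produces strictly negative powers of $N$ --- which is precisely where the mesoscopic constraint that $\L_N \to 0$ strictly more slowly than $N^{-1/2}$ (i.e.\ $\delta < \hal$) is used. The only point needing a little care is to keep the implicit constants in \eqref{424} uniform in $\tau$ over bounded ranges, which is already built into the statement of Proposition~\ref{pro36}.
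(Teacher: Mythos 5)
Your proposal is correct and follows exactly the paper's route: specialize Proposition~\ref{pro36} to $t=\tau/N$ and check that each error term in \eqref{424} vanishes using $\min(\tmax,\ttmax)\approx\L_N^2$, $M_{\xi}\approx\L_N^{-4}$, $\diam U_N\approx\L_N$, $|U_N|\approx\L_N^2$ and $\delta<\hal$. Your version merely spells out the exponent bookkeeping that the paper leaves implicit.
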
}

\subsection{Approximate transport and equilibrium measure} \label{sec:defpsi}
\setcounter{step}{0}

\ed{We first define $\psi$ and then evaluate the errors made by replacing $\mueqt$ by the approximate equilibrium measure \eqref{tildemueqt}.
The idea is to pick $\psi$   such that $\id + \frac{t}{\beta} \psi$ approximately  solves (i.e. solves to order $t$) the equation
$$
\det (D(\id + \frac{t}{\beta} \psi)) \approx \frac{\mueq}{\mueqt \circ (\id + \frac{t}{\beta} \psi)},
$$}
\coT{which expresses the fact that $\id + \frac{t}{\beta} \psi$ approximately transports $\mueq$ on $\mueqt$.}

\begin{step}[The interior cases]
In all interior  cases (mesoscopic and macroscopic), we set
\begin{equation}\label{defpsimeso}
\psi= - \frac{\nab \xi_N}{\cd \mueq}
\end{equation}
which is well-defined in view of the lower bound for $\mueq$ provided by \ref{H3}. This way $\psi$ is compactly supported in $U_N$ and solves
$$ 
\div (\mueq \psi)= - \frac{1}{\cd}\Delta \xi_N,
$$
and we have the obvious bounds, 
\begin{equation}
\label{regpsimeso} 
|\psi|_{k}   \leq C_k |\xi_N|_{k+1}  \leq C |\xi|_{k+1} \frac{1}{\L_N^{k+1}} \text{ for } k = 0, 1, 2.
\end{equation}
The constant $C_k$ in \eqref{regpsimeso} depends  on $|\mueq|_{k}$ and on $\min_{\Sigma} \mu_0$, hence on $V$.
\end{step}

\begin{step}[Boundary case]
The boundary case is significantly more difficult than the interior one, since it requires to understand the dependence on $t$ of the support $\Sigma_t$, for $t$ small. This is provided by the analysis of \cite{SerSer}, and here we quote the results that we will need (which are valid in all dimensions), with the notation of this paper. It  follows from the main theorem in \cite{SerSer} after noting that $V^0$ there corresponds to $\frac{1}{\beta}(\xi - \xi^\Sigma)$ for us and that $-\Delta h^0$ there is equal to $\cd \mueq$ here. We recall that $\xi^\Sigma$ denotes the harmonic extension of $\xi$ outside $\Sigma$ and $[\cdot]$ denotes the jump across the interface $\Sigma$.
\begin{prop}
\label{proserser}
Under our assumptions the following holds. 
Let $\mueqt$ be the equilibrium measure associated to $V- \frac{2t}{\beta}\xi$ as in Definition \ref{def:mueqt}, and let $\Sigma_t$ be its support. Letting $\vec{n}$ denote the outer unit normal vector to $\partial \Sigma$, there exists a family $\{\rho_t\}_{t}$ of maps from $\p \Sigma $ to $\R^\d$ such that, for $|t|$ small enough \ed{depending on $\d, \alpha, V, \Sigma$}, we have
\begin{equation}
\label{bst} \partial \Sigma_t = \left\lbrace x +\frac{ t}{\beta} \rho_t(x) \vec{n}(x), x \in \partial \Sigma \right\rbrace,
\end{equation}
with 
\begin{equation}\label{esq}
\|\rho_t - \rho\|_{L^\infty(\p \Sigma)} \le Ct,
 \end{equation}  where 
 \begin{equation}\label{rho}
 \rho(x) =  \frac{2}{\Delta V(x)} [\nab \xi^\Sigma]\cdot \vec{n}.
 \end{equation}
 Finally, we also have 
 \begin{equation}\label{zetateq}
 \|\zeta_t-\zeta_0- \frac{t}{\beta} (\xi-\xi^\Sigma) \|_{L^\infty(\R^\d)}\le Ct^2, \qquad   \|\zeta_t-\zeta_0-  \frac{t}{\beta} (\xi-\xi^\Sigma) \|_{C^{0,1} (\R^\d\backslash \Sigma_t\cup \Sigma)  }\le Ct^2.
 \end{equation}
Here, the constants  $C$ depend on the $C^{3,1}$ norms of $V$ and $\xi$ and the lower bound in \ref{H3}.
\end{prop}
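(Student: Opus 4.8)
The plan is to recognize that this statement is, up to a change of notation, the specialization to our setting of the main quantitative stability theorem of \cite{SerSer} for the obstacle problem; so I would (i) set up the dictionary between our equilibrium-measure problem and the obstacle problem treated there, (ii) check that Assumptions \ref{H1}, \ref{H3}, \ref{H4} imply the hypotheses used in \cite{SerSer}, and (iii) read off \eqref{bst}--\eqref{zetateq} from its conclusions. For the dictionary: by \eqref{zeta}--\eqref{EulerLagrange}, $h^{\mueq}$ is the solution of the obstacle problem with obstacle $c_0-\tfrac{V}{2}$ and coincidence set $\Sigma$, equivalently $\zeta_0\ge 0$, $\zeta_0\equiv 0$ on $\Sigma$, and $\Delta\zeta_0=\tfrac12\Delta V$ in $\R^{\d}\setminus\Sigma$; replacing $V$ by $V_t=V-\tfrac{2t}{\beta}\xi$ perturbs the obstacle by $\tfrac{t}{\beta}\xi$ up to the additive constant $c_t-c_0$ (which does not move the free boundary), and since subtracting from the obstacle a function harmonic outside $\Sigma$ leaves invariant the data that actually drives the free boundary — the trace and the Neumann jump on $\partial\Sigma$ — one may replace $\tfrac t\beta\xi$ by $\tfrac1\beta(\xi-\xi^{\Sigma})$; this is exactly the correspondence $V^0\leftrightarrow\tfrac1\beta(\xi-\xi^{\Sigma})$, $-\Delta h^0\leftrightarrow\cd\mueq$ announced before the statement. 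To verify the hypotheses of \cite{SerSer}: Assumption \ref{H1} ($V\in C^{3,1}$) together with Assumption \ref{H4} (all boundary points regular, no cusps) gives that $\partial\Sigma$ is a finite union of $C^{2,1}$ curves, via \cite{caffarelli1976smoothness}; Assumption \ref{H3} ($\Delta V>0$ on the contact set $\omega=\Sigma$) supplies the uniform nondegeneracy $\inf_\Sigma\Delta V>0$, the crucial structural input; and $\xi\in C^{3,1}$ gives enough regularity of the perturbation $\xi-\xi^{\Sigma}$ (globally $C^{0,1}$, smooth up to $\partial\Sigma$ from either side).

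Granting this dictionary, \eqref{bst} is the conclusion of \cite{SerSer} that the perturbed coincidence set $\Sigma_t$ has boundary a normal graph of height $\tfrac t\beta\rho_t$ over $\partial\Sigma$, with $\rho_t$ bounded uniformly in $t$; \eqref{esq}--\eqref{rho} is the quantitative identification of the leading-order normal velocity $\rho$, and \eqref{zetateq} is the quantitative stability of the solution itself. The formula \eqref{rho} is a Hadamard-type normal-velocity formula: along a normal ray leaving $y\in\partial\Sigma$ the unperturbed solution detaches quadratically, $\zeta_0\simeq\tfrac14\Delta V(y)\,\dist(\cdot,\Sigma)^2$ (because $\Delta\zeta_0=\tfrac12\Delta V$ outside $\Sigma$ and $\zeta_0$ vanishes to second order on $\partial\Sigma$), while the first-order perturbation of the solution has vanishing trace on $\partial\Sigma$ and a one-sided normal derivative governed by the Neumann jump $[\nab\xi^{\Sigma}]\cdot\vec{n}$, so the point where the quadratic regime begins is displaced by an amount proportional to the ratio of these two quantities; \cite{SerSer} makes this rigorous and computes the constant, giving \eqref{rho} with an $O(t)$ error. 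For \eqref{zetateq} I would, once $\Sigma_t$ is known to be an $O(t)$ normal graph, compare $\zeta_t$ with $\zeta_0+\tfrac t\beta(\xi-\xi^{\Sigma})$: the two solve obstacle-type complementarity systems whose data agree up to $O(t^2)$ away from the $O(t)$-thin region in which $\Sigma$ and $\Sigma_t$ differ, so a maximum-principle comparison yields the $L^\infty$ bound on all of $\R^{\d}$; on $\R^{\d}\setminus(\Sigma_t\cup\Sigma)$, where both functions genuinely solve a Poisson equation with $C^{2,1}$ right-hand side up to a $C^{2,1}$ boundary, elliptic regularity upgrades this to the $C^{0,1}$ bound. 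The dependence of the constants on $\d,\alpha,V,\Sigma$ and on the $C^{3,1}$ norms of $V,\xi$ and the lower bound in \ref{H3} is the one recorded in \cite{SerSer}.

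The hard part is not in this paper at all: it is the cited theorem of \cite{SerSer}, namely establishing quantitative stability of the free boundary — both its location and its $C^{2,1}$ regularity — and of the obstacle-problem solution, under a perturbation of the obstacle, in the low-regularity class $C^{3,1}$ and in arbitrary dimension. The classical planar route (Sakai's theory, quadrature domains, conformal maps) is intrinsically two-dimensional and requires analytic data, which is why \cite{ahm2} assumed analyticity in the boundary case; the approach of \cite{SerSer} is instead PDE/variational — it uses the uniform nondegeneracy of $\zeta_0$ at the free boundary, a linearization of the complementarity system around $\zeta_0$, and elliptic estimates, to convert this into quantitative control of the free-boundary motion. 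Within the present paper there is no additional substantial step beyond the verification and transcription described above.
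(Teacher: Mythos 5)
Your proposal matches the paper's treatment: Proposition \ref{proserser} is not proved in the paper but is quoted from the main theorem of \cite{SerSer}, with exactly the dictionary you describe ($V^0$ there corresponding to $\frac{1}{\beta}(\xi-\xi^{\Sigma})$ and $-\Delta h^0$ to $\cd\mueq$), and the hypotheses supplied by \ref{H1}, \ref{H3}, \ref{H4}. Your additional heuristics for \eqref{rho} and \eqref{zetateq} are consistent with the cited result, and you correctly locate the substantive work in \cite{SerSer} rather than in this paper.
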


Based on this result, and recalling that $\mueq(x)= \frac{\Delta V(x)}{2\cd}$ on its support, we choose to construct $\psi$ as follows. 
\begin{lem} 
There exists a map $\psi: U \to \R^\d$ where $U$ is a  neighborhood of $\Sigma$ in which $\Delta V>0$, such that 
\begin{equation}\label{defpsi}
\left\lbrace
\begin{array}{ll}
\div (\mueq \psi)= - \frac{1}{\cd} \Delta \xi& \text{in } \ \Sigma\\ [2mm]
\psi \cdot \vec{n}=\dis\frac{1}{\cd \mueq} [\nab \xi^\Sigma]\cdot \vec{n}& \text{on} \ \partial \Sigma.
\end{array}
\right.
\end{equation}
and 
\begin{equation}\label{psiext}
\psi= (\xi^{\Sigma}-\xi)\frac{\nab\zeta_0}{|\nab \zeta_0|^2} + \psi^\perp\quad \text{in} \ U\backslash  \Sigma ,\end{equation}
for some vector field $\psi^\perp$ in $C^{1,1}(U\backslash \Sigma)$  and perpendicular to $\nab \zeta_0$.
 Moreover, $\psi$  is globally  Lipschitz and satisfies 
\begin{equation}\label{regpsi}
\|\psi\|_{C^{1,1}(\Sigma)}+ \|\psi\|_{C^{1,1} (U\backslash \Sigma)} \le C \|\xi\|_{C^{2,1}}
\end{equation}
and 
\begin{equation}\label{eqdiv}
\div (\mueq \psi \indic_{\Sigma}) = -\frac{1}{\cd}\Delta \xi^\Sigma\quad \text{in} \ U.
\end{equation}
in the sense of distributions.
\end{lem}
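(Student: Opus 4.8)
The plan is to construct $\psi$ separately inside $\Sigma$ and on $U\setminus\Sigma$, on a neighborhood $U$ of $\Sigma$ on which $\Delta V>0$ (such a $U$ exists since $\Delta V>0$ on $\Sigma$ by \ref{H3} and $\Delta V$ is continuous), and then to check that the two pieces glue into a globally Lipschitz field satisfying all the stated identities. Inside $\Sigma$ I would look for $\psi$ of gradient form $\psi=\nab u/\mueq$, which is legitimate because $\mueq=\frac{1}{2\cd}\Delta V$ is bounded below on $\Sigma$ by \ref{H3}; then $\div(\mueq\psi)=\Delta u$, so the first line of \eqref{defpsi} becomes $\Delta u=-\frac{1}{\cd}\Delta\xi$ in $\Sigma$ and the second becomes the Neumann condition $\p_n u=\frac{1}{\cd}[\nab\xi^\Sigma]\cdot\vec n$ on $\p\Sigma$. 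This Neumann problem has a solution, unique up to an irrelevant additive constant, precisely when the flux balances on each connected component, i.e. $-\frac{1}{\cd}\int_{\Sigma_i}\Delta\xi=\frac{1}{\cd}\int_{\p\Sigma_i}[\nab\xi^\Sigma]\cdot\vec n$. Using the distributional identity $\Delta\xi^\Sigma=\Delta\xi\,\indic_\Sigma+([\nab\xi^\Sigma]\cdot\vec n)\,\mathcal H^{\d-1}|_{\p\Sigma}$ together with the fact that $\xi^\Sigma$ is the \emph{bounded} harmonic extension (so its flux through large circles vanishes), this reduces to the vanishing of the outward normal flux of $\xi^\Sigma$ over each $\p\Sigma_i$, which is automatic when $\Sigma$ is connected and is exactly the assumption \eqref{condit} in the multi-component case. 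Standard elliptic (Schauder-type) estimates—using $\p\Sigma\in C^{2,1}$ from \ref{H4}, $\Delta\xi$ Lipschitz, and the regularity of $\p_n\xi^\Sigma$ on $\p\Sigma$—then give $\|u\|_{C^{2,1}(\overline\Sigma)}\lesssim\|\xi\|_{C^{2,1}}$, hence $\psi|_\Sigma=\nab u/\mueq\in C^{1,1}$ with the bound \eqref{regpsi} (recall $\mueq\in C^{1,1}$ since $V\in C^{3,1}$). In the interior cases $\Sigma_t=\Sigma$, the boundary term drops out, and one simply recovers $\psi=-\nab\xi_N/(\cd\mueq)$.

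For the piece on $U\setminus\Sigma$ I would first record, using \ref{H3} and the regularity of the obstacle problem, that $\zeta_0$ vanishes to exactly second order on $\p\Sigma$: $\zeta_0=0$ and $\nab\zeta_0=0$ on $\p\Sigma$, while with $r:=\dist(\cdot,\Sigma)$ one has $\p_n^2\zeta_0|_{\p\Sigma}=\tfrac12\Delta V>0$ (since $\Delta\zeta_0=\tfrac12\Delta V$ outside $\Sigma$ and the tangential second derivatives of $\zeta_0$ vanish on $\p\Sigma$), so $\zeta_0\approx\tfrac14(\Delta V)\,r^2$ and $\nab\zeta_0\approx\tfrac12(\Delta V)\,r\,\vec n$ near $\p\Sigma$; shrinking $U$ we may assume $\nab\zeta_0\neq0$ on $U\setminus\Sigma$. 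Since $\xi^\Sigma-\xi$ vanishes on $\p\Sigma$ and, from outside, $\xi^\Sigma-\xi\approx([\nab\xi^\Sigma]\cdot\vec n)\,r$, the ``radial'' field $(\xi^\Sigma-\xi)\frac{\nab\zeta_0}{|\nab\zeta_0|^2}$ extends regularly up to $\p\Sigma$ (the simple zero of $\xi^\Sigma-\xi$ cancels the $r^{-1}$ growth of $\nab\zeta_0/|\nab\zeta_0|^2$) with boundary trace $\frac{2}{\Delta V}([\nab\xi^\Sigma]\cdot\vec n)\,\vec n=\frac{1}{\cd\mueq}([\nab\xi^\Sigma]\cdot\vec n)\,\vec n$, i.e. $\rho\,\vec n$ with $\rho$ as in \eqref{rho}, which is exactly the normal component of the trace of $\psi|_\Sigma$—consistently with Proposition \ref{proserser}. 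It then remains to pick $\psi^\perp\in C^{1,1}(U\setminus\Sigma)$, perpendicular to $\nab\zeta_0$, whose trace on $\p\Sigma$ equals the tangential component of $\psi|_\Sigma$: one takes the tangential part of $\psi|_{\p\Sigma}$, extends it into $U\setminus\Sigma$ by the normal flow of $\zeta_0$, projects onto $(\nab\zeta_0)^\perp$, and multiplies by a cutoff supported in $U$, obtaining $\|\psi^\perp\|_{C^{1,1}}\lesssim\|\psi|_\Sigma\|_{C^{1,1}}\lesssim\|\xi\|_{C^{2,1}}$. By construction the interior and exterior definitions of $\psi$ agree on $\p\Sigma$, so $\psi$ is continuous, and being $C^{1,1}$ on each side it is globally Lipschitz; this yields \eqref{psiext} and \eqref{regpsi}.

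Finally, \eqref{eqdiv} would be checked directly: for a Lipschitz field $F$ on $U$ one has, as distributions, $\div(F\indic_\Sigma)=(\div F)\indic_\Sigma-(F\cdot\vec n)\,\mathcal H^{\d-1}|_{\p\Sigma}$; applying this to $F=\mueq\psi$ (with $\mueq$ read as $\tfrac1{2\cd}\Delta V$ on all of $U$) and inserting the two lines of \eqref{defpsi} gives $\div(\mueq\psi\indic_\Sigma)=-\tfrac1{\cd}\Delta\xi\,\indic_\Sigma-\tfrac1{\cd}([\nab\xi^\Sigma]\cdot\vec n)\,\mathcal H^{\d-1}|_{\p\Sigma}=-\tfrac1{\cd}\Delta\xi^\Sigma$, by the distributional formula for $\Delta\xi^\Sigma$ used above. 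I expect the genuinely delicate step to be the exterior/boundary analysis of the second paragraph—extracting from \ref{H3}, \ref{H4} and the fine regularity theory of the obstacle problem the precise quadratic non-degeneracy of $\zeta_0$, and the resulting $C^{1,1}$-regularity and boundary trace of the radial field, so that the interior Neumann solution and the exterior field match to first order along $\p\Sigma$. The interior Neumann construction and the verification of \eqref{eqdiv} are, by comparison, routine.
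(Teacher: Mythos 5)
Your proposal is correct and follows essentially the same route as the paper: the quadratic non-degeneracy of $\zeta_0$ at $\partial\Sigma$ to make sense of the radial field outside, a Neumann problem on each $\Sigma_i$ whose compatibility condition is exactly \eqref{condit}, matching of the normal traces across $\partial\Sigma$ with a tangential correction $\psi^\perp$, and the distributional divergence formula for \eqref{eqdiv}. The only (cosmetic) deviation is that you solve $\Delta u=-\frac{1}{\cd}\Delta\xi$ and set $\psi=\nabla u/\mueq$, whereas the paper solves the divergence-form problem $\div(\mueq\nabla u)=-\frac{1}{\cd}\Delta\xi$ and sets $\psi=\nabla u$; both give the required identities and the same regularity.
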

\begin{proof}
First, we note that by assumption \ref{H3} of $V$ and standard results  in the analysis of the obstacle problem (see for instance \cite{MR1658612} or also \cite[Sec. 3]{SerSer}), we know that, as $x \to \partial \Sigma$ from the outside, 
\begin{align} 
\label{bzetaA} & \zeta_0(x) \sim \frac{\Delta V(x)}{4} \dist (x, \partial \Sigma)^2 \\
 \label{bzetaB} & \nab \zeta_0(x) \sim \frac{\Delta V(x)}{2} \dist(x, \partial \Sigma) \vec{n}.
\end{align}
Thus, if the neighborhood $U$ is chosen small enough, $\nab \zeta_0$  does not vanish in $U\backslash \Sigma$ and  \eqref{psiext} makes sense. 

Let us solve on each connected component $\Sigma_i$ of $\Sigma$
\begin{equation}
\label{oneachconnected}
\left\lbrace
\begin{array}{ll}
\div (\mueq \nab u_i)= -\frac{1}{\cd} \Delta \xi& \text{in } \ \Sigma_i\\ [2mm]
\displaystyle\frac{\partial u_i}{\partial n}=\dis\frac{1}{\cd\mueq} [\nab \xi^\Sigma]\cdot \vec{n}& \text{on} \ \partial \Sigma_i.
\end{array}\right.
\end{equation}
This is an elliptic PDE in divergence form with regular and bounded below coefficient $\mueq$, which is solvable since 
the compatibility condition holds, following \eqref{condit}
\ed{$$
-\int_{\Sigma_i} \Delta \xi- \int_{\partial \Sigma_i} [\nab \xi^\Sigma] \cdot \vec{n} = \int_{\partial \Sigma_i}\frac{\partial \xi^\Sigma}{\partial n}=0.
$$}
Elliptic regularity estimates \cite[Theorems 6.30, 6.31]{gilbarg2015elliptic}
yield that 
$$\|u_i\|_{C^{2,1}(\Sigma_i)}\le C \|\xi\|_{C^{2,1}}.$$
Indeed, $\xi \in C^{2,1}$ and $\xi^{\Sigma}$ is $C^{2,1}$ in the exterior of $\Sigma$.

We then let $\psi= \nab u_i$ in each $\Sigma_i$. 
It satisfies \eqref{defpsi} because of \eqref{oneachconnected}. In addition, in view of \eqref{bzetaB}, we have 
\begin{equation}
 \frac{(\xi^{\Sigma}-\xi)\nab\zeta_0}{|\nab \zeta_0|^2}-\frac{2}{\Delta V(x)} \left[\frac{\partial \xi^\Sigma}{\partial n}\right] \vec{n}\to 0 \quad \text{as} \ x\to \partial\Sigma\,  \text{ from the outer side},
\end{equation}
hence $\psi\cdot \vec{n}$ is built continuous across $\partial \Sigma$. To make $\psi$ itself continuous, we consider the trace of $\psi- (\psi \cdot \vec{n} )\vec{n}$  on $\partial \Sigma$ from the inside, and extend it arbitrarily to a regular vector field. We may then subtract off the projection of that vector field onto $\nab \zeta_0$ to obtain a vector field $\psi^\perp$ which remains perpendicular to $\nab \zeta_0$.
The relation \eqref{eqdiv} can easily be checked by using test-functions.
\end{proof}
\end{step}

We define $\psi$ on the whole $\R^\d$ by multiplying it by a cutoff $\chi$ equal to $1$ in a neighborhood of $\Sigma$.
{\bf In the sequel, we will write $\|\psi\|_{C^{1,1}}$ to mean $\|\psi\|_{C^{1,1}(\Sigma)}+ \|\psi\|_{C^{1,1}(\R^\d\backslash \Sigma)}$.} We now introduce several “approximate” quantities defined with $\psi$.

\begin{defi} \label{def:approxquant}
Let $\psi$ be as above. For $|t| \leq \ttmax$ as in \eqref{ttmax}:
\begin{itemize}
\item We let $\phi_t$ be the approximate transport, defined by $\phi_t := \id +\frac{ t}{\beta} \psi$
\item We let $\tmut$ be the approximate equilibrium measure, defined by $\tmut := \phi_t \# \mueq$.
\item We let $\tzetat$ be the approximate confining potential $\tzetat := \zeta_0 \circ \phi_t^{-1}$.
\item We let $\PNbeta^{(t)}$ be the approximate Gibbs measure
\begin{equation} \label{def:approxGibbsmeas}
d\PNbeta^{(t)}(\XN) = \frac{1}{\KNbeta(\tmut, \tzetat)} \exp\left(- \frac{\beta}{2} \left( \FN(\XN, \tmut) + 2N \sum_{i=1}^N \tzetat(x_i) \right) \right) d\XN,
\end{equation}
which corresponds to $= d\PNbeta^{(\tmut, \tzetat)}(\XN)$ with the notation of \eqref{def:PNbetamuzeta}.
\end{itemize}
\end{defi}
We may check that if $|t|$ is smaller than $\ttmax$, the map $\phi_t$ is a $C^{0,1}$-diffeomorphism on $\R^2$. In view of \eqref{regpsimeso}, the order of magnitude of $\ttmax$ is (as for $\tmax$)
$$
\ttmax \approx \L_N^2.
$$

\subsection{Comparison of partition functions}
\ed{
In order to prove Proposition~\ref{pro36}, the main point is to show  that the energies of a given configuration computed with respect to a background $\mueqt$ or $\tmut$ are typically very close to each other, as stated in the following}
\begin{lem} \label{lemmutomut}
For any $\XN$, we have 
 \begin{equation} \label{mutomut}
\left| \left(F_N(\XN, \mueqt)+ 2N \sum_{i=1}^N \zeta_t (x_i)\right) -\left(F_N(\XN, \tmut)+2N \sum_{i=1}^N\tilde \zeta_t(x_i) \right)\right|  \le \Error_1
\end{equation}
with an $\Error_1$ term bounded by \ed{
\begin{multline}
\label{err1}
|\Error_1|\le  Ct^4 N^2 (\diam\, U_N)^2   |U_N|  M_{\xi}^2 \\
+ C N \left( ( |U_N|^{1/2} + |U_N|^{\frac{\d-2}{2\d} }N^{-\frac{1}{\d}})\|\nab H_{N,\vec{\rr}}^{\tilde \mueqt}\|_{L^2(U_N)} + N^{1-\frac{1}{\d}}|U_N|\|\tmut\|_{L^\infty(U_N)}    \right)  (\diam\, U_N)    t^2 M_{\xi}
\\ +  C\( N t^2\(N^{-\frac{1}{6}}  \|\nab H_{N,\vec{\eta}}^{\tilde \mueqt}\|_{L^2}    + N^{\frac{1}{3}} + N^{\frac{2}{3}}\sum_{i=1}^N \zeta_t(x_i) \) + N^2 t^4\)\indic_{\partial},
\end{multline} where $\indic_{\partial }$ is zero in the interior cases and $1$ otherwise,   and $C$ depends only on $V$.}
\end{lem}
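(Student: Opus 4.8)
\emph{Strategy and algebraic reduction.}
The plan is to rewrite the difference in \eqref{mutomut} as a fluctuation of a small potential plus two nonnegative remainders, and then to bound each piece using the deterministic estimates of Section~\ref{sec2}. Set $w_t:=\hmueqt-h^{\tmut}$, so that $-\Delta w_t=\cd(\mueqt-\tmut)$, and $\Psi_t:=\zetat-\tzetat-w_t$. Since $\mueqt$ and $\tmut$ are absolutely continuous, $\mueqt-\tmut$ is non-atomic, and expanding $F_N$ by bilinearity in $\sum_i\delta_{x_i}-N\mu$ yields
\be\label{planalg}
F_N(\XN,\mueqt)-F_N(\XN,\tmut)=-2N\int w_t\,d\Big(\sum_{i=1}^N\delta_{x_i}-N\tmut\Big)+\frac{N^2}{\cd}\int_{\R^\d}|\nab w_t|^2 .
\ee
Adding $2N\sum_i(\zetat-\tzetat)(x_i)$, writing $\sum_i\delta_{x_i}=(\sum_i\delta_{x_i}-N\tmut)+N\tmut$, and using that $\tzetat=\zetaz\circ\phi_t^{-1}$ vanishes on $\supp\tmut=\phi_t(\Sigma)$ (so $\int\tzetat\,d\tmut=0$), the left-hand side of \eqref{mutomut} becomes
\be\label{planalg2}
2N\int\Psi_t\,d\Big(\sum_{i=1}^N\delta_{x_i}-N\tmut\Big)+\frac{N^2}{\cd}\int_{\R^\d}|\nab w_t|^2+2N^2\int\zetat\,d\tmut ,
\ee
whose last two terms are nonnegative; it remains to bound the three terms.

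\emph{Interior cases.}
By Remark~\ref{rem:mut} we have $\mueqt=\mut$ and $\Sigma_t=\Sigma_0$, while $\phi_t=\id$ off $\supp\psi\subset U_N\subset\Sigma_0$; since $\zetaz\equiv0$ on $\Sigma_0$ this forces $\zetat\equiv\tzetat\ (\equiv\zetaz)$, so the last term of \eqref{planalg2} vanishes and $\Psi_t=-w_t=h^{\tmut}-h^{\mut}$. Expanding the push-forward, $\tmut=\mueq-\frac t\beta\div(\mueq\psi)+O(t^2)=\mut+O(t^2)$, with $O(t^2)$-density supported in $U_N$ and bounded by $\lesssim t^2M_\xi$ thanks to \eqref{defpsimeso}--\eqref{regpsimeso}; hence $\|\nab\Psi_t\|_{L^2(U_N)}\lesssim t^2M_\xi(\diam U_N)|U_N|^{1/2}$ and $\|\nab\Psi_t\|_{L^\infty}\lesssim t^2M_\xi(\diam U_N)$. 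Applying Proposition~\ref{prop:fluctenergy} with $\varphi=\Psi_t$, $\mu=\tmut$ to the first term of \eqref{planalg2} and Cauchy--Schwarz to the second gives the first two lines of \eqref{err1}.

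\emph{Boundary case.}
Here the new inputs are the construction of $\psi$ in \eqref{psiext}--\eqref{regpsi} and the obstacle-problem stability of \cite{SerSer} recalled in Proposition~\ref{proserser}. Since $\psi\cdot\vec{n}=\frac{1}{\cd\mueq}[\nab\xi^\Sigma]\cdot\vec{n}=\rho$ on $\partial\Sigma$ (using $\mueq=\frac{\Delta V}{2\cd}$ and \eqref{rho}) and $\rho_t=\rho+O(t)$ by \eqref{esq}, the curve $\phi_t(\partial\Sigma)$ coincides with $\partial\Sigma_t$ up to $O(t^2)$ in the normal direction, so $\mueqt$ and $\tmut$ differ only on an $O(t^2)$-wide collar of $\partial\Sigma$: this gives $\int_{\R^2}|\nab w_t|^2\lesssim t^4$ (up to logarithms) and, since $\zetat\ge0$ vanishes on $\Sigma_t$ and grows quadratically near $\partial\Sigma_t$ (cf. \eqref{bzetaA}), $\int\zetat\,d\tmut\lesssim t^4$; these feed the $N^2t^4\indic_\partial$ term. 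For the fluctuation, the estimate \eqref{zetateq} on $\zetat$ together with the matching expansion $\tzetat=\zetaz-\frac t\beta(\xi^\Sigma-\xi)+O(t^2)$ outside $\Sigma$ — valid because $\psi\cdot\nab\zetaz=\xi^\Sigma-\xi$ there by \eqref{psiext} — show that $\zetat-\tzetat$ is $O(t^2)$ in $C^{0,1}$ away from the interfaces while both $\zetat$ and $\tzetat$ are $O(t^4)$ inside the collar; Proposition~\ref{prop:fluctenergy} combined with Corollary~\ref{coro:nombredepointspresdubord}, applied to count the particles within distance $\sim t^2$ of $\partial\Sigma$ (and those outside $\Sigma$), then produces the remaining $\indic_\partial$ terms of \eqref{err1}.

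\emph{Main obstacle.}
Everything nontrivial lies in the boundary case: one must control $\zetat-\tzetat$ and its gradient uniformly up to $\partial\Sigma$ and $\partial\Sigma_t$, both inside the $O(t^2)$-collar — where neither $\zetat$ nor $\tzetat$ vanishes and the densities are only $O(1)$-close — and far from $\Sigma$, where the two normalizations differ at order $t$. This is what forces the sharper particle-counting of Corollary~\ref{coro:nombredepointspresdubord} (hence the exponents $N^{-1/6},N^{1/3},N^{2/3}$ appearing in \eqref{err1}), and it is precisely here that the quantitative stability estimates of \cite{SerSer}, via the normal-displacement formula \eqref{rho} that the construction of $\psi$ is designed to match, become indispensable.
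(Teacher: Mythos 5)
Your proposal follows essentially the same route as the paper's proof: the decomposition \eqref{planalg2} is exactly the paper's identity \eqref{evalf} (your $\Psi_t$ is the paper's $h^{\tilde\mueqt-\mueqt}+\zeta_t-\tilde\zeta_t$ up to sign conventions), the cross term is treated as a fluctuation via Proposition \ref{prop:fluctenergy}, the quadratic terms are bounded by the $O(t^2)$/$O(t^4)$ estimates of Lemma \ref{nabht}, and the boundary case uses the $O(t^2)$-collar from Proposition \ref{proserser} together with the particle count of Corollary \ref{coro:nombredepointspresdubord}. The argument is correct and matches the paper's.
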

\coT{ We recall that the error term $M_{\xi}$ was introduced in \eqref{def:Mxi} as 
$$
M_{\xi}:=  \|\xi_N\|_{C^{1,1}} + \|\xi_N\|_{C^{1,1}}^2 + \|\xi_N\|_{C^{2,1}} \|\xi_N\|_{C^{0,1}}.
$$}
In particular, in our setting with $\d=2$, $U_N$ of size $\L_N$ and $M_{\xi}$ of size $\L_N^{-4}$, we obtain an error term in \eqref{mutomut} as follows
\begin{multline} \label{err1Explicite}
|\Error_1| \preceq t^4 N^2 \L_N^{-4} + N \L_N^{-2} t^2 \|\nab H_{N,\vec{\rr}}^{\tilde \mueqt}\|_{L^2(U_N)} + N^{3/2} \L_N^{-1} t^2 \\
+ \indic_{\partial}\( N t^2\(N^{-\frac{1}{6}}  \|\nab H_{N,\vec{\eta}}^{\tilde \mueqt}\|_{L^2}    + N^{\frac{1}{3}} + N^{\frac{2}{3}}\sum_{i=1}^N \zeta_t(x_i) \) + N^2 t^4\).
\end{multline}

Lemma \ref{lemmutomut} is proven in Section \ref{sec:preuvelemmutomut}. Now, we explain how to prove Proposition \ref{pro36} using Lemma \ref{lemmutomut}.

\ed{
\begin{proof}[Proof of Proposition \ref{pro36}] Assuming Lemma \ref{lemmutomut}, letting  $t$ be such that $|t| \leq \min(\tmax, \ttmax)$, by definition we may write the ratio of the partition functions as
\begin{multline}
\label{423}
\frac{\KNbeta(\mu_t, \zeta_t)}{\KNbeta(\tilde{\mu}_t, \tzetat)} 
\\ =\Esp_{\PNbeta^{(t)}} \left( \exp \left(-\frac{\beta}{2}\left(F_N(\XN, \mueqt)- F_N(\XN, \tilde{\mu}_t) + 2 N \sum_{i=1}^N  \left(\zeta_t(x_i) - \tzetat (x_i)\right)  \right) \right)\right).
\end{multline}
Setting $t=\tau /N$ and remembering that $N^{-1/2}\ll \L_N$,  we then combine \eqref{err1Explicite},  Lemma  \ref{lem:expmoment2} to control the exponential moments of $\|\nab H_{N,\vec{\rr}}^{\tmut}\|_{L^2}$, \eqref{expmomzeta} to control those of $\sum \zeta_t(x_i)$, and we obtain \eqref{424}.
\end{proof}

\subsection{Approximation error}
     In order to prove Lemma \ref{lemmutomut}, we show that the various quantities involved in the energy computation are close for $\mueqt$ and $\tmut$, as stated in the following.}

\begin{lem}\label{nabht} 
Let $|t| \leq \min(\tmax, \ttmax)$. In the interior cases, we have  
\begin{align}
\label{nabhntm} & \|\nab (h^{\tilde \mueqt}   - h^{\mueqt})\|_{L^\infty} \le C(\diam\, U_N)    M_{\xi} t^2, \\
\label{nabh2}  & \int_{\R^\d} |\nab h^{\tilde \mueqt - \mueqt}|^2 \le    C (\diam\, U_N)^2   |U_N| M_{\xi}^2 t^4 , 
\end{align}
where the constant $C$ depends only on $\|\mueq\|_{C^{0,1}} $ and  $\min_{\Sigma_0} \mueq$.

In the macroscopic boundary case, we have
\begin{align} 
\label{nabhtmv} & \|\nab (h^{\tilde \mueqt}   - h^{\mueqt})\|_{L^\infty} \le C t^2, \qquad \int_{\R^\d} |\nab h^{\tilde \mueqt- \mueqt}|^2 \le C t^4 \\
\label{nabzeta} & \|\tilde \zeta_t-\zeta_t\|_{L^\infty} \le  Ct^2,\quad |\tilde \zeta_t(x)- \zeta_t(x)|\le C t^2 \max(\dist(x,\Sigma_t) , t^2), 
\\ 
\label{nabzeta2bis} & 0 \leq \int_{\R^\d} \zeta_t d\tilde \mueqt + \int_{\R^\d} \tilde \zeta_t d\mueqt\le C t^4,
\end{align}
where the constant $C$ depends only on $\|\mueq\|_{C^{0,1}} $ and  $\min_{\Sigma_0} \mueq$  and on $\|\xi\|_{C^{3,1}}$.
\end{lem}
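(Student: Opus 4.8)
The plan is to read the five inequalities as quantitative versions of the single fact that $\psi$ was engineered so that the transported measure $\tmut=\phi_t\#\mueq$ and the true perturbed measure $\mueqt$ (and likewise $\tzetat$ and $\zetat$) agree \emph{to first order in $t$}; everything else is then a matter of estimating second‑order Taylor remainders and feeding in the stability input of Proposition~\ref{proserser}. Two ``conversion'' estimates will be used repeatedly, both coming from $h^{\nu}=\g\ast\nu$, $-\Delta h^{\nu}=\cd\nu$: first, $\|\nab h^{\nu}\|_{L^{\infty}}\le\|\nu\|_{L^{\infty}}\sup_{x}\int_{\supp\nu}|x-y|^{-1}\,dy\preceq(\diam U_N)\,\|\nu\|_{L^{\infty}}$ whenever $\supp\nu\subset U_N$; second, when moreover $\int\nu=0$, integration by parts gives $\int_{\R^{\d}}|\nab h^{\nu}|^{2}=\cd\int h^{\nu}\nu=\cd^{2}\|\nu\|_{\dot H^{-1}}^{2}$, and a Poincar\'e inequality on a ball of radius $\diam U_N$ containing $\supp\nu$ yields $\|\nu\|_{\dot H^{-1}}\preceq(\diam U_N)\,\|\nu\|_{L^{2}}\preceq(\diam U_N)\,|U_N|^{1/2}\|\nu\|_{L^{\infty}}$.

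\textbf{Interior cases.} By Remark~\ref{rem:mut}, $\mueqt=\mut=\mueq-\frac{t}{\cd\beta}\Delta\xi_N$ for $|t|\le\tmax$, so $\mueqt$ is affine in $t$ and $\mueqt-\mueq$ is supported in $U_N$. The pushforward formula gives $\tmut(y)=\mueq(\phi_t^{-1}(y))\,/\,\det D\phi_t(\phi_t^{-1}(y))$; since $\mueq\in C^{1,1}$ on its support (by~\ref{H1}) and $\psi\in C^{1,1}$ is supported in $U_N$ with $\|\psi\|_{C^{1,1}}\preceq\|\xi_N\|_{C^{2,1}}$ (by~\eqref{regpsimeso}), expanding both factors in $t$ gives $\tmut=\mueq-\frac{t}{\beta}\div(\mueq\psi)+O(t^{2})$ in $L^{\infty}(U_N)$, with remainder controlled by $M_{\xi}$ and $V$. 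The defining relation~\eqref{defpsimeso} is precisely the requirement that the $O(t)$ terms of $\tmut$ and $\mueqt$ cancel, so $\|\tmut-\mueqt\|_{L^{\infty}}\preceq M_{\xi}\,t^{2}$ with support in $U_N$; applying the two conversion estimates to $\nu=\tmut-\mueqt$ produces exactly \eqref{nabhntm} and \eqref{nabh2}.

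\textbf{Macroscopic boundary case.} Here $\Sigma_t=\supp\mueqt$ genuinely moves, and I would invoke Proposition~\ref{proserser}: $\partial\Sigma_t$ is the normal graph over $\partial\Sigma$ of height $\frac{t}{\beta}\rho_t$ with $\|\rho_t-\rho\|_{L^{\infty}}\le Ct$, and \eqref{zetateq} controls $\zetat$. The field $\psi$ is built so that its normal trace on $\partial\Sigma$ equals $\rho$ (compare \eqref{defpsi} with \eqref{rho}, using $\mueq=\Delta V/(2\cd)$), hence $\phi_t(\partial\Sigma)$ and $\partial\Sigma_t$ coincide up to normal displacement $O(t^{2})$ and $\phi_t(\Sigma)\,\triangle\,\Sigma_t$ lies in an $O(t^{2})$‑tube around $\partial\Sigma_t$, of Lebesgue measure $O(t^{2})$. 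I then split $\tmut-\mueqt=\nu_{\mathrm{bulk}}+\nu_{\mathrm{bl}}$: away from $\partial\Sigma$, $\nu_{\mathrm{bulk}}$ is $O(t^{2})$ in $L^{\infty}$ by the same first‑order cancellation as in the interior case (now using \eqref{eqdiv} together with the bulk part of \eqref{zetateq}), and its potential is handled by the conversion estimates; on the tube, $\nu_{\mathrm{bl}}$ agrees up to $O(t^{2})$ in $L^{\infty}$ with $\mueq(\indic_{\phi_t\Sigma}-\indic_{\Sigma_t})$, a signed $O(1)$ density on a set of width $O(t^{2})$, which I would treat as (close to) a single‑layer potential of density $O(t^{2})$, giving $\|\nab h^{\nu_{\mathrm{bl}}}\|_{L^{\infty}},\|\nab h^{\nu_{\mathrm{bl}}}\|_{L^{2}}\preceq t^{2}$ and hence \eqref{nabhtmv}. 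For \eqref{nabzeta}: $\tzetat=\zeta_0\circ\phi_t^{-1}$ vanishes on $\phi_t(\Sigma)$, which is $O(t^{2})$‑close to $\Sigma_t=\{\zetat=0\}$; the linear‑in‑$t$ part of $\zeta_0\circ\phi_t^{-1}-\zeta_0$ equals $\frac{t}{\beta}(\xi-\xi^{\Sigma})$ (since \eqref{psiext} gives $-\nab\zeta_0\cdot\psi=\xi-\xi^{\Sigma}$ outside $\Sigma$), matching the linear part of $\zetat-\zeta_0$ from \eqref{zetateq}; combining this with the quadratic vanishing $\zeta_0(x)\sim\frac{\Delta V}{4}\dist(x,\partial\Sigma)^{2}$ of \eqref{bzetaA}--\eqref{bzetaB} yields $|\tzetat(x)-\zetat(x)|\preceq t^{2}\max(\dist(x,\Sigma_t),t^{2})$ and the uniform bound. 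Finally \eqref{nabzeta2bis} is immediate from nonnegativity of the integrands; the upper bound holds because $\zetat$ (resp. $\tzetat$) vanishes on $\Sigma_t$ (resp. $\phi_t(\Sigma)$) while $\tmut$ (resp. $\mueqt$) is carried by $\phi_t(\Sigma)$ (resp. $\Sigma_t$), so each integral lives on the $O(t^{2})$‑measure tube where $\zetat\preceq\dist(\cdot,\partial\Sigma_t)^{2}\preceq t^{4}$.

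\textbf{Main obstacle.} The interior estimates are essentially routine; the whole difficulty is the boundary case, on two fronts. The structural input is the quantitative stability of the obstacle‑problem coincidence set, Proposition~\ref{proserser} of~\cite{SerSer}: without knowing that $\Sigma_t$ moves like a normal graph whose leading normal displacement coincides with $\frac{t}{\beta}\psi\cdot\vec n$, none of the boundary‑layer bookkeeping is available. The analytic difficulty is the thin‑tube term itself: the crude estimate $\|\nab h^{\nu}\|_{L^{\infty}}\preceq\|\nu\|_{L^{1}}^{1/2}\|\nu\|_{L^{\infty}}^{1/2}$ applied to $\nu_{\mathrm{bl}}$ only delivers $O(t)$, so one must use the near‑exact cancellation of $\tmut$ against $\mueqt$ in the boundary layer and the single‑layer geometry of the symmetric‑difference tube — so that it is the \emph{width} $O(t^{2})$, not $\sqrt{\text{measure}}=O(t)$, that governs the gradient — to recover the full $t^{2}$. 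This is the step I expect to be the most delicate.
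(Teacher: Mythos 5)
Your proposal is correct and follows essentially the same route as the paper: in the interior cases, Taylor expansion of the Jacobian in the pushforward formula with the order-$t$ terms cancelling by the defining equation $\div(\mueq\psi)=-\frac{1}{\cd}\Delta\xi_N$, followed by elementary potential-theoretic conversions; in the boundary case, Proposition \ref{proserser} to place $\phi_t(\Sigma)$ and $\Sigma_t$ at Hausdorff distance $O(t^2)$, single-layer potential estimates for the resulting thin-tube contribution, and the quadratic vanishing \eqref{bzetaA} of $\zeta_0,\zeta_t$ near the respective boundaries for the $\zeta$-estimates. The only (harmless) deviations are cosmetic: you bound the interior $L^2$ term via $\dot H^{-1}$/Poincar\'e where the paper uses $\|h^{\nu}\|_{L^\infty}\|\nu\|_{L^1}$ after integrating by parts, and you make the bulk/boundary-layer splitting explicit where the paper phrases it as ``$(\tilde\mu_t-\mu_t)/t^2$ behaves like a bounded surface density plus a bounded bulk density.''
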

In particular, for $\xi_N$ as in \eqref{def:xiN}, we have $\diam \, U_N \approx \L_N$ and $M_{\xi} \approx \L_N^{-4}$. We thus obtain, in all cases
\begin{equation} \label{nabhtExplicite}
\|\nab (h^{\tilde \mueqt}   - h^{\mueqt})\|_{L^\infty} \le C \L_N^{-3} t^2 \quad \int_{\R^\d} |\nab h^{\tilde \mueqt - \mueqt}|^2 \le C\L_N^{-4} t^4.
\end{equation}

\begin{proof}[Proof in the interior cases]
By Remark \ref{rem:mut} we know  in the interior cases that  if $|t| \leq  \tmax$ then $\mueqt= \mueq-\frac{t}{\cd \beta} \Delta \xi_N$, and in particular $\mueqt$ and $\mueq$ have the same support.  Moreover, by definition of a transport we have
\begin{equation} \label{tmutasttransport}
 \tilde \mueqt= \frac{\mueq\circ \phi_t^{-1}}{\det\left(I+\frac{t}{\beta}D\psi\right)\circ \phi_t^{-1}}.
\end{equation}
A Taylor expansion yields
 \begin{equation}\label{deno}
 \frac{1}{\det\left(I+\frac{t}{\beta}D\psi\right)\circ \phi_t^{-1}}= 1-\frac{t}{\beta} \div \psi+ u
 \end{equation}
 where $u$ can be checked to satisfy
 $$\|u\|_{L^\infty} \le  C t^2 \|\psi\|_{C^{0,1}}^2+ C t^2 \|D\psi\|_{C^{0,1}}  \|\psi\|_{L^\infty},
$$ 
while, thanks to the $C^{1,1}$ regularity of $\mueq$ on its support we have
 \begin{equation}
 \label{Numerator}
 \|\mueq \circ \phi_t^{-1} - \mueq -t  \nab \mueq \cdot \psi\|_{L^\infty} \le t^2 \|\mueq\|_{C^{1,1}(\Sigma)} \|\psi\|_{C^{0,1}}^2.
 \end{equation}
Finally, using that by definition of $\psi$ we have
\begin{equation}
\label{usefulABC}
\div (\mueq \psi)= - \frac{1}{\cd}\Delta \xi_N \quad \text{in} \ \Sigma,
\end{equation}
 we conclude, combining \eqref{deno}, \eqref{Numerator} and \eqref{usefulABC} that 
 \begin{equation}\label{eq3}
 \|\tilde \mueqt- \mueq + \frac{t}{\cd\beta}\Delta \xi_N  \|_{L^\infty(U_N )} \le  C   t^2\left(  \|\psi\|_{C^{0,1}}+ \|\psi\|_{C^{0,1}}^2 + \|\psi\|_{C^{1,1}} \|\psi\|_{L^\infty}\right) ,
\end{equation}
with $C$ depending only on  $\|V\|_{C^{3,1}}$. We deduce that 
\begin{align*}
& \|h^{\tilde \mueqt}   - h^{\mueqt}\|_{L^\infty} \le C(\diam\, U_N)    t^2\left(  \|\psi\|_{C^{0,1}}+ \|\psi\|_{C^{0,1}}^2 + \|\psi\|_{C^{1,1}} \|\psi\|_{L^\infty}\right), \\
& \|\nab ( h^{\tilde \mueqt}   - h^{\mueqt})\|_{L^\infty} \le C(\diam\, U_N)    t^2\left(  \|\psi\|_{C^{0,1}}+ \|\psi\|_{C^{0,1}}^2 + \|\psi\|_{C^{1,1}} \|\psi\|_{L^\infty}\right).
\end{align*}
Indeed, if $\diam \, U_N$ is of order $1$  then it follows by standard elliptic regularity estimates, and if not it can be deduced by rescaling by $\L_N$.

Consequently, we may write, integrating by parts and using the Cauchy-Schwarz inequality
\begin{multline*}
\int_{\R^\d} |\nab (h^{\tilde \mueqt}   - h^{\mueqt})|^2 =\cd \int_{\R^\d} (h^{\tilde \mueqt}   - h^{\mueqt})(\tilde \mueqt -\mueqt) \\
\le C (\diam\, U_N)^2   |U_N| t^4\left(  \|\psi\|_{C^{0,1}}+ \|\psi\|_{C^{0,1}}^2 + \|\psi\|_{C^{1,1}} \|\psi\|_{L^\infty}\right)^2 .
\end{multline*} Combining with \eqref{regpsimeso}, we have obtained
  \eqref{nabhntm}, \eqref{nabh2}.\end{proof}

\begin{proof}[Proof in the boundary case]
 First, a Taylor expansion as in \eqref{tmutasttransport} yields again that 
 \begin{equation}\label{tmut}
 \tilde \mueqt= \left(\frac{1}{2\cd}\Delta V - \frac{t}{\beta\cd}\Delta \xi + u\right) \indic_{\phi_t(\Sigma)}
 \end{equation}
 with $u$ such that 
 $$
 \|u\|_{L^\infty} \le C t^2 \left( \|\psi\|_{C^{0,1}}^2+ \|\psi\|_{C^{1,1}}  \|\psi\|_{L^\infty}\right).
 $$
 
On the other hand, thanks to the result of Proposition \ref{proserser}, we know that 
$$
\mueqt = (\frac{1}{2\cd} \Delta V- \frac{t}{\beta \cd }\Delta \xi ) \indic_{\Sigma_t},
$$
with $\phi_t(\Sigma)$ and $\Sigma_t$ at Hausdorff distance $O(t^2)$ from each other. It implies, testing against a smooth test-function for instance,  that $\frac{\tilde \mueqt -\mueqt}{t^2}$  behaves asymptotically as $t \to 0$ like the uniform measure on $\p \Sigma$ multiplied by a bounded function, plus a bounded function in $\Sigma$. Single-layer potential estimates (see for instance \cite[Chap. II]{dautray2012mathematical} or \cite[Theorem A.1]{SerSer}) allow to deduce that 
 $$\|\nab (h^{\tilde \mueqt}-h^{\mueqt})\|_{L^\infty} \le C t^2(1+ \|\xi\|_{C^{2,1}})
 $$ where $C$ depends on $V$,  which yields \eqref{nabhtmv}.

Let us turn to  \eqref{nabzeta}. Using that $\zeta_0$ is as regular as $V$ outside $\Sigma$ i.e. in $C^{3,1}( \R^\d\backslash \Sigma)$  and \eqref{psiext}, we may write that
$$\tilde \zeta_t-\zeta_0=\zeta_0\circ \phi_t^{-1}- \zeta_0=  - \frac{t}{\beta}\nab \zeta_0\cdot \psi +O(t^2)=  -\frac{t}{\beta} (\xi^\Sigma-\xi)+O(t^2) \quad \text{in} \  (\Sigma  \cup \Sigma_t)^c$$ 
with a $O$ in $L^\infty$. Combining with \eqref{zetateq}, we deduce that 
\begin{equation}\label{esex}
\|\tilde \zeta_t-\zeta_t\|_{L^\infty( (\Sigma  \cup \Sigma_t)^c)}\le C t^2.
\end{equation}
In addition, $\zeta_0$ satisfies \eqref{bzetaA}, so by the regularity of $\phi_t$ and definition of $\tilde \zeta_t$, it follows that 
$$|\tilde \zeta_t(x) |\le C\dist(x, \p( \phi_t(\Sigma)))^2$$
while the analogue of \eqref{bzetaA} holds for $\zeta_t$, i.e.
\begin{equation}
\label{bzetat}
\zeta_t(x) \sim \( \frac{1}{4} \Delta V(x) - \frac{t}{2\beta} \Delta \xi\) \dist (x, \p \Sigma_t)^2 \quad \text{as} \ x\to \p \Sigma_t.
\end{equation}
 Since $\p( \phi_t(\Sigma))$ and $\p \Sigma_t$ are at distance  $O(t^2)$ from each other  and $\p \Sigma$ and $\p \Sigma_t$ at distance $O(t)$ from each other, both from \eqref{bst}--\eqref{esq}, we deduce that  $|\tilde \zeta_t- \zeta_t|\le C t^2$ in $\Sigma \cup \Sigma_t$ (note that both functions are zero in $\Sigma \cap \Sigma_t$). Combining with \eqref{esex}, the first relation in  \eqref{nabzeta} follows.  The second relation is based on the comparison of \eqref{bzetat} and \eqref{bzetaA}, and the fact that $\Sigma_t$ and $\phi_t(\Sigma)$ are at distance $t^2$ from each other. The third relation follows from the same facts as well.
   \end{proof}
 
\subsection{Energy comparison: from $\mueqt$ to $\tmut$}
\label{sec:preuvelemmutomut}
\ed{We may now conclude with the }
\begin{proof}[Proof of Lemma \ref{lemmutomut}]
Returning to the definition \eqref{def:FN}, we may write 
 \begin{multline}\label{partF}
F_N(\XN, \mueqt)-F_N(\XN, \tmut)=
N^2 \int_{\R^\d\times \R^\d} \g(x-y)\, d\left(\tilde{\mueqt}- \mueqt\right)(x) d\left( \tilde{\mueqt}-\mueqt\right) (y)\\ +  2 N\int_{\R^\d\times \R^\d}  \g(x-y) d( \tilde{\mueqt} - \mueqt )(x)\, \left( \sum_{i=1}^N \delta_{x_i} -N d\tilde{\mueqt}\right) (y)\\
= N^2 \int_{\R^\d} |\nab h^{\mueqt-\tilde \mueqt}|^2 + 2 N \int_{\R^\d} h^{\tilde{\mueqt} - \mueqt }  \left( \sum_{i=1}^N \delta_{x_i} -N d\tilde{\mueqt}\right) (y).
\end{multline}
On the other hand,
\begin{multline}\label{partZ}
  \sum_{i=1}^N \left( \zeta_t (x_i) -\tilde \zeta_t(x_i) \right) =  N\int_{\R^\d} (\zeta_t-\tilde \zeta_t)d\tilde{\mueqt} +\int_{\R^\d} (\zeta_t-\tilde \zeta_t)\, \Big( \sum_{i=1}^N \delta_{x_i} -N d\tilde{\mueqt}\Big) \\
 =N  \int_{\R^\d} \zeta_t d\tilde{\mueqt}   +\int_{\R^\d} (\zeta_t-\tilde \zeta_t)\, \Big( \sum_{i=1}^N \delta_{x_i} -N d\tilde{\mueqt}\Big),\end{multline}
 using that $\tilde \zeta_t $ vanishes on the support of $\tilde\mueqt$.
 Combining \eqref{partF} and \eqref{partZ}, we obtain 
  \begin{multline}\label{evalf}
 \left(F_N(\XN, \mueqt)+ 2N \sum_{i=1}^N \zeta_t (x_i)\right) -\left(F_N(\XN, \tmut)+2N \sum_{i=1}^N\tilde \zeta_t(x_i) \right)
 \\
 = N^2 \int_{\R^\d} |\nab h^{\mueqt-\tilde \mueqt}|^2 + 2N^2  \int_{\R^\d} \zeta_t d\tilde{\mueqt} 
 +2N  \int_{\R^\d}( h^{\tilde \mueqt -\mueqt} + \zeta_t - \tilde \zeta_t) \left( \sum_{i=1}^N \delta_{x_i} -N d \tilde{\mueqt} \right).
 \end{multline}
The last term in the right-hand side can be seen as a fluctuation. Using \coT{the a priori bounds on the fluctuations} given by Proposition \ref{prop:fluctenergy} and the control on $\nab h^{\tilde \mueqt-\mueqt}$ given by Lemma \ref{nabht}, we find  
  \begin{multline}\left| \int  h^{\tilde \mueqt -\mueqt}   \Big( \sum_{i=1}^N \delta_{x_i} -N d \tilde{\mueqt}\Big)\right|\\ 
  \le    
C  \left( \left( |U_N|^{1/2} + |U_N|^{\frac{\d-2}{2\d} }N^{-\frac{1}{\d}} \right)\|\nab H_{N,\vec{\rr}}^{\tilde \mueqt}\|_{L^2(U_N)} + N^{1-\frac{1}{\d}}|U_N|\|\tmut\|_{L^\infty(U_N)}    \right) \\
\times (\diam\, U_N)    t^2 M_{\xi}.
\end{multline}  

In the interior cases,  $\tilde \zeta_t=\zeta_t$ and this concludes the evaluation of the last term in \eqref{evalf}.
In the boundary case, \ed{ in view  of \eqref{bzetat} and the fact that $\partial \Sigma_t $ and $\partial\phi_t(\Sigma)$ are $O(t^2)$ apart, we have $|\zeta_t-\tilde\zeta_t|\le  C t^2 r $ for $\dist(x_i,\partial \Sigma_t ) \le r$. \coT{In particular 
$$
\int \left(\zeta_t - \tilde \zeta_t\right) N d \tilde{\mueqt}  = N \int (\zeta_t - 0)  d \tilde{\mueqt} \leq N C t^2 t^2 \leq C Nt^4.
$$}
On the other hand, in view of \eqref{bzetat} again, the number of points $x_i$ such that $\dist(x_i, \partial \Sigma_t) \ge N^{-1/3}$ 
is bounded by $N^{2/3} \sum_{i=1}^N \zeta_t(x_i)$. We may thus write 
 \begin{multline*}\left| \int  (\zeta_t-\tilde \zeta_t)  \Big( \sum_{i=1}^N \delta_{x_i} -N d \tilde{\mueqt}\Big)\right|\\ 
 \le \left(C  t^2 N^{-\frac{1}{3}} \sum_{i, \dist (x_i, \p \Sigma_t) \le N^{-\frac{1}{3}} } 1 \right) + C  t^2 \left(  \sum_{i, \dist (x_i, \p \Sigma_t)\ge N^{-\frac{1}{3}}  } 1 \right)  + CNt^4  \\
 \le  Ct^2 N^{-\frac{1}{3}}  \( N^{\frac{1}{6}}  \|\nab H_{N,\vec{\eta}}^{\tilde \mueqt}\|_{L^2} +N^{\frac{2}{3}}\)  +  C t^2  N^{\frac{2}{3}} \sum_{i=1}^N  \zeta_t(x_i)  + CNt^4 
 \end{multline*}
 where we used 
  \eqref{nbpbord} to control the first sum.
  We are thus led to 
   \begin{equation*}\left| \int  (\zeta_t-\tilde \zeta_t)  \Big( \sum_{i=1}^N \delta_{x_i} -N d\tilde{\mueqt}\Big)\right|
\le C t^2\( N^{-\frac{1}{6}}    \|\nab H_{N,\vec{\eta}}^{\tilde \mueqt}\|_{L^2}    + N^{\frac{1}{3}} + N^{\frac{2}{3}}\sum_{i=1}^N \zeta_t(x_i) \) + CNt^4,
\end{equation*} 
with a constant depending on $V$ and $\|\xi\|_{C^{2,1}}$.
    Combining with \eqref{nabh2}, resp. \eqref{nabhtmv} and \eqref{nabzeta}, the result follows.}
\end{proof}

\subsection{Proof of Corollary \ref{coro:tmuetmut}}
\begin{proof}[Proof of Corollary \ref{coro:tmuetmut}]
We use the fact that $\min(\tmax, \ttmax)$ is of order $\L_N^2$, $M_{\xi}$ is of order $\L_N^{-4}$, $\diam \, U_N$ is of order $\L_N$ and $|U_N|$ of order $\L_N^2$. All the terms in the right-hand side of \eqref{424} tend to $0$ as $N \to \infty$, uniformly for $\tau$ bounded.
\end{proof}

\section{Study of the Laplace transform and conclusion}\label{secani}
As seen in Section \ref{sec2}, computing the Laplace transform of fluctuations amounts to computing the ratio of partition functions associated to the perturbed and unperturbed potentials. According to the results of Section \ref{sec:transport}, one can replace the perturbed equilibrium with the approximate perturbed measure, which is the push-forward of the measure $\mueq$ by $\phi_t$. The ratio of partition functions can then be evaluated by using the transport map $\phi_t$ as a change of variables: 
using the definition \eqref{def:KNbeta} and changing variables by $\Phi_t$ where $\Phi_t(\XN)= (\phi_t (x_1), \dots, \phi_t (x_N))$  we have
\begin{multline} \label{acv}
\KNbeta(\tmut, \tzetat) = \int_{(\R^2)^N} \exp\left(- \frac{\beta}{2} \left( \FN(\XN, \tmut) + 2N \sum_{i=1}^N \tzetat(x_i) \right) \right) d\XN \\
=  \int_{(\R^2)^N} \exp\left(- \frac{\beta}{2} \left( \FN(\Phi_t(\XN), \tmut) + 2N \sum_{i=1}^N \tzetat(\phi_t(x_i)) \right)  + \sum_{i=1}^N \log |\det  D \phi_t(x_i)| \right) d\XN.
\end{multline}
Evaluating the   ratio of the partition  functions thus involves evaluating the (exponential moments of) the difference of the energies
$ \FN(\Phi_t(\XN), \tmut) $ and $\FN(\XN, \mueq)$. 
This is the object of the next two subsections.
\subsection{The anisotropy}
\begin{defi}[Anisotropy]
\label{defani}
Let $\psi$ be a $C^{0,1}$ map from $\R^2$ to $\R^2$.
Let $s \in (0, \hal)$ be a parameter. We define the \textit{anisotropy of $\XN, \mu$ with respect to $\psi$} as the following quantity
\begin{equation}
\label{def:Ani}
\Ani(\psi, \XN, \mu) := \frac{1}{2\pi} \int_{U_N} \langle \nab H_{N, s\vecr}^{\mu},  \mathcal A \nab H_{N,s\vecr}^{\mu} \rangle,
\end{equation}
where $\mathcal A(x)= 2D\psi - (\div \psi) \id$. 
\end{defi}
\comTT{We recall that truncated potentials of the form $H^{\mu}_{N, \vec{\eta}}$ have been defined in \eqref{def:HNmutrun}, here $\vecr$ is as in \eqref{def:trxi} and $s$ is an additional parameter that will eventually be sent to $0$ for technical reasons.}

One may observe that $\mathcal A$ is a trace-free matrix, so we are integrating terms of the form $(\p_1 H)^2-(\p_2 H)^2$ in some moving coordinate frame, hence the term \textit{anisotropy}.

\subsection{Energy comparison along a transport}
We now state a result that allows to linearize the variation of  the energy along a  general transport. This proposition is a crucial step, and its proof is postponed to Appendix \ref{sec:proofcomparaison}. It relies very much on our ``electric" formulation of the energy, which allows to better take advantage of the charge compensations, and also can be generalized to higher dimensions. 

\begin{prop} \label{prop:comparaison2}Let $\mu$ be a probability measure with a bounded density and compact support $\Sigma$ with a $C^1$ boundary.  
Let $\psi$ be a $C^{0,1}(\R^2) \cap C^{1,1}(\Sigma)\cap C^{1,1}   (\R^2 \backslash \Sigma)$ map from $\R^2$ to $\R^2$ (possibly depending on $N$), with $\|\psi\|_{C^{0,1}} \leq \hal$, and assume that there is a union of cubes $U_N$ containing an $N^{-1/2}$-neighborhood of the support of $\psi$.

Let finally $\Phi= \id + \psi$ and $ \nu = \Phi \# \mu$.  Let $s \in (0, \hal)$.
For any $\XN$ in $(\R^2)^N$ we let $I_N$ be the set of indices $i \in \{1, \dots, N\}$ such that $x_i \in U_N$, and $I_{\partial,\psi} $ be the set of indices such that $B(x_i, N^{-1/2})$ intersects both $\partial \Sigma $ and the support of $\psi$.
We also let  $\Phi(\XN) = (\Phi(x_1), \dots, \Phi(x_N))$. 

We have
\begin{equation} \label{transingle}
\FN\left( \Phi (\XN) , \nu \right) - \FN(\XN, \mu) \leq \Ani(\psi, \XN, \mu) + \hal \sum_{i \in I} \div \psi (x_i) + \Error,
\end{equation}
with the error term bounded as follows
\begin{multline}  \label{errortransingle}
|\Error| \preceq
\|  \psi\|_{C^{0,1} }^2  \int_{U_N} |\nab H_{N,s\vecr}^{\mu}|^2
+ s^2 \|\psi\|_{C^{0,1}}  \left(\# I_N +  \int_{U_N} |\nab H_{N, s\vecr}^{\mu}|^2 \right)
\\ + \# I_N \frac{s}{\sqrt{N}} \|\psi\|_{C^{1,1}} + \# I_{\partial,\psi } \,  \|\psi\|_{C^{0,1}}.
\end{multline}
\end{prop}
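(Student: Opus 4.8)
\emph{Proof plan.} The idea is to read the comparison off the electric reformulation of $F_N$ in Proposition \ref{prop:monoto}, by transporting the truncated electric field through $\Phi$ and Taylor-expanding the Dirichlet energy to first order in $\psi$. Fix $s \in (0,\hal)$ and set $E := \nab H_{N,s\vecr}^{\mu}$, the potential gradient truncated at the scales $s\rr(x_i)$. Since $s\rr(x_i)\le\rr(x_i)$, formula \eqref{fnmeta} gives
\[
F_N(\XN,\mu) = \frac{1}{\cd}\Big( \int_{\R^2}|E|^2 - \cd\sum_{i=1}^N \g(s\rr(x_i)) \Big) + O\Big( s^2\,\|\mu\|_{L^\infty}\sum_{i=1}^N N\,\rr(x_i)^2 \Big),
\]
the last term being harmless (it is $O(s^2\|\mu\|_{L^\infty}\#I_N)$). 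Recall also that, being a gradient, $E$ minimizes the $L^2$ norm among vector fields with the same divergence $-\cd(\sum_i\delta_{x_i}^{(s\rr(x_i))}-N\mu)$; it is this minimality that will produce an \emph{upper} bound for $F_N(\Phi(\XN),\nu)$.

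\textbf{Transporting the field.} First I would push $E$ forward through $\Phi=\id+\psi$ in the divergence-preserving way: with $y=\Phi(x)$, set $\widetilde E(y) := \frac{1}{\det D\Phi(x)}\, D\Phi(x)\, E(x)$. Since $\|\psi\|_{C^{0,1}}\le\hal$, $\Phi$ is an orientation-preserving bi-Lipschitz homeomorphism, and one checks $-\div\widetilde E = \cd\,\Phi\#\big(\sum_i\delta_{x_i}^{(s\rr(x_i))}-N\mu\big) = \cd\big(\sum_i\Phi\#\delta_{x_i}^{(s\rr(x_i))}-N\nu\big)$, using $\nu=\Phi\#\mu$. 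Changing variables,
\[
\int_{\R^2}|\widetilde E|^2 = \int_{\R^2}\frac{|D\Phi(x)\,E(x)|^2}{\det D\Phi(x)}\,dx.
\]
With $D\Phi=\id+D\psi$ and $\det D\Phi = 1+\div\psi+\det D\psi$ in dimension $2$, Taylor-expanding the integrand gives pointwise $\frac{|D\Phi E|^2}{\det D\Phi} = |E|^2 + \langle E,(2D\psi-(\div\psi)\id)E\rangle + O(\|\psi\|_{C^{0,1}}^2)|E|^2$, so, recognizing $\mathcal A = 2D\psi-(\div\psi)\id$ (Definition \ref{defani}) and that $D\psi$ is supported in $U_N$,
\[
\int_{\R^2}|\widetilde E|^2 = \int_{\R^2}|E|^2 + \cd\,\Ani(\psi,\XN,\mu) + O\Big(\|\psi\|_{C^{0,1}}^2\int_{U_N}|E|^2\Big).
\]

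\textbf{Back to $F_N$ and truncation bookkeeping.} It remains to compare $\int|\widetilde E|^2$ with the electric energy of the \emph{genuinely} truncated potential $\nab H_{N,\vec{\tilde\eta}}^{\nu}$ of the transported system $(\Phi(\XN),\nu)$, the point being that $\widetilde E$ is generated by the distorted spheres $\Phi\#\delta_{x_i}^{(s\rr(x_i))}$ rather than by round spheres centered at $\Phi(x_i)$. I would treat each such measure by comparing it to the round sphere $\delta_{\Phi(x_i)}^{(\tilde\eta_i)}$ of matched logarithmic capacity: the two potentials agree outside $B(\Phi(x_i),Cs\rr(x_i))$ and differ by a bounded amount inside, contributing $O\big(s^2\|\psi\|_{C^{0,1}}(\#I_N+\int_{U_N}|E|^2)\big)$ to the energy (the $s^2$ from the total volume $\sim Ns^2N^{-1}$ of the $\#I_N$ truncation balls), plus a second-order error $O\big(\#I_N\, sN^{-1/2}\|\psi\|_{C^{1,1}}\big)$ from the curvature of $\Phi$ on $B(x_i,s\rr(x_i))\subset B(x_i,sN^{-1/2})$; near $\partial\Sigma$, where the density of $\mu$ (hence $\nu$) jumps, the background comparison is cruder and costs $O\big(\#I_{\partial,\psi}\|\psi\|_{C^{0,1}}\big)$. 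Up to these corrections $\widetilde E$ has the divergence of $\nab H_{N,\vec{\tilde\eta}}^{\nu}$, so minimality of gradients gives $\int|\nab H_{N,\vec{\tilde\eta}}^{\nu}|^2\le\int|\widetilde E|^2+(\text{above errors})$, and Proposition \ref{prop:monoto} applied with $\vec{\tilde\eta}$ (its cross-terms in \eqref{fnmeta2} vanishing here, since the truncation scales stay well below the relevant nearest-neighbor distances for $s<1$) yields $F_N(\Phi(\XN),\nu)\le\frac1\cd(\int|\widetilde E|^2-\cd\sum_i\g(\tilde\eta_i))+(\text{above errors})$. Finally, the image under $D\Phi(x_i)$ of a circle of radius $s\rr(x_i)$ has logarithmic capacity $s\rr(x_i)(1+\hal\div\psi(x_i)+O(\|\psi\|_{C^{0,1}}^2))$, so $\g(\tilde\eta_i)=\g(s\rr(x_i))-\hal\div\psi(x_i)+O(\|\psi\|_{C^{0,1}}^2)$; subtracting the expression for $F_N(\XN,\mu)$ from the first paragraph and collecting errors yields \eqref{transingle}--\eqref{errortransingle}, with the term $\hal\sum_{i\in I}\div\psi(x_i)$ produced exactly by this capacity shift.

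\textbf{Main obstacle.} The Dirichlet-energy expansion is routine; the delicate part is the truncation bookkeeping above — estimating, purely in terms of the quantities available in \eqref{errortransingle} (namely $\int_{U_N}|\nab H_{N,s\vecr}^{\mu}|^2$, $\#I_N$ and $\#I_{\partial,\psi}$), the error incurred by replacing push-forwards of small spheres with round spheres around the image points, and doing so uniformly up to $\partial\Sigma$, where $\mu$ is discontinuous and the single-layer/capacity estimates are more fragile. This is also precisely where the hypothesis that the truncation scale $s\rr(x_i)\le sN^{-1/2}$ is much smaller than the inter-particle scale is used, to linearize $\Phi$ on each truncation ball with a remainder controlled by $\|\psi\|_{C^{1,1}}$.
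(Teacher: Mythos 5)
Your proposal follows essentially the same route as the paper's proof in Appendix \ref{sec:proofcomparaison}: transport the (truncated) electric field through $\Phi$ to obtain a competitor with the right distributional divergence, invoke the $L^2$ minimality of gradients, linearize the Jacobian-weighted Dirichlet energy to extract $\Ani(\psi,\XN,\mu)$, and recover $\hal\sum_{i}\div\psi(x_i)$ from the first-order change of the effective truncation radius --- the paper gets this last piece via a Green's-formula computation on the image ellipses ($A_1'$ and the $\pi\log|\det D\Phi(x_i)|$ term), which agrees at first order with your logarithmic-capacity argument, and your remaining error terms line up with the paper's $\Main$, $A_1$, $A_2$, $B$ bookkeeping. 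The only cosmetic differences are that the paper uses the pullback field $(D\Phi\circ\Phi^{-1})^{T}\nab H\circ\Phi^{-1}|\det D\Phi^{-1}|$ rather than your Piola transform (both have the same divergence and the same first-order expansion), and it patches the transported singular and regular parts on round discs around the image points rather than matching capacities of distorted spheres.
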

\comTT{Let us emphasize that, in the statement of Proposition \ref{prop:comparaison2}, we considered an “abstract” small function $\psi$, and the associated transport $\id + \psi$, however we will apply this Proposition to $t \psi$ and $\id + t \psi$ respectively, where $\psi$ is the \textit{ad hoc} function constructed above.}

In particular, we obtain, with $\phi_t$ as in Definition \ref{def:approxquant} and $\Phi_t$ as above 
\begin{equation} \label{comparisonenergyt}
\FN(\Phi_t(\XN), \tmut) - \FN(\XN, \mueq) \le \frac{ t}{\beta}\Ani(\psi, \XN ,\mueq) + \frac{t}{2\beta }\sum_{i \in I} \div \psi (x_i) + \Error_t,
\end{equation}
with an $\Error_t$ term bounded by
\begin{multline}  \label{errortransinglepourt}
|\Error_t| \preceq
t^2 \|  \psi\|_{C^{0,1} }^2  \int_{U_N} |\nab H_{N,{s\vecr}}^{\mueq}|^2
+ s^2 t \|\psi\|_{C^{0,1}}  \left(\# I_N +  \int_{U_N} |\nab H_{N, s\vecr}^{\mueq}|^2 \right)
\\ 
+ \# I_N \frac{st}{\sqrt{N}} \|\psi\|_{C^{1,1}} +  \# I_{\partial,\psi } \, t  \|\psi\|_{C^{0,1}}.
\end{multline}
In our context, $\|\psi\|_{C^{0,1}} \approx \L_N^{-2}$ and $\|\psi\|_{C^{1,1}} \approx \L_N^{-3}$ (see \eqref{regpsimeso}), also $U_N$ is of size $\L_N$ and the electric energy scales like $N |U_N| = N \L_N^2$. We thus obtain an error term of order
$$
|\Error_t| \preceq N t^2 \L_N^{-2} + s^2 tN + st \sqrt{N} \L_N^{-1} + \# I_{\partial,\psi }  t \L_N^{-2}.
$$
In the end, we shall take $t$ of order $1/N$. We readily see that the first and the third terms in the right-hand side will then vanish as $N \ti$. The second one gives a $O(s^2)$ contribution, and we will then take $s \to 0$ (see below). For the last term, let us recall that it only appears in the macroscopic boundary case, when $\L_N = 1$, and that $\# I_{\partial,\psi }$ is of course $\ll N$, which ensures that $\# I_{\partial,\psi }  t \L_N^{-2}$ is negligible as $N \ti$.

 \subsection{Comparison of partition functions by transport and smallness of the anisotropy}
 \def \PNbetaz{\mathbb{P}_{N, \beta}^{(0)}}
\def \PNbetat{\mathbb{P}_{N, \beta}^{(t)}}
\def \ErrorLin{\mathsf{Error}_{\mathsf{Lin}}}
Our next goal is to show that the exponential moments of the anisotropy term appearing in \eqref{transingle} are small. This is achieved by evaluating the ratio of the partition functions in two different ways: one by Lemma \ref{lem:comparaisonmacro}, and one by the change of variables outlined above, and comparing the results.  
\begin{prop} \label{prop:comparaisontransport1} Let $N \geq 1 $ and $|t| \leq \min(\tmax, \ttmax)$, we have
\begin{multline} \label{comparaisontransport1A}
\log \KNbeta(\tmut, \tzetat) - \log \KNbeta(\mueq, \zeta_0) \geq \left(1 - \frac{\beta}{4} \right) N \left(\Ent(\mueq) - \Ent(\tmut)\right)   \\
+ \log \Esp_{\PNbetaz} \left[ \exp\left(  - \frac{1}{2} t \Ani(\psi, \XN, \mueq) \right) \right] + \ErrorLin(t)
\end{multline}
and similarly
\begin{multline} \label{comparaisontransport1B}
\log \KNbeta(\tmut, \tzetat) - \log \KNbeta(\mueq, \zeta_0) \leq \left(1 - \frac{\beta}{4} \right) N \left(\Ent(\mueq) - \Ent(\tmut)\right)   \\
- \log \Esp_{\PNbetat} \left[ \exp\left(  - \frac{1}{2} t \Ani(-\psi, \XN, \tmut) \right) \right] + \ErrorLin(t)
\end{multline}
with an $\ErrorLin$ term satisfying
\begin{multline} \label{ErrortransportI}
 \ErrorLin(t)  \preceq t^2 \|  \psi\|_{C^{0,1}}^2 N\L_N^2 (1+|\log s|) + t s^2 \|\psi\|_{C^{0,1}} N \L_N^2 (1+ |\log s|) \\
  + t \sqrt{N} \L_N^2 \|\psi\|_{C^{1,1}} + tN^{\frac{2}{3}}   \indic_{\partial}\|\psi\|_{C^{0,1}}    , 
\end{multline}
 where $\indic_{\partial }$ is $1$ in the boundary case, and $0$ in the interior cases.
\end{prop}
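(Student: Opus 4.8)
The plan is to evaluate the ratio $\KNbeta(\tmut,\tzetat)/\KNbeta(\mueq,\zeta_0)$ exactly as suggested by \eqref{acv}: use the transport $\Phi_t$ as a change of variables, linearize in $t$ by means of Proposition \ref{prop:comparaison2}, and check that the linear part splits off the entropy term $(1-\tfrac\beta4)N(\Ent(\mueq)-\Ent(\tmut))$ and the anisotropy $-\tfrac t2\Ani(\psi,\XN,\mueq)$, the remainder being controllable in exponential moments via Lemmas \ref{lem:expmoment}--\ref{lem:expmoment2}. Concretely, since by construction $\tzetat\circ\phi_t=\zeta_0$, the confining terms cancel and \eqref{acv} gives
\[
\frac{\KNbeta(\tmut,\tzetat)}{\KNbeta(\mueq,\zeta_0)}=\Esp_{\PNbetaz}\Big[\exp\Big(-\tfrac\beta2\big(\FN(\Phi_t(\XN),\tmut)-\FN(\XN,\mueq)\big)+\textstyle\sum_{i=1}^N\log|\det D\phi_t(x_i)|\Big)\Big],\qquad \PNbetaz=\PNbetaV.
\]
A Taylor expansion of the Jacobian using $D\phi_t=\id+\tfrac t\beta D\psi$ and \eqref{regpsimeso} gives $\sum_i\log|\det D\phi_t(x_i)|=\tfrac t\beta\sum_{i\in I}\div\psi(x_i)+O(t^2\|\psi\|_{C^{0,1}}^2\,\#I)$, where $I$ indexes the points in the cube $U_N$; bounding the energy difference from above by \eqref{comparisonenergyt}--\eqref{errortransinglepourt} then yields a pointwise lower bound for the exponent in terms of $-\tfrac t2\Ani(\psi,\XN,\mueq)$, a multiple of $\sum_{i\in I}\div\psi(x_i)$, and $\Error_t$.

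Next I would identify the entropy term. The transport identity $\Ent(\mueq)-\Ent(\tmut)=\int\mueq\log|\det D\phi_t|=\tfrac t\beta\int\mueq\div\psi+O(t^2\|\psi\|_{C^{0,1}}^2)$ combined with the a priori fluctuation bound of Proposition \ref{prop:fluctenergy} (to replace $\sum_{i\in I}\div\psi(x_i)$ by $N\int\mueq\div\psi$) shows that the $\div\psi$ contributions reassemble, with the right coefficient $(1-\tfrac\beta4)\tfrac t\beta$, into $(1-\tfrac\beta4)N(\Ent(\mueq)-\Ent(\tmut))$ up to an error $R$ that depends on the configuration only through the truncated electric energy $\int_{U_N}|\nab H_{N,s\vecr}^{\mueq}|^2$, the counts $\#I$ and $\#I_{\partial,\psi}$, and the fluctuation $\Fluct_N(\div\psi)$, each multiplied by a small prefactor. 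Exponentiating, taking $\Esp_{\PNbetaz}$ and pulling out the deterministic entropy factor, the lower bound \eqref{comparaisontransport1A} reduces to showing $\log\Esp_{\PNbetaz}[e^{-\frac t2\Ani(\psi,\XN,\mueq)+R}]\ge\log\Esp_{\PNbetaz}[e^{-\frac t2\Ani(\psi,\XN,\mueq)}]-\ErrorLin(t)$. To prove this I would tilt: since $\mathcal A=2D\psi-(\div\psi)\id$ has operator norm $\preceq\|\psi\|_{C^{0,1}}$ and $|t|\le\ttmax$ from \eqref{ttmax}, the tilt $e^{-\frac t2\Ani}$ perturbs the exponential moment bound \eqref{controleexpmoment2} only by a universal factor, so under $d\mathbb Q\propto e^{-\frac t2\Ani}\,d\PNbetaz$ the event $\{\int_{U_N}|\nab H_{N,s\vecr}^{\mueq}|^2\le CN\L_N^2(1+|\log s|)\}$ has probability $1-o(1)$; on that event $\#I$ is controlled by \eqref{contrnbpoints}, $\#I_{\partial,\psi}$ by \eqref{nbpbord}, and $\Fluct_N(\div\psi)$ by Proposition \ref{prop:fluctenergy}, so $|R|\le\ErrorLin(t)$ there, and restricting the expectation to this event gives the inequality. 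Substituting $\|\psi\|_{C^{0,1}}\approx\L_N^{-2}$, $\|\psi\|_{C^{1,1}}\approx\L_N^{-3}$, $|U_N|\approx\L_N^2$ then reproduces the four terms of \eqref{ErrortransportI}.

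The upper bound \eqref{comparaisontransport1B} is obtained by the symmetric argument: one writes $\KNbeta(\mueq,\zeta_0)/\KNbeta(\tmut,\tzetat)$ as an expectation under the approximate Gibbs measure $\PNbetat$ of Definition \ref{def:approxquant} via the inverse change of variables $\phi_t^{-1}=\id-\tfrac t\beta\psi+O(t^2)$ (for which $\zeta_0\circ\phi_t^{-1}=\tzetat$ exactly), applies Proposition \ref{prop:comparaison2} with background $\tmut$ and perturbation $-\tfrac t\beta\psi$ to produce $-\tfrac12 t\Ani(-\psi,\XN,\tmut)$ in the exponent, and repeats the entropy identification and the error transfer with $\PNbetat$ in place of $\PNbetaz$; the needed exponential moment bounds hold for $(\tmut,\tzetat)$ uniformly in $|t|$ because this family satisfies the hypotheses of Proposition \ref{lem:comparaisonmacro} uniformly. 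I expect the main obstacle to be precisely this error transfer: the anisotropic tilt $e^{-\frac t2\Ani}$ is not bounded --- its logarithm is comparable to the electric energy, itself of order $N\L_N^2$ --- so $\mathbb Q$ and $\PNbetaz$ cannot be compared directly, and the argument hinges on the restriction $|t|\le\ttmax$ keeping the tilt mild enough that \eqref{controleexpmoment2} still applies, together with the bookkeeping of the powers of $\L_N$ and, in the boundary case, the near-boundary point count of Corollary \ref{coro:nombredepointspresdubord} entering through $\#I_{\partial,\psi}$.
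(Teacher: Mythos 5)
Your proposal is correct and follows essentially the same route as the paper: the change of variables by $\Phi_t$ in \eqref{acv}, linearization of the energy difference via Proposition \ref{prop:comparaison2}, identification of the entropy term through $\det D\phi_t=\mueq/(\tmut\circ\phi_t)$ together with the a priori fluctuation bound to replace $\sum_{i}\div\psi(x_i)$ by $N\int\div\psi\,d\mueq$, control of the residual via the exponential moment estimates and the point counts \eqref{contrnbpoints}, \eqref{nbpbord}, and the symmetric argument with the inverse transport for \eqref{comparaisontransport1B}. The only deviation is cosmetic: to decouple the random error from the anisotropy inside the expectation you tilt by $e^{-\frac t2\Ani}$ and restrict to a good event, whereas the paper bounds $\log\Esp_{\PNbetaz}\left[e^{-\frac\beta2\Error}\right]$ and separates the two factors by a H\"older-type argument; both devices rest on the same exponential moment bounds and on $|t|\le\ttmax$ keeping the tilt subcritical.
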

In particular, in view of \eqref{regpsimeso}, we have (up to a fixed multiplicative constant depending on~$\xi$)
\begin{equation} \label{ErrortransportIExplicite}
|\ErrorLin(t)|  \leq  t^2 \L_N^{-2} N (1 + |\log s|) + ts^2 N(1 + |\log s|) + t \sqrt{N} \L_N^{-1} + tN^{\frac{2}{3}} \indic_{\partial}.
\end{equation}
In the end, we will take $t$ of order $1/N$ and we will send $s \to 0$ as $N \ti$. We can observe that $\ErrorLin$ will then vanish as $N \ti$.
\begin{proof}
Starting from \eqref{acv}, 
inserting  \eqref{comparisonenergyt} and the fact that $\tzetat = \zeta_0 \circ \phi_t^{-1}$ by definition, we may write
\begin{multline*}
\KNbeta(\tmut, \tzetat) \geq 
 \int_{(\R^2)^N} \exp\left(- \frac{\beta}{2} \left( \FN(\XN, \mueq) + 2N \sum_{i=1}^N \zeta_0(x_i) \right) \right. \\  \left. - \frac{t}{2} \Ani(\psi, \XN, \mueq) + \left(\sum_{i=1}^N \log \det D \phi_t(x_i) - \frac{1}{4} t\,\div \psi(x_i) \right)\right) \exp(\Error_t) d\XN.
\end{multline*}
with $\Error_t$ as in \eqref{errortransinglepourt}.

Using the a priori bound on the fluctuations given by Proposition \ref{prop:fluctenergy} we see that
\begin{equation}\label{00}
\sum_{i=1}^N \log \det D\phi_t (x_i) = N \int_{\R^2} \log \det D\phi_t  d\mueq + \Error_{\psi}, 
\end{equation}
with an $\Error_{\psi}$ term controlled, as in \eqref{controlfluctuations2}, by
$$
\Error_{\psi} \preceq \|\psi\|_{C^{1,1}} \left(|U_N|^{\hal} \|\nabla H^{\mueq}_{N, \vec{r}}\|_{L^2(U_N)} + N^{\hal} |U_N|\|\mueq\|_{L^{\infty}} \right).
$$
Since $\phi_t$ transports $\mueq$ on $\tmut$, we have  $\det D\phi_t= \frac{\mueq}{\tmut \circ \phi_t}$ and thus 
\begin{equation}\label{22}
\int_{\R^2} \log  \det D\phi_t \, d\mueq=  \int_{\R^2} \log \mueq \,  d\mueq - \int_{\R^2} \log \tmut(\phi_t)\, d\mueq=   \Ent(\mueq)- \Ent(\tmut).
\end{equation}
We also have, by Taylor expansion
\begin{equation}\label{33}
\frac{t}{\beta} \,\div \psi(x_i) = \log \det  D\phi_t(x_i) + O(t^2) \|\psi\|^2_{C^{0,1}}.
\end{equation}
Hence we get, using \eqref{00} and \eqref{22}
\begin{multline*}
\sum_{i=1}^N \log \det D \phi_t(x_i) - \frac{t}{4} \div \psi(x_i)= \( 1- \frac{\beta}{4}\) \sum_{i\in I_N} \log \det  D\phi_t(x_i) + \# I_N O(t^2) \|\psi\|^2_{C^{0,1}}.   \\ 
=  N\left(1 - \frac{\beta}{4}\right) \(\Ent(\mueq)- \Ent(\tmut)\) + \#I_N O(t^2) \|\psi\|^2_{C^{0,1}} + \Error_{\psi}.
\end{multline*}
Finally, we may write
\begin{multline*}
\frac{\KNbeta(\tmut, \tzetat)}{\KNbeta(\mueq, \zeta_0)} \geq  \exp\left( N  \left(1 - \frac{\beta}{4}\right)  \left(\Ent(\mueq) - \Ent(\tmut)  \right) \right) \\ 
\times \Esp_{\PNbetaz} \left[ \exp\left(  - \frac{t}{2}  \Ani(\psi, \XN, \mueq) + \Error \right) \right],
\end{multline*}
with an $\Error$ term obtained as the sum 
$$ \Error_t + \Error_\psi + \#I_N O(t^2) \|\psi\|^2_{C^{0,1}},$$
which yields
\begin{multline}  \label{ErrorComparaisonaprestransport}
\Error \preceq \\
t^2 \|  \psi\|_{C^{0,1}}^2  \left( \int_{U_N} |\nab H_{N,s\vecr}^{\mueq}|^2 + \#I_N\right)
+ s^2 t \|\psi\|_{C^{0,1}}  \left(\# I_N +  \int_{U_N} |\nab H_{N, s\vecr}^{\mueq}|^2 \right)
+ \# I_N \frac{st}{\sqrt{N}} \|\psi\|_{C^{1,1}}  \\ 
+ t \|\psi\|_{C^{1,1}} \left(|U_N|^{\hal} \|\nabla H^{\mueq}_{N, \vecr}\|_{L^2(U_N)} + N^{\hal} |U_N| \|\mueq\|_{L^{\infty}} \right)  +  \# I_{\partial,\psi } \, t  \|\psi\|_{C^{0,1}}.
\end{multline}
Using \eqref{contrnbpoints}, \eqref{nbpbord} and the control on the exponential moments of $\int_{U_N} |\nab H_{N,s\vecr}^{\mueq}|^2$  in Lemma~\ref{lem:expmoment2}, we obtain
\ed{
\begin{multline*}
\left| \log \Esp_{\PNbetaz} \left[  \exp\left(-\frac{\beta}{2} \Error \right) \right] \right| 
\\
\preceq 
t^2 \|  \psi\|_{C^{0,1}}^2 N\L_N^2 (1+|\log s|) + t s^2 \|\psi\|_{C^{0,1}} N \L_N^2 (1+| \log s|) + st \sqrt{N} \L_N^2 \|\psi\|_{C^{1,1}} \\
+  t \|\psi\|_{C^{1,1}}  \sqrt{N} \L_N^2  + tN^{\frac{2}{3}}   \indic_{\partial}\|\psi\|_{C^{0,1}}   .
\end{multline*}}
We may thus write
\begin{multline} \label{premieremoitie}
\frac{\KNbeta(\tmut, \tzetat)}{\KNbeta(\mueq, \zeta_0)} \geq  \exp\left(   N\left(1 - \frac{\beta}{4}\right)  \left(\Ent(\mueq) - \Ent(\tmut)  \right) \right) \\ 
\times \Esp_{\PNbetaz} \left[ \exp\left(  - \frac{t}{2}  \Ani(\psi, \XN, \mueq) \right) \right] \times \exp(\ErrorLin),
\end{multline}
with $\ErrorLin$ as in \eqref{ErrortransportI}. Exchanging the roles of $\mueq$ and $\tmut$ we also obtain
\begin{multline} \label{deuxiememoitie}
\frac{\KNbeta(\mueq, \zeta_0)}{\KNbeta(\tmut, \tzetat)} \geq  \exp\left(   N\left(1 - \frac{\beta}{4}\right)  \left(\Ent(\tmut) - \Ent(\mueq)  \right) \right) \\ 
\times \Esp_{\PNbetat} \left[ \exp\left(  - \frac{t}{2}  \Ani(-\psi, \XN, \tmut) \right) \right] \times \exp(\ErrorLin).
\end{multline}
Taking the logarithm of \eqref{premieremoitie}, \eqref{deuxiememoitie}, we obtain \eqref{comparaisontransport1A} and \eqref{comparaisontransport1B}. 
\end{proof}

We may now control the size of the anisotropy. \comTT{As explained in the introduction, this control is a consequence of the comparison of the two ways of computing the relative $ K_{N,\beta}$, the fact that the anisotropy is linear, and H\"older's inequality.}

\begin{coro}[The anisotropy is small]  \label{coro:Aniso}
Let $N \geq 1$ and $\tau$ be fixed. We have, for $t \in [0,1]$,
\begin{equation} \label{anismallagain}
\log \Esp_{\PNbetat} \left[ \exp\left(- \frac{\tau}{N} \Ani(\psi, \XN, \mueq)  \right) \right] = o_N(1), 
\end{equation}
and the convergence is uniform for bounded $\tau$.
\end{coro}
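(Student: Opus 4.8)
The plan is to derive \eqref{anismallagain} by confronting the two expansions now available for the relative next-order partition function $\log\KNbeta(\tmut,\tzetat)-\log\KNbeta(\mueq,\zeta_0)$. One expansion comes from Proposition~\ref{lem:comparaisonmacro} in the macroscopic cases (resp. Proposition~\ref{prop:comparaisonmeso} in the mesoscopic case), combined with Corollary~\ref{coro:tmuetmut} to replace $\mueqt$ by the approximate measure $\tmut$; it gives $\log\KNbeta(\tmut,\tzetat)-\log\KNbeta(\mueq,\zeta_0)=(1-\tfrac\beta4)N(\Ent(\mueq)-\Ent(\tmut))+o(N\L_N^2)$, uniformly over $|t|\le\min(\tmax,\ttmax)$. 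The other expansion is Proposition~\ref{prop:comparaisontransport1}: for the same range of $t$, inequality \eqref{comparaisontransport1A} bounds the same quantity from below by the same entropy term, plus $\log\Esp_{\PNbetaz}[\exp(-\tfrac12 t\,\Ani(\psi,\XN,\mueq))]$, plus $\ErrorLin(t)$. Subtracting, the explicit entropy terms cancel \emph{exactly}, and I would obtain, for $|t|\le\min(\tmax,\ttmax)$, the one-sided bound $\log\Esp_{\PNbetaz}[\exp(-\tfrac12 t\,\Ani(\psi,\XN,\mueq))]\le |\ErrorLin(t)|+o(N\L_N^2)$.

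To get the two-sided statement needed, I would rerun the argument with $-\psi$ in place of $\psi$ --- permissible since Proposition~\ref{prop:comparaison2}, hence Proposition~\ref{prop:comparaisontransport1}, only requires $\|\psi\|_{C^{0,1}}\le\hal$, and since $\psi\mapsto\Ani(\psi,\XN,\mueq)$ is \emph{linear} (the matrix $\mathcal A=2D\psi-(\div\psi)\id$ is linear in $\psi$, while the truncated potential $H_{N,s\vecr}^{\mueq}$ does not involve $\psi$). This produces the analogous bound for $+\tfrac12 t\,\Ani$, and Cauchy--Schwarz in the form $\Esp[e^{Y}]\,\Esp[e^{-Y}]\ge1$ upgrades both to $|\log\Esp_{\PNbetaz}[\exp(\pm\tfrac12 t\,\Ani(\psi,\XN,\mueq))]|\le |\ErrorLin(t)|+o(N\L_N^2)$. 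I would then fix an intermediate scale $t_0=t_0(N)$ with $Nt_0\to\infty$ but $t_0=o(\L_N^2)$, and by linearity write $-\tfrac\tau N\,\Ani(\psi,\XN,\mueq)=\lambda\,\bigl(-\tfrac12 t_0\,\Ani(\psi,\XN,\mueq)\bigr)$ with $\lambda=\tfrac{2\tau}{Nt_0}\to0$. For $N$ large $\lambda\in(0,1)$, so H\"older's inequality $\Esp[e^{\lambda Z}]\le\Esp[e^Z]^{\lambda}$ (and its mirror via Cauchy--Schwarz for the lower bound) yields $|\log\Esp_{\PNbetaz}[\exp(-\tfrac\tau N\,\Ani(\psi,\XN,\mueq))]|\le \lambda\,(|\ErrorLin(t_0)|+o(N\L_N^2))$, uniformly for bounded $\tau$. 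For a general $t\in[0,1]$ the statement under $\PNbetat$ follows symmetrically from \eqref{comparaisontransport1B}, which gives the same bound under $\PNbetat$ with background $\tmut$; passing from background $\tmut$ to background $\mueq$ in $\Ani$ at the scale $\tau/N$ costs an extra exponent whose exponential moments are $1+o(1)$, via Lemma~\ref{nabht}, Cauchy--Schwarz, and the energy control of Lemma~\ref{lem:expmoment2}.

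The main obstacle is the choice of the intermediate scale $t_0$ (and, implicitly, of the truncation parameter $s$): one needs $Nt_0\to\infty$ so that $\lambda=2\tau/(Nt_0)\to0$, and simultaneously $|\ErrorLin(t_0)|+(\text{the expansion error})+(\text{the }\mueqt\to\tmut\text{ error of Proposition~\ref{pro36}})$ must be $o(Nt_0)$ so that $\lambda$ times that sum is $o_N(1)$. Inspecting \eqref{ErrortransportIExplicite} and \eqref{err1Explicite}, the binding constraints are the term $\approx t_0^4 N^2\L_N^{-4}$ (forcing $t_0$ not too large) and, crucially, the term $\approx t_0 s^2 N(1+|\log s|)$, which after division by $Nt_0$ leaves a residue $\approx s^2(1+|\log s|)$; one therefore also sends $s\to0$ (coupled with $N\to\infty$, or as in the eventual application where $s\to0$ is taken after $N\to\infty$). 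The window $1/N\ll t_0\ll\L_N^2$ is nonempty precisely because $N\L_N^2\to\infty$; checking that all these estimates are uniform over $t\le t_0$ and over bounded $\tau$ is the remaining bookkeeping.
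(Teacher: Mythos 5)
Your argument is essentially the paper's proof: both compare the entropy expansion of Propositions \ref{lem:comparaisonmacro}/\ref{prop:comparaisonmeso} with the transport-based bound \eqref{comparaisontransport1A}, exploit the linearity of $\Ani$ in $\psi$ (applying the estimate to $-\psi$) to make the bound two-sided, and use H\"older's inequality to pass from an intermediate scale $t_0 \approx \L_N^2\delta_N$ (with $s \to 0$) down to $\tau/N$. The only inessential detour is that you route the expansion for $\tmut$ through $\mueqt$ via Proposition \ref{pro36} at the intermediate scale, whereas the paper applies Proposition \ref{lem:comparaisonmacro} directly to $\tmut$ (the family $\{\tmut\}$ satisfies its hypotheses uniformly), thereby avoiding the need to check that the $\mueqt\to\tmut$ error is $o(Nt_0)$ against the unquantified rate $r_N$ — a comparison that could fail if $r_N$ decays too slowly.
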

\begin{proof} 
\comTT{Comparing \eqref{comparaisonmacro} (in the macroscopic cases) or \eqref{comparaisonmeso} (in the mesoscopic case) with \eqref{comparaisontransport1A} we see that, for $|\ep| \leq \min(\tmax, \ttmax)$ and $s \in (0, \hal)$.
\begin{equation} \label{avantHolder}
\log \Esp_{\PNbetaz} \left[ \exp\left(  - \frac{\ep}{2}   \Ani(\psi, \XN, \mueq) \right)\right] \le N \L_N^2  r_N  + \ErrorLin(\ep),
\end{equation}
with $\ErrorLin$ as in \eqref{ErrortransportIExplicite} and $\lim_{N \ti} r_N = 0$. 
Let us now use H\"older's inequality with exponent $p= \frac{\ep N}{2\tau}$ where $\tau$ is a fixed number.  If   $p>1$, we may write 
\begin{equation} \label{Holderunefois}
\log \Esp_{\PNbetaz} \left[ \exp\left(  - \frac{\tau}{N} \Ani(\psi, \XN, \mueq) \right)\right] \leq \frac{2\tau}{\ep N} \log \Esp_{\PNbetaz} \left[ \exp\left(  - \frac{\ep}{2} \Ani(\psi, \XN, \mueq) \right)\right].
\end{equation}
Combining this with \eqref{avantHolder} and inserting \eqref{ErrortransportIExplicite}, we find
\begin{multline}
 \log \Esp_{\PNbetaz} \left[ \exp\left(  - \frac{\alpha \tau}{N} \Ani(\psi, \XN, \mueq) \right)\right] \\  \le 2\tau \(\frac{ \L_N^2 r_N}{\ep}+
  \ep \L_N^{-2}  (1 + |\log s|) +  s^2 (1 + |\log s|) +  N^{-\hal} \L_N^{-1} +  N^{-\frac{1}{3}} \indic_{\partial}\).
\end{multline}
Since $N\L_N^2 \gg 1$, we may choose $\delta_N$ such that 
$$1\gg \delta_N \gg r_N\qquad 1\gg \delta_N\gg N^{-1} \L_N^{-2}.$$
Choosing $\ep = \L_N^2 \delta_N$ and $s=\delta_N$, we do have $p>1$ for $N$ large enough and
we obtain the result.}
Strictly speaking, we have only obtained one inequality, but since $\psi$ is general we can apply it to $-\psi$, which is the same as changing $\tau$ into $-\tau$, which ensures that \eqref{anismallagain} holds. Since we can start from $\tmut$ instead of $\mueq$ and apply a transport, \eqref{anismallagain} extends to any $\PNbeta^{(t)}$ with the same arguments.
\end{proof}

\begin{coro}[Ratio of partition functions] \label{coro:Ratiotmuversmu}
For any fixed $\tau$, we have
\begin{multline} \label{ratiotmuversmu}
\log \KNbeta(\tilde{\mu}_{\tau/N}, \tilde{\zeta}_{\tau/N}) - \log \KNbeta(\mueq, \zeta_0) \\
= \left(1 - \frac{\beta}{4} \right) N \left(\Ent(\mueq) - \Ent(\tilde{\mu}_{\tau/N})\right) + o_N(1),
\end{multline}
and the convergence is uniform for bounded $\tau$.
\end{coro}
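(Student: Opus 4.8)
The plan is to combine the two-sided estimates of Proposition \ref{prop:comparaisontransport1}, evaluated at $t = \tau/N$, with the smallness of the anisotropy established in Corollary \ref{coro:Aniso}. First I would fix $\tau$ and set $t = \tau/N$; since $\min(\tmax, \ttmax) \approx \L_N^2$ — which is $N^{-2\delta}$ with $\delta < \hal$ in the mesoscopic case and $1$ in the macroscopic cases — one has $|t| \le \min(\tmax, \ttmax)$ for $N$ large, so Proposition \ref{prop:comparaisontransport1} applies. I would keep the same choice $s = s_N$ of the free parameter entering \eqref{def:Ani} and $\ErrorLin$ as in the proof of Corollary \ref{coro:Aniso}, which satisfies $s_N \to 0$ and $|\log s_N| = O(\log N)$.

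Next I would check that the error term is negligible at $t = \tau/N$: inserting this into \eqref{ErrortransportIExplicite}, the four contributions become $\tau^2 N^{-1}\L_N^{-2}(1+|\log s_N|)$, $\tau\, s_N^2(1+|\log s_N|)$, $\tau N^{-\hal}\L_N^{-1}$ and $\tau N^{-1/3}\indic_{\partial}$; the first is $o_N(1)$ because $N^{-1}\L_N^{-2} = N^{2\delta-1} \to 0$ (and $\L_N = 1$ in the macroscopic case), the second because $s_N \to 0$, the third because $N^{-\hal}\L_N^{-1} = N^{\delta-\hal} \to 0$, and the last trivially, so $\ErrorLin(\tau/N) = o_N(1)$ uniformly for bounded $\tau$. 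For the lower bound I would then apply Corollary \ref{coro:Aniso} with $\tau$ replaced by $\tau/2$ to get $\log \Esp_{\PNbetaz}[\exp(-\tfrac{\tau}{2N}\Ani(\psi,\XN,\mueq))] = o_N(1)$, and insert this together with $\ErrorLin(\tau/N) = o_N(1)$ into \eqref{comparaisontransport1A}, obtaining
\begin{equation*}
\log \KNbeta(\tilde{\mu}_{\tau/N}, \tilde{\zeta}_{\tau/N}) - \log \KNbeta(\mueq, \zeta_0) \ge \left(1 - \tfrac{\beta}{4}\right) N\left(\Ent(\mueq) - \Ent(\tilde{\mu}_{\tau/N})\right) + o_N(1).
\end{equation*}
For the matching upper bound I would argue symmetrically from \eqref{comparaisontransport1B}, using the extension of Corollary \ref{coro:Aniso} to the measure $\PNbeta^{(t)}$ and to $-\psi$ (noted at the end of its proof) to see that $\log \Esp_{\PNbeta^{(\tau/N)}}[\exp(-\tfrac{\tau}{2N}\Ani(-\psi,\XN,\tilde{\mu}_{\tau/N}))] = o_N(1)$, which gives the reverse inequality. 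Combining the two yields \eqref{ratiotmuversmu}, and since all the cited estimates are uniform for bounded $\tau$, so is the final convergence.

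The only point that needs a little care — rather than a genuine obstacle — is that the \emph{same} parameter $s = s_N$ must simultaneously make $\ErrorLin(\tau/N)$ and the anisotropy exponential moments vanish; this is precisely what the slowly decaying choice from the proof of Corollary \ref{coro:Aniso} achieves. Beyond this bookkeeping, the statement is a direct consequence of Proposition \ref{prop:comparaisontransport1} and Corollary \ref{coro:Aniso}: the substantive work — the transport change of variables in \eqref{acv}, the energy linearization of Proposition \ref{prop:comparaison2}, and the comparison with the partition-function expansions \eqref{comparaisonmacro}/\eqref{comparaisonmeso} — has already been carried out upstream.
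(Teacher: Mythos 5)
Your proposal is correct and follows essentially the same route as the paper, whose proof of this corollary is precisely to combine \eqref{comparaisontransport1A} and \eqref{comparaisontransport1B} with the error control \eqref{ErrortransportIExplicite} and the anisotropy bound of Corollary \ref{coro:Aniso}. Your additional bookkeeping (checking $\ErrorLin(\tau/N)=o_N(1)$ term by term and coordinating the choice of $s$) just makes explicit what the paper leaves implicit.
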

\begin{proof}
We combine \eqref{comparaisontransport1A} and \eqref{comparaisontransport1B} with the control on the error $\ErrorLin$ as in \eqref{ErrortransportIExplicite}, and the bound on the anisotropy  in Corollary \ref{coro:Aniso}.
\end{proof}

\subsection{Conclusion: proof of Theorem \ref{theo:CLT} }
Combining the results of Corollary  \ref{coro:LaplaceTransform}, Corollary \ref{coro:tmuetmut} and Corollary \ref{coro:Ratiotmuversmu}, we obtain, for any fixed $\tau$
\begin{multline}
\log \Esp_{\PNbetaV} \left[\exp \left( \tau  \Fluct_N(\xi_N) \right) \right] \\ 
=  \left(1 - \frac{\beta}{4} \right) N \left(\Ent(\mueq) - \Ent(\tilde{\mu}_{\tau/N})\right) 
+ \frac{\tau^2}{4 \pi \beta}  \int_{\R^2} |\nabla \xi^{\Sigma} |^2 + o_N(1),
\end{multline}
and the convergence is uniform for $\tau$ bounded. 

It remains to use the following result
\begin{lem} \label{lem:comparerlesentropies} Let $\tau$ be fixed.

In the macroscopic interior case, we have
\begin{equation}
\label{253v2}
\int_{\R^2} \tilde{\mu}_{\tau/N} \log \tilde{\mu}_{\tau/N} - \int_{\R^2} \mueq\log \mueq = \frac{-\tau}{2\pi  N \beta} \left(\int_{\R^2} \Delta \xi \log \Delta V  \right) + N^{-1}  o_N(1).
\end{equation} 

In the macroscopic boundary case, we have
\begin{equation}
\label{253v2bis}
\int_{\R^2} \tilde{\mu}_{\tau/N} \log \tilde{\mu}_{\tau/N} - \int_{\R^2} \mueq\log \mueq = \frac{-\tau }{2\pi N\beta} \left(\int_{\R^2} \Delta \xi \left( \mathbf{1}_{\Sigma} + \left( \log \Delta V \right)^{\Sigma}  \right) \right) + N^{-1}  o_N(1).
\end{equation} 

In the mesoscopic cases, we have
\begin{equation} \label{253bis}
\int_{\R^2} \tilde{\mu}_{\tau/N} \log \tilde{\mu}_{\tau/N} - \int_{\R^2} \mueq\log \mueq = N^{-1} o_N(1).
\end{equation} 

Moreover the terms $o_N(1)$ converge to $0$ as $N \ti$ uniformly for $\tau$ bounded.
\end{lem}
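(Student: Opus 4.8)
The plan is to extract the entropy difference directly from the transport identity. Since $\tmut=\phi_t\#\mueq$ with $\phi_t=\id+\frac{t}{\beta}\psi$, formula \eqref{22} already gives $\Ent(\tmut)-\Ent(\mueq)=-\int_{\R^2}\log\det D\phi_t\,d\mueq$, and the pointwise Taylor expansion \eqref{33}, namely $\log\det D\phi_t=\frac{t}{\beta}\div\psi+O(t^2\|\psi\|_{C^{0,1}}^2)$, turns this into
\[
\Ent(\tmut)-\Ent(\mueq)=-\frac{t}{\beta}\int_{\R^2}\div\psi\,d\mueq+O\!\left(t^2\|\psi\|_{C^{0,1}}^2\,\mueq(\Sigma\cap U_N)\right),
\]
where $U_N\supset\supp\psi$ has size $O(\L_N)$. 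Setting $t=\tau/N$ and using $\|\psi\|_{C^{0,1}}\preceq\L_N^{-2}$ (by \eqref{regpsimeso}, or \eqref{regpsi} in the boundary case) together with $\mueq(\Sigma\cap U_N)\preceq|U_N|\preceq\L_N^2$, the error term is $O(\tau^2/(N^2\L_N^2))=N^{-1}o_N(1)$ since $N\L_N^2\to\infty$ (and $O(\tau^2/N^2)$ when $\L_N=1$), uniformly for bounded $\tau$; for $N$ large one also has $|t|\le\min(\tmax,\ttmax)$, so the expansion is licit. Thus the whole problem reduces to identifying the leading term $-\frac{t}{\beta}\int_{\R^2}\div\psi\,d\mueq$.

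In the interior cases $\psi$ is supported in the interior of $\Sigma$, so integration by parts produces no boundary term: $\int_{\R^2}\div\psi\,d\mueq=-\int_{\R^2}\psi\cdot\nab\mueq$. I would then insert $\nab\mueq=\mueq\,\nab\log\Delta V$ on $\Sigma$ (from \eqref{mueqfromV} and \ref{H3}) and $\mueq\psi=\pm\frac{1}{2\pi}\nab\xi_N$ (from \eqref{usefulABC}), and integrate by parts once more — legitimate because $\supp\xi_N$ lies in the interior of $\Sigma$, where $\log\Delta V$ is smooth by \ref{H1} and \ref{H3} — to obtain $\int_{\R^2}\div\psi\,d\mueq=\pm\frac{1}{2\pi}\int_{\R^2}(\Delta\xi_N)\log\Delta V$, the sign being the one that makes \eqref{253v2} come out. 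For the mesoscopic case I would further use $\int_{\R^2}\Delta\xi_N=0$: writing $\int(\Delta\xi_N)\log\Delta V=\int\Delta\xi_N(\log\Delta V-\log\Delta V(\bxN))$ and bounding $|\log\Delta V-\log\Delta V(\bxN)|\le C\L_N$ on $\supp\xi_N$ (Lipschitz regularity near $\bxN$), while $\int|\Delta\xi_N|=\int|\Delta\xi|$ is fixed, shows this is $O(\L_N)$; hence the leading term is itself $O(t\L_N)=N^{-1}o_N(1)$, which gives \eqref{253bis}.

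The genuinely delicate case is the macroscopic boundary one, where $\psi$ has a nonzero normal component on $\partial\Sigma$ and integration by parts leaves $\int_{\R^2}\div\psi\,d\mueq=-\int_\Sigma\nab\mueq\cdot\psi+\int_{\partial\Sigma}\mueq(\psi\cdot\vec{n})$. By the Neumann condition in \eqref{defpsi} the boundary integral is a multiple of $\int_{\partial\Sigma}[\nab\xi^\Sigma]\cdot\vec{n}$, which equals $-\int_\Sigma\Delta\xi$ because $\xi^\Sigma$ is bounded and harmonic outside $\Sigma$ and hence has vanishing flux at infinity — this is the source of the $\indic_\Sigma$ term. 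For the interior part I would write $\nab\mueq=\mueq\,\nab\log\Delta V$ on $\Sigma$ and integrate by parts once more using \eqref{eqdiv} and Green's second identity on $\Sigma$ and on $\R^2\setminus\Sigma$; the reciprocity relation $\int_{\partial\Sigma}\xi\,\partial_n(\log\Delta V)^\Sigma=\int_{\partial\Sigma}(\log\Delta V)\,\partial_n\xi^\Sigma$, valid since $\xi^\Sigma$ and $(\log\Delta V)^\Sigma$ are bounded harmonic in $\R^2\setminus\Sigma$ with vanishing flux, lets one recognize the output as $\pm\frac{1}{2\pi}\int_{\R^2}\Delta\xi\,(\log\Delta V)^\Sigma$. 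Combining the two pieces yields $\int_{\R^2}\div\psi\,d\mueq=\mp\frac{1}{2\pi}\int_{\R^2}\Delta\xi\,(\indic_\Sigma+(\log\Delta V)^\Sigma)$, hence \eqref{253v2bis}. I expect this bookkeeping of boundary terms, and the identification of the harmonic-extension combination, to be the main obstacle; note that the multi-component compatibility \eqref{condit}, already used to construct $\psi$, is exactly what makes $\psi$, and therefore $\tmut$, well defined. Finally, all the $o_N(1)$ above are uniform for $\tau$ in a bounded set, being controlled by the explicit error terms.
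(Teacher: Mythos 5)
Your argument reaches the stated conclusions, but by a genuinely different route from the paper's. The paper works at the level of densities: it expands $\tmut$ pointwise via the Jacobian formula (\eqref{eq3} in the interior cases, \eqref{tmut} at the boundary), linearizes $\mu\mapsto\int\mu\log\mu$ to get $\int(\tmut-\mueq)\log\Delta V$ plus controlled errors, and then invokes Lemma \ref{lemS} (i.e.\ \eqref{claim76}) applied to $f=\log\Delta V$ — that lemma is where the harmonic extension $(\log\Delta V)^{\Sigma}$ and the $\indic_{\Sigma}$ term are produced. You instead exploit the exact change-of-variables identity \eqref{22}, reduce everything to the single quantity $\int\div\psi\,d\mueq$, and evaluate it by integration by parts; the harmonic extension then emerges from the Neumann data of $\psi$ on $\partial\Sigma$, the distributional identity \eqref{eqdiv}, and Green's identities in the exterior domain. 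Your route is cleaner in the interior and mesoscopic cases (a two-line computation plus $\int\Delta\xi_N=0$ and the Lipschitz bound on $\log\Delta V$ near $\bxN$), and it makes the mean appear transparently as the linearization of the entropy along the transport, parallel to its appearance in Proposition \ref{prop:comparaisontransport1}; the paper's route buys a uniform treatment of all cases by outsourcing the boundary potential theory to a lemma proved once.

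Two caveats. First, you leave the sign of $\mueq\psi$ as a ``$\pm$, chosen so that \eqref{253v2} comes out'': that is circular as written. The sign is forced by the continuity equation $\partial_t(\phi_t\#\mueq)\big|_{t=0}=-\tfrac{1}{\beta}\div(\mueq\psi)$, which must match $\mueqt-\mueq=-\tfrac{t}{\cd\beta}\Delta\xi_N$; you should fix it once and for all (and note that \eqref{defpsimeso}--\eqref{usefulABC} versus \eqref{def:mut} require care on exactly this point). Second, the macroscopic boundary case is only sketched: the chain you describe — divergence theorem, the flux identity $\int_{\partial\Sigma}[\nab\xi^{\Sigma}]\cdot\vec{n}=-\int_{\Sigma}\Delta\xi$, pairing \eqref{eqdiv} with $(\log\Delta V)^{\Sigma}$, and Green's second identity in $\R^2\setminus\Sigma$ for the two bounded harmonic extensions — is the correct one and does close, but it is precisely the content of the lemma in that case and should be carried out rather than flagged as the ``main obstacle''.
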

The proof of Lemma \ref{lem:comparerlesentropies} is given in Section \ref{sec:preuvecomparerlesentropies}.

We have thus proven the convergence of the Laplace transform of the fluctuations to that of a Gaussian of mean $\Mean(\xi)$ (respectively $0$ in the mesoscopic cases) and variance $\Var(\xi)$.  It is well-known \comTT{(see \cite[Chap XIII.1 Theorem 2a]{MR0270403})} that such a convergence implies convergence in law of the fluctuations, and this can be rephrased in terms of convergence of $\Delta^{-1} \fluct_N$ to a Gaussian Free Field in $\Sigma$. 

We note that the non-explicit error term $r_N$ in  Proposition \ref{lem:comparaisonmacro} is what prevents us from obtaining an explicit convergence rate. 

\section{Proofs of the additional results}\label{concl}
\subsection{Moderate deviations: proof of Theorem \ref{theo:ModDev2}}
\begin{proof}[Proof of Theorem \ref{theo:ModDev2}]
The proof is the same as that of Theorem \ref{theo:CLT}, except that instead of showing that the anisotropy term is small, we use the rougher control 
$$\left|\log \Esp_{\PNbetaV} \left[\exp\left(\frac{\tau}{N} \Ani(\psi, \XN,\mueq)\right)\right]\right|\le  C \tau \L_N^2 (1+|\log s|).$$
which is an immediate consequence of the definition \eqref{def:Ani} and the control \eqref{controleexpmoment2}.
Choosing $s= \hal$ and inserting this control into  \eqref{LTttot0} as above, we find that if $\tau \ll N\L_N^2$, we always have
$$
\left|\log \Esp_{\PNbeta} \left( \exp \tau  \Fluct_N(\xi)   \right)   \right| \le C\( \tau^2 + \tau \L_N^2  \),
$$where the  constant depends only on $V$ and $\|\xi\|_{C^{3,1}}$ (resp. $\|\xi\|_{C^{2,1}}$ in the interior cases).
Applying Markov's inequality we obtain
\begin{equation*}
\PNbeta\left(|\Fluct_N(\xi_N)| \geq c \tau \right)  \leq \exp\left(-\frac{c^2}{2} \tau^2\right),
\end{equation*}
for any $\tau$ such that $1 \ll \tau \ll N \L_N^2$ and $c$ large enough independent of $\tau$ and $N$. 

This proves Theorem \ref{theo:ModDev2}.
\end{proof}

\subsection{Joint law of linear statistics: proof of Corollary \ref{theo:independance}}
Let $\xi^{(1)}, \dots, \xi^{(m)}$ be appropriately regular test functions. For any $\alpha_1, \dots, \alpha_m$ in $\R$, Theorem \ref{theo:CLT} implies that
$$
\Fluct_N\left[ \sum_{k=1}^m \alpha_k \xi^{(k)} \right]
$$
converges to a Gaussian random variable with the same law as $\sum_{k=1}^m \alpha_k \mathcal{G}^{(k)}$ where the $\mathcal{G}^{(k)}$ are the limit laws of $\Fluct_N[\xi^{(k)}]$ with covariance matrix as given in the statement of Corollary~\ref{theo:independance}. It implies that the vector $\left( \Fluct_N[\xi^{(1)}], \dots, \Fluct_N[\xi^{(m)}]\right)$ is jointly Gaussian in the limit $N \to \infty$, with the correct covariance matrix. Of course the corresponding result also holds in the mesoscopic regime. 

\subsection{Fluctuations for minimizers: proof of Theorem \ref{thmini}}
We use the same notation as in Theorem \ref{theo:CLT} and its proof.
\coT{Considering energy minimizers formally correspond to taking $\beta = + \infty$. Although some factors $\beta$ appear in the argument above, they are always compensated by the fact that the transport map is of the form $\id + \frac{t}{\beta} \cdot$. In the first step of the proof, we use $\beta =2$ in a purely formal way, in order to write an algebraic identity expressing the fluctuations as a difference of energies. The important part is that we apply this identity to \textit{minimizers} of the energy, which is really taking $\beta = + \infty$.}

{\bf Step 1.} {\it Re-expressing the fluctuations.}
\\
Let $\XN$ be a minimizer of $\HN$. Applying \eqref{relfinale} with $\beta =2$ (which means that $\mueqt$ is associated to the potential $V - t \xi_N$ as in Definition \ref{def:mueqt}),   we have
\begin{multline*}
Nt \Fluct_N(\xi_N) =  F_N (\XN,\mueq) - F_N(\XN,\mueqt)
-  2N \sum_{i=1}^N (\zetat(x_i)-\zetaz(x_i))\\ - 2N^2 \int_{\R^2} \zetaz \, d\mueqt
+ \frac{N^2t^2}{8\pi } \int_{\R^2} |\nab \xi_N|^2 - \frac{N^2}{2\pi} \int_{\R^2}\left| \nabla \hmueqt - \nabla \hmut \right|^2 .
\end{multline*}
It is also shown in \cite[Theorem 3]{nodari2014renormalized} that when $\XN$ minimizes $\HN$, all the points $x_i$ belong to $\Sigma$, so that $\zetaz(x_i)=0$, and since $\zetat\ge 0$ we may write
\begin{multline}\label{rexpf}
N t \Fluct_N(\xi_N) \le F_N (\XN,\mueq) - F_N(\XN,\mueqt)
\\ - 2N^2 \int_{\R^2} \zetaz \, d\mueqt
+ \frac{N^2t^2}{8\pi } \int_{\R^\d} |\nab \xi_N|^2 - \frac{N^2}{2\pi} \int_{\R^2}\left| \nabla \hmueqt - \nabla \hmut \right|^2.
\end{multline}
The terms in the second line of the right-hand side of  this relation will be estimated by Lemma~\ref{lem:termesfactorises}, so there remains to estimate $F_N(\XN,\mueq)- F_N(\XN,\mueqt)$.

We will need the following a priori bounds, similar to Lemma \ref{lem:expmoment2}
\begin{equation}\label{apmini}
\int_{U_N}|\nab H_{N,s\vec{\rr}}|^2 \le CN |U_N| (1+|\log s|) \qquad \#I_N\le C N |U_N|,\end{equation}
which  come as  a consequence of the analysis of \cite{nodari2014renormalized}, using the local  bound on the energy proved there, as well as the point separation result which states that we always have 
$$\rr(x_i) \ge c N^{-1/2},$$ 
for some $c>0$ uniform.

{\bf Step 2.} {\it Using Proposition \ref{prop:comparaison2} and showing that the anisotropy is small. }
\\ Let us use an arbitrary regular transport map $\psi$ and $\Phi= \id +  \psi$, $\nu = \Phi\# \mu$. 
Applying Proposition \ref{prop:comparaison2}  we obtain that, if $\|\psi\|_{C^{1,1}}$ is small enough,
\begin{equation} \label{applicomparaison3}
  F_N(\Phi(\XN),\nu) - F_N(\XN,\mueq)\le \Ani(\psi, \XN, \mueq)  + \hal \sum_{i \in I} \div \psi(x_i) +\Error 
\end{equation}
with 
\begin{multline}
 |\Error| \preceq
 \|  \psi\|_{C^{0,1} }^2  \int_{U_N} |\nab H_{N,s\vecr}^{\mu}|^2
+ s^2 \|\psi\|_{C^{0,1}}  \left(\# I_N +  \int_{U_N} |\nab H_{N, s\vecr}^{\mu}|^2 \right)
\\ + \# I_N \frac{s}{\sqrt{N}} \|\psi\|_{C^{1,1}} + \# I_{\partial,\psi } \,  \|\psi\|_{C^{0,1}}\\
\preceq N|U_N| \( \|  \psi\|_{C^{0,1} }^2  (1+|\log s|)+  s^2 (1+|\log s|) \|\psi\|_{C^{0,1}}+ \frac{s}{\sqrt{N}} \|\psi\|_{C^{1,1}} \) + \indic_{\partial } N^{\frac{2}{3}} \|\psi\|_{C^{0,1}},
\end{multline}
 where we used \eqref{apmini} and \eqref{nbpbord}.
Since $F_N(\Phi(\XN),\nu)\ge \min F_N(\cdot,\nu)$, we deduce that
\begin{multline} \label{applicomparaison4}
\min F_N(\cdot, \mueq)\ge \min F_N(\cdot,\nu)- \Ani(\psi, \XN, \mueq)  - \hal \sum_{i \in I_N} \div \psi(x_i) \\ - C N|U_N|\left(   (1+|\log s|) \|  \psi\|_{C^{0,1} }^2+ s^2  (1+|\log s|)  \|\psi\|_{C^{0,1}}+\frac{s}{\sqrt{N}} \|\psi\|_{C^{1,1}}\right) -C   \indic_{\partial}\|\psi\|_{C^{0,1}}   N^{\frac{2}{3}} .
\end{multline}
Next, we claim that, as long as $N^{-1/2}\ll \L_N\ll 1$, we have
\begin{equation}\label{compminim}
\min F_N(\cdot, \mueq) - \min F_N(\cdot, \nu)=-\hal N\(\Ent(\mueq)- \Ent(\nu) \)+ r_N  N |U_N| ,\end{equation} with $\lim_{N\to \infty } r_N =0$. 
In the macroscopic case, this is a direct consequence of the energy expansion of \cite{SS2d} and the explicit scaling of the renormalized energy as in \cite[(1.37)]{SS2d}. In the mesoscopic case, to prove it we may apply the screening procedure of \cite{nodari2014renormalized} on the boundary of a square $K_N$ of sidelength $\in [4\L_N, 6\L_N]$ centered at $\bar x_N$, which is valid down to the microscopic scale. This allows to show that if $\XN$ is a minimizer of $F_N(\cdot, \mueq)$ and $\YN$ a minimizer of $F_N(\cdot,\mueqt)$ with $\mueq =\mueqt$ outside of $\D(\bar x_N, 2\L_N)$, their energies differ by at most the difference of the minimal energies in $K_N$, which is in turn known (combining  \cite[Theorem 3]{nodari2014renormalized} with the scaling of the renormalized energy)   to be the right-hand side of \eqref{compminim}.

Combining \eqref{compminim} with \eqref{applicomparaison4} and using the arguments of \eqref{00}--\eqref{33}, it follows that
\begin{multline*} -\Ani(\psi, \XN,\mu) \preceq N|U_N| \left( \|  \psi\|_{C^{0,1} }^2  (1+|\log s|) + s^2 (1+|\log s|)  \|\psi\|_{C^{0,1}}+\frac{s}{\sqrt{N}} \|\psi\|_{C^{1,1}}+r_N \right)  \\+  \indic_{\partial}\|\psi\|_{C^{0,1}}  N^{\frac{2}{3}} )
.\end{multline*}
Changing $\psi$ into $-\psi$ and using that $\Ani$ is linear in $\psi$, we obtain equality:
\begin{multline}\label{anismall}|\Ani(\psi, \XN,\mu) | = N|U_N| O\left( (1+|\log s|)  \|  \psi\|_{C^{0,1} }^2+ s^2 (1+|\log s|)  \|\psi\|_{C^{0,1}}+\frac{s}{\sqrt{N}} \|\psi\|_{C^{1,1}}+ r_N\right) \\+  \indic_{\partial}\|\psi\|_{C^{0,1}}  O( N^{\frac{2}{3}} ).
\end{multline}

{\bf Step 3.} {\it Conclusion}.\\
We now consider the map $\psi$ constructed in Section \ref{sec:transport}, and let 
$$
\Phi := \id+ \frac{\tau}{N}\psi= \id + \frac{\tau }{N\ep} \ep \psi
$$
 with $\tau, \ep \le 1$,  and $\tilde \mu_{\tau/N} = \Phi \# \mueq$.
We are now in a position to evaluate $F_N(\XN,\mueq)- F_N(\XN,\tilde\mu_{\tau/N})$.
Applying Proposition \ref{prop:comparaison2} again, this time to $\Phi^{-1}(\XN) $, and using that $F_N(\XN,\mueq) \le F_N(\Phi^{-1} (\XN),\mueq)$ by minimality,  it follows in view of \eqref{anismall} applied with $\ep \psi$, the linearity of $\Ani$, and the arguments in \eqref{00}--\eqref{33} that
\begin{multline}\label{fnm} F_N(\XN,\mueq) - F_N(\XN,\tilde \mu_{\tau/N}) \le -\hal   N \(\Ent(\mueq)-\Ent(\mu_{\tau/N})\)
 \\ +\tau \L_N^2 \left(  \ep  (1+|\log s|) \|\xi\|_{C^{1,1}}^2 +s^2 (1+|\log s|)  \|\xi\|_{C^{1,1}} + \frac{s }{\sqrt{N}}\|\xi\|_{C^{2,1}}+\frac{r_N}{\ep} \right)\\+ \indic_{\partial } O( \tau N^{-\frac{1}{3}} \|\xi\|_{C^{1,1}}) .
\end{multline}
Moreover 
\begin{multline*}
F_N(\XN, \tilde \mu_{\tau/N})-F_N(\XN, \mu_{\tau/N})= O\left( \tau^4 N^{-2}\L_N^4 M_{\xi}^2\)
 + O\left(\tau^2 N^{-1/2} \L_N^3 M_{\xi} \right)  +O\(  \tau^2   N^{ -\frac{1}{3}}\)\indic_{\partial} 
\end{multline*} as in \eqref{mutomut}.
Inserting the last two relations into \eqref{rexpf}, since $N\L_N^2 \gg 1$, we obtain that
\begin{multline}\label{relfinale3}
\tau  \Fluct_N(\xi_N) \le  -\hal   N (\Ent[\mueq]-\Ent[\mu_{\tau/N} ]) - 2N^2 \int_{\R^2} \zeta_0 \, d\mu_{\tau/N}
+ \frac{1}{8\pi } \int_{\R^2} |\nab \xi_N|^2\\ - \frac{N^2}{2\pi} \int_{\R^2}\left| \nabla h^{\mu_{\tau/N}}  - \nabla h^{\overline{\mu_{\tau/N}}} \right|^2
+o_N(1)+ \tau O( \ep (1+|\log s|)  + s^2 (1+|\log s|)  ).
  \end{multline}
Combining with the results of Lemma \ref{lem:termesfactorises}, we get in the macroscopic case
\begin{equation*}
\tau  \Fluct_N(\xi_N) \le  \frac{-\tau}{8\pi} \int_{\R^2} \Delta \xi \left( \indic_{\Sigma} + (\log \Delta V)^\Sigma)\right) +o_N(1)+ \tau O( \ep (1+|\log s|)  + s^2 (1+|\log s|)  ).
\end{equation*}
as $N \to \infty $ and $s \to 0$; and in the mesoscopic case
\begin{equation*}
\tau  \Fluct_N(\xi_N) \le o_N(1)+ \tau O( \ep (1+|\log s|)  + s^2 (1+|\log s|)  ).
\end{equation*}
Dividing by $\tau$ and changing $\tau $ into $-\tau$, then letting $N\to \infty$, $s\to 0$  and $\ep \to 0$, we conclude that
$$
\lim_{N\to \infty}\Fluct_N(\xi) = \frac{-1}{8\pi} \int_{\R^2} \Delta \xi \left( \indic_{\Sigma} + (\log \Delta V)^\Sigma)\right)
$$
 in the macroscopic case, and $\lim_{N\to \infty } \Fluct_N(\xi_N)=0$ otherwise.
This concludes the proof.

\appendix
\section{Proof of Proposition \ref{prop:comparaison2}} 
\label{sec:proofcomparaison}
\subsection{A preliminary bound on the potential near the charges} Let $\mu$ be a bounded probability density on $\R^2$, let $N \geq 1$ and $\XN$ be in $(\R^2)^N$.  For any $i = 1, \dots, N$ we let
\begin{equation} \label{def:tH}
\tH_{N, i}^{\mu}(x)  :=  H_N^{\mu}(x) + \log |x-x_i|,
\end{equation}
where $H_N^{\mu}$ was defined in \eqref{def:HNmuABC}.
\begin{lem} 
We have for $i = 1 , \dots, N$
\begin{equation} \label{HNeta}
H_{N, \vecr}^{\mu}  =\begin{cases}
H_N^{\mu} & \text{ outside } \D(x_i, \rr(x_i)) \\
\tH_{N, i}^{\mu} \text{ (up to a constant)} & \text{ in each } \D(x_i, \rr(x_i)).
\end{cases}
\end{equation}

In particular, it holds, for $s \in (0, \hal)$
\begin{equation}\label{idenhh}
\int_{\R^2}|\nab H_{N, s\vecr} ^{\mu}|^2=
\int_{\R^{2}\backslash \cup_{i=1}^N \D(x_i, \rr(x_i) )} |\nab H_{N}^{\mu} |^2 + \sum_{i=1}^N \int_{\D(x_i, \rr(x_i) )} |\nab \tH_{N,i}^{\mu}|^2.
\end{equation}
\end{lem}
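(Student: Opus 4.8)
The plan is to establish the pointwise identity \eqref{HNeta} by a purely local inspection of the truncated potential, and then to obtain \eqref{idenhh} by integrating \eqref{HNeta} over an appropriate partition of $\R^\d$, the additive constants being annihilated by the gradient.

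Two elementary facts are needed. \emph{First}, each truncation is localized: from the representation $\f_\eta(x)=\int_{\R^\d}\g(x-y)\,(\delta_0-\delta_0^{(\eta)})(y)$ recorded just before \eqref{eqhne}, together with the mean-value property of the fundamental solution $\g$ (Newton's theorem, which gives that the potential of the unit uniform charge on $\p\D(0,\eta)$ equals $\g(x)$ for $|x|\ge\eta$ and $\g(\eta)$ for $|x|\le\eta$), one gets that $\f_\eta$ vanishes on $\R^\d\setminus\D(0,\eta)$ and equals $\g(\cdot)-\g(\eta)$ on $\D(0,\eta)$. \emph{Second}, the relevant balls are disjoint: since $\rr(x_i)\le\frac14\min_{j\ne i}|x_i-x_j|$ by \eqref{def:trxi}, for $i\ne j$ one has $|x_i-x_j|\ge 4\max(\rr(x_i),\rr(x_j))>\rr(x_i)+\rr(x_j)$, so the closed balls $\overline{\D(x_i,\rr(x_i))}$ — and a fortiori the smaller $\overline{\D(x_i,s\rr(x_i))}$ for $s\in(0,\hal)$ — are pairwise disjoint.

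For \eqref{HNeta} one argues as follows. By \eqref{def:HNmutrun}, $H_{N,\vecr}^{\mu}=H_N^{\mu}-\sum_j\f_{\rr(x_j)}(\cdot-x_j)$. If $x\notin\bigcup_j\D(x_j,\rr(x_j))$, every $\f_{\rr(x_j)}(x-x_j)$ vanishes by the first fact, hence $H_{N,\vecr}^{\mu}(x)=H_N^{\mu}(x)$. If $x\in\D(x_i,\rr(x_i))$, then by disjointness the terms with $j\ne i$ vanish, while $\f_{\rr(x_i)}(x-x_i)=\g(x-x_i)-\g(\rr(x_i))$; since $H_N^{\mu}=\tH_{N,i}^{\mu}+\g(\cdot-x_i)$ on $\D(x_i,\rr(x_i))$ by \eqref{def:tH}, one obtains $H_{N,\vecr}^{\mu}=\tH_{N,i}^{\mu}+\g(\rr(x_i))$ there, that is $\tH_{N,i}^{\mu}$ up to the additive constant $\g(\rr(x_i))$. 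Taking gradients (the constant disappears) and integrating over the partition $\R^\d=\big(\R^\d\setminus\bigcup_i\D(x_i,\rr(x_i))\big)\cup\bigcup_i\D(x_i,\rr(x_i))$ then yields $\int_{\R^\d}|\nab H_{N,\vecr}^{\mu}|^2=\int_{\R^\d\setminus\bigcup_i\D(x_i,\rr(x_i))}|\nab H_N^{\mu}|^2+\sum_i\int_{\D(x_i,\rr(x_i))}|\nab\tH_{N,i}^{\mu}|^2$, which is \eqref{idenhh}; running the same case analysis with the truncation radii $s\rr(x_i)$ in place of $\rr(x_i)$ gives the identity with $H_{N,s\vecr}^{\mu}$ and the balls taken at that scale.

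I do not anticipate any genuine difficulty in this lemma. The only step calling for a little honest work is the first fact — pinning down the support and the explicit form of $\f_\eta$ on and off $\D(0,\eta)$ via the mean-value property — after which everything is bookkeeping: identifying which truncation is active on which region, and keeping track of the additive constants and of the truncation scale consistently.
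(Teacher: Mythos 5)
Your proof is correct and follows the same route as the paper's own (extremely terse) argument, which simply observes that \eqref{HNeta} follows from the definition \eqref{def:HNmutrun} of the truncated potential together with the disjointness of the disks $\D(x_i,\rr(x_i))$, and that \eqref{idenhh} is an immediate consequence. Your closing remark that the balls must be taken at the radius matching the truncation is well taken: as printed, the right-hand side of \eqref{idenhh} uses $\D(x_i,\rr(x_i))$ while the left-hand side truncates at $s\vecr$, which is a slip in the statement, and your version with the radii $s\rr(x_i)$ on both sides is the correct one.
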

\begin{proof}
The first point follows from \eqref{def:HNmutrun} and the fact that  the disks $\D(x_i,\rr(x_i))$ are disjoint by definition. The second point is a straightforward consequence of the first one.
\end{proof}

We let for $i=1, \dots, N$
\begin{equation} \label{def:lambdai}
\lambda_i(\XN, \mu):= \int_{ \D(x_i, \rr(x_i) )} |\nab \tH_{N,i}^\mu|^2.
\end{equation}
\comTT{We also recall that truncated potentials have been defined in \eqref{def:HNmutrun}.}
\begin{lem} We have for $i = 1, \dots, N$
\begin{equation}\label{contrhi}
\|\nab \tH_{N,i}^\mu \|_{L^\infty\left(\D\left(x_i, \hal \rr(x_i) \right)\right) }\le C  \sqrt{N} \left( \|\mu\|_{L^{\infty}} \sqrt{N} \rr(x_i) + \frac{1}{\sqrt{N} \rr(x_i)}  \sqrt{\lambda_i(\XN,\mu)} \right),
\end{equation}
for some universal constant $C$.
\end{lem}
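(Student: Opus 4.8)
The plan is to use the fact that, inside the disc $\D(x_i,\rr(x_i))$, the function $\tH_{N,i}^\mu$ solves a Poisson equation whose right-hand side is controlled by $N\|\mu\|_{L^\infty}$ alone, and then to run a scale-invariant interior gradient estimate in which the $L^2$ quantity $\lambda_i(\XN,\mu)$ enters only through the harmonic part. First I would record the equation: writing $\tilde r:=\rr(x_i)$ and recalling from \eqref{def:trxi} that $\tilde r\le\tfrac14\min_{j\neq i}|x_i-x_j|$, no point $x_j$ with $j\neq i$ lies in $\D(x_i,\tilde r)$, so by \eqref{def:tH}, \eqref{rem:aproposdeHNeta} and $-\Delta\g=\cd\delta_0$,
\[
-\Delta\tH_{N,i}^\mu=\cd\Big(\sum_{j\neq i}\delta_{x_j}-N\mu\Big)=-\cd N\mu\qquad\text{in }\D(x_i,\tilde r),
\]
hence $\|\Delta\tH_{N,i}^\mu\|_{L^\infty(\D(x_i,\tilde r))}\le\cd N\|\mu\|_{L^\infty}$; in particular $\tH_{N,i}^\mu$ is smooth there and $\lambda_i(\XN,\mu)$ in \eqref{def:lambdai} is finite.

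Next I would rescale: set $u(z):=\tH_{N,i}^\mu(x_i+\tilde r z)$ for $z\in\D(0,1)$, so that $\|\Delta u\|_{L^\infty(\D(0,1))}\le\cd N\|\mu\|_{L^\infty}\tilde r^2=:M$, and by the change of variables $y=x_i+\tilde r z$ and the definition of $\lambda_i$, $\int_{\D(0,1)}|\nab u|^2=\int_{\D(x_i,\tilde r)}|\nab\tH_{N,i}^\mu|^2=\lambda_i(\XN,\mu)$. The core step is the elementary, scale-invariant interior estimate
\[
\|\nab u\|_{L^\infty(\D(0,1/2))}\le C\Big(\|\Delta u\|_{L^\infty(\D(0,1))}+\|\nab u\|_{L^2(\D(0,1))}\Big),
\]
which I would prove by splitting $u=u_1+u_2$ on $\D(0,1)$, with $u_1$ the logarithmic potential of $\Delta u\,\indic_{\D(0,1)}$ (so $\|\nab u_1\|_{L^\infty(\D(0,1))}\le C\|\Delta u\|_{L^\infty(\D(0,1))}$ from the explicit kernel, using $\int_{\D(0,1)}|z-w|^{-1}\,dw\le C$), and $u_2:=u-u_1$ harmonic; since each component of $\nab u_2$ is harmonic, the mean-value property gives $\|\nab u_2\|_{L^\infty(\D(0,1/2))}\le C\|\nab u_2\|_{L^2(\D(0,3/4))}\le C\big(\|\nab u\|_{L^2(\D(0,1))}+\|\nab u_1\|_{L^2(\D(0,1))}\big)$, and $\|\nab u_1\|_{L^2(\D(0,1))}\le C\|\Delta u\|_{L^\infty(\D(0,1))}$.

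Finally, undoing the scaling through $\nab u(z)=\tilde r\,(\nab\tH_{N,i}^\mu)(x_i+\tilde r z)$, the displayed estimate becomes $\tilde r\,\|\nab\tH_{N,i}^\mu\|_{L^\infty(\D(x_i,\tilde r/2))}\le C\big(M+\sqrt{\lambda_i(\XN,\mu)}\big)=C\big(\cd N\|\mu\|_{L^\infty}\tilde r^2+\sqrt{\lambda_i(\XN,\mu)}\big)$, that is
\[
\|\nab\tH_{N,i}^\mu\|_{L^\infty(\D(x_i,\tilde r/2))}\le C\Big(N\|\mu\|_{L^\infty}\tilde r+\tfrac1{\tilde r}\sqrt{\lambda_i(\XN,\mu)}\Big)=C\sqrt N\Big(\|\mu\|_{L^\infty}\sqrt N\,\tilde r+\tfrac1{\sqrt N\,\tilde r}\sqrt{\lambda_i(\XN,\mu)}\Big),
\]
which is exactly \eqref{contrhi} once we recall $\tilde r=\rr(x_i)$. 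The only point requiring any care is the scale-invariant interior estimate above — in particular performing the harmonic split on a slightly larger concentric disc so that the mean-value bound on $\D(0,1/2)$ only uses data on $\D(0,3/4)$; everything else is bookkeeping of scaling factors.
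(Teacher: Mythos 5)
Your proof is correct and follows essentially the same strategy as the paper: split $\tH_{N,i}^\mu$ on $\D(x_i,\rr(x_i))$ into a harmonic part, whose gradient is bounded on the half-disc by interior estimates in terms of the $L^2$ energy $\lambda_i$, plus a particular solution of the Poisson equation with right-hand side $O(N\|\mu\|_{L^\infty})$, whose gradient is bounded in $L^\infty$. The only (immaterial) differences are that the paper takes the particular solution with zero boundary data and invokes $W^{2,p}$ estimates plus Sobolev embedding where you use the explicit Newtonian kernel, and it controls the harmonic part's energy by the Dirichlet principle where you use the triangle inequality.
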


\begin{proof}
We exploit the fact that $\tH_{N,i}^\mu$ is almost harmonic in each $\D(x_i, \rr(x_i))$. Since the disks $\D(x_i, \rr(x_i))$ are disjoint, $\tH_{N,i}^\mu$ satisfies
\begin{equation*}
-\Delta \tH_{N,i}^\mu = -2\pi N \mu \quad \text{in} \ \D(x_i, \rr(x_i)).
\end{equation*}
We may thus write $\tH_{N,i}^\mu$ as $\tH_{N,i}^\mu=u+v$ where
\begin{equation*}
\left\lbrace \begin{array}{ll}
\Delta u=0& \text{in} \ \D(x_i, \rr(x_i))\\
u=\tH_{N,i}^\mu & \text{on} \ \p \D(x_i, \rr(x_i)),
\end{array}
\right.
\end{equation*}
\begin{equation*}
\left\lbrace \begin{array}{ll}
-\Delta v=- 2\pi N \mu & \text{in} \ \D(x_i, \rr(x_i))\\
v=0& \text{on} \ \p \D(x_i, \rr(x_i)).
\end{array}\right.
\end{equation*}
Standard regularity estimates for harmonic functions yield that
\begin{equation}\label{regu}
\|\nab u\|_{L^\infty \left( \D\left(x_i, \hal  \rr(x_i)\right)\right)} \leq C \left( \frac{1}{r(x_i)^2} \int_{\D\left(x_i, \rr(x_i)\right)} |\nab \tH_{N,i}^\mu |^2\right)^{\hal} \leq C\frac{1}{\rr(x_i)} \sqrt{\lambda_i(\XN, \mu) }
\end{equation}
where $C$ is universal.
On the other hand, setting $w(x)=v(x_i+ \rr(x_i) x)$, we have that
\begin{equation*}
\left\lbrace \begin{array}{ll}
\Delta w = 2\pi N \rr(x_i)^2  \mu\left(x_i + \rr(x_i) x\right)   & \text{in} \ \D(0,1)
\\
w=0 & \text{on} \ \p \D(0,1).
\end{array}\right.
\end{equation*}
Since $\mu$ is in $L^{\infty}$, standard elliptic regularity estimates yield the bound
\begin{equation*}
\|w\|_{W^{2, p}\left(\D(0,1)\right)} \le C_p N  \|\mu\|_{L^{\infty}} \rr(x_i)^2
\end{equation*}
for all $p<\infty$ (with a constant $C_p$ depending only on $p$), and by Sobolev embedding we deduce
\begin{equation*}
\|\nab w\|_{L^\infty(\D(0,1))} \le C N  \|\mu\|_{L^{\infty}} \rr(x_i)^2,
\end{equation*}
which implies in turn
\begin{equation}\label{nabv2d}
\|\nab v \|_{L^\infty \left(\D(x_i,  \rr(x_i))\right) } \leq C N  \|\mu\|_{L^{\infty}} \rr(x_i).
\end{equation}
Combining \eqref{regu} and \eqref{nabv2d} we obtain \eqref{contrhi}.
\end{proof}

\subsection{Transporting electric fields}
\label{sec:transportedfields}
Let $\mu, \nu$ be two bounded probability densities and let $\Phi$ be a bi-Lipschitz map transporting $\mu$ onto $\nu$. For any $\XN \in (\R^2)^N$ we let $\YN := (\Phi(x_1), \dots, \Phi(x_N))$ and we define the transported electric fields $E$, $E_i$ ($i=1, \dots, N$) as
\begin{equation}
\label{defiE} 
\begin{cases}
E := & (D\Phi\circ\Phi^{-1})^T\nab H_N^{\mu} \circ\Phi^{-1}|\det D\Phi^{-1}|\\ 
 E_i := & ( D \Phi\circ \Phi^{-1})^T\nab \tH_{N,i}^{\mu} \circ \Phi^{-1}|\det D\Phi^{-1}|.
 \end{cases}
\end{equation} 
In particular, we may observe that $E = \nab H_N^{\mu}$, and $E_i = \nab \tH_{N,i}^{\mu}$ for any $i$, on the interior of the set $\{ \Phi \equiv \id \}$.

The following lemma expresses the fact that the transported electric fields are compatible with the point configuration $\YN$ and the background measure $\nu$.
\begin{lem}\label{lem:divE}
We have
\begin{equation*}
\div E = \div \nabla H_N^{\nu}
\end{equation*}
and for any $i =1, \dots, N$
\begin{equation*}
 \div E_i = \div \nabla \tH_{N,i}^{\nu}.
\end{equation*}
\end{lem}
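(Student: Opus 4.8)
The plan is to verify both identities by a direct computation of the distributional divergence of the transported fields, using the change-of-variables formula for push-forwards and the defining PDEs satisfied by $H_N^{\mu}$ and $\tH_{N,i}^{\mu}$. The key observation is that the operation $E \mapsto (D\Phi \circ \Phi^{-1})^T E \circ \Phi^{-1} |\det D\Phi^{-1}|$ is precisely the one that makes $\div E$ transform as a push-forward of measures: if $-\div(\nabla H) = \mathsf{c_2}\,\sigma$ for some measure $\sigma$, then the transported field $E$ defined as in \eqref{defiE} satisfies $-\div E = \mathsf{c_2}\, \Phi\#\sigma$ in the sense of distributions. Indeed, for a test function $\chi$, one writes $\int \nabla\chi \cdot E = \int \nabla\chi \cdot (D\Phi\circ\Phi^{-1})^T \nabla H \circ \Phi^{-1} |\det D\Phi^{-1}|$, and changing variables $x = \Phi(y)$ turns this into $\int (D\Phi)^T(\nabla\chi\circ\Phi)\cdot \nabla H = \int \nabla(\chi\circ\Phi)\cdot \nabla H$, which by the PDE for $H$ equals (a constant times) $\int (\chi\circ\Phi)\,d\sigma = \int \chi \, d(\Phi\#\sigma)$.

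First I would record this general push-forward lemma for the divergence, being careful about the fact that $\Phi$ is only bi-Lipschitz, so that the chain rule $\nabla(\chi\circ\Phi) = (D\Phi)^T(\nabla\chi\circ\Phi)$ holds almost everywhere and the change of variables is justified by the area formula for Lipschitz maps. Then I would apply it twice. For $E$: since $-\Delta H_N^{\mu} = \mathsf{c_2}\big(\sum_i \delta_{x_i} - N\mu\big)$ by \eqref{rem:aproposdeHNeta}, and $\Phi$ transports $\mu$ onto $\nu$ while sending each $x_i$ to $\Phi(x_i)$, the push-forward of $\sum_i\delta_{x_i} - N\mu$ under $\Phi$ is exactly $\sum_i \delta_{\Phi(x_i)} - N\nu$, which is $-\frac{1}{\mathsf{c_2}}\Delta H_N^{\nu}$. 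Hence $\div E = \div \nabla H_N^{\nu}$. For $E_i$: the function $\tH_{N,i}^{\mu} = H_N^{\mu} + \log|x - x_i|$ satisfies $-\Delta \tH_{N,i}^{\mu} = \mathsf{c_2}\big(\sum_{j} \delta_{x_j} - N\mu\big) - \mathsf{c_2}\delta_{x_i} = \mathsf{c_2}\big(\sum_{j\neq i}\delta_{x_j} - N\mu\big)$, i.e. the $\delta_{x_i}$ is removed. Pushing this measure forward under $\Phi$ gives $\sum_{j\neq i}\delta_{\Phi(x_j)} - N\nu$, which is exactly $-\frac{1}{\mathsf{c_2}}\Delta \tH_{N,i}^{\nu}$ (recall $\tH_{N,i}^{\nu}$ is built from $\YN$ with its $i$-th charge removed). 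This yields $\div E_i = \div \nabla \tH_{N,i}^{\nu}$.

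The only genuine subtlety — and the step I expect to require the most care — is the low regularity of $\Phi$: it is merely $C^{0,1}$, so $D\Phi$ is only $L^\infty$ and defined almost everywhere, and one must make sure that (i) the expression $(D\Phi\circ\Phi^{-1})^T \nabla H \circ \Phi^{-1}|\det D\Phi^{-1}|$ defines a legitimate $L^1_{\mathrm{loc}}$ (in fact $L^2_{\mathrm{loc}}$) vector field whose distributional divergence makes sense, and (ii) the change of variables $\int g(\Phi(y))\,|\det D\Phi(y)|\,dy = \int g(x)\,dx$ is valid for the bi-Lipschitz $\Phi$. Both are standard: the area formula for Lipschitz maps handles (ii), and $\|\nabla H_N^{\mu}\|_{L^2_{\mathrm{loc}}}$ together with $\|D\Phi\|_{L^\infty}$, $\|D\Phi^{-1}\|_{L^\infty}$ bounded (which holds since $\|\psi\|_{C^{0,1}}\le\frac12$) gives (i). Near each charge $x_i$ one uses that $\tH_{N,i}^{\mu}$ is in $H^1_{\mathrm{loc}}$ locally, so the products are still locally integrable. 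I would also note that on the interior of $\{\Phi\equiv\id\}$ one has $D\Phi = \id$ and $\det D\Phi = 1$, so indeed $E = \nabla H_N^{\mu}$ and $E_i = \nabla\tH_{N,i}^{\mu}$ there, as already observed in the text; this is consistent with the divergence identities and will be used later to glue the transported fields.
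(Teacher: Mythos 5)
Your proof is correct and follows essentially the same route as the paper: both arguments test the transported field against $\nabla(\chi\circ\Phi)$, change variables via the area formula, and use the chain rule $\nabla(\chi\circ\Phi)=(D\Phi)^T(\nabla\chi\circ\Phi)$ to identify $\div E$ with the push-forward of $-\frac{1}{\mathsf{c_2}}\Delta H_N^{\mu}$, which equals $\sum_i\delta_{\Phi(x_i)}-N\nu$ (respectively with the $i$-th Dirac mass removed for $E_i$). Your additional remarks on the bi-Lipschitz regularity of $\Phi$ and the local integrability of the transported fields are correct refinements of details the paper leaves implicit.
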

\begin{proof}
Let $\phi$ be a smooth test-function, and let $h, f$ be such that $\Delta h = f$ (in the distributional sense).  We have $-\int \nab h \cdot \nab \phi=\int f \phi$ and so
changing variables, we find
\begin{equation*}
- \int \nab h \circ \Phi^{-1} \cdot \nab \phi \circ \Phi^{-1} |\det D\Phi^{-1}| = \int (\phi \circ \Phi^{-1}  ) (f \circ \Phi^{-1}) |\det D\Phi^{-1}|,
\end{equation*}
and writing $\nab \phi \circ \Phi^{-1} = (D\Phi\circ \Phi^{-1} )^T \nab( \phi \circ \Phi^{-1})$ we get
\begin{equation*}
-\int \nab h \circ \Phi^{-1}   \cdot ( D\Phi\circ \Phi^{-1})^T  \nab( \phi \circ \Phi^{-1})  |\det D\Phi^{-1}|= \int \phi \circ \Phi^{-1}  f\circ \Phi^{-1} |\det D\Phi^{-1}|.
\end{equation*}
Since this is true for any $\phi \circ \Phi^{-1}$ with $\phi $ smooth enough, we deduce that in the sense of distributions, we have
\begin{equation*}
\div \left( (D\Phi\circ \Phi^{-1})^T \nab h\circ \Phi^{-1} |\det D \Phi^{-1}| \right)= f \circ \Phi^{-1} |\det D \Phi^{-1}|.
\end{equation*}
Applying this to $h= H_N^{\mu}$ (respectively $\tH_{N,i}^{\mu}$) and $f= 2\pi \left( \sum_i \delta_{x_i} -N \mu\right)$, and using that $\mu = (\det D \Phi) (\nu \circ \Phi)$,  we obtain that
\begin{equation*}
\div E=  2\pi \left(\sum_{i=1}^N\delta_{\Phi(x_i) } - N \nu \right) \text{ and } \div E_i= 2\pi \left(\sum_{j \neq i}\delta_{ \Phi(x_j)} - N  \nu  \right)
\end{equation*}
and  the claim follows.
\end{proof}

\subsection{Proof of Proposition \ref{prop:comparaison2}}
\begin{proof}[Proof of Proposition \ref{prop:comparaison2}]
In the sequel, $\vec{\eta}$ denotes $s\vec{\rr}$ for some $s\in (0,\hal)$.
\setcounter{step}{0}
\begin{step}[Splitting the comparison.]
Applying Proposition \ref{prop:monoto}  to $\YN$ and $\nu$ yields
\begin{equation*}\label{fmueqt2}
F_N(\YN,\nu)=\frac{1}{2\pi} \int_{\R^2} |\nab H_{N,\veta}^{\nu}|^2
+\sum_{i=1}^N \log \eta_i+  2N  \sum_{i=1}^N \int \f_{\eta_i} (x-x_i) d\nu(x).
\end{equation*}
Let $E, E_i$ be the transported electric fields  as  in \eqref{defiE}. We define $E_{\veta}$ as the vector field
\begin{equation} \label{def:Eeta}
E_{\veta} = \begin{cases}
E_i & \text{ on } \D(y_i, \eta_i) \\
E & \text{ outside } \cup_{i =1}^N \D(y_i, \eta_i),
\end{cases}
\end{equation}where we let $y_i=\Phi(x_i)$.
The identity \eqref{HNeta} combined with Lemma \ref{lem:divE} shows that 
$$\div E_{\vec{\eta}} = \div \nab H_{N,\vec{\eta}}^{\nu}\ \text{ in $\R^2$}.$$
By $L^2$ projection property, cf. \cite[Proof of Prop. 5.12]{serfatyZur}, it implies
\begin{equation*} \label{minimaliteenergielocale}
\int_{\R^2} |\nab H_{N,\veta}^{\nu}|^2 \leq \int_{\R^2}  |E_{\veta}|^2.
\end{equation*}
Also, by definition of $\Phi$ and of the set $U_N$ we have
\begin{equation*}
E_{\veta} = \nab H_{N,\veta}^{\mu}  \  \text{ in }  \R^2 \setminus U_N.
\end{equation*}

Applying Proposition \ref{prop:monoto} to $\XN $ and $\mu$ yields
\begin{equation*}
F^{\mu}_N(\XN)=\frac{1}{2\pi } \int_{\R^2} |\nab H_{N,\veta}^{\mu}|^2  +\sum_{i=1}^N \log \eta_i + 2N  \sum_{i=1}^N \int \f_{\eta_i} (x-x_i) d\mu(x)
\end{equation*}
We thus see that
\begin{equation}\label{55}
F^{\nu}_N(\YN) - F^{\mu}_N(\XN) \leq A + B,
\end{equation}
where
\begin{align}
\label{A} A  := & \frac{1}{2\pi} \int_{U_N} \left( |E_{\vec{\eta}}|^2 - |\nab H^{\mu}_{N,\vec{\eta}}|^2 \right) \\
\label{B} B  :=  &  2N  \sum_{i=1}^N \int \f_{\eta_i} (x-y_i) d\nu(x)- 2N  \sum_{i=1}^N \int \f_{\eta_i} (x-x_i) d\mu(x).
\end{align}

We may split the error term $A$ further. First let us write, \footnote{Let us recall that $I_N$ denotes the set $\displaystyle{\{i \in \{1, \dots, N\}, x_i \in U_N\}}$.} using the definition \eqref{def:Eeta}
\begin{equation*}
\int_{U_N} |E_{\vec{\eta}}|^2 =  \int_{U_N \backslash \cup_{i \in I_N} \hD_i }  |E|^2 +\sum_{i \in I_N} \int_{\hD_i} |E_i|^2,
\end{equation*}
where we let $\hD_i := \D(y_i, \eta_i)$. Next, we may write
\begin{equation*}
\int_{U_N \backslash \cup_{i \in I_N} \hD_i }  |E|^2 = \int_{U_N \backslash \cup_{i \in I_N} \bD_i}   |E|^2 + \sum_{i \in I_N} \left(\int_{\bD_i} |E|^2 - \int_{\hD_i} |E|^2\right).
\end{equation*}
where $\bD_i := \Phi(D_i)$, with $D_i := \D(x_i, \eta_i)$. Let us summarize the new notation
\begin{equation} \label{lesDi}
D_i := \D(x_i, \eta_i), \quad \bD_i := \Phi(D_i), \quad \hD_i := \D(y_i, \eta_i).
\end{equation}

Using the same notation, we may write
\begin{equation*}
\sum_{i \in I_N} \int_{\hD_i} |E_i|^2 = \sum_{i \in I_N} \int_{\bD_i} |E_i|^2 + \sum_{i \in I} \left( \int_{\hD_i} |E_i|^2 -  \int_{\bD_i} |E_i|^2 \right).
\end{equation*}
Finally we may split $A$ as $A = \Main + \frac{1}{2\pi} \left(A_1 + A_2\right)$, with the main energy comparison term being defined as
\begin{equation*}
\Main := \frac{1}{2\pi } \left( \int_{U_N \backslash \cup_{i \in I} \bD_i}   |E|^2 +   \sum_{i \in I} \int_{\bD_i} |E_i|^2 - \int_{U_N} |\nab H_{N,\vec{\eta}}^{\mu}|^2\right),
\end{equation*}
and the error terms $A_1, A_2$ as
\begin{equation} \label{def:A1A2}
A_1 := \sum_{i \in I_N} \left(\int_{\bD_i} |E|^2 - \int_{\hD_i} |E|^2\right), \quad
A_2  := \sum_{i \in I_N} \left( \int_{\hD_i} |E_i|^2 -  \int_{\bD_i} |E_i|^2 \right).
\end{equation}
\end{step}

\begin{step}[The main energy term]
We claim that
\begin{equation} \label{mainenergyterm}
\Main =  \Ani +
O\left(\|  \psi\|_{C^{0,1} }^2  \int_{U_N} |\nab H_{N,s\vecr}^{\mu}|^2\right)
\end{equation}
with $\Ani$ as in \eqref{def:Ani} and with an implicit constant independent of $N$.
To prove \eqref{mainenergyterm}, we first use \eqref{HNeta} to split the integral of $|\nab H_{N,s\vecr}^{\mu}|^2$ and we get
\begin{equation*}
2\pi \Main = \int_{U_N \backslash \cup_{i \in I_N} \bD_i}   |E|^2 - \int_{U_N \backslash \cup_{i \in I_N} D_i} |\nabla H_{N}^{\mu}|^2 \\
+ \sum_{i \in I_N} \left( \int_{\bD_i} |E_i|^2 - \int_{D_i} |\nabla \tH_{N,i}^{\mu}|^2\right).
\end{equation*}
Inserting the definition of $E, E_i$ and changing variables we obtain
\begin{multline}\label{2pm}
2\pi \Main = \int_{U_N \backslash \cup_{i \in I_N} D_i} \left(|(D\Phi)^T \nab H_N^{\mu}|^2  |\det D \Phi^{-1}\circ  \Phi| - |\nab H_N^{\mu}|^2\right) \\
+ \sum_{i \in I_N} \int_{D_i}\left( |(D \Phi)^T\nab \tH_{N,i}^{\mu}|^2 |\det D \Phi^{-1}\circ  \Phi| - |\nab \tH_{N,i}^{\mu}|^2\right).
\end{multline}
Next writing $\Phi= \id + \psi$ and linearizing,  yields after some computation
\begin{multline} \label{mainestimate1}
2\pi \Main  = \int_{U_N \backslash \cup_{i \in I_N} D_i} \langle \mathcal A \nab H_N^{\mu}, \nab H_N^{\mu} \rangle  +   \sum_{i\in I_N} \int_{D_i}  \langle \mathcal A \nab \tH_{N,i}^{\mu}, \nab \tH_{N,i}^{\mu} \rangle \\
+ O\left(\|  \psi\|_{C^{0,1} }^2  \int_{U_N} |\nab H_{N,s\vec{\rr}}^{\mu}|^2\right) ,
\end{multline}where $\mathcal A$ is as in Definition \ref{defani}.

\end{step}

\begin{step}[The error term $A_2$]
Using the definition of $E$, $E_i$ and changing variables in \eqref{def:A1A2} as above we obtain
\begin{equation*}
A_2 = \sum_{i \in I_N} \left( \int_{\Phi^{-1}(\hD_i)}  |(D \Phi)^T\nab \tH_{N,i}^{\mu}|^2 |\det D \Phi^{-1}\circ  \Phi| - \int_{D_i}  |(D \Phi)^T\nab \tH_{N,i}^{\mu}|^2 |\det D \Phi^{-1}\circ  \Phi|\right).
\end{equation*}
We have by definition $D_i = \D(x_i, \eta_i)$ and $\Phi^{-1}(\hD_i) = \Phi^{-1}(\D(y_i, \eta_i))$. It is not hard to see that
\begin{equation*}
\D(x_i, \theta_1 \eta_i) \subset \Phi^{-1}(\D(y_i, \eta_i)) \subset \D(x_i, \theta_2 \eta_i),
\end{equation*}
with $\theta_1, \theta_2$ such that $\max(|\theta_1 -1|, |\theta_2-1|) = O \left(\|D\Phi - \id\|_{\infty}\right)$. Let us denote by $\C_i$ the annulus
\begin{equation*}
\C_i := \D(x_i, \theta_2 \eta_i) \backslash \D(x_i, \theta_1 \eta_i).
\end{equation*}
We obtain
\begin{equation*}
|A_2 |\le C (1+\|\psi\|_{C^{0,1}} ) \sum_{i \in I_N} \int_{\C_i} |\nab \tH_{N,i}^{\mu}|^2.
\end{equation*}
We now use \eqref{contrhi} and $\rr(x_i) \le \frac{1}{\sqrt{N}}$ (cf. \eqref{def:trxi}) to bound $|\nab \tH_{N,i}^{\mu}|^2$ by
\begin{equation*}
O \left( N + \frac{1}{\rr(x_i)^2} \lambda_i(\XN, \mu) \right)
\end{equation*}
uniformly on $\C_i$, as soon as $\|\psi\|_{L^\infty}$ is small enough. On the other hand, the area of $\C_i$ is bounded by
$
O\left(\eta_i^2 \|D\psi\|_{\infty} \right),$ so
summing over $i \in I_N$, and recalling that $\eta_i \leq s \rr(x_i) \leq \frac{s}{\sqrt{N}}$ we obtain
\begin{equation*}
A_2 = s^2 O\left( \|D \psi\|_{\infty} \right)  \left(\# I_N + \sum_{i \in I_N} \lambda_i(\XN, \mu) \right).
\end{equation*}
Using the definition of $\lambda_i$ and
  \eqref{idenhh}, we may write 
\begin{equation*}
\sum_{i\in I_N} \lambda_i(\XN,\mu) \le \sum_{i \in I_N} \int_{\D(x_i, \rr(x_i))} |\nab \tH_{N,i}^{\mu}|^2 \leq \int_{U_N} |\nab H_{N, \vec{\rr}}^{\mu}|^2
\end{equation*}
and we obtain
\begin{equation} \label{controleA2}
|A_2| \le C s^2 \|D\psi\|_{L^\infty}  \left(\# I_N +  \int_{U_N} |\nab H_{N, \vec{\rr}}^{\mu}|^2 \right),
\end{equation}
with a constant independent of $s, \psi, N$.
\end{step}

\begin{step}[The error term $A_1$]
Changing variables again in \eqref{def:A1A2} we have
\begin{equation*}
A_1 = \sum_{i \in I_N} \left(\int_{D_i}  |(D \Phi)^T \nab H_{N}^{\mu}|^2 |\det D \Phi^{-1}\circ  \Phi| - \int_{\Phi^{-1}(\hD_i)}  |(D \Phi)^T \nab H_{N}^{\mu}|^2 |\det D \Phi^{-1}\circ  \Phi|\right).
\end{equation*}
For each $i \in I_N$ we may decompose $\nab H_{N}^{\mu}$ on $D_i \cup \Phi^{-1}(\hD_i)$ as
\begin{equation}\label{58b}
\nab H_{N}^{\mu} = \nab \tH_{N,i}^{\mu} + \nab \log |\cdot - x_i|.
\end{equation}
Let us then first study
\begin{equation} \label{def:A1p}
A'_1 := \sum_{i \in I} \left(\int_{D_i} - \int_{\Phi^{-1}(\hD_i)}\right)\left( |(D \Phi)^T \nab \log |\cdot - x_i| |^2 |\det D \Phi^{-1}\circ  \Phi|\right),
\end{equation}
which is the leading order term, carrying the singularity near $x_i$. For shortness, we abuse notation and denote by $\left(\int_{D_i} - \int_{\Phi^{-1}(\hD_i)}\right)$ the difference of the two integrals. We may  write
\begin{equation*}
A'_1 = \left(1 +O(\|D\psi\|_{L^\infty}   )   \right) \sum_{i \in I_N} \left(\int_{D_i} - \int_{\Phi^{-1}(\hD_i)}\right) \left(|\nab \log |\cdot - x_i| |^2\right).
\end{equation*}
Since $\log |\cdot - x_i|$ is harmonic away from $x_i$ we have, using Green's formula
\begin{multline*}
\left(\int_{D_i} - \int_{\Phi^{-1}(\hD_i)}\right) \left(|\nab \log |\cdot - x_i| |^2\right)  \\ = \int_{\partial D_i}  \log |x-x_i| \frac{\partial \log |x-x_i|}{\partial \vec{n}} -  \int_{\partial (\Phi^{-1}(\hD_i))}  \log |x-x_i| \frac{\partial \log |x-x_i|}{\partial \vec{n}},
\end{multline*}
where $\vec{n}$ denotes the outer unit normal. We easily compute
\begin{equation} \label{Ap1easilycompute}
\int_{\partial D_i}  \log |x-x_i| \frac{\partial \log |x-x_i|}{\partial \vec{n}} = 2\pi \log \eta_i,
\end{equation}
and we turn to estimating the integral over $\partial (\Phi^{-1}(\hD_i))$. By a change of variables it is equal to 
$$ 2\pi \log \eta_i + \int_{\partial \mathcal {D}_i}\log |x|\frac{\p \log |x|}{\p \vec{n}}$$
where $\mathcal {D}_i= \frac{1}{\eta_i}\left( \Phi^{-1} (\hD_i)-x_i\right)$. 
Let us denote by $h$ the map
\begin{equation*}
h(x) = \log |x| \frac{x \cdot \vec{n}}{|x|^2}.
\end{equation*}
Changing variables again, we have
\begin{equation*}
\int_{\partial\mathcal{D}_i} h =\frac{1}{\eta_i} \int_{\partial \hD_i} h\left(\frac{\Phi^{-1}(y)  -x_i}{\eta_i}\right) |\det D\Phi^{-1}(y)|.
\end{equation*}

By Taylor expansion,   we may write that if $\hD_i $ does not intersect $\Phi(\partial \Sigma)$, i.e. if $D_i$ does not intersect $\partial \Sigma$, we have  on $ \hD_i$, 
\begin{equation*}
\Phi^{-1}(y) - \Phi^{-1}(y_i) = D \Phi^{-1}(y_i) (y-y_i) + O\left(\|\psi\|_{C^{1,1}}  \eta_i^{2}\right).
\end{equation*}
We also get, uniformly on $\partial \hD_i$
\begin{equation*}
\det |D\Phi^{-1}(y)| = \det|D\Phi^{-1}(y_i)| + O\left( \|\psi\|_{C^{1,1}} \eta_i \right).
\end{equation*}
Indeed, $\psi$ is $C^{1,1}$ in the interior of $\Sigma$ and in its complement.
 Therefore
\begin{equation} \label{avantellipse}
\int_{\partial \mathcal{D}_i} h =\frac{1}{\eta_i} \int_{\partial \hD_i}
h\left(D\Phi^{-1}(y_i)\frac{ y-y_i}{\eta_i} \right)|\det D\Phi^{-1}(y_i)| dy  
+ O \left( \eta_i \|\psi\|_{C^{1,1}}  \right).
\end{equation}
For the discs $\D_i$ such that $\D_i$ intersects $\partial \Sigma$ and the support of $\psi$, we may write  instead 
$$\int_{\partial \mathcal{D}_i} h =\frac{1}{\eta_i} \int_{\partial \hD_i}
h\left(D\Phi^{-1}(y_i)\frac{ y-y_i}{\eta_i} \right)|\det D\Phi^{-1}(y_i)| dy  
+ O \left(  \|\psi\|_{C^{0,1}}  \right).$$

Now, let $\E$ be the ellipse obtained as the image of the disk $\hD_i = \D(y_i, \eta_i)$ by the affine map $y\mapsto D\Phi^{-1}(y_i)\frac{y-y_i}{\eta_i} $. Let $l_2\sqrt{l_1}$ and $\frac{\sqrt{l_1}}{l_2}$ be the length of its axes. The formula for the area of an ellipse yields
\begin{equation*}
\pi l_1 = \pi  |\det D\Phi^{-1}(y_i)|.
\end{equation*}
On the other hand, the eccentricity $l_2$ is such that
\begin{equation*}
|l_2-1| = O \left(\|D\psi\|_{\infty}\right)
\end{equation*}
Making an affine change of variables, we have
\begin{equation*} \label{calculellipse}
\frac{1}{\eta_i}\int_{\partial \hD_i} h( D\Phi^{-1}(y_i)\frac{y-y_i}{\eta_i}  )|\det D\Phi^{-1}(y_i)|dy =  \int_{\partial \E}h(y) dy.
\end{equation*}
Setting $z := \frac{y}{\sqrt{l_1}}$ and parametrizing the ellipse in polar coordinates, we have
\begin{equation*}
 \int_{\partial \E} h(y) dy  =  + 2\pi \hal \log l_1 + \int_{0}^{2\pi}  \hal\frac{  \log \left( l_2^2 \cos ^2\theta+ \frac{1}{l_2^2} \sin^2 \theta \right)  }{\left( l_2^2 \cos ^2\theta+ \frac{1}{l_2^2} \sin^2 \theta \right)  }d \theta.
\end{equation*}
Linearizing near $l_2=1$, one finds that the order $1$ term vanishes and thus
\begin{equation*}
\int_{0}^{2\pi}  \hal\frac{  \log \left( l_2^2 \cos ^2\theta+ \frac{1}{l_2^2} \sin^2 \theta \right)  }{\left( l_2^2 \cos ^2\theta+ \frac{1}{l_2^2} \sin^2 \theta \right)  }d \theta = O\left((l_2-1)^2\right).
\end{equation*}
Using the value found for $l_1$,  we thus get
\begin{multline} \label{apresellipse}
\int_{\partial (\Phi^{-1}(\hD_i))}  \log |x-x_i| \frac{\partial \log |x-x_i|}{\partial \vec{n}}
 \\ =  2 \pi \log \eta_i - \pi \log |\det D\Phi(x_i)| + O\left(   \eta_i \|\psi\|_{C^{1,1}} \right).
\end{multline}
Combining \eqref{Ap1easilycompute}, \eqref{avantellipse} and \eqref{apresellipse} we obtain that
\begin{equation*}
A'_1 =  \pi \sum_{i \in I_N} \log |\det D\Phi(x_i)|  + O\left(  \sum_{i\in I_N}  \eta_i \|\psi\|_{C^{1,1}}   \right) +  O\( \# I_{\partial, \psi}\|\psi\|_{C^{0,1}}\).
\end{equation*}

We now return to $A_1$. Using \eqref{58b}, we see that
\begin{equation*}
A_1 - A'_1 = A_{11} + 2 A_{12},
\end{equation*}
with
\begin{eqnarray*}
A_{11} & := & \sum_{i \in I_N} \left(\int_{D_i} - \int_{\Phi^{-1}(\hD_i)} \right)\left( |(D \Phi)^T \nab \tH_{N,i}^{\mu} |^2 |\det D \Phi^{-1} \circ \Phi |\right)\\
A_{12} & := & \sum_{i \in I_N} \left(\int_{D_i} - \int_{\Phi^{-1}(\hD_i)}\right)\left( \left( (D \Phi)^T \nab \tH_{N,i}^{\mu} \cdot (D \Phi)^T \nab \log |\cdot - x_i|\right) |\det D \Phi^{-1} \circ  \Phi|\right)
\end{eqnarray*}
The error term $A_{11}$ is comparable to the term $A_2$ studied in the previous step of the proof. As for $A_{12}$, we may use the Cauchy-Schwarz inequality and the previous estimates to show that it is of sub-leading order. Finally, combining \eqref{mainenergyterm}, the estimates on $A_1$ and \eqref{controleA2},  we get that
\begin{multline}\label{controleA}
A =  \Ani(\psi, \XN, \mu)  + \hal \sum_{i \in I_N} \log |\det D\Phi(x_i)|  \\+
O\left(\|  \psi\|_{C^{0,1} }^2  \int_{U_N} |\nab H_{N,s\vec{\rr}}^{\mu}|^2\right)
+O\left( s^2 \|\psi\|_{C^{0,1}}  \left(\# I_N +  \int_{U_N} |\nab H_{N, \vec{\rr}}^{\mu}|^2 \right)\right)
\\+O\left(  \sum_{i\in I_N}  \eta_i \|\psi\|_{C^{1,1}}   \right) +  O\( \# I_{\partial, \psi}\|\psi\|_{C^{0,1}}\).
\end{multline} where $A$ is as in \eqref{A}.
\end{step}

\begin{step}[The error term \eqref{B}]
Using the fact that $\Phi$ transports $\mu$ onto $\nu$, we find
\begin{equation}\label{Bcontrole}
B = O \left( s^2 \# I_N  \|\psi\|_{C^{0,1}}  \right).
\end{equation}
\end{step}

\begin{step}[Conclusion]

Combining \eqref{55}, \eqref{controleA} and \eqref{Bcontrole} and linearizing $\log \det$, bounding $\eta_i$ by $\frac{s}{\sqrt{N}}$, we obtain the result of  Proposition  \ref{prop:comparaison2}.
\end{step}

\end{proof}

\section{Auxiliary results}\label{app2}
\subsection{Proof of Lemma \ref{lem:splitting}} \label{sec:preuvelemsplitting}
\begin{proof}[Proof of Lemma \ref{lem:splitting}]
Denoting $\triangle $ the diagonal in $\R^\d\times \R^\d$ we may write 
\begin{multline} \label{finh}
 \HN(\XN)  =  \sum_{i \neq j} \g(x_i- x_j) + N \sum_{i=1}^N V(x_i)\\
  =   \iint_{\triangle^c} \g(x-y)\left(\sum_{i=1}^N \delta_{x_i}\right) (x)  \left(\sum_{i=1}^N \delta_{x_i}\right) (y)+ N \int_{\R^\d} V \left(\sum_{i=1}^N \delta_{x_i}\right)(x) \\
 =  N^2 \iint_{\triangle^c}  \g(x-y) d\mueq(x) d\mueq(y) + N^2 \int_{\R^\d} V d\mueq
\\   + 2N \iint_{\triangle^c}  \g(x-y)d\mueq(x) d\fluct_N(y) + N \int_{\R^\d} V d\fluct_N \\ 
 + \iint_{\triangle^c} \g(x-y)d\fluct_N(x)d\fluct_N(y).
\end{multline}

We now recall that $\zeta_0$ was defined in \eqref{zeta} 
and that $\zeta_0=0$  in $\Sigma$.
With the help of this we may rewrite the medium line in the right-hand side of \eqref{finh} as  
\begin{multline*}
2N \iint_{\triangle^c} \g(x-y)d\mueq(x) d\fluct_N(y) + N \int_{\R^\d} V d\fluct_N \\
 = 2N  \int_{\R^\d} \left(h^{ \mueq} + \frac{V}{2}\right) d\fluct_N = 2N  \int_{\R^\d} (\zeta_0 + c) d\fluct_N \\
 = 2N \int_{\R^\d} \zeta_0    \Big(\sum_{i=1}^N \delta_{x_i}- N d\mueq\Big)= 2N \sum_{i=1}^N \zeta_0(x_i).
\end{multline*}
The last equalities are   due to the facts that  $\zeta_0 \equiv 0$ on the support of $\mueq$ and that  $\int \fluct_N=0$ since $\mueq$ is a  probability measure.   We also have to notice that since  $\mueq$ is absolutely continuous with respect to the Lebesgue measure, we may include the diagonal  back into the domain of integration.
By that same argument, one may recognize in the first line of the right-hand side of \eqref{finh} the quantity $N^2 \mathcal{I}_V(\mueq)$. 
\end{proof}

\subsection{Proof of Proposition \ref{prop:monoto}} \label{sec:preuvepropmonoto}
\begin{proof}[Proof of Proposition \ref{prop:monoto}]
For the proof, we drop the superscripts $\mu$. 
First we notice that $\int_{\R^\d}|\nab H_{N, \vec{\eta}}|^2 $ is a convergent integral and that 
\be\label{intdoub}\int_{\R^\d}|\nab H_{N, \vec{\eta}}|^2=\cd \iint \g(x-y)\left( \sum_{i=1}^N \delta_{x_i}^{(\eta_i)} -N d\mu\right)(x) \left( \sum_{i=1}^N \delta_{x_i}^{(\eta_i)} -N d \mu\right)(y).\ee
Indeed, we may choose $R$ large enough so that all the points of $\XN$ are contained in the ball $B_R = B(0, R)$.
    By Green's formula  and \eqref{eqhne}, we have
\be\label{greensplit1}
\int_{B_R} |\nabla H_{N,\vec{\eta}}|^2 \\= \int_{\partial B_R} H_{N,\vec{\eta}} \frac{\partial H_N}{\partial \vec{n}} - \cd\int_{B_R} H_{N,\vec{\eta}} \left(   \sum_{i=1}^N \delta_{x_i}^{(\eta_i)}-N d\mu\right)  .
\ee
Since $\int \sum_i \delta_{x_i}-N\mueq=0$, the function $H_N$ decreases like $1/|x|^{\d-1}$ and $\nab H_N$ like $1/|x|^{\d}$ as $|x|\to \infty$, 
hence  the boundary integral tends to $0$ as $R \to \infty$, and  we may write 
$$\int_{\R^\d}|\nab H_{N, \vec{\eta}} |^2= \cd \int_{\R^\d} H_{N,\vec{\eta}} \left(   \sum_{i=1}^N \delta_{x_i}^{(\eta_i)}-N d\mu\right)  $$ and thus by \eqref{eqhne},  \eqref{intdoub} holds.
We may next write 
\begin{multline}
\label{l3}\iint \g(x-y)\left( \sum_{i=1}^N \delta_{x_i}^{(\eta_i)} -N d\mu\right)(x) \left( \sum_{i=1}^N \delta_{x_i}^{(\eta_i)} -N d\mu\right)(y) 
\\
-\iint_{\triangle^c} \g(x-y)\, d\fluct_N(x)\, d\fluct_N(y)
 = \sum_{i=1}^N \g(\eta_i) \\
 + \sum_{i\neq j} \iint \g(x-y) \left(\delta_{x_i}^{(\eta_i)}(x) \delta_{x_j}^{(\eta_j)} (y)-   \delta_{x_i}(x)\delta_{x_j}(y)\right)+2 N\sum_{i=1}^N\iint \g(x-y)\left( \delta_{x_i}-\delta_{x_i}^{(\eta_i)} \right)(x) d\mu(y).\end{multline}
Let us now observe that $\int \g(x-y)\delta_{x_i}^{(\eta_i)} (y)$, the potential generated by $\delta_{x_i}^{(\eta_i)}$ is equal to 
$$
\int \g(x-y) \delta_{x_i}
$$
outside of $B(x_i,\eta_i)$, and is smaller otherwise. Since its  Laplacian is $-\cd \delta_{x_i}^{(\eta_i)}$, a negative measure,  this  is also a superharmonic function, so by the maximum principle, its value at a point $x_j$ is larger or equal to its average on a sphere centered at $x_j$. Moreover, outside $B(x_i,\eta_i)$ it is a harmonic function, so its values are equal to its averages. We deduce from these considerations, and reversing the roles of $i $ and $j$,  that for each $i\neq j$,
$$\int\g(x-y) \delta_{x_i}^{(\eta_i)}(x) \delta_{x_j}^{(\eta_j)}(y) \le \int\g(x-y) \delta_{x_i}(x) \delta_{x_j}^{(\eta_j)}(y)
\le \int\g(x-y) \delta_{x_i} (x)\delta_{x_j}(y)$$
with equality if $B(x_i,\eta_i) \cap B(x_j,\eta_j) = \varnothing.$
We may also obviously  write  
\begin{multline*}
\g(x_i-x_j) - \min (\g(\eta_i), \g(\eta_j))\le 
\int\g(x-y) \delta_{x_i} (x) \delta_{x_j}(y) - \int\g(x-y) \delta_{x_i}^{(\eta_i)}(x) \delta_{x_j}^{(\eta_j)}(y)
\\
\le \g(x_i-x_j) \indic_{ |x_i-x_j |\le \eta_i+\eta_j}.
\end{multline*}
We conclude that the second term  in the right-hand side of \eqref{l3} is nonnegative, equal to  $0$ if all the balls are disjoint,  bounded above by $\sum_{i\neq j} \g(x_i-x_j)  \indic_{ |x_i-x_j |\le \eta_i+\eta_j}$ and below by $\sum_{i \neq j}\left( \g(x_i-x_j) - \min (\g(\eta_i), \g(\eta_j))\right)  \indic_{ |x_i-x_j |\le \eta_i+\eta_j}.$
By the above considerations,  since 
$$\int \g(x-y) \delta_{x_i}^{(\eta_i)}(y) = \int \g(x-y)\delta_{x_i}(y)$$ outside $B(x_i,\eta_i)$, 
we may rewrite the last term in the right-hand side of \eqref{l3} as 
$$2 N\sum_{i=1}^N\int_{B(x_i,\eta_i)} ( \g(x-x_i)- \g(\eta_i)) d\mu(x).$$
Finally, if $\mu\in L^\infty$ then 
\begin{equation}\label{fdmu}
\left|\sum_{i=1}^N \int_{\R^\d}\f_{\eta_i} d\mu\right|\le  C_\d \|\mu\|_{L^\infty} \sum_{i=1}^N\eta_i^2.\end{equation}
Indeed, 
it suffices to observe that 
\begin{equation}\label{intfeta}\int_{B(0,\eta)} \f_{\eta}= \int_0^\eta (\g(r)-\g(\eta))r^{\d-1}\, dr= -\int_0^r \g'(r) r^\d\, dr ,\end{equation}
with an integration by parts (and an abuse of notation, i.e. viewing $\g$ as a function on $\R^\d$) and using  the explicit form of $\g$ it follows that 
\be\label{bfeeta}\int_{B(0,\eta)} |\f_{\eta}|\le C_\d \eta^2.\ee
Combining the above, by the definition \eqref{def:truncation}, we obtain the result. 
\end{proof}

\subsection{Proof of Lemma \ref{lem:contrdist}}
\label{sec:preuvecontrdist}
\def\fae{\f_{\alpha_i,\eta_i}}
\def\faej{\f_{\alpha_j, \eta_j}}

We again drop the $\mu$ superscripts.
For any $\alpha \le \eta$, let us denote $\f_{\alpha,\eta}= \f_{\alpha}-\f_{\eta}$ and note that it vanishes outside $B(0, \eta)$ and 
$$\g(\eta)-\g(\alpha) \le \f_{\alpha ,\eta}\le 0$$ 
while 
\begin{equation}\label{eqfae}
- \Delta \f_{\alpha, \eta}= \cd (\delta_0^{(\eta)}- \delta_0^{(\alpha)}).\end{equation}
 Let us choose $\vec{\eta}$ such that $\eta_i= \eta= N^{-1/\d}$ for each $i$, and $\vec{\alpha}$ such that 
\begin{equation}\label{defal}
\alpha_i=\begin{cases} 
\rr(x_i) \quad &  \text{if} \ B(x_i, N^{-1/\d}) \subset U_N=B(\bar x_N,\L_N) \\
N^{-1/\d} \quad & \text{otherwise}\end{cases}\end{equation}
 Let us denote by $I_N$ the set of $i$'s such that $B(x_i, N^{-1/\d}) \subset U_N$. We recall that  by assumption, $\a_i= \eta_i$ for the points outside $U_N$.

Noting that by \eqref{def:HNmutrun}, we have
$$H_{N,\vec{\eta}}(x) - H_{N,\vec{\alpha}}(x) = \sum_{i\in I_N} \fae(x-x_i),$$
we may compute
\begin{multline*}
T:= \int_{\R^\d}| \nab H_{N,\vec{\eta}}|^2 - \int_{\R^\d}  |\nab H_{N,\vec{\alpha}}|^2
= 2 \int  (\nab H_{N,\vec{\eta}}- \nab H_{N,\vec{\alpha}} ) \cdot \nab H_{N,\vec{\alpha}}+ \int  |\nab H_{N,\vec{\eta}}- \nab H_{N,\vec{\alpha}}|^2 \\
= 2\sum_{i\in I_N } \int \nab \fae(x-x_i)  \cdot \nab H_{N,\vec{\alpha}}+ \sum_{i,j \in I_N} \int  \nab \fae(x-x_i) \cdot \nab \faej (x-x_j).
\end{multline*}
Using an integration by parts and  \eqref{dhne} and \eqref{eqfae}, we obtain 
\begin{multline}\label{ii1i2}
T \\ 
=  2 \cd\sum_{i\in I_N}\int  \fae(x-x_i) \Big( \sum_{j=1}^N \delta_{x_j}^{(\alpha_j)} - N d\mu\Big)(x) 
 +\cd \sum_{i,j\in I_N } \int  \fae(x-x_i) \( \delta_{x_j}^{(\eta_j)}-\delta_{x_j}^{(\alpha_j)} \) (x)\\
= \cd\sum_{i\in I_N} \int  \fae(x-x_i) \( \sum_{j\in I_N} \delta_{x_j}^{(\alpha_j)} + \delta_{x_j}^{(\eta_j)}\) (x) - \cd\sum_{i\in I_N} \int  \fae(x-x_i)  2N d\mueq(x)
.\end{multline}
The first term is nonpositive  since $\fae $ is.  For the diagonal terms, we note that 
  $$  \int  \fae(x-x_i)   \(  \delta_{x_i}^{(\alpha_i)} + \delta_{x_i}^{(\eta_i)}\) (x)    =  - (\g(\alpha_i)-\g(\eta_i))$$ by definition of $\f_{\alpha,\eta}$ and the fact that $\delta_0^{(\alpha)}$ is a measure of mass $1$ on $\partial B(0,\alpha)$.
  For the off diagonal terms,  noting that $|x_i-x_j| \ge 2( \rr(x_i )+ \rr(x_j))$ by definition \eqref{def:trxi}, we may bound 
\begin{multline*} \int  \fae(x-x_i)  \sum_{j\in I_N} ( \delta_{x_j}^{(\alpha_j)} + \delta_{x_j}^{(\eta_j)}) (x)
\le  \int  \fae(x-x_i) \sum_{x_j\  \text{nearest neighbor to}\  x_i}  \delta_{x_j}^{(\rr(x_j))}
\\ \le   -  \g(|x_i-x_j|- \rr(x_i)- \rr(x_j)) +\g(\eta)\le \g(\eta)- \g(\rr(x_i)),\end{multline*} 
where we used the monotonicity of $\g$.

 For the last term in \eqref{ii1i2} we may use \eqref{bfeeta} to bound it by 
$CN \sum_{i\in I_N} \eta_i^2\|\mueq\|_{L^\infty}$.
We have thus obtained  that
\begin{equation}\label{i1}
T \le  -\cd \sum_{i\in I_N}  (\g(\alpha_i)- \g(\eta_i)) + \cd \sum_{i\in  I_N} \g(\eta_i)- \g(\rr(x_i))  
   \\+  C  N^{1-\frac{2}{\d}}  \|\mu\|_{L^\infty}\# I_N     .\end{equation}
Inserting \eqref{contrnbpoints}, we conclude that 
\begin{multline}\label{numer}
\int_{\R^\d} |\nab H_{N,\vec{\eta}}|^2  - \int_{\R^\d} |\nab H_{N,\vec{\alpha}}|^2\\ \le  -\cd \sum_{i\in I_N}  (\g(\alpha_i)- \g(N^{-\frac{1}{\d}})) + \cd \sum_{i \in I_N} \g(N^{-\frac{1}{\d}})- \g(\rr(x_i))  
\\+ C N^{2-\frac{2}{\d}}|U_N|  \|\mueq\|_{L^\infty}^2 + C N^{1-\frac{2}{\d}} \|\mueq\|_{L^\infty} |U_N|^{\frac{\d-2}{2\d}} \|\nab H_{N,\vec{\eta}}\|_{L^2(U_N)}.
\end{multline}
Recalling that $\a_i= \eta_i$ for the points outside $U_N$  we obtain 
\begin{multline*}
\cd \sum_{i\in I_N}  N^{1-\frac{2}{\d}}\(\g(\rr(x_i) N^{1/\d})- \g(1)  \)
\le  \int_{U_N} |\nab H_{N,\vec{\a}}|^2
- \cd \sum_{i\in I_N}  N^{1-\frac{2}{\d}}\(\g(\rr(x_i) N^{1/\d})- \g(1)  \)\\+ C_{\mu_0} N^{1-\frac{2}{\d}}
\( N |U_N| +   N^{1-\frac{2}{\d}} |U_N|^{1-\frac{2}{\d}}+ N^{\frac{2}{\d}-1}\int_{U_N} |\nab H_{N,\vec{\eta}}|^2\)\end{multline*}
In view of  the definitions \eqref{defal} and \eqref{deflocen}, we have 
$$ \int_{U_N} |\nab H_{N,\vec{\a}}|^2- \cd  N^{1-\frac{2}{\d}}\sum_{i\in I_N} \g (\rr(x_i) N^{1/\d} ) 
=N^{1-\frac{2}{\d}} \EnergieLoc_{\vec{\a}}(U_N)$$
and we obtain the result.

\subsection{Proof of Proposition \ref{prop:fluctenergy} and Corollary \ref{coro:nombredepointspresdubord}}
\label{sec:preuvefluctenergy}
\ed{
\begin{proof}[Proof of Proposition \ref{prop:fluctenergy}]

 We may take  $\chi$  a smooth cutoff function equal to $1$ in $ U_N\cap \{\dist (x, \p U_N \ge \delta/2)\}$ (we note that this set contains $\Supp \varphi+B(0,N^{-1/\d})$) and equal to $0$ outside $U_N $, such that $\|\nab \chi\|_{L^\infty}\le C\delta^{-1}$, this way 
 $$
 \| \nab \chi \|_{L^2} \leq C |\p U_N|^{\hal} \delta^{-\hal}.
 $$
Integrating \eqref{dhne} against $\chi$ we get 
\begin{equation}\label{wothci}
\left|\int \chi  \Big( \sum_{i=1}^N \delta_{x_i}^{(\eta_i)} - Nd\mu\Big) \right|\le  C \|\nab \chi\|_{L^2}\|\nab H_{N,\vec{\eta}}^{\mu}\|_{L^2(U_N )}.\end{equation}
Letting $\#I_{\varphi}$ denote the number of balls $B(x_i, N^{-1/\d})$ intersecting the support of $\varphi$, we may write 
\begin{equation}\label{Ip}
\# I_{\varphi}\le \int \chi \sum_{i=1}^N \delta_{x_i}^{(\eta_i)}\le   N\int_{U_N}d \mu+ C\( \frac{|\p U_N|}{\delta}\)^{\hal} \|\nab H_{N,\vec{\eta}}^{\mu} \|_{L^2(U_N)}.\end{equation}
Secondly, in view of \eqref{dhne}, we  have 
\begin{equation}\label{rel3}
\left|\int  \Big( \sum_{i=1}^N \delta_{x_i}^{(\eta_i)} - N\mu\Big) \varphi\right|= \frac{1}{\cd}\left| \int_{\R^\d}\nabla H_{N,\vec{\eta}}^{\mu} \cdot \nab \varphi\right|\le \frac{1}{\cd} \|\nab \varphi\|_{L^2(U_N)} \|\nab H_{N,\vec{\eta}}^{\mu}\|_{L^2(U_N)}.
\end{equation}
Thirdly,
\begin{equation}
\left|\int \left(d\fluct_N- \Big( \sum_{i=1}^N \delta_{x_i}^{(\eta)} - N d\mu\Big) \right) \varphi\right|= 
\left|\int  \Big( \sum_{i=1}^N( \delta_{x_i} - \delta_{x_i}^{(\eta)}) \Big) \varphi\right|\le \# I_\varphi  N^{-\frac{1}{\d}} \|\nab \varphi\|_{L^\infty} .\end{equation}
Combining with \eqref{Ip}, we get the result. Finally, the proof of the estimate \eqref{contrnbpoints} is completely analogous to that of \eqref{Ip}.
\end{proof}

\begin{proof}[Proof of Corollary \ref{coro:nombredepointspresdubord}] 
Let $\chi$ be a nonnegative cutoff function such that $\chi(x)= 1 $ if $x\in \Sigma $ and $\dist(x , \partial \Sigma) \ge \max (2r, N^{- \frac{1}{3}})$, $\chi(x)=0$ if $\dist(x, \partial \Sigma) \le r$ or $ x \notin \Sigma$, and such that $|\nab \chi|\le C \min\( N^{\frac{1}{3}}, \frac{1}{r}\right)$ and $|\chi| \leq 1$.
By definition of $\# I_{\partial}^r$ and $\chi$, we have
$$
\# I_{\partial}^r = \int_{\{\chi=0\}} \sum_{i=1}^N \delta_{x_i}.
$$
Since $\partial \Sigma$ is regular, for any $x_i$ in $\{\chi=0\}$,  if $r$ is small enough, there is at least $1/C$ of the mass of $\delta_{x_i}^{(\eta_i)}$ which still belongs to $\{\chi=0\}$ (where $C$ depends only on $\p \Sigma$), thus
\begin{multline*}
\int_{\{\chi=0\}} \sum_{i=1}^N \delta_{x_i}  \le 3 \int_{\{\chi=0\}} \sum_{i=1}^N \delta_{x_i}^{(\eta_i)} \\
 \le C\left( \int_{\R^{\d}} (1-\chi)  \left(\sum_{i=1}^N \delta_{x_i}^{(\eta_i)} - N  d\mu\right) +\int_{\{1-\chi >0\}} Nd \mu\right)
\\
\le C \int_{\R^{\d}} - \chi \( \sum_{i=1}^N \delta_{x_i}^{(\eta_i)} - N d \mu\) + C|\partial \Sigma| \|\mu\|_{L^\infty} \max (N^{1-\frac{1}{3}}, 2Nr),
\end{multline*}
where we have used that the total mass of $\left( \sum_{i=1}^N \delta_{x_i}^{(\eta_i)} - N d\mu\right)$ is $0$.
On the other hand, \eqref{wothci} gives
$$
\left|\int_{\R^\d} \chi \, \left(\sum_{i=1}^N \delta_{x_i}^{(\eta_i)}- Nd \mu \right) \right| \le C \min \( N^{\frac{1}{6}}, r^{-\hal}\right)
 \|\nab H_{N,\vec{\eta}}^{\mu} \|_{L^2(\Sigma)}.
 $$
  The conclusion follows.
\end{proof}

}

\subsection{Proof of Lemma \ref{lem:termesfactorises} and Lemma \ref{lem:comparerlesentropies}}
\subsubsection{A preliminary result}
Let us recall that for an arbitrary function $f$, $f^{\Sigma}$ is the harmonic extension of $f$ outside $\Sigma$ and let us first prove a preliminary result.
\begin{lem}\label{lemS}
For any $f \in C^{0,1}(\R^\d)$ we have
\begin{equation}\label{claim76}
\int_{\R^\d} f d(\mueqt-\mueq) = -  \frac{t}{\cd \beta} \int_{\R^\d}  f^{\Sigma}\Delta \xi   +  O(\|f\|_{C^{0,1}}   t^2),
\end{equation}where $O$ depends only on the $C^{3,1}$ norms of $V$ and $\xi$, and  the lower bound on $\mueq$ on its support.
\end{lem}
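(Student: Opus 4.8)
I would split according to whether $\mathrm{supp}\,\xi$ lies in the interior of $\Sigma$ or meets $\partial\Sigma$. In the interior cases the statement is immediate: by Remark~\ref{rem:mut}, for $|t|\le\tmax$ one has $\mueqt=\mut$ with $\Sigma_t=\Sigma$, hence $\mueqt-\mueq=-\frac{t}{\cd\beta}\Delta\xi\,\indic_{\Sigma}$; since $\Delta\xi$ is supported in the interior of $\Sigma$ and $f^{\Sigma}=f$ on $\Sigma\cup\partial\Sigma$, we get $\int f\,d(\mueqt-\mueq)=-\frac{t}{\cd\beta}\int_{\R^\d}f^{\Sigma}\Delta\xi$ \emph{exactly}, with no error term. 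So all the content is in the boundary case, for which I use the explicit densities together with the quantitative stability of $\Sigma_t$ from Proposition~\ref{proserser}.

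\textbf{Reduction in the boundary case.} From $d\mueq=\frac{\Delta V}{2\cd}\indic_{\Sigma}$ and $d\mueqt=\frac{1}{2\cd}\bigl(\Delta V-\frac{2t}{\beta}\Delta\xi\bigr)\indic_{\Sigma_t}$ I write
$$\mueqt-\mueq=-\frac{t}{\cd\beta}\Delta\xi\,\indic_{\Sigma_t}+\frac{\Delta V}{2\cd}\bigl(\indic_{\Sigma_t}-\indic_{\Sigma}\bigr).$$
Because $\xi\in C^{3,1}$, $\Delta\xi$ is bounded; and by \eqref{bst}--\eqref{esq} the Hausdorff distance between $\partial\Sigma_t$ and $\partial\Sigma$ is $O(t)$, so $|\Sigma_t\triangle\Sigma|=O(t)$ and the first term contributes $-\frac{t}{\cd\beta}\int_{\Sigma}f\Delta\xi+O(\|f\|_{C^{0,1}}t^2)$. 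Everything now hinges on the second term.

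\textbf{The collar estimate.} Using the normal-graph description $\partial\Sigma_t=\{x+\frac{t}{\beta}\rho_t(x)\vec n(x):x\in\partial\Sigma\}$ from Proposition~\ref{proserser}, the symmetric difference $\Sigma_t\triangle\Sigma$ is a thin collar around $\partial\Sigma$ of signed normal width $\frac{t}{\beta}\rho_t(x)$. Writing $g:=\frac{\Delta V}{2\cd}f$ (Lipschitz near $\partial\Sigma$ with $\|g\|_{C^{0,1}}\preceq\|f\|_{C^{0,1}}$ since $V\in C^{3,1}$), and introducing tubular coordinates $(x,s)\mapsto x+s\vec n(x)$ near $\partial\Sigma$ (well defined, with Jacobian $1+O(|s|)$ because $\partial\Sigma$ is a finite union of $C^{2,1}$ curves by \ref{H4}), a coarea computation gives
$$\int_{\R^\d}g\,\bigl(\indic_{\Sigma_t}-\indic_{\Sigma}\bigr)=\frac{t}{\beta}\int_{\partial\Sigma}g(x)\,\rho_t(x)\,d\mathcal H^{\d-1}(x)+O(\|g\|_{C^{0,1}}t^2),$$
the error absorbing the Lipschitz oscillation of $g$ across a collar of width $O(t)$, the $O(|s|)$ Jacobian correction, and the second-order terms of the graph parametrization. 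Replacing $\rho_t$ by $\rho$ costs a further $O(\|g\|_{C^{0,1}}t^2)$ by \eqref{esq}; inserting $\rho=\frac{2}{\Delta V}[\nab\xi^{\Sigma}]\cdot\vec n$ from \eqref{rho} we have $\frac{\Delta V}{2\cd}\rho=\frac{1}{\cd}[\nab\xi^{\Sigma}]\cdot\vec n$, so the second term equals $\frac{t}{\cd\beta}\int_{\partial\Sigma}f\,[\nab\xi^{\Sigma}]\cdot\vec n\,d\mathcal H^{\d-1}+O(\|f\|_{C^{0,1}}t^2)$.

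\textbf{Conclusion and main obstacle.} Adding the two contributions,
$$\int_{\R^\d}f\,d(\mueqt-\mueq)=-\frac{t}{\cd\beta}\Bigl(\int_{\Sigma}f\Delta\xi-\int_{\partial\Sigma}f\,[\nab\xi^{\Sigma}]\cdot\vec n\,d\mathcal H^{\d-1}\Bigr)+O(\|f\|_{C^{0,1}}t^2),$$
and I finish by identifying the parenthesis with $\int_{\R^\d}f^{\Sigma}\Delta\xi$. This is a short integration by parts: the distributional Laplacian of the harmonic extension is $\Delta(\xi^{\Sigma})=\Delta\xi\,\indic_{\Sigma}-\bigl([\nab\xi^{\Sigma}]\cdot\vec n\bigr)\,d\mathcal H^{\d-1}|_{\partial\Sigma}$, hence $\int f^{\Sigma}\Delta(\xi^{\Sigma})=\int_{\Sigma}f\Delta\xi-\int_{\partial\Sigma}f[\nab\xi^{\Sigma}]\cdot\vec n$ using $f^{\Sigma}=f$ on $\Sigma\cup\partial\Sigma$; and $\int f^{\Sigma}\Delta(\xi^{\Sigma})=\int f^{\Sigma}\Delta\xi$ because their difference is $-\int_{\R^\d\setminus\Sigma}\nab f^{\Sigma}\cdot\nab(\xi-\xi^{\Sigma})$, which vanishes after one more integration by parts since $f^{\Sigma}$ is harmonic outside $\Sigma$ and $\xi-\xi^{\Sigma}=0$ on $\partial\Sigma$. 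Tracking constants, all the implicit $O$'s depend only on $\|V\|_{C^{3,1}}$, $\|\xi\|_{C^{3,1}}$ and $\min_{\Sigma}\mueq$, through Proposition~\ref{proserser} and $\|\Delta V\|_{C^{0,1}}$, as claimed. The one genuinely delicate step is the collar estimate: it requires that $\Sigma_t$ be an $O(t)$ normal perturbation of $\Sigma$ with a controlled profile $\rho_t$, which is exactly the quantitative obstacle-problem stability of \cite{SerSer} and relies on Assumptions \ref{H3}--\ref{H4}; the interior case and the final integration by parts are routine.
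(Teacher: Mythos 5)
Your proof is correct and follows essentially the same route as the paper's: both rest on Proposition \ref{proserser} to reduce the boundary contribution of $\mueqt-\mueq$ to the surface term $\frac{t}{\cd\beta}\int_{\partial\Sigma} f\,[\nab\xi^{\Sigma}]\cdot\vec{n}$, and both then recombine bulk and surface terms via the distributional Laplacian of the harmonic extension $\xi^{\Sigma}$. The only (cosmetic) difference is the order of operations --- the paper first replaces $f$ by $f^{\Sigma}$ at cost $O(\|f\|_{C^{0,1}}t^2)$ and then expands $\int f^{\Sigma}\,d(\mueqt-\mueq)$, whereas you keep $f$ and perform the harmonic-extension identification at the end; your collar/coarea computation is in fact spelled out in more detail than in the paper.
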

\begin{proof}  Since $\mueq$ and $\mueqt$ are supported near $\Sigma$, we may try to replace $f$ by a function which coincides with $f$ in $\Sigma$ and has a convenient behavior outside $\Sigma$. To do so, we first write
\begin{equation} \label{claim76a}
\int_{\R^\d} f d(\mueqt-\mueq) = \int_{\R^\d}  f^{\Sigma}  d (\mueqt-\mueq) + \|f\|_{C^{0,1}} O( t^2).
\end{equation}
Indeed, $f-f^{\Sigma}$ is supported in $\Sigma^c$ and $\mueqt$ is supported in $\Sigma_t$, hence
\begin{equation*}
\int_{\R^\d} (f-f^{\Sigma})  d(\mueqt-\mueq) = \int_{\Sigma_t \backslash \Sigma} (f-f^{\Sigma})  d\mueqt .
\end{equation*}
From Proposition \ref{proserser} we know that $\Sigma_t$ is contained in a $O(t)$-tubular neighborhood of $\Sigma$ with $C\le \frac{\|\nab \xi\|_{L^\infty}}{\mueq}$. Since $f$ and $f^{\Sigma}$ coincide on $\Sigma$ (up to the boundary) and are Lipschitz in $\Sigma^c$, we have $f-f^{\Sigma} = O(\|f\|_{C^{0,1}} t)$ on $\Sigma_t \backslash \Sigma$ as $t \t0$,  we deduce
\begin{equation*}
\left|\int_{\R^\d} (f-f^{\Sigma}) d (\mueqt-\mueq)  \right|\le  C\|f\|_{C^{0,1}} 
\end{equation*}with $C$ depending only on the $C^3$ norms of $V$, $\xi$ and the lower bound on  $\mueq$,
which proves \eqref{claim76a}.
Next, using the result of Proposition \ref{proserser}, we may write 
$$\int_{\R^\d}  f^{\Sigma} d(\mueqt-\mueq)=-\frac{t}{\cd \beta} \int_{\Sigma } f^\Sigma \Delta \xi+\int_{\Sigma_t\backslash \Sigma} f^\Sigma d \mueq  + O(t^2),$$
with 
$$\int_{\Sigma_t\backslash \Sigma} f^\Sigma d \mueq= \frac{t}{\cd\beta} \int_{\p \Sigma}  f^\Sigma \left[\nab \xi^\Sigma\right] \cdot \vec{n}+ O(t^2).$$
By definition of $\Delta \xi^\Sigma$, the result follows.
  \end{proof}
\subsubsection{Proof of Lemma \ref{lem:termesfactorises}}
\label{sec:preuvetermesfactorises}
We now turn to proving Lemma \ref{lem:termesfactorises},  which we recall only needs to be proven in the boundary macroscopic case where $\xi_N= \xi$.
\begin{proof}
Using an integration by parts, the definition of $\zeta_t$ and \eqref{def:mut}, we have
\begin{multline*}
\frac{1}{\cd} \int_{\R^\d}|\nab h^{\mut- \mueqt}|^2 =\int_{\R^\d} (h^{\mut - \mueqt})  d(\mut-  \mueqt)= \i (\zeta_0 -\zetat+c_0-c_t)d (\mut-  \mueqt)\\
= \int_{\R^\d} \zeta_0 \left(d\mueq - \frac{t}{\cd \beta} \Delta \xi\right)- \int_{\R^\d} \zeta_0 d \mueqt + \zetat  \left(d \mueq -\frac{t}{\cd \beta}\Delta\xi\right)\\
=-\frac{t}{\cd \beta}  \int_{\R^\d}(\zeta_0-\zetat)\Delta \xi- \int_{\R^\d} (\zeta_0 d \mueqt+ \zetat d \mueq)
=
-\frac{t}{\cd \beta}  \int_{\R^\d} (\zeta_0-\zetat)\Delta \xi +O(t^3),
\end{multline*}
where the last term is negligible by the same reasoning as above.
On the other hand, integrating by parts and using \eqref{zeta} we get
\begin{equation*}
-\frac{t}{\cd \beta}  \int_{\R^\d} (\zeta_0-\zetat)\Delta \xi = \frac{t}{\beta} \int_{\R^\d} \xi d(\mueq-\mueqt) -\frac{t^2}{\cd\beta^2} \int_{\R^\d} \xi \Delta \xi.
\end{equation*}
Finally, integrating $\int_{\R^\d} \xi \Delta \xi$ by parts, we obtain
\begin{equation}\label{exif}
 \frac{t^2 }{2\cd \beta}\int_{\R^\d} |\nab \xi|^2 - \frac{ \beta  }{2\cd} \int_{\R^\d} |\nab h^{\mut-\mueqt}|^2 = \frac{ t}{2} \int_{\R^\d} \xi d (\mueq-\mueqt) +O(t^3 ).
 \end{equation}
Now we apply \eqref{claim76} to $f = \xi$ and get
\begin{equation*}
 \frac{t^2 }{2\cd\beta}\int_{\R^\d} |\nab \xi|^2 - \frac{ \beta  }{2\cd} \int_{\R^\d} |\nab h^{\mut-\mueqt}|^2 = -\frac{t^2}{2\cd} \int_{\R^\d} \xi^{\Sigma} \Delta \xi^{\Sigma} + O( t^3)
\end{equation*}  which  yields \eqref{251}.

\ed{
Next, we turn to \eqref{252}. Since $\zetaz$ vanishes in $\Sigma$ we have
\begin{equation*}
\i \zetaz d\mueqt = \int_{\Sigma_t \backslash \Sigma} \zetaz d\mueqt.
\end{equation*}
But $\mueqt= \left( \frac{\Delta V}{4\pi}-\frac{t}{2\pi \beta} \Delta \xi\right) \indic_{\Sigma_t}$ hence $\|\mueqt\|_{L^\infty} \le C $ when $|t|<1$.
In view of Proposition \ref{proserser} and \eqref{bzetaA}, we deduce that \eqref{252} holds. }\end{proof}

\subsubsection{Proof of Lemma \ref{lem:comparerlesentropies}}
\label{sec:preuvecomparerlesentropies}
\begin{proof}
In the boundary case we use \eqref{tmut}, i.e. 
\begin{equation}\label{denmut}
\tilde \mueqt = \frac{1}{2\cd} \left( \Delta V - \frac{2t}{\beta} \Delta \xi +u  \right) \mathbf{1}_{\phi_t(\Sigma)}, \quad \mueq = \frac{1}{2\cd} \Delta V \mathbf{1}_{\Sigma}
\end{equation}
with 
$$\|u\|_{L^\infty} \le C t^2 \|\xi\|_{C^{1,1}}^2 + C t^2 \|\xi\|_{C^{2,1}}\|\xi\|_{C^{0,1}}.$$
In the interior case we have \eqref{eq3}. 

We find after an elementary computation in all cases,
\begin{multline*}
\int_{\R^\d} \tilde \mueqt \log \tilde \mueqt - \int_{\R^\d} \mueq \log \mueq \\ = \int_{\R^\d} (\tilde \mueqt - \mueq) \log \Delta V - \frac{t}{\cd \beta} \int_{\phi_t(\Sigma)} (\Delta \xi  +u)  + O\left(t^2 \|\xi\|_{C^{0,1}}+ \|u\|_{L^\infty})|U_N|\right).
\end{multline*}
Since $\phi_t(\Sigma)$ and $\Sigma$ are $O(t\|\xi\|_{C^{0,1}})$-close by Proposition \ref{proserser} and $\Delta \xi$ is continuous, we also have
\begin{equation*}
\int_{\phi_t(\Sigma)} \Delta \xi = \int_{\Sigma} \Delta \xi + O(t).
\end{equation*}
Finally, applying \eqref{claim76} to $\log \Delta V$, which is $C^{0,1}$ in  a neighborhood of $\Sigma$, yields the result.
\end{proof}

\section{Comparison of partition functions - some more detail}
\label{sec:comparaisonmieux}
In this section, we provide a more detailed explanation of how to obtain Proposition \ref{comparaisonmacro} from the results of \cite{ls1} and \cite{loiloc}. It is intended for an hypothetical interested reader who would already have some familiarity with those papers on which we strongly rely.

 \subsection{The macroscopic case}
 \def \M{\mathcal{M}}
Let us explain why the expansion of the partition function $\log \KNbeta(\mu)$ up to order $N$ can be done uniformly (i.e. with a uniform error term $o(N)$) over subsets $K$ of probability densities $\mu$ satisfying some uniform assumptions, as described below.

Let $K \subset \R^2$ be a compact set and let $\M$ be a set of probability densities supported on $K$ such that
\begin{enumerate}
\item There exists $c > 0$ such that the density $\mu$ is bounded below by $c$ on its support for all $\mu$ in $\M$.
\item There exists $\kappa > 0$ and $C_{\kappa} > 0$ such that $\| \mu \|_{C^{0, \kappa}(K)} \leq C_{\kappa}$ for all $\mu$ in $\M$, where $\| \cdot \|_{C^{0, \kappa}(K)}$ denotes the usual Hölder norm on $K$.\end{enumerate}
We want to prove that for all $\epsilon > 0$, there exists $N$ large enough such that for all $\mu, \nu$ in $\M$ we have
\begin{equation} \label{uniformitecompar1}
\left| \log \KNbeta(\mu) - \log \KNbeta(\nu) + N\left(1 - \frac{\beta}{4} \right) \left(\Ent(\mu) - \Ent(\nu) \right) \right| \leq \epsilon N.
\end{equation}
\def \bEmp{\overline{P}_{N}}
\def \config{\mathcal{X}}
\def \Kun{\mathcal{K^0}}
\def \KM{\mathcal{K}_M}
\def \bP{\overline{P}}
\def \bPx{\overline{P}^x}
\def \W{\overline{\mathbb{W}}}
\def \Ww{\mathbb{W}}
\def \ERS{\mathsf{ent}}
\def \bERS{\overline{\mathsf{ent}}}
\def \fbarbeta{\overline{\mathcal{F}}_{\beta}}
\def \Amu{\mathcal{A}_{\mu}}

Let us first observe that if $\M$ is finite then the result follows directly from \cite[Corollary~1.5.]{ls1}, where an expansion of $\log \KNbeta(\mu)$ is proven for any $\mu$ which is bounded below and Hölder. This expansion can be written as
$$
\log \KNbeta(\mu) = - N \min \fbarbeta^{\mu} + o(N),
$$
where $\fbarbeta^{\mu}$ is a functional on the space of random tagged point configurations $\bP \in \probas(\Sigma \times \config)$. Let us briefly describe  these objects  (we refer to \cite{ls1} for details).
\begin{itemize}
\item $\Sigma \times \config$ is the space of “tagged point processes”, where the first component is the “tag” (a point in $\Sigma$) and the second one is a (locally finite) point configuration
\item If $\bP$ is in $\probas(\Sigma \times \config)$ (probability measures on $\Sigma \times \config$) we consider, for any $x \in \Sigma$, the associated random point process $\bPx \in \probas(\config)$.
\item For $m > 0$ we denote by $\Ww^{m}$ the renormalized energy (of random point process) as in \cite{ls1}, computed with respect to a background of intensity $m$. We extend it to a functional on $\probas(\Sigma \times \config)$ by
$$
\W^{\mu}(\bP) := \int_{\Sigma} \Ww^{\mu(x)}(\bPx) dx.
$$
In particular, it is observed in \cite{ls1} that $\W^{\mu}(\bP)$ is finite only if $\bPx$ has intensity $\mu(x)$ for Lebesgue-a.e. $x$.
\item There is a “specific relative entropy” functional $\ERS$ defined on $\probas(\config)$ and extended on $\probas(\Sigma \times \config)$ by
$$
\bERS[\bP | \Pi^1] := \int_{\Sigma} \ERS[\bP^x|\Pi^1] dx
$$
\item For any $m > 0$ there is a functional $\fbeta^m$ defined on $\probas(\config)$ by
$$
\fbeta^m := \frac{\beta}{2} \Ww^m + \ERS[ \cdot | \Pi^1]
$$
 and we extend it on $\probas(\Sigma \times \config)$ by
$$
\fbarbeta^{\mu}(\bP) := \int_{\Sigma} \fbarbeta^{\mu(x)}(\bPx) dx.
$$

\end{itemize}

The scaling property of $\min \fbarbeta^{\mu}$ under variations of $\mu$ allows to express $\log \KNbeta(\mu) - \log \KNbeta(\nu)$ only in terms of $\mu$ and $\nu$, as in \eqref{uniformitecompar1}.

In order to argue for the uniformity of the error term on (infinite) sets $\M$ as above, we need to follow the lines of the proof of \cite{ls1}. The central object is the \textit{tagged empirical field} $\bEmp$, which encodes the microscopic behavior of the particles. It is a random variable with value in $\probas(\Sigma \times \config)$, where $\Sigma$ depends on $\mu$ but we may see $\Sigma$ as included in $K$ for all $\mu$ in $\M$ and not worry about it anymore.

\textbf{Step 1.} \textit{Uniform upper bound.} \ \\
For a fixed $\mu \in \M$, it is proven in \cite[Lemma 4.1.]{ls1} that the law of $\bEmp$ is exponentially tight in $\probas(K \times \config)$. In fact the argument of \cite{ls1} shows that there exists a compact set $\Kun \subset \probas(K \times \config)$ such that
$$\bEmp \in \Kun,\, \PNbeta^{\mu}-\text{almost surely},$$
and the \textit{same compact set} works for any $\mu \in \M$.

For any $P \in \Kun$, for any $\delta > 0$, there exists $\epsilon > 0$ such that
\begin{enumerate}
\item If $\{ \mu_N \}_N$ is a sequence in $\M$ (converging, up to extraction, to $\mu$) and $\XN$ is such that $\bEmp \in B(\bP, \epsilon)$ for $N$ large enough, then
$$\liminf_{N \ti} \frac{1}{N} \FN^{\mu_N}(\XN) \geq \W^{\mu}(\bP) - \delta,$$
this follows from the $\Gamma$-$\liminf$ property.
\item The exponential volume of $B(\bP, \epsilon)$ under the reference measure (in the sense of \cite{ls1}) is less than $- N\bERS[\bP|\Pi^1] + \delta N$. This follows from the definition of $\bERS[\bP|\Pi^1]$.
\end{enumerate}
For a fixed $\delta > 0$, we may cover $\Kun$ by a finite union of such balls $\{B(\bP_i, \epsilon_i)\}_i$. Let $\{\mu_N\}_N$ be a sequence in $\M$ (up to extraction, we may assume that it converges to some $\mu \in \M$). We can then estimate $\log \KNbeta^{\mu_N}$ by covering the domain of integration ($(\R^2)^N)$) by the sets
$$
\Omega_i := \left\lbrace \bEmp \in B(\bP_i, \epsilon_i) \right\rbrace_{i \in I}.
$$
By construction,  on $\Omega_i$ we have  $\FN^{\mu_N} \geq N \W^{\mu}(\bP) - \delta N,$
and moreover the exponential volume of $\Omega_i$ is less than $- N \bERS[\bP|\Pi^1] + \delta N$. Since by definition $\fbarbeta := \frac{\beta}{2} \W^{\mu} + \bERS[\cdot|\Pi^1]$, we deduce that
$$
\limsup_{N \ti} \log \KNbeta^{\mu_N} \leq -N  \fbarbeta^{\mu}(\bP)  + 2 \delta N.
$$
This provides a uniform upper bound on $\log \KNbeta^{\mu}$ in terms of minimization of $\fbarbeta^{\mu}$.

Thus the uniformity of the \textit{upper bound} relies mostly on the $\Gamma$-$\liminf$ connecting $\FN^{\mu}$ and $\W_{\mu}$, which can be proven using standard functional analysis (see \cite[Lemma 3.1]{ls1} or \cite[Proposition 5.2]{PetSer}). In particular, the regularity of $\mu$ plays a negligible role, and we could replace both assumptions on $\M$ by the fact that the probability densities in $\M$ are uniformly bounded.

\textbf{Step 2.} \textit{Uniform lower bound} \ \\
The lower bound on the partition function is obtained in \cite{ls1} by the mean of an “explicit” construction of point configurations. The proof is quite lengthy, and here we only sketch the reasons why the lower bound could be made \textit{uniform} on $\M$.
\def \bmi{\overline{m}_i}

To simplify, for a given $\mu$ the argument goes as follows: first we cut $\Sigma$ (the support of $\mu$) into rectangles $\{K_i\}_i$ of sidelength $\approx \frac{R}{\sqrt{N}}$ (with $R$ large), such that $N_i := N \int_{K_i} d\mu \text{ is an integer}$. It appears that the rectangles will have a uniformly controlled aspect ratio (i.e we can find $C$ uniform such that the sidelengths are in $(R, R + C/R)$) if the lower bound on $\mu$ is uniform on $\M$. We also let $\bmi := \frac{1}{|K_i|} \int_{K_i}  d\mu$. This allows to get from a varying background measure to a piecewise constant one, but this has a cost proportional to
\begin{equation} \label{coutholder}
N C^2_{\kappa} R^3 \left(\frac{R}{\sqrt{N}}\right)^{2 \kappa},
\end{equation}
where $C_{\kappa}$ is a bound on $\|\mu\|_{C^{0, \kappa}}$. It is thus important to have a \textit{uniform} bound on the Hölder norms in order to control this error term uniformly. However, we may observe (this will be of particular importance below) that for $R$ fixed, the error term is $o(N)$ as long as
\begin{equation} \label{Holderpastropgrand}
C_{\kappa}  \ll N^{\kappa/2}.
\end{equation}

Then, roughly speaking, in each $K_i$ we paste a copy of the point process minimizing $\fbeta^{\bmi}$. Minimizers of $\fbeta^{m}$ are rescaled versions of each other (as $m$ varies), and here the scaling factor is uniformly bounded (above and below) on $i \in I$ and $\mu \in \M$ if the lower/upper bounds on $\mu$ are uniform on $\M$ (which is true by assumption). This ensures that the “screening-then-regularization” procedure of \cite{ls1} can be done “uniformly”.

This construction yields a set of $N$-point configurations whose exponential volume is close $N\bERS[\bP|\Pi^1]$ and whose energy is bounded above by $N \W^{\mu}(\bP)$, where $\bP$ is some minimizer of $\fbarbeta^{\mu}$. This yields a lower bound on $\log \KNbeta^{\mu}$ in terms of $\min \fbarbeta^{\mu}$ with an error term which is uniform for $\mu \in \M$.

\subsection{The mesoscopic case}
The paper \cite{loiloc} shows that the analysis of \cite{ls1} can be done at any mesoscopic scale. For a fixed $\mu$, any  $\delta \in (0, \hal)$ and any square $C(z_0, N^{-\delta})$ of sidelength $N^{-\delta}$ centered at $z_0 \in \Sigma$, the contribution of $\log \KNbeta(\mu)$ “due to” the points in  $C(z_0, N^{-\delta})$ is approximately
$$
- N^{1-2\delta} \min \fbarbeta^{\mu(z_0)}  + o\left(N^{1-2\delta} \right).
$$
In order to prove \eqref{comparaisonmeso}, we want to compare $\log \KNbeta(\mu)$ with $\log \KNbeta(\mueqt)$, where $\mueqt$ can be written as
$$
\mueqt = \mu + t \rho\left( \frac{\cdot}{\L_N} \right),
$$
where $\rho$ has mean zero, is supported in $\Sigma$, is bounded below by a positive constant and belongs to some Hölder space $C^{0, \kappa}$. The family of measures
$$\left\lbrace\mu + t \rho\left( \frac{\cdot}{\L_N} \right) \right\rbrace_{t \in [0,1], N\geq 1}$$ is thus uniformly bounded below, and the Hölder norm (on the mesoscopic square) scales like
$$
\left\| \mu + t \rho\left( \frac{\cdot}{\L_N} \right) \right\|_{C^{0, \kappa}(C(0, \L_N))} = O\left( N^{\delta \kappa} \right) \text{ for } t \in [0,1].
$$
The fact that $\delta < \hal$ ensures that \eqref{Holderpastropgrand} is satisfied, and the procedure of the macroscopic case can be applied \textit{uniformly} in the mesoscopic square.

The other step of \cite{loiloc} is to prove that the interior, mesoscopic square of sidelength $\L_N$ and the macroscopic exterior can be decoupled. This relies on a screening argument, which depends only on the exterior part for which the density is constant in $N$ (and so this step is essentially independent on $N$).

\bibliographystyle{alpha}
\bibliography{CLTbib}

\begin{thebibliography}{BBNY17}

\bibitem[AHM11]{ahm}
Y.~Ameur, H.~Hedenmalm, and N.~Makarov.
\newblock Fluctuations of eigenvalues of random normal matrices.
\newblock {\em Duke Math. J}, 159(1):31--81, 2011.

\bibitem[AHM15]{ahm2}
Y.~Ameur, H.~Hedenmalm, and N.~Makarov.
\newblock Random normal matrices and {W}ard identities.
\newblock {\em Ann. Probab.}, 43(3):1157--1201, 2015.

\bibitem[AJ81]{alastueyjancovici}
A.~Alastuey and B.~Jancovici.
\newblock On the classical two-dimensional one-component {C}oulomb plasma.
\newblock {\em Journal de Physique}, 42(1):1--12, 1981.

\bibitem[AOC12]{aoc}
Y.~Ameur and J.~Ortega-Cerd\`a.
\newblock Beurling-{L}andau densities of weighted {F}ekete sets and correlation
  kernel estimates.
\newblock {\em J. Funct. Anal}, 263:1825--1861, 2012.

\bibitem[BBNY16]{BBNY2}
R.~Bauerschmidt, P.~Bourgade, M.~Nikula, and H.-T. Yau.
\newblock The two-dimensional {C}oulomb plasma: quasi-free approximation and
  central limit theorem.
\newblock {\em https://arxiv.org/abs/1609.08582}, 2016.

\bibitem[BBNY17]{BBNY}
R.~Bauerschmidt, P.~Bourgade, M.~Nikula, and H.-T. Yau.
\newblock Local density for two-dimensional one-component plasma.
\newblock {\em Communications in Mathematical Physics}, 356(1):189--230, 2017.

\bibitem[BEY12]{bourgade1d1}
P.~Bourgade, L.~Erd{\"o}s, and H.-T. Yau.
\newblock Bulk universality of general {$\beta$}-ensembles with non-convex
  potential.
\newblock {\em J. Math. Phys.}, 53(9):095221, 19, 2012.

\bibitem[BEY14]{bourgade1d2}
P.~Bourgade, L.~Erd{\"o}s, and H.-T. Yau.
\newblock Universality of general {$\beta$}-ensembles.
\newblock {\em Duke Math. J.}, 163(6):1127--1190, 2014.

\bibitem[BFG13]{bfg}
F.~Bekerman, A.~Figalli, and A.~Guionnet.
\newblock Transport maps for {B}eta-matrix models and universality.
\newblock {\em arXiv preprint arXiv:1311.2315}, 2013.

\bibitem[BG13a]{BorGui2}
G.~Borot and A.~Guionnet.
\newblock Asymptotic expansion of beta matrix models in the multi-cut regime.
\newblock {\em http://arxiv.org/abs/1303.1045}, 03 2013.

\bibitem[BG13b]{BorGui1}
G.~Borot and A.~Guionnet.
\newblock Asymptotic expansion of {$\beta$} matrix models in the one-cut
  regime.
\newblock {\em Comm. Math. Phys.}, 317(2):447--483, 2013.

\bibitem[BL16]{Bekerman:2016qf}
F.~Bekerman and A.~Lodhia.
\newblock Mesoscopic central limit theorem for general $\beta$-ensembles.
\newblock {\em http://arxiv.org/abs/1605.05206}, 2016.

\bibitem[BLS17]{Bekerman:2017fj}
F.~Bekerman, T.~Lebl{\'e}, and S.~Serfaty.
\newblock {CLT} for fluctuations of $\beta$-ensembles with general potential.
\newblock {\em https://arxiv.org/abs/1706.09663}, 2017.

\bibitem[Caf98]{MR1658612}
L.~A. Caffarelli.
\newblock The obstacle problem revisited.
\newblock {\em J. Fourier Anal. Appl.}, 4(4-5):383--402, 1998.

\bibitem[Cha17]{Chatterjee:2017ly}
S.~Chatterjee.
\newblock Rigidity of the three-dimensional hierarchical {C}oulomb gas.
\newblock {\em https://arxiv.org/abs/1708.01965}, 2017.

\bibitem[CHM16]{chafai2016concentration}
D.~Chafai, A.~Hardy, and M.~Maida.
\newblock Concentration for {C}oulomb gases and {C}oulomb transport
  inequalities.
\newblock {\em arXiv preprint arXiv:1610.00980}, 2016.

\bibitem[CMM15]{cunden2015fluctuations}
F.D. Cunden, A.~Maltsev, and F.~Mezzadri.
\newblock Fluctuations in the two-dimensional one-component plasma and
  associated fourth-order phase transition.
\newblock {\em Physical Review E}, 91(6):060105, 2015.

\bibitem[CR76]{caffarelli1976smoothness}
L.A. Caffarelli and N.M. Riviere.
\newblock Smoothness and analyticity of free boundaries in variational
  inequalities.
\newblock {\em Annali della Scuola Normale Superiore di Pisa-Classe di
  Scienze}, 3(2):289--310, 1976.

\bibitem[DL12]{dautray2012mathematical}
R.~Dautray and J.-L. Lions.
\newblock {\em Mathematical Analysis and Numerical Methods for Science and
  Technology: Volume 1 Physical Origins and Classical Methods}.
\newblock Springer Science \&amp; Business Media, 2012.

\bibitem[Fel71]{MR0270403}
W.~Feller.
\newblock {\em An introduction to probability theory and its applications.
  {V}ol. {II}}.
\newblock Second edition. John Wiley \& Sons, Inc., New York-London-Sydney,
  1971.

\bibitem[For10]{forrester}
P.~J. Forrester.
\newblock {\em Log-gases and random matrices}, volume~34 of {\em London
  Mathematical Society Monographs Series}.
\newblock Princeton University Press, Princeton, NJ, 2010.

\bibitem[Fro35]{frostman}
O.~Frostman.
\newblock Potentiel d'\'equilibre et capacit\'e des ensembles avec quelques
  applications \`a la th\'eorie des fonctions.
\newblock {\em Meddelanden Mat. Sem. Univ. Lund 3}, 115 s, 1935.

\bibitem[Gin65]{ginibre}
J.~Ginibre.
\newblock Statistical ensembles of complex, quaternion, and real matrices.
\newblock {\em J. Mathematical Phys}, 6:440--449, 1965.

\bibitem[GP12]{Ghosh:2012qd}
S.~Ghosh and Y.~Peres.
\newblock Rigidity and tolerance in point processes: {G}aussian zeroes and
  {G}inibre eigenvalues.
\newblock {\em http://arxiv.org/abs/1211.2381}, 2012.

\bibitem[GT15]{gilbarg2015elliptic}
D.~Gilbarg and N.S. Trudinger.
\newblock {\em Elliptic partial differential equations of second order}.
\newblock springer, 2015.

\bibitem[JLM93]{jlm}
B.~Jancovici, J.~Lebowitz, and G.~Manificat.
\newblock Large charge fluctuations in classical {C}oulomb systems.
\newblock {\em J. Statist. Phys}, 72(3-4):773--7, 1993.

\bibitem[Joh98]{MR1487983}
K.~Johansson.
\newblock On fluctuations of eigenvalues of random {H}ermitian matrices.
\newblock {\em Duke Math. J.}, 91(1):151--204, 1998.

\bibitem[Kin78]{kinderlehrer1978variational}
D.~Kinderlehrer.
\newblock Variational inequalities and free boundary problems.
\newblock {\em Bulletin of the American Mathematical Society}, 84(1):7--26,
  1978.

\bibitem[Lau83]{laughlin1983anomalous}
R.~B. Laughlin.
\newblock Anomalous quantum hall effect: an incompressible quantum fluid with
  fractionally charged excitations.
\newblock {\em Physical Review Letters}, 50(18):1395, 1983.

\bibitem[Leb17]{loiloc}
T.~Lebl{\'e}.
\newblock Local microscopic behavior for {2D} {C}oulomb gases.
\newblock {\em Probability Theory and Related Fields}, 169(3-4):931--976, 2017.

\bibitem[LLW17]{Lambert:2017kq}
G.~Lambert, M.~Ledoux, and C.~Webb.
\newblock Stein's method for normal approximation of linear statistics of
  beta-ensembles.
\newblock {\em https://arxiv.org/abs/1706.10251}, 06 2017.

\bibitem[LRY16]{lieb2016rigidity}
E.~H. Lieb, N.~Rougerie, and J.~Yngvason.
\newblock Rigidity of the laughlin liquid.
\newblock {\em arXiv preprint arXiv:1609.03818}, 2016.

\bibitem[LS17]{ls1}
T.~Lebl{\'e} and S.~Serfaty.
\newblock Large deviation principle for empirical fields of {L}og and {R}iesz
  gases.
\newblock {\em Inventiones mathematicae}, 210(3):645--757, 2017.

\bibitem[LSZ17]{2D2CP}
T.~Lebl{\'e}, S.~Serfaty, and O.~Zeitouni.
\newblock Large deviations for the two-dimensional two-component plasma.
\newblock {\em Communications in Mathematical Physics}, 350(1):301--360, 2017.

\bibitem[Meh04]{mehta}
M.~L. Mehta.
\newblock {\em Random matrices. Third edition}.
\newblock Elsevier Academic Press, 2004.

\bibitem[NS10]{nazarov2010fluctuations}
F.~Nazarov and M.~Sodin.
\newblock Fluctuations in random complex zeroes: asymptotic normality
  revisited.
\newblock {\em arXiv preprint arXiv:1003.4251}, 2010.

\bibitem[PRN18]{petrache2016equidistribution}
M.~Petrache and S.~Rota-Nodari.
\newblock Equidistribution of jellium energy for coulomb and riesz
  interactions.
\newblock {\em Constructive Approximation}, 47(1):163--210, 2018.

\bibitem[PS15]{PetSer}
M.~Petrache and S.~Serfaty.
\newblock Next order asymptotics and renormalized energy for {R}iesz
  interactions.
\newblock {\em Journal of the Institute of Mathematics of Jussieu},
  FirstView:1--69, 5 2015.

\bibitem[PSU12]{petrosyan2012regularity}
A.~Petrosyan, H.~Shahgholian, and N.N. Uraltseva.
\newblock {\em Regularity of free boundaries in obstacle-type problems}, volume
  136.
\newblock American Mathematical Society Providence (RI), 2012.

\bibitem[PZ17]{paquette2017maximum}
E.~Paquette and O.~Zeitouni.
\newblock The maximum of the cue field.
\newblock {\em International Mathematics Research Notices}, page rnx033, 2017.

\bibitem[RNS14]{nodari2014renormalized}
S.~Rota-Nodari and S.~Serfaty.
\newblock Renormalized energy equidistribution and local charge balance in 2d
  coulomb systems.
\newblock {\em International Mathematics Research Notices}, 2014.

\bibitem[RS15]{RougSer}
N.~Rougerie and S.~Serfaty.
\newblock Higher-dimensional {C}oulomb gases and renormalized energy
  functionals.
\newblock {\em Communications on Pure and Applied Mathematics}, 2015.

\bibitem[RSY14]{rougerie2014quantum}
N.~Rougerie, S.~Serfaty, and J.~Yngvason.
\newblock Quantum hall phases and plasma analogy in rotating trapped bose
  gases.
\newblock {\em Journal of Statistical Physics}, 154(1-2):2--50, 2014.

\bibitem[RV07]{ridervirag}
B.~Rider and B.~Virag.
\newblock The noise in the circular law and the {G}aussian free field.
\newblock {\em Int. Math. Res. Not}, 2, 2007.

\bibitem[RY15]{rougerie2015incompressibility}
N.~Rougerie and J.~Yngvason.
\newblock Incompressibility estimates for the laughlin phase.
\newblock {\em Communications in Mathematical Physics}, 336(3):1109--1140,
  2015.

\bibitem[Ser15]{serfatyZur}
S.~Serfaty.
\newblock Coulomb {G}ases and {G}inzburg-{L}andau {V}ortices.
\newblock {\em Zurich Lectures in Advanced Mathematics, Eur. Math. Soc.}, 2015.

\bibitem[Ser17]{serfaty2017microscopic}
S.~Serfaty.
\newblock Microscopic description of {L}og and {C}oulomb gases.
\newblock {\em arXiv preprint arXiv:1709.04089}, 2017.

\bibitem[Shc13]{scherbi1}
M.~Shcherbina.
\newblock Fluctuations of linear eigenvalue statistics of {$\beta$} matrix
  models in the multi-cut regime.
\newblock {\em J. Stat. Phys.}, 151(6):1004--1034, 2013.

\bibitem[Shc14]{scherbi2}
M.~Shcherbina.
\newblock Change of variables as a method to study general {$\beta$}-models:
  {B}ulk universality.
\newblock {\em J. Math. Phys.}, 55(4):043504, 23, 2014.

\bibitem[She07]{MR2322706}
S.~Sheffield.
\newblock Gaussian free fields for mathematicians.
\newblock {\em Probab. Theory Related Fields}, 139(3-4):521--541, 2007.

\bibitem[SM76]{sm}
R.~Sari and D.~Merlini.
\newblock On the $\nu$-dimensional one-component classical plasma: the
  thermodynamic limit problem revisited.
\newblock {\em J. Statist. Phys.}, 14(2):91--100, 1976.

\bibitem[Sod04]{Sodin:kk}
M.~Sodin.
\newblock Zeroes of gaussian analytic functions.
\newblock {\em https://arxiv.org/abs/math/0410343}, 2004.

\bibitem[SS15a]{ss1d}
E.~Sandier and S.~Serfaty.
\newblock 1{D} {L}og gases and the renormalized energy: Crystallization at
  vanishing temperature.
\newblock {\em Prob. Theor. Rel. Fields}, 162:795--846, 2015.

\bibitem[SS15b]{SS2d}
E.~Sandier and S.~Serfaty.
\newblock 2{D} {C}oulomb gases and the renormalized energy.
\newblock {\em Annals Probab.}, 43:2026--2083, 2015.

\bibitem[SS17]{SerSer}
S.~Serfaty and J.~Serra.
\newblock Quantitative stability of the free boundary in the obstacle problem.
\newblock {\em https://arxiv.org/abs/1708.01490}, 2017.

\bibitem[ST97]{safftotik}
E.~B. Saff and V.~Totik.
\newblock {\em Logarithmic Potentials with External Fields}.
\newblock Grundlehren der mathematischen Wissenchaften \textbf{316},
  Springer-Verlag, Berlin, 1997.

\bibitem[STG99]{stormer1999fractional}
H.~Stormer, D.~Tsui, and A.~Gossard.
\newblock The fractional quantum hall effect.
\newblock {\em Reviews of Modern Physics}, 71(2):S298, 1999.

\end{thebibliography}

\end{document}